\newcommand{\eps}{\epsilon}
\newcommand{\dtv}{d_{\mathrm TV}}
\newcommand{\norm}[1]{\lVert#1{\rVert}}
\newcommand{\normone}[1]{{\norm{#1}}_1}
\newcommand{\normtwo}[1]{{\norm{#1}}_2}
\newcommand{\norminf}[1]{{\norm{#1}}_\infty}
\newcommand{\eqdef}{\stackrel{\rm def}{=}}
\newcommand{\parent}[1]{\mathrm{Par}(#1)}
\newcommand{\structure}{\mathcal{S}}
\newcommand{\poly}{\mathrm{poly}}
\newcommand{\poisson}[1]{\ensuremath{\operatorname{Poi}\mleft( #1 \mright)}}
\newcommand{\bernoulli}[1]{\ensuremath{\operatorname{Bern}\mleft( #1 \mright)}}
\newtheorem{theorem}{Theorem}
\newtheorem{corollary}{Corollary}
\newtheorem{question}{Question}
\newtheorem{proposition}{Proposition}
\newtheorem{fact}{Fact}
\newtheorem{claim}{Claim}
\newtheorem{lemma}{Lemma}
\newtheorem*{unnumberedfact}{Fact}
\theoremstyle{definition}
\newtheorem{definition}{Definition}
\theoremstyle{remark}
\newtheorem{remark}{Remark}
\newcommand{\Var}{\mathop{\textnormal{Var}}\nolimits}
\newcommand{\expect}[1]{\mathbb{E}\!\left[#1\right]}
\newcommand{\shortexpect}{\mathbb{E}}
\newcommand{\dyes}{{\cal Y}}
\newcommand{\dno}{{\cal N}}
\newcommand{\R}{\mathbb{R}}
\newcommand{\setOfSuchThat}[2]{ \left\{\; #1 \;\colon\; #2\; \right\} } 			
\newcommand{\bigabs}[1]{\left\lvert#1\right\rvert}
\newcommand{\probaCond}[2]{\proba\!\left[\, #1 \;\middle\vert\; #2\, \right]}
\newcommand{\indic}[1]{\textbf 1_{#1}}
\newcommand{\lp}[1][1]{\ell_{#1}}
\newcommand{\binomial}[2]{\ensuremath{\operatorname{Bin}\!\left( #1, #2 \right)}}
\newcommand{\bigTheta}[1]{{\Theta\left( #1 \right)}}
\newcommand{\bigOmega}[1]{{\Omega\left( #1 \right)}}
\newcommand{\indicSet}[1]{\indic{\left\{#1\right\}}}
\newcommand{\totalvardist}[2]{\totalvardistrestr[]{#1}{#2}}
\newcommand{\totalvardistrestr}[3][]{{\dtv^{#1}\!\left({#2, #3}\right)}}
\newcommand{\expectCond}[2]{\mathbb{E}\!\left[\, #1 \;\middle\vert\; #2\, \right]}
\newcommand{\flr}[1]{\left\lfloor #1 \right\rfloor}
\newcommand{\Cov}{\mathrm{Cov}}
\newcommand{\tildeO}[1]{\tilde{O}\left( #1 \right)}
\newcommand{\probaDistrOf}[2]{\proba_{#1}\left[\, #2\, \right]}
\newcommand{\proba}{\Pr}
\newcommand{\probaOf}[1]{\proba\!\left[\, #1\, \right]}
\newcommand{\bigO}[1]{{O\mleft( #1 \mright)}}
\newcommand{\tildeOmega}[1]{\operatorname{\tilde{\Omega}}\left( #1 \right)}
\newcommand{\Poi}{\mathop{\textnormal{Poi}}\nolimits}
\newcommand{\E}{\mathbb{E}}
\newcommand{\abs}[1]{\mleft\lvert#1\mright\rvert}
\newcommand{\dkl}[2]{\operatorname{D}( #1\| #2 )}
\newcommand{\hellinger}[2]{\operatorname{d_{H}}( #1, #2 )}
\newcommand{\mutualinfo}[2]{ I\left(#1; #2\right) }
\newcommand{\entropy}[1]{ H\left(#1\right) }
\newcommand{\condentropy}[2]{ H\left(#1 \mid #2\right) }
\begin{document}

\title{Testing Bayesian Networks}

\author{Cl\'ement~L.~Canonne
        \and Ilias~Diakonikolas
         \and Daniel~M.~Kane
        \and Alistair~Stewart
\thanks{An extended abstract of this work was presented at the 30th Conference on Learning Theory, {COLT} 2017~\cite{CDKS:17}.}
\thanks{C.~L. Canonne is with IBM Research; part of this work was performed while he was a postdoctoral fellow at Stanford University, and before that a graduate student at Columbia University.}%
\thanks{I. Diakonikolas is with the University of Wisconsin–Madison; this work was performed while he was at the University of Southern California.}
\thanks{A. Stewart is with Web3 Foundation; this work was performed while he was a postdoctoral researcher at the University of Southern California.}
\thanks{D.~M.~Kane is with the University of California, San Diego.}
}
\maketitle

\begin{abstract}
This work initiates a systematic investigation of testing \emph{high-dimensional} structured 
distributions by focusing on testing \emph{Bayesian networks}~--~the prototypical family of directed graphical models. A Bayesian network 
is defined by a directed acyclic graph, where we associate a random variable with each node. 
The value at any particular node is conditionally independent of all the other non-descendant nodes once its parents are fixed. 
Specifically, we study the properties of identity testing and closeness testing of Bayesian networks. Our main contribution is 
the first non-trivial efficient testing algorithms for these problems and corresponding information-theoretic lower bounds. 
For a wide range of parameter settings, our testing algorithms have sample complexity \emph{sublinear} in the dimension
and are sample-optimal, up to constant factors.
\end{abstract}

\clearpage
\tableofcontents
\clearpage

\section{Introduction} \label{sec:intro}

\subsection{Background} \label{ssec:background}

Distribution testing has its roots in statistical hypothesis testing~\cite{NeymanP,lehmann2005testing}
and was initiated in~\cite{GRexp:00, BFR+:00}.
The paradigmatic problem in this area is the following: given sample access to an arbitrary distribution
$P$ over a domain of size $N$, determine whether $P$ has some global property or is ``far''
from any distribution having the property. A natural way to solve this problem would be to
learn the distribution in question to good accuracy, and then check if the
corresponding hypothesis is close to one with the desired property.
However, this testing-via-learning approach requires $\Omega(N)$ samples and is typically suboptimal.
The main goal in this area is to obtain {\em sample-optimal} testers~--~ideally, testers that draw
$o(N)$ samples from the underlying distribution.
During the past two decades, a wide range of properties have been studied,
and we now have sample-optimal testers for many of these properties~\cite{Paninski:08, CDVV14, VV14, DK:16, DiakonikolasGPP16}.

We remark that even for the simplest properties, e.g., identity testing,
at least $\Omega(\sqrt{N})$ many samples are required for  
arbitrary distributions over $N$ atoms.
While this is an improvement over the $\Omega(N)$ samples required
to learn the distribution, a sample upper bound of $O(\sqrt{N})$ is still impractical if $N$ is very large.
For example, suppose that the unknown distribution is supported on $\{0, 1\}^n$. For this high-dimensional setting,
a sample complexity bound of $\Theta(2^{n/2})$ quickly becomes prohibitive, when the dimension increases.
Notably, the aforementioned $\Omega(\sqrt{N})$ sample lower bound
characterizes worst-case instances, which in many cases are unlikely to arise in real-world data.
This observation motivates the study of testing {\em structured} distribution families,
where significantly improved testers may be possible. Hence, the following natural question arises:
{\em Can we exploit the structure of the data to perform the desired testing task more efficiently?}

A natural formalization of this question involves
viewing the data as samples from a  {\em probabilistic model}~--~a model
that we believe represents the random process generating the samples.
The usual assumption is that there exists a known
family of probabilistic models~--~describing a set of probability distributions~--~and that the data are random samples drawn from an unknown distribution in the family.
In this context, the distribution testing problem is the following:
Let $\mathcal{C}$ be a family of {probabilistic models}.
The {\em testing algorithm} has access to independent samples from an unknown $P \in \mathcal{C}$,
and its goal is to output ``yes'' if $P$ has some property $\mathcal{P}$,
and output ``no''  if the total variation distance, $\dtv(P,Q) \eqdef (1/2) \normone{P-Q}$, 
{where $\| \cdot \|_1$ denotes the $L_1$-norm,} is at least $\eps$ to {\em every} $Q \in \mathcal{C}$ that has property $\mathcal{P}$.
The sample complexity of this structured testing problem depends on the underlying family
$\mathcal{C}$, and we are interested in obtaining efficient algorithms that are {\em sample optimal} for $\mathcal{C}$.

More than a decade ago, Batu, Kumar, and Rubinfeld~\cite{BKR:04} considered a specific instantiation of this broad question~--~testing the equivalence between two unknown discrete monotone distributions~--~and obtained
a tester whose sample complexity is poly-logarithmic in the domain size. A recent sequence of works~\cite{DDSVV13, DKN:15, DKN:15:FOCS}
developed a framework to obtain sample-optimal estimators
for testing the identity of structured distributions {\em over total orders} (e.g., univariate multi-modal or log-concave distributions).
The main lesson of these works is that, under reasonable structural assumptions,
the sample complexity of testing may dramatically improve~--~becoming sub-logarithmic or even independent of the support size. Moreover, in all studied cases,
one obtains testers with {\em sub-learning} sample complexity.

\subsection{This Work: Testing High-Dimensional Structured Distributions} \label{ssec:this-work}
This paper initiates a systematic investigation of testing properties of {\em high-dimensional} structured
distributions. One of the most general formalisms to succinctly represent such distributions
is provided by probabilistic graphical models~\cite{Wainwright:2008, Koller:2009}.
Graphical models compactly encode joint probability distributions in high dimensions.
Formally, a graphical model is a graph where we associate a random variable with each node.
The key property is that the edge-structure of the graph determines the dependence relation
between the nodes.

The general problem of inference in graphical models is of fundamental importance
and arises in many applications across several scientific disciplines,
see~\cite{Wainwright:2008} and references therein.
In particular, the task of learning graphical models
has been extensively studied~\cite{Neapolitan:2003, RQS11}.
A range of information-theoretic and algorithmic results have been developed
during the past five decades in various settings,
see, e.g.,~\cite{Chow68, Dasgupta97, Friedman96, Friedman1997, Friedman00, Cheng02,
Chickering02, Margaritis03, Abbeel:2006, WainwrightRL06, AnandkumarHHK12,
SanthanamW12, LohW12, DiakonikolasKS16b} for a few references.
In contrast,
the general question of {\em testing} graphical models has received less attention.
We propose the following broad set of questions:\footnote{In what follows, by \emph{learning} we refer to the task of, given i.i.d. samples from an arbitrary distribution $P\in\mathcal{C}$, outputting a hypothesis distribution $\hat{P}$ such that, with high probability, $P$ and $\hat{P}$ are close. The number of samples required for this task is then the \emph{sample complexity of learning $\mathcal{C}$}.}

\begin{question} \label{q:gm}
{\em Let $\mathcal{C}$ be a family of high-dimensional graphical models and $\mathcal{P}$ be a property of
$\mathcal{C}$. What is the {\em sample complexity} of testing whether an unknown $P \in \mathcal{C}$ has property
$\mathcal{P}$? Can we develop testers for $\mathcal{P}$ with {\em sub-learning} sample complexity?
Can we design {\em sample-optimal} and {\em computationally efficient} testers?}
\end{question}

We believe that Question~\ref{q:gm} points to a fundamental research direction that warrants study for its own sake.
Moreover, as we explain in the following paragraphs, such estimation tasks
arise directly in various practical applications across the data sciences,
where sample efficiency is of critical importance. Hence, improved estimators for these tasks
may have implications for the analysis of datasets in these areas.

For concreteness, Question~\ref{q:gm} refers to a single unknown distribution that we have sample access to.
We are also naturally interested in the broader setting of testing properties for {\em collections} of distributions in
$\mathcal{C}$. Before we proceed to describe our contributions,
a few comments are in order: As previously mentioned, for all global properties of interest (e.g., identity,
independence, etc.), the sample complexity of testing the property is bounded from above by the sample complexity
of learning an arbitrary distribution from $\mathcal{C}$. Hence, the overarching goal is to obtain
testers that use fewer samples than {are required to actually learn} the model~--~or to prove that this is impossible.
On a related note, in the well-studied setting of testing arbitrary discrete distributions,
the main challenge has been to devise sample-optimal testers; the algorithmic aspects are typically straightforward.
This is no longer the case in the high-dimensional setting, where the combinatorial structure of the underlying model
may pose non-trivial algorithmic challenges.

In this work, we start this line of inquiry by focusing on testing
{\em Bayesian networks}~\cite{Pearl88} ({\em Bayes nets} or BN for brevity),
the prototypical family of {\em directed} graphical models. Bayesian networks are used for modeling beliefs
in many fields including robotics, computer vision, computational biology, natural language processing,
and medicine~\cite{Jensen:2007, Koller:2009}. Formally, a Bayesian network
is defined by a directed acyclic graph (DAG) $\structure = (V, E)$,
where we associate a random variable with each node. Moreover, the value
at any particular node is conditionally independent of all the other non-descendant nodes once its parents are fixed. 
Hence, for a fixed topology, it suffices to specify each node's conditional distribution, 
for each configuration of its parents' values.

The main problems that we study in this setting
are the related tasks of testing {\em identity} and {\em closeness}:
In identity testing, we are given samples from an unknown Bayes net $P$
and we want to distinguish between the case that it is identical to
versus significantly different from an explicitly given Bayes net $Q$.
In closeness testing, we want to test whether two unknown Bayes nets $P, Q$
are identical versus significantly different. {We believe that our techniques can be naturally
adapted to test other related properties (e.g., independence), 
but we have not pursued this direction in the current paper.}
A related testing problem that we consider is that of {\em structure testing}:
given samples from an unknown Bayes net $P$, we want to
test whether it can be represented with a given graph structure $\mathcal{S}$
or is far from any Bayes net with this structure.

In the prior work on testing {\em unstructured} discrete distributions,
the natural complexity measure was the domain size of the unknown distributions.
For the case of Bayes nets, the natural complexity measures are the number of variables
(nodes of the DAG)~--~denoted by $n$~--~the maximum in-degree of the DAG~--~denoted by $d$~--~and the alphabet size of the discrete distributions on the nodes. To avoid clutter in the expressions,
we focus on the natural setting that the random variables associated with each node are Bernoulli's,
i.e., the domain of the underlying distributions is $\{0, 1\}^n$. (As we will point out, our bounds
straightforwardly extend to the case of general alphabets with a necessary
polynomial dependence on the alphabet size.)

We note that Bayes nets are a universal representation scheme:
{\em Any} distribution over $\{0, 1\}^n$ can be presented as a BN,
if the maximum in-degree $d$ of the graph is unbounded.
(Indeed, for $d = n-1$, {one} can encode all distributions over $\{0, 1\}^n$.)
In fact, as we will see, the sample complexity of testing scales exponentially with $d$.
Therefore, an upper bound on the maximum in-degree is {\em necessary} to obtain non-trivial upper bounds.
Indeed, the most interesting regime is the settting where the number of nodes $n$ is large
and the degree $d$ is small. {In applications of interest, this assumption will be automatically satisfied.
In fact, {as we explain in the following subsection,} 
in many relevant applications the maximum in-degree is either 
$1$ (i.e., the underlying graph is a tree) or bounded by a small constant.

\subsection{Related Work} \label{ssec:related}
We partition the related work intro three groups corresponding to research efforts by different communities.

\paragraph{Computer Science}
A large body of work in computer science has focused on designing 
statistically and computationally efficient algorithms for learning structured distributions
in both low and high dimensions~\cite{Dasgupta:99, FreundMansour:99short, AroraKannan:01, VempalaWang:02, CGG:02, 
MosselRoch:05, MoitraValiant:10, BelkinSinha:10, DDS12soda, DDS12stoc, CDSS13, 
DDOST13focs, CDSS14, CDSS14b, HardtP15, ADLS15, DDS15, DDKT15, DKS15, DKS16}. On the other hand, the vast majority of the literature in distribution property testing during the past two decades focused on
arbitrary discrete distributions, where the main complexity measure was the domain size.
See~\cite{BFR+:00, BFFKRW:01, Batu01, BDKR:02, BKR:04, Paninski:08, ValiantValiant:11,
DDSVV13, DJOP11, LRR11, ILR12, CDVV14, VV14, ADK15, CDGR16, DK:16} for a sample of works, or~\cite{Rub12,Canonne15} for surveys.

A line of work~\cite{BKR:04, DDSVV13, DKN:15, DKN:15:FOCS} studied properties of one-dimensional
structured distribution families under various ``shape restrictions'' on the underlying density.
In the high-dimensional setting, Rubinfeld and Servedio~\cite{RubinfeldServedio:05}
studied the identity testing problem for monotone distributions over $\{0, 1\}^n$.
It was shown in~\cite{RubinfeldServedio:05}  that $\poly(n)$ samples suffice for the case of uniformity testing,
but the more general problems of identity testing and independence testing  require $2^{\Omega(n)}$ samples.
Subsequently, Adamaszek, Cjumaj, and Sohler~\cite{AdamaszekCS10}
generalized these results to continuous monotone distributions over $[0, 1]^n$.
A related, yet distinct, line of work studied the problem of testing {\em whether} a probability distribution
has a certain structure~\cite{BKR:04, BFRV11, ADK15, CDGR16}.
The sample complexity bounds in these works scale exponentially with the dimension.

\paragraph{Statistics}
The area of hypothesis testing for high-dimensional models has a long history in statistics
and is currently an active topic of study. A sequence of early and recent works,
starting with~\cite{Weiss60, Bickel69, LS93}, {has} studied the problem
of testing the equivalence between two nonparametric high-dimensional
distributions in the asymptotic regime. In the parametric setting, Hotelling's
T-squared statistic~\cite{Hotelling1931} is the classical test for the equivalence
of two high-dimensional Gaussians (with known and identical covariance).
However, Hotelling's test has the serious defect that it fails when the
sample size is smaller than the dimension of the data~\cite{BaiS96}.
Recent work has obtained testers that, under a high-dimensional Gaussian model (with known covariance),
succeed in the sub-linear regime for testing identity~\cite{Sriv08}
and closeness~\cite{chen2010}. A number of more recent works study properties of
covariance matrices~\cite{Cai2013}, regression~\cite{Jav14}, and linear independence testing~\cite{RamdasISW16}.

\paragraph{Applications}
The problems of testing identity and closeness of Bayesian networks arise in a number of applications
where sample efficiency is critical~\cite{Friedman00, GWJ03, Sobel03, Almudevar10, Nguyen2011, SEG14, StadM15, Yin2015}.
In bioinformatics applications (e.g., gene set analysis), each sample corresponds to an experiment that
may be costly or ethically questionable~\cite{Yin2015}. Specifically,~\cite{Yin2015} emphasizes
the need of making accurate inferences on {\em tree structured} Bayesian networks, using an extremely small
sample size~--~significantly smaller than the number of variables (nodes). \cite{Almudevar10} studies
the problem of testing closeness between two unknown Bayesian network models
in the context of a biology application, where Bayes nets are used to model gene expression data.
The motivation in~\cite{Almudevar10}  comes from the need to compare network models for a common set of genes
under varying phenotypes, which can be formulated as the problem of testing closeness between two unknown Bayes nets.
As argued in~\cite{Almudevar10}, due to the small sample size available, it is not feasible to directly
learn each BN separately.

\subsection{Basic Notation and Definitions}
Consider a directed acyclic graph (DAG), $\structure$, with $n$ vertices that are {topologically sorted, i.e.,} labelled {from the set}
$[n] \eqdef \{1,2,\ldots,n\}$ so that all directed edges {of $\structure$} point from vertices with smaller {label} to vertices with larger {label}.
A probability distribution $P$ over $\{0,1\}^n$ is defined to be
a \emph{Bayesian network} {(or Bayes net)} with {dependency} graph $\structure$
if for each $i \in [n]$, we have that $\Pr_{X\sim P}\left[X_i = 1 \mid X_1,\ldots,X_{i-1}\right]$
depends only on the values $X_j$,
where $j$ is a parent of $i$ in $\structure$. Such a distribution $P$ can be specified by
its {\em conditional probability table}, i.e., the vector of conditional probabilities of $X_i=1$
conditioned on every possible combination of values {to} the coordinates of $X$
at the parents of $i$.

To formalize the above description, we use the following terminology.
We will denote by $\parent{i}$ the set of parents of node $i$ in $\structure$.
For a vector $X = (X_1, \ldots, X_n)$ and a subset $A \subseteq [n]$, we use $X_A$ to denote the 
vector $(X_i)_{i \in A}$. We can now give the following definition:

\begin{definition} \label{def:BN-terminology}
Let $S$ be the set $\{(i,a): i \in [n], a\in \{0,1\}^{{|}\parent{i}{|}}\}$ and $s=|S|$. 
For $(i,a)\in S$, the \emph{parental configuration} $\Pi_{i,a}$
is defined to be the event that $X_{\parent{i}} =a$.
Once $\structure$ is fixed, we may associate with a Bayesian network $P$
the conditional probability table $p\in [0,1]^{S}$ given by
$p_{i,a}=\Pr_{X\sim P}\left[X_i=1 \mid \Pi_{i,a}\right]$, {for $(i, a) \in S$.}
We note that the distribution $P$ is determined by $p$.

We will frequently index $p$ as a vector.
That is, we will use the notation $p_k$, for $1 \leq k \leq s$,
and the associated events $\Pi_k$, where each $k$ stands for an $(i,a) \in S$ lexicographically ordered.
\end{definition}

\section{Our Results and Techniques} \label{sec:results-techniques}

The structure of this section is as follows: In~\cref{ssec:results}, we provide the statements
of our main results in tandem with a brief explanation of their context and the relations between them. We outline the organization o the paper in~\cref{sec:struc}, before discussion concurrent work in~\cref{sec:related}.

\subsection{Main Results} \label{ssec:results}
The focus of this paper is on the properties of \emph{identity testing} and \emph{closeness testing} 
of Bayes nets. We give the first non-trivial efficient
testing algorithms and matching information-theoretic lower bounds for these problems.
For a wide range of parameter settings, our algorithms achieve sub-learning sample complexity and 
are sample-optimal (up to constant factors).

For concreteness, we consider Bayes nets over Bernoulli random variables.
We note that our upper bounds straightforwardly extend to general alphabets with a
polynomial dependence on the alphabet size {(see Remark~\ref{rem:alph})}.
Let $\mathcal{BN}_{n, d}$ denote the family of Bernoulli
Bayes nets on $n$ variables such that the corresponding DAG
has maximum in-degree at most $d$. For most of our results, we will think
of the dimension $n$ as being large and the maximum degree
$d$ as being comparably small (say, bounded from above by a constant
or at most logarithmic in $n$).

For the inference problems of learning and testing Bayes nets, there are two versions
of the problem: The first version corresponds to the setting
where the structure of the graph is fixed (and known a priori to the algorithm).
In the second version, both the graph and the parameters are unknown to the algorithm.
We note that both versions of the problem are interesting, based on the application.
The {\em unknown structure} setting is clearly at least as hard, and typically includes an algorithm
for the {\em fixed structure} case plus additional algorithmic ingredients.

Before we give the statements of our main testing results, we state a nearly tight
bound on the sample complexity of learning $\mathcal{BN}_{n, d}$. This bound
will be used as a baseline to compare against our efficient testers:

\begin{fact} \label{fact:learning-sample}
The sample complexity of learning $\mathcal{BN}_{n, d}$, within total
variation distance $\eps$, with confidence probability $9/10$, is:
(i) $\widetilde{\Theta}(2^d \cdot n/\eps^2)$, for all $d \leq n/2$, in the fixed structure setting,
and (ii) $\widetilde{\Theta}(2^d \cdot n/\eps^2)$ in the unknown structure setting.
\end{fact}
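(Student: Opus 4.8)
The plan is to establish matching upper and lower bounds for the sample complexity of learning $\mathcal{BN}_{n,d}$ in total variation distance, handling the fixed-structure and unknown-structure settings in turn. For the \emph{upper bound} in the fixed-structure case, I would exploit the factorization of a Bayes net over $\structure$: the distribution $P$ is determined by its conditional probability table $p \in [0,1]^S$ with $m = |S| \le n \cdot 2^d$ entries. For each $(i,a) \in S$, one can estimate $p_{i,a}$ from the samples that happen to realize the parental configuration $\Pi_{i,a}$, using a standard empirical-frequency estimator. The key lemma here is a ``chain-rule'' / tensorization bound: if each conditional is learned to within the appropriate weighted accuracy, the resulting product distribution $\widehat P$ satisfies $\dtv(P,\widehat P) \le \eps$. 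Concretely, $\dtv(P,\widehat P)^2$ (or the squared Hellinger distance, which is easier to tensorize) is controlled by $\sum_{(i,a)} \Pr_P[\Pi_{i,a}] \cdot (\text{squared error in }p_{i,a})$; since the $\Pr_P[\Pi_{i,a}]$ for fixed $i$ sum to $1$, there are effectively $n$ ``unit-mass'' parameters each needing accuracy $\eps/\sqrt n$, and a coupon-collector / Chernoff argument shows $\widetilde O(2^d n / \eps^2)$ samples suffice (the $2^d$ accounts for configurations of tiny mass, and the $\widetilde O$ hides the $\log$ needed for a union bound over $m$ events and for the rare configurations). This is essentially the analysis of the standard Bayes-net MLE; I would cite or reprove it as a short lemma.

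For the \emph{lower bound} in the fixed-structure case, I would reduce from learning an explicit hard family. Take the DAG to be a disjoint union (or a layered structure) that effectively contains $\Theta(n)$ independent ``gadgets,'' each gadget a single node with $d$ parents whose $2^d$ conditional probabilities are each an independent hard-to-learn bias near $1/2$; alternatively, cleanly, take $\Theta(2^d n)$ essentially-independent biased coins. Learning $P$ to within $\eps$ in total variation forces learning a constant fraction of the underlying $\Theta(2^d n)$ parameters to accuracy $\Theta(\eps/\sqrt{n})$ each (again by the tensorization/Pythagorean relation between $\dtv$ and the coordinate errors), and learning a single $\bernoulli{1/2 \pm \delta}$ bias to accuracy $\delta$ requires $\Omega(1/\delta^2)$ samples; an Assouad-type or direct information-theoretic argument over the product then yields $\widetilde\Omega(2^d n/\eps^2)$. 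The constraint $d \le n/2$ is what guarantees such a DAG with the requisite number of free parameters exists (one needs enough nodes to host $d$-element parent sets while keeping the structure acyclic and the gadgets disjoint).

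For the \emph{unknown-structure} setting, the upper bound follows from the fixed-structure bound plus structure learning: there are at most $n^{O(dn)} = 2^{O(dn\log n)}$ possible DAGs of max in-degree $d$, so one can run a tournament / hypothesis-selection procedure (e.g.\ the Scheff\'e-style or Yatracos-class selection) over a cover of all candidate Bayes nets; the $\log$ of the cover size is $\widetilde O(dn)$, which is absorbed into the $\widetilde O(2^d n/\eps^2)$ after adjusting the polylog factors. The lower bound is immediate since unknown structure is at least as hard as the fixed-structure case. I would present the unknown-structure upper bound only at the level of ``apply generic hypothesis selection to an $\eps$-cover,'' since spelling it out is routine.

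The main obstacle I anticipate is the tight analysis of the fixed-structure upper bound for parental configurations of \emph{small probability mass}: a configuration $\Pi_{i,a}$ with $\Pr_P[\Pi_{i,a}]$ much smaller than $1/(2^d n)$ contributes little to $\dtv$ and need not be learned accurately, but configurations with mass around $\eps^2/(2^d n)$ sit at the boundary and drive the $2^d$ factor; getting the right dependence (rather than, say, an extra $\log$ or an extra $2^d$) requires carefully bucketing configurations by their mass and balancing the per-bucket sample requirement against the per-bucket contribution to the error, which is exactly where the $\widetilde\Theta$ (as opposed to $\Theta$) comes from. Matching this with the lower bound construction --- making sure the hard instance actually forces the small-mass configurations to matter --- is the delicate part; everything else (Chernoff bounds, Hellinger tensorization, coupon collector, generic selection) is standard.
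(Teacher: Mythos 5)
Your upper-bound arguments are essentially the paper's own proof: for fixed structure, estimate each conditional probability $p_{i,a}$ empirically, zero out configurations seen too rarely, and control $\normone{P-\widehat P}^2$ via Pinsker and the chain-rule bound $\dkl{P}{\widehat P}\le\sum_{(i,a)}\probaDistrOf{P}{\Pi_{i,a}}\frac{(p_{i,a}-q_{i,a})^2}{q_{i,a}(1-q_{i,a})}$, with a single light/heavy split at mass roughly $\eps/(n2^d)$ (no finer bucketing is needed); for unknown structure, enumerate the $n^{O(dn)}$ candidate DAGs and run a tournament, whose $\log$-cover cost $\widetilde{O}(dn/\eps^2)$ is absorbed. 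So that part is fine and matches the paper.

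The gap is in your fixed-structure lower bound. The ``disjoint gadgets'' construction cannot give the claimed bound for large $d$: with disjoint $d$-element parent sets only $\Theta(n/d)$ gadgets fit inside $n$ nodes, so you get at most $\Theta(2^d n/d)$ free parameters and a bound of $\widetilde{\Omega}(2^d n/(d\eps^2))$, which is polynomially weaker once $d$ grows polynomially in $n$ (the statement allows $d$ up to $n/2$). Relatedly, your stated role for $d\le n/2$ (``enough nodes to host disjoint $d$-element parent sets'') is not the right one. The construction that works — and the one the paper uses — shares the parents: the first $d$ nodes are uniform and independent and act as a pointer, and \emph{every} one of the remaining $n-d$ nodes has all $d$ of them as parents, so $P$ is an evenly weighted disjoint mixture of $2^d$ arbitrary product distributions on $\{0,1\}^{n-d}$; the condition $d\le n/2$ is used only to ensure $n-d=\Omega(n)$. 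Then $\normone{P-Q}=2^{-d}\sum_a\normone{p_a-q_a}$, each configuration receives in expectation only a $2^{-d}$ fraction of the sample budget, and a black-box reduction from the standard $\Omega(n/\eps^2)$ lower bound for learning a product distribution over $\{0,1\}^{n-d}$ (applied to the at least half of the blocks that are sample-starved) gives $\Omega(2^d(n-d)/\eps^2)$ by a simple averaging argument. Your alternative phrasing (``$\Theta(2^d n)$ essentially-independent coins'' with an Assouad-type argument) can be made to work precisely because a perturbation of $p_{i,a}$ only affects samples realizing $\Pi_{i,a}$, so the per-sample KL between neighboring hypotheses carries the $2^{-d}$ factor — but you should say this explicitly, and note that the ``$\dtv$ forces per-parameter accuracy $\eps/\sqrt{n}$'' step needs the direction $\dtv\gtrsim\min(1,\normtwo{p-q})$ for near-uniform products (an anticoncentration computation, not the generic Hellinger inequality); the reduction-from-product-learning route avoids both of these delicacies.
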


We give a proof of this fact in~\cref{sec:learn}.
\cref{fact:learning-sample} characterizes the sample complexity of learning Bayes nets
(up to logarithmic factors). We remark that our information-theoretic upper bound
for the fixed structure case also yields a simple computationally efficient algorithm.
The unknown structure regime is much more challenging computationally.
For this setting, we provide a nearly tight information-theoretic upper bound
that is non-constructive. (The corresponding algorithm runs in exponential time.)
In fact, we note that no sample-optimal computationally efficient algorithm is known
for unknown structure Bayes nets.

\medskip

Our first main result concerns the fixed structure regime.
For technical reasons, we focus on Bayes nets that satisfy
a natural balancedness condition. Roughly speaking, our balancedness condition
ensures that the conditional probabilities are bounded away from $0$ and $1$,
and that each parental configuration happens with some minimum probability.
Formally, we have:

\begin{definition} \label{def:balanced-net}
A Bayes net $P$ over $\{0,1\}^n$ with structure $\structure$ is called
\emph{$(c,C)$-balanced} if, for all $k$, we have that (i) $p_k\in[c,1-c]$,
and (ii) $\probaDistrOf{P}{\Pi_k} \geq C$.
\end{definition}

Under a mild condition on the balancedness, we give sample-optimal and
computationally efficient algorithms for testing identity and closeness of Bayes nets.
Specifically, for the problem of identity testing against an explicit distribution,
we require that the explicit distribution be balanced (no assumption is needed for the unknown Bayes net).
For the problem of closeness testing, we require that one of the two unknown distributions be balanced.
We are now ready to state our first main theorem:

\begin{theorem}[Testing Identity and Closeness of Fixed--Structure Bayes Nets]\label{thm:informal-identity-closeness-known}
For testing identity and closeness of fixed structure Bayes nets $P, Q$
with $n$ nodes and maximum in-degree $d$, there is an efficient
algorithm that uses $\bigO{2^{d/2}\sqrt{n}/\eps^2}$ samples and, assuming that one of $P, Q$
is $(c,C)$-balanced with $c=\tildeOmega{1/\sqrt{n}}$ and $C=\tildeOmega{d\eps^2/\sqrt{n}}$,
correctly distinguishes between the cases that $P=Q$ versus $\normone{P-Q} > \eps$,
with probability at least $2/3$.
Moreover, this sample size is information-theoretically optimal, up to constant factors,
for all $d < n/2$, even for the case of uniformity testing.
\end{theorem}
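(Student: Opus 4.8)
The plan is to establish the upper bound via an $\ell_2$/$\chi^2$--type tester over the parental configurations, and the matching lower bound via an Ingster--style second--moment argument on a carefully chosen family of Bayes nets.

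\emph{The tester.} The enabling fact is a structural decomposition: by sub-additivity of $\dtv$ along the topological order, any two Bayes nets $P,Q$ with structure $\structure$ and conditional probability tables $p,q$ satisfy $\dtv(P,Q)\le\sum_{(i,a)\in S}\probaDistrOf{P}{\Pi_{i,a}}\,\bigabs{p_{i,a}-q_{i,a}}$ (and $\dtv(P,Q)=0$ if $P=Q$). Since $\sum_{(i,a)\in S}\probaDistrOf{P}{\Pi_{i,a}}=n$, Cauchy--Schwarz converts the far case into the weighted bound $\sum_{(i,a)\in S}\probaDistrOf{P}{\Pi_{i,a}}\,(p_{i,a}-q_{i,a})^2>\eps^2/n$, so it suffices to estimate this quantity. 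The algorithm draws $N=\bigO{2^{d/2}\sqrt n/\eps^2}=\bigO{\sqrt{n2^d}/\eps^2}$ samples; for each $(i,a)$ it records the count $N_{i,a}$ of samples whose parent coordinates at $i$ read $a$ and the count $Z_{i,a}$ of those that moreover have $X_i=1$, and forms $W=\sum_{(i,a)}T_{i,a}/(N_{i,a}q_{i,a}(1-q_{i,a}))$, where $T_{i,a}=(Z_{i,a}-N_{i,a}q_{i,a})^2-Z_{i,a}(N_{i,a}-Z_{i,a})/(N_{i,a}-1)$ is, conditioned on $N_{i,a}\ge2$, an unbiased estimator of $N_{i,a}^2(p_{i,a}-q_{i,a})^2$ (and the term is set to $0$ when $N_{i,a}<2$); it accepts iff $W$ is below a suitable threshold, and it clearly runs in time polynomial in $n$, $2^d$, and $N$. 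In the null case $\expect{W}=0$, and in the far case $N_{i,a}$ concentrates around $N\probaDistrOf{P}{\Pi_{i,a}}$, configurations that are rare under $P$ contribute negligibly (since $(p_{i,a}-q_{i,a})^2\le1$), and $q_{i,a}(1-q_{i,a})\le\tfrac14$ gives $\expect{W}=\bigOmega{N\eps^2/n}$.

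\emph{Variance and the role of balancedness.} The crux is the variance bound, and this is exactly where balancedness of $Q$ is used: $q_{i,a}\in[c,1-c]$ with $c=\tildeOmega{1/\sqrt n}$ keeps every normalising factor $q_{i,a}(1-q_{i,a})=\tildeOmega{1/\sqrt n}$ bounded away from $0$, and $\probaDistrOf{Q}{\Pi_{i,a}}\ge C=\tildeOmega{d\eps^2/\sqrt n}$ forces $N_{i,a}=\tildeOmega{NC}=\tildeOmega{d\,2^{d/2}}$, so that each per-term estimator is well behaved; summing the (positive and cross) variance contributions over the $m\le n2^d$ parental configurations with $N=\Theta(\sqrt{n2^d}/\eps^2)$ makes $\Var(W)$ small enough, in both the null and the far case, for a Chebyshev/Bernstein separation. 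The closeness tester is analogous, with $T_{i,a}$ replaced by the corresponding two--sample $\chi^2$ quantity built from $(N^P_{i,a},Z^P_{i,a},N^Q_{i,a},Z^Q_{i,a})$ in the style of the closeness tester for unstructured distributions, balancedness of one of $P,Q$ again guaranteeing that every configuration is seen $\tildeOmega{d\,2^{d/2}}$ times on at least one side. I expect the main obstacle to be precisely this variance bookkeeping across $n2^d$ terms that are coupled through the $N_{i,a}$'s, together with correctly handling parental configurations whose probability differs between $P$ and $Q$ (or is tiny under $P$).

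\emph{The lower bound.} Fix $d<n/2$ and take $\structure$ to be the DAG on $[n]$ with sources $1,\dots,d$ and every node $i>d$ having parent set $\{1,\dots,d\}$, so $\structure$ has in-degree $d$ and $\approx n2^d$ free conditional probabilities, and $U$ (uniform on $\{0,1\}^n$) is the unique $\structure$--Bayes net equal to $U$. For $\sigma=(\sigma_{i,a})_{i>d,\,a\in\{0,1\}^d}\in\{-1,1\}^{(n-d)2^d}$ let $P_\sigma$ have source marginals $1/2$ and $p_{i,a}=1/2+\rho\sigma_{i,a}$ for $i>d$, with $\rho=\Theta(\eps/\sqrt n)$; by the symmetry of the construction $\dtv(P_\sigma,U)=\tfrac12\expect{\,\bigabs{\prod_{i=1}^{n-d}(1+2\rho\xi_i)-1}\,}=\Theta(\rho\sqrt n)=\Theta(\eps)$ for every $\sigma$, where the $\xi_i$ are i.i.d.\ uniform signs. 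It remains to show that $N=o(2^{d/2}\sqrt n/\eps^2)$ samples cannot distinguish $U$ from a uniformly random $P_\sigma$. After Poissonising the sample size the data split into $2^d$ independent blocks indexed by the source configuration $a$, block $a$ being $\poisson{N2^{-d}}$ i.i.d.\ draws from the product distribution on coordinates $d{+}1,\dots,n$ with biases $(\rho\sigma_{i,a})_i$, and a standard second--moment computation gives $1+\chi^2(\shortexpect_\sigma[P_\sigma^{\otimes N}]\,\|\,U^{\otimes N})=\shortexpect_\tau[R^N]$ with $\tau$ uniform in $\{-1,1\}^{(n-d)2^d}$ and $R=2^{-d}\sum_{a\in\{0,1\}^d}\prod_{i>d}(1+4\rho^2\tau_{i,a})$. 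Here $R$ is the average of $2^d$ independent mean--$1$ variables $Y_a=\prod_{i>d}(1+4\rho^2\tau_{i,a})$ with $\Var(Y_a)=(1+16\rho^4)^{n-d}-1=O(n\rho^4)$, hence $\Var(R)=O(n\rho^4/2^d)$ and, once the upper tail of $R$ is controlled, $\shortexpect_\tau[R^N]=1+O(N^2n\rho^4/2^d)=1+o(1)$ precisely when $N=o(2^{d/2}/(\sqrt n\rho^2))=o(2^{d/2}\sqrt n/\eps^2)$; the two sample laws are then $o(1)$--close in total variation, so no tester can succeed with the stated probability, giving the matching $\bigOmega{2^{d/2}\sqrt n/\eps^2}$ bound, and since $Q=U$ throughout it holds even for uniformity testing. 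The delicate step is making $\shortexpect_\tau[R^N]=1+o(1)$ rigorous --- bounding the $N$--th moment of the average $R$ of the products $Y_a$ rather than relying on a Gaussian heuristic.
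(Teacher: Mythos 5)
Your lower-bound section is essentially sound: it uses the same ``pointer'' construction as the paper (first $d$ uniform source bits selecting one of $2^d$ product distributions with biases $\frac12\pm\Theta(\eps/\sqrt n)$), and while the paper proves indistinguishability via a mutual-information/Fano argument that reuses its product-distribution lemma block by block, your Ingster-type second-moment route with $\E_\tau[R^N]$ is a legitimate alternative; the step you flag (going beyond the variance heuristic for $R^N$) is real but can be completed, e.g.\ by Poissonizing the block counts and computing $\E[Y_a^{m_a}]$ exactly per block as in the $d=0$ case. The genuine gap is in your tester's analysis, and it is not mere bookkeeping: the distance decomposition you start from is lossy by a factor of $n$. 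Subadditivity along the topological order plus Cauchy--Schwarz only gives $\sum_{(i,a)}\probaDistrOf{P}{\Pi_{i,a}}(p_{i,a}-q_{i,a})^2>\eps^2/n$ in the far case, and after discarding the weights via $q_{i,a}(1-q_{i,a})\le\frac14$ you conclude $\E[W]=\Omega(N\eps^2/n)$. But the null fluctuations of your statistic are of order $\sqrt{n2^d}$: at $p=q$ each term $T_{i,a}/(N_{i,a}q_{i,a}(1-q_{i,a}))$ has conditional variance $\Theta(1)$ (the leading term of $\Var[T_{i,a}]$ is $\approx 2N_{i,a}^2q_{i,a}^2(1-q_{i,a})^2$), and under your own balancedness guarantee essentially all of the up to $n2^d$ configurations contribute such a term. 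With $N=\Theta(2^{d/2}\sqrt n/\eps^2)$ the far-case mean shift is $\Theta(2^{d/2}/\sqrt n)$, a factor $n$ below the null standard deviation $\Theta(2^{d/2}\sqrt n)$, so no threshold on $W$ can separate the two cases; your argument as written only yields $N=O(2^{d/2}n^{3/2}/\eps^2)$. The missing idea is exactly the paper's ``right parameter distance'': Pinsker combined with the KL decomposition for Bayes nets (\cref{lemma:kl:bn}) gives $\normone{P-Q}^2\le 2\sum_{(i,a)}\probaDistrOf{P}{\Pi_{i,a}}\frac{(p_{i,a}-q_{i,a})^2}{q_{i,a}(1-q_{i,a})}$, i.e.\ the chi-square weights $1/(q_{i,a}(1-q_{i,a}))$ --- which balancedness only bounds by $\tilde O(\sqrt n)$, not $O(1)$ --- must stay inside the separation bound, giving $\E[W]=\Omega(N\eps^2)$ in the far case; the closeness analogue uses the Hellinger bound (\cref{lemma:hellinger:bn}). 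This is precisely where $c=\tildeOmega{1/\sqrt n}$ earns its keep, rather than merely ``keeping normalizers bounded away from $0$.''

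Two secondary points. First, you acknowledge but do not resolve the coupling of the terms through the random counts $N_{i,a}$ (values at a node affect which samples are selected for its children's configurations, so cross-covariances across nodes need not vanish); the paper sidesteps this by fixing the per-configuration budgets $N_{i,a}$ deterministically from $Q$, rejecting if a configuration is seen fewer than $N_{i,a}$ or more than $2N_{i,a}$ times, and using only the first $N_{i,a}$ qualifying samples, which makes the per-configuration statistics independent. Second, that same rejection step is what enforces $\probaDistrOf{P}{\Pi_{i,a}}\le 4\probaDistrOf{Q}{\Pi_{i,a}}$, which you need anyway to pass between $P$-weighted separation and a statistic whose weights are governed by $Q$ (or by the observed counts); handling configurations whose probabilities differ substantially between $P$ and $Q$ cannot be waved away as ``negligible'' once the correct, $1/(q_{i,a}(1-q_{i,a}))$-weighted quantity is the one being estimated.
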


\noindent The conceptual message of~\cref{thm:informal-identity-closeness-known}
is that, for the case of fixed structure, testing is information-theoretically easier than learning.
Specifically, our result establishes a quadratic gap between learning and identity testing,
reminiscent of the analogous gap in the setting of unstructured discrete distributions.
We remark here that the information-theoretic lower bounds of Fact~\ref{fact:learning-sample} (i)
hold even for Bayes nets with constant balancedness.

We now turn our attention to the case of unknown structure.
Motivated by~\cref{thm:informal-identity-closeness-known}, it would be tempting
to conjecture that one can obtain testers with sub-learning sample complexity
in this setting as well. Our first main result for unknown structure testing
is an information-theoretic lower bound, showing that this is not the case.
Specifically, even for the most basic case of tree-structured Bays Nets ($d=1$)
with unknown structure, uniformity testing requires $\Omega(n/\eps^2)$ samples.
It should be noted that our lower bound applies even for Bayes nets with {\em constant}
balancedness. Formally, we have:

\begin{theorem}[Sample Lower Bound for Uniformity Testing of Unknown Tree-Structured Bayes Nets]
\label{thm:informal-uniformity-lower-unknown}
Any algorithm that, given sample access to a balanced tree-structured Bayes net $P$ over $\{0, 1\}^n$,
distinguishes between the cases $P=U$ and $\normone{P-U} > \eps$ {(where $U$ denotes the uniform distribution over $\{0,1\}^n$)},
with probability $2/3$, requires $\Omega(n/\eps^2)$ samples from $P$.
\end{theorem}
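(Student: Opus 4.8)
The plan is to prove this lower bound via a standard indistinguishability argument: I will construct a distribution $\Dcal$ over tree-structured Bayes nets, each of which is $\eps$-far from the uniform distribution $U$ in $L_1$, and show that $o(n/\eps^2)$ samples cannot distinguish a random draw from $\Dcal$ from $U$. The natural construction is to take the ``no''-instances to be a product of $n/2$ independent two-dimensional blocks: pair up the coordinates as $(1,2),(3,4),\ldots$, and for block $j$ let the first coordinate be an unbiased bit, while the second coordinate is correlated with the first — say $\Pr[X_{2j}=X_{2j-1}] = 1/2 + \rho_j$ where each $\rho_j \in \{-\rho,+\rho\}$ is chosen uniformly and independently, with $\rho = \Theta(\eps/\sqrt{n})$. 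Each such $P$ is a tree (in fact a forest, which is a special case of $d=1$; one can add trivial edges to make it a tree if desired), each marginal $X_i$ is $\bernoulli{1/2}$, and the balancedness is constant since all conditional probabilities lie in $[1/2-\rho,1/2+\rho] \subseteq [1/4,3/4]$ and every parental configuration has probability $1/2$. A short computation shows $\dtv(P,U) = \Theta(\rho\sqrt{n}) = \Theta(\eps)$, so with $\rho$ chosen appropriately every $P$ in the support of $\Dcal$ satisfies $\normone{P-U} > \eps$.

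The core of the argument is the information-theoretic lower bound itself. Since the blocks are independent and the two blocks-structures (which pair is used) is the same for $P$ and $U$, and since for each block the marginal on the single relevant coordinate-pair is ``unbiased bit, then correlated bit'', the problem decomposes: distinguishing $\Dcal$ from $U$ with $k$ samples reduces to a collection of $n/2$ independent single-coordinate-pair problems where one must detect a $\pm\rho$ correlation. Equivalently, after the change of variables $Y_j = X_{2j-1}\oplus X_{2j}$, the $j$-th block contributes an independent $\bernoulli{1/2+\rho_j}$ versus $\bernoulli{1/2}$ signal. I would bound the total variation distance between the $k$-sample distributions using the standard $\chi^2$/Hellinger tensorization: $\dtv(\Dcal^{\otimes k}, U^{\otimes k})^2 \le \tfrac14 \big(\chi^2(\cdots)\big)$ and the $\chi^2$-divergence factorizes across the $n/2$ blocks. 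For a single block, averaging over the sign $\rho_j$ kills the linear term, leaving a contribution of order $(k\rho^2)^2/(\text{something}) + k\rho^2 \cdot (\text{small})$; more carefully, the per-block $\chi^2$ contribution from $k$ samples is $O(k^2\rho^4)$ after the sign-averaging cancels the $O(k\rho^2)$ term, so the product over $n/2$ blocks is $\exp(O(n k^2 \rho^4)) - 1$. Plugging $\rho^2 = \Theta(\eps^2/n)$ gives $\exp(O(k^2\eps^4/n)) - 1$, which is bounded away from $1$ unless $k = \Omega(\sqrt{n}/\eps^2)$ — and this already gives the $\Omega(\sqrt n/\eps^2)$ bound; to reach the claimed $\Omega(n/\eps^2)$ one must instead use the moment/``birthday'' structure more carefully, because the correct per-block bound when the $O(k\rho^2)$ term is cancelled by sign-averaging but the $O(k^2\rho^4)$ term is replaced by the genuinely quadratic-in-collisions term gives a per-block cost of $\Theta(\binom{k}{2}\rho^4 + \ldots)$, and the right scaling is $k\rho^2 = O(1)$ per block after aggregation, i.e. $k = \Omega(n/\eps^2)$. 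I would set this up so that the relevant statistic is the vector of per-block collision counts and invoke the Ingster-style second-moment method (as in Paninski's uniformity lower bound and its multivariate extensions), where the correct threshold is indeed $k = \Theta(n/\eps^2)$ for detecting $n$ independent $\eps/\sqrt n$-sized biases with a union over $2^n$ sign patterns.

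The step I expect to be the main obstacle is getting the constant-in-the-exponent / second-moment computation to yield the full $n/\eps^2$ rather than a weaker $\sqrt n/\eps^2$: the naive $\chi^2$-tensorization across blocks is lossy here, and one has to argue at the level of the joint distribution of the empirical counts, tracking that the only informative events are multi-way coincidences within a block and that the number of samples landing in any fixed block is itself random (Poissonization helps: replace $k$ samples by $\poisson{k}$, making the blocks genuinely independent, then analyze each block's Poisson-count-conditioned statistic and recombine). A secondary technical point is verifying the balancedness constants cleanly and confirming that the forest-of-pairs construction is permitted by the theorem statement (it is, since each node has in-degree at most $1$); and one should double-check the reduction that uniformity testing of the correlated-pairs family is no easier than the scalar $n$-fold biased-coin detection problem, which follows because $U$ restricted to each pair is exactly the product of two fair coins.
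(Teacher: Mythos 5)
There is a genuine gap, and it lies in the very first step: your family of no-instances uses a \emph{fixed, known} pairing of the coordinates $(1,2),(3,4),\dots$, randomizing only the signs of the correlations. That family simply is not hard enough to give $\Omega(n/\eps^2)$: since the matching is known, the tester can restrict attention to the $n/2$ designated pairs, form the XOR variables $Y_j = X_{2j-1}\oplus X_{2j}$, and run a standard product/uniformity test on $n/2$ coins with biases $\pm\rho=\pm\Theta(\eps/\sqrt n)$, which needs only $\Theta(\sqrt n/\eps^2)$ samples (this is exactly the known-structure guarantee of the paper's Theorem~\ref{thm:informal-identity-closeness-known} with $d=1$, matched by the product-distribution lower bound of Theorem~\ref{theo:lb:product:uniform}). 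Consequently no refinement of the second-moment computation can extract $\Omega(n/\eps^2)$ from your construction, and your attempted rescue rests on a false claim: the threshold for detecting $n$ independent biases of size $\eps/\sqrt n$ with random signs is $\Theta(\sqrt n/\eps^2)$, not $\Theta(n/\eps^2)$ (Paninski-type bounds give the former, and your own $\chi^2$ tensorization correctly produced it). The ``$\chi^2$ tensorization is lossy'' intuition is therefore misplaced — the bound you derived is tight for the family you chose.

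The missing idea is that the hardness must come from the \emph{structure being unknown}: the matching itself has to be drawn at random. The paper's construction takes $P_\lambda$ indexed by a uniformly random perfect matching $\lambda^{(1)}$ of $[n]$ together with random orientation bits, with covariance $\pm\eps/\sqrt n$ on each matched pair; an algorithm that does not know the matching must in effect contend with all $\binom n2$ candidate pairs, which inflates the noise. Technically, the second-moment (mixture vs.\ $U^{\otimes m}$) bound then requires controlling $\E_{\lambda,\mu}\left[(1+\tau(\lambda,\mu))^m\right]$ for two independent matching-orientation parameters, and $\tau(\lambda,\mu)$ is governed by the cycle structure of the multigraph $\lambda^{(1)}\cup\mu^{(1)}$: up to lower-order corrections it behaves like $(1+4\delta^2)^{|A|}(1-4\delta^2)^{|B|}$ where $|A|+|B|$ is the number of edges the two matchings share, and one must prove a tail bound $\Pr[\,|A|+|B|\ge k\,]\le 1/k!$ showing two random matchings typically share $O(1)$ edges. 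It is precisely this concentration of the overlap — absent in your fixed-matching setup, where the overlap is always $n/2$ and only the signs fluctuate — that pushes the bound from $\sqrt n/\eps^2$ up to $n/\eps^2$. Your farness, balancedness, and tree/forest observations are fine, but they attach to the wrong family; you would need to redo the indistinguishability argument for the random-matching family, where the per-block independence you rely on no longer holds and the cycle-counting computation replaces it.
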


At the conceptual level, our above lower bound implies
that in the unknown topology case~--~even for the simplest non-trivial case of degree-$1$ Bayes nets~--~identity testing is information-theoretically essentially as hard as learning.
That is, in some cases, no tester with sub-learning sample complexity exists.
We view this fact as an interesting phenomenon that is absent from the previously
studied setting of testing unstructured discrete distributions.

\cref{thm:informal-uniformity-lower-unknown}  shows that
testing Bayes nets can be as hard as learning. However, it is still possible that
testing is easier than learning in \emph{most} natural situations. For the sake of intuition,
let us examine our aforementioned lower bound more carefully.
We note that the difficulty of the problem originates from the fact that
the explicit distribution is the uniform distribution, which can be thought of
as having any of a large number of possible structures. We claim that
this impediment can be circumvented if the explicit distribution satisfies
some non-degeneracy conditions. Intuitively, we want these conditions
to ensure \emph{robust identifiability} of the structure: that is, that any (unknown)
Bayes net sufficiently close to a non-degenerate Bayes net $Q$
must also share the same structure.

For tree structures, there is a very simple non-degeneracy condition.
Namely, that for each node, the two conditional probabilities for that node
(depending on the value of its parent) are non-trivially far from each other.
For Bayes nets of degree more than one, our non-degeneracy condition
is somewhat more complicated to state, but the intuition is still simple:
By definition, non-equivalent Bayesian network structures satisfy
different conditional independence constraints. Our
non-degeneracy condition rules out some of these possible
new conditional independence constraints,
as \emph{far} from being satisfied
by the non-degenerate Bayesian network.
Let $\gamma >0$ be a parameter quantifying non-degeneracy.
Under our non-degeneracy condition, we can design a {\em structure tester}
with the following performance guarantee:

\begin{theorem}[Structure Testing for Non-Degenerate Bayes Nets]\label{thm:informal-struct-test}
Let $\structure$ be a structure of degree at most $d$ and $P$
be a degree at most $d$ Bayes net over $\{0, 1\}^n$ with structure $\structure'$
whose underlying undirected graph has no more edges than $\structure$. There is an algorithm that uses
 $\bigO{(2^d+ d \log n)/\gamma^2}$ samples from $P$, runs in time $\bigO{n^{d+3}/\gamma^2}$,
 and distinguishes between the following two cases with probability at least $2/3$:
(i) $P$ can be expressed as a degree-$d$ Bayes net
with structure $\structure$ that is $\gamma$-non-degenerate; 
or (ii) $P$ cannot be expressed as a Bayes net with structure $\structure$.
\end{theorem}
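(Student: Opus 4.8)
The plan is to reduce structure testing to a collection of conditional-independence tests with small conditioning sets, exploiting that both $\structure$ and the (unknown) structure $\structure'$ of $P$ have in-degree at most $d$. First I would establish a structural characterization: a degree-$d$ Bayes net $P$ is expressible with structure $\structure$ if and only if a certain explicit family $\mathcal{F}_{\structure}$ of conditional-independence constraints holds, where each constraint has the form ``$\Pr_P[X_i=1\mid \Pi_{i,a}] = \Pr_P[X_i=1\mid \Pi_{i,a},\, X_j=b]$'' for a node $i$, a parental configuration $a$ of the $\structure$-parents of $i$, a non-descendant non-parent $j$, and $b\in\{0,1\}$ — so every constraint conditions on at most $d$ coordinates. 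The only non-obvious direction is that these ``pairwise'' constraints, conditioning merely on $\parent{i}$, already force the full factorization $P(x)=\prod_i P(x_i\mid x_{\parent{i}})$; I would prove this by a hybrid argument that collapses the a priori large conditioning set $\mathrm{nondesc}_{\structure}(i)\setminus\parent{i}$ one coordinate at a time, using that $P$ itself factorizes over the bounded-degree graph $\structure'$.

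Given this, the tester draws $N=O((2^d+d\log n)/\gamma^2)$ samples and, for every node $i$, every set $S$ with $|S|\le d$, and every $j\notin S\cup\{i\}$, computes empirical estimates of the conditional biases $\Pr_P[X_i=1\mid X_S=a]$ and $\Pr_P[X_i=1\mid X_S=a,\, X_j=b]$. The balancedness built into $\gamma$-non-degeneracy ensures that every conditioning cell of interest receives $\Omega(1/\gamma^2)$ samples, and a Chernoff bound together with a union bound over the $O(n^{d+1})$ triples $(i,S,j)$ — this union bound contributing the additive $d\log n$ term — guarantees that, with probability $9/10$, all these estimates are accurate to within $\gamma/8$. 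The tester \emph{accepts} (declares case (i)) if and only if, reading off these estimates, every constraint in $\mathcal{F}_{\structure}$ is satisfied to within $\gamma/4$ \emph{and} every inequality defining $\gamma$-non-degeneracy of $P$ with respect to $\structure$ holds with $\gamma/2$ in place of $\gamma$; otherwise it rejects. Looping over the $O(n^{d+1})$ triples and rescanning the sample for each yields running time $O(n^{d+3}/\gamma^2)$.

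Completeness is immediate: in case (i) every constraint in $\mathcal{F}_{\structure}$ holds exactly and every non-degeneracy inequality holds with slack $\gamma$, so all empirical checks pass. Soundness — which I expect to be the main obstacle — requires a \emph{robust identifiability} lemma: if $P$ is a degree-$d$ Bayes net whose structure $\structure'$ has no more edges than $\structure$ and all of the above empirical checks pass, then $P$ is \emph{exactly} expressible with structure $\structure$, so we cannot be in case (ii). The intuition is that if $P$ did not factorize over $\structure$, the structural characterization would supply a genuine conditional dependence violating some constraint in $\mathcal{F}_{\structure}$; since $\structure'$ has at most as many edges as $\structure$, such an ``extra'' dependence of $P$ must be compensated by a ``missing'' one, and the non-degeneracy conditions are designed precisely so that a missing dependence makes one of their defining inequalities fail by more than $\gamma/2$ — contradicting that all checks passed (the thresholds $\gamma/4$, $\gamma/2$ and the $\gamma/8$ estimation error are chosen so the margins close up). Turning the qualitative statement ``$P$ does not factorize over $\structure$'' into the quantitative ``some checked quantity is off by more than $\gamma/2$'' is the technical crux, and is exactly where the precise (deferred) definition of $\gamma$-non-degeneracy must be used; by contrast, the hybrid argument in the structural lemma and the sample/time bookkeeping are routine.
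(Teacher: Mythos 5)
There are two genuine gaps in your plan, and they sit exactly where the real work is. First, your sample-complexity argument rests on the claim that ``the balancedness built into $\gamma$-non-degeneracy ensures that every conditioning cell of interest receives $\Omega(1/\gamma^2)$ samples.'' No balancedness is assumed in this theorem, and the paper's notion of non-degeneracy (\cref{def:non:degeneracy}) is a statement about total variation distance to the nearest conditionally independent distribution; it does not bound $\Pr_P[X_S=a]$ from below, so individual cells $\{X_S=a\}$ can have arbitrarily small mass and their conditional biases cannot be estimated to $\pm\gamma/8$ with $O((2^d+d\log n)/\gamma^2)$ samples. The paper sidesteps this by never estimating per-cell conditional probabilities: its conditional-independence tester (\cref{prop:conditional:independence:tester}) learns the joint distribution of $X_{S\cup\{i,j\}}$ (support $2^{d+2}$) to TV distance $O(\gamma)$ and uses the \emph{probability-weighted} statistic $\E_{Y}[\,|\Cov[X_i,X_j\mid X_S=Y_S]|\,]$, which \cref{clm:good-conditional-dependence-metric} shows is equivalent, up to constant factors, to the TV-distance-to-conditional-independence; low-probability cells are automatically down-weighted.

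Second, and more fundamentally, your soundness hinges on a ``robust identifiability'' lemma -- approximate satisfaction of the constraint family $\mathcal{F}_{\structure}$ plus approximate non-degeneracy plus the edge-count condition implies \emph{exact} expressibility over $\structure$ -- which you acknowledge is the crux but do not prove, and your compensation intuition does not engage the machinery that is actually needed. The paper's argument is inverted: its tester checks that the \emph{dependences} demanded by the four conditions of \cref{def:non:degeneracy} are present (it rejects whenever some required pair tests as conditionally independent), so on acceptance it knows $P$ exactly satisfies those conditional dependences, hence is $\gamma'$-non-degenerate for \emph{some} $\gamma'>0$; it then applies the purely structural \cref{lem:struct-equiv}, whose proof goes through the Verma--Pearl characterization of $I$-equivalence (\cref{lem:Verma-Pearl}: equal skeletons and equal $\lor$-structures) with a case analysis matched to conditions (i)--(iv). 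Your proposal never addresses $I$-equivalence, skeletons, or immoralities, yet these are unavoidable: distinct DAGs can encode the same independence model, and the danger case of equal skeleton but different $\lor$-structures is precisely what conditions (ii)--(iv) are designed to exclude. Without an argument of this kind (or a proof of your approximate-to-exact lemma, which would in any case have to rebuild it), the soundness direction of your tester is unestablished; the quantitative thresholds $\gamma/8,\gamma/4,\gamma/2$ do not by themselves close this gap.
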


By invoking the structure test of the above theorem, we can reduce the identity testing with unknown structure
to the case of known structure, obtaining the following:

\begin{theorem}[Testing Identity of Non-Degenerate Unknown Structure Bayes Nets]
\label{thm:informal-upper-identity-unknown-nondegen}
There exists an algorithm with the following guarantees. Let $Q$ be a degree-$d$ Bayes net $Q$ over $\{0, 1\}^n$,
which is $(c,C)$ balanced and $\gamma$-non-degenerate for
$c=\tildeOmega{1/\sqrt{n}}$ and $C=\tildeOmega{d\eps^2/\sqrt{n}}$.  
Given the description of $Q$, $\eps >0$,
and sample access to a distribution $P$ promised to be a degree-$d$ Bayes net
with no more edges than $Q$, the algorithm takes
$\bigO{2^{d/2}\sqrt{n}/\eps^2+(2^d+ d \log n)/\gamma^2}$ samples from $P$,
runs in time $\bigO{n}^{d+3}(1/\gamma^2+1/\eps^2)$,
and distinguishes with probability at least $2/3$ between (i) $P=Q$ and (ii) $\normone{P-Q}  > \eps$.
\end{theorem}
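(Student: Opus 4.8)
The plan is to reduce identity testing of unknown-structure Bayes nets to the fixed-structure case, using the structure tester of \cref{thm:informal-struct-test} as a preprocessing step. Concretely, the algorithm first runs the structure tester of \cref{thm:informal-struct-test} on $P$ with the structure $\structure$ of the explicit net $Q$ and non-degeneracy parameter $\gamma$. If that tester reports that $P$ cannot be expressed with structure $\structure$, we immediately output ``$\normone{P-Q} > \eps$'', which is correct since $Q$ itself has structure $\structure$ and hence $P \neq Q$. Otherwise, we run the fixed-structure identity tester of \cref{thm:informal-identity-closeness-known} on $P$ against $Q$ using the (known) structure $\structure$, and return its verdict. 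Each of the two subroutines is run with failure probability boosted to $1/6$, so that a union bound yields overall success probability at least $2/3$.

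Correctness follows from a short case analysis. If $P = Q$, then $P$ is a $\gamma$-non-degenerate degree-$d$ Bayes net with structure $\structure$ (case (i) of \cref{thm:informal-struct-test}), so the structure tester outputs ``yes'' with probability at least $5/6$; conditioned on this we invoke the fixed-structure tester, which is applicable because $P$ genuinely has structure $\structure$ and $Q$ is $(c,C)$-balanced with $c=\tildeOmega{1/\sqrt{n}}$ and $C=\tildeOmega{d\eps^2/\sqrt{n}}$, and it correctly reports $P=Q$. If $\normone{P-Q} > \eps$, there are two sub-cases. If $P$ is not expressible with structure $\structure$, the structure tester outputs ``no'' with probability at least $5/6$ and we correctly declare ``far''. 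If $P$ is expressible with structure $\structure$, then whichever branch we take is safe: on output ``no'' we declare ``far'' (correct, as $P \neq Q$), and on output ``yes'' the fixed-structure tester of \cref{thm:informal-identity-closeness-known} is again applicable (since $P$ genuinely admits structure $\structure$) and detects $\normone{P-Q} > \eps$ with probability at least $5/6$.

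For the resource bounds, the structure tester contributes $\bigO{(2^d + d\log n)/\gamma^2}$ samples and $\bigO{n^{d+3}/\gamma^2}$ running time, while the fixed-structure identity tester contributes $\bigO{2^{d/2}\sqrt{n}/\eps^2}$ samples and runs in time polynomial in $n$ and its sample size, which is at most $\bigO{n}^{d+3}/\eps^2$. Summing gives the claimed sample complexity $\bigO{2^{d/2}\sqrt{n}/\eps^2 + (2^d + d\log n)/\gamma^2}$ and running time $\bigO{n}^{d+3}(1/\gamma^2 + 1/\eps^2)$.

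I expect the only real subtlety — rather than an obstacle in the usual sense — to be that the structure tester of \cref{thm:informal-struct-test} only separates ``$\gamma$-non-degenerate with structure $\structure$'' from ``not expressible with structure $\structure$'', and says nothing in the intermediate regime where $P$ admits structure $\structure$ but not $\gamma$-non-degenerately. The observation that makes the reduction go through is precisely that in this intermediate regime $P$ still genuinely has structure $\structure$, so the fixed-structure identity tester remains valid regardless of which branch the algorithm follows; hence correctness never relies on the structure tester behaving meaningfully there. One should also verify that the promise in the present theorem — that $P$ is a degree-$d$ Bayes net whose underlying undirected graph has no more edges than that of $Q$ — matches exactly the hypothesis under which \cref{thm:informal-struct-test} applies, so that invoking the structure tester is legitimate.
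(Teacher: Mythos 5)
Your proposal is correct and follows essentially the same route as the paper: the paper's proof of this theorem (Theorem~\ref{thm:unknown-structure-identity}) likewise first runs the structure tester of Theorem~\ref{thm:informal-struct-test} and, upon acceptance, invokes the fixed-structure identity tester of Theorem~\ref{theo:upper:knowndegreed:identity}, accepting only if both accept. Your explicit handling of the intermediate regime (where $P$ admits structure $\structure$ but not $\gamma$-non-degenerately) is a slightly more detailed version of the paper's terse justification — the paper instead leans on the fact that acceptance by the structure tester already implies, with high probability, that $P$ is expressible with structure $\structure$ — but the algorithm, correctness logic, and resource accounting coincide.
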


\noindent We remark that we can obtain an analogous result for the problem of testing closeness.
See~\cref{sec:closeness}.

We have shown that, without any assumptions, testing is almost as hard as learning for the case of trees.
An interesting question is whether this holds for high degrees as well. We show that for the case
of high degree sub-learning sample complexity is possible.
We give an identity testing algorithm for degree-$d$ Bayes nets with unknown structure, 
\emph{without} balancedness or degeneracy assumptions.
While the dependence on the number of nodes $n$ of this tester is suboptimal, 
it does essentially achieve the ``right'' dependence on the degree $d$, that is $2^{d/2}$:

\begin{theorem}[Sample Complexity Upper Bound of Identity Testing] \label{thm:informal-sample-identity-general}
Given the description of a degree-$d$ Bayes net $Q$ over $\{0, 1\}^n$, $\eps>0$,
and sample access to a degree-$d$ Bayes net $P$, we can distinguish between the cases
that $P=Q$ and $\normone{P-Q}  > \eps$, with probability at least $2/3$, using
$2^{d/2}\poly(n,1/\eps)$ samples from $P$.
\end{theorem}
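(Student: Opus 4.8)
The plan is to reduce to the fixed-structure identity tester of Theorem 1.2 by enumerating over possible structures for $P$, while avoiding any balancedness or non-degeneracy assumptions. The key observation is that, although $P$'s structure is unknown, $P$ is promised to be a degree-$d$ Bayes net, so its dependency graph $\structure'$ lies in a set of size $n^{O(dn)}$ (each of the $n$ nodes picks at most $d$ parents among nodes of smaller label). We cannot afford to iterate over all of these naively at the sample level, but we can afford it at the computational level, which is why the stated runtime is $2^{d/2}\poly(n,1/\eps)$ in samples but only $2^{d/2}\poly(n,1/\eps)$ — wait, the theorem only claims a sample bound, so time is unconstrained here; this gives us freedom.

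First I would handle the structure-recovery issue differently than in Theorem 1.6: instead of testing whether $P$ has a particular structure (which needed non-degeneracy), I would directly \emph{learn} a structure and conditional probability table $\widehat{P}$ that is $O(\eps)$-close to $P$ in total variation distance, using only $2^{d/2}\poly(n,1/\eps)$ samples. This is more samples than the learning lower bound of $\widetilde{\Theta}(2^d n/\eps^2)$ from Fact 1.1 would permit if we wanted the optimal dependence on $n$ — but we are allowed a suboptimal $\poly(n)$ factor, so we can afford, e.g., $2^{d/2}\cdot n^{O(1)}/\eps^2$ samples. Concretely: for every candidate DAG $\structure'$ of degree $\le d$, and every node/parental-configuration pair, estimate the relevant conditional probabilities empirically; then select the structure minimizing an appropriate empirical score (e.g. via the standard fact that a distribution is a Bayes net with structure $\structure'$ iff certain conditional-independence / KL-projection conditions hold, and these can be checked from $O(2^d\log n \cdot \poly(n)/\eps^2)$ samples by a union bound over the $n^{O(dn)}$ structures and the $2^d n$ parameters). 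The point is that with a $\poly(n)$ sample overhead we can afford the union bound over \emph{all} degree-$d$ structures simultaneously, recovering one whose induced Bayes net $\widehat P$ satisfies $\normone{P-\widehat P}\le \eps/10$ with high probability.

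Next, given $\widehat P$ and the explicit $Q$, I would distinguish the two cases as follows. If $\normone{\widehat P - Q} > \eps/2$, output ``$\normone{P-Q}>\eps$''; if $\normone{\widehat P - Q}$ is small, we still need to certify $P=Q$ versus $\normone{P-Q}>\eps$, and here the learned estimate alone is not precise enough. The fix is that once we have a candidate structure $\structure'$ for $P$ that is consistent with the data, we run the fixed-structure identity tester (Theorem 1.2) with the hypothesis being the Bayes net obtained by projecting $Q$ onto structure $\structure'$ (or, more carefully, we test identity between $P$ and $Q$ under the combined structure $\structure = \structure' \cup \structure_Q$, whose degree is at most $2d$, absorbed into the $2^{d/2}$ as a constant in the exponent after rescaling $d$). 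The balancedness requirement of Theorem 1.2 is the last obstacle: we cannot assume $Q$ or $P$ is balanced. To remove it, I would use the standard mixing trick — replace $P$ by $(1-\beta)P + \beta U$ and $Q$ by $(1-\beta)Q+\beta U$ for a suitable $\beta = \poly(1/n,\eps)$, which we can simulate from samples (mix each sample with probability $\beta$ with a fresh uniform sample) and compute explicitly for $Q$; the mixed distributions are Bayes nets on the same structures, they are now $(c,C)$-balanced with $c,C = \poly(1/n,\eps)$ as required by Theorem 1.2 after paying only a $\poly(n)$ factor in samples, and $\dtv$ changes by at most $\beta$, so distinguishing $\eps$-far is preserved up to constants.

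The main obstacle I anticipate is the structure-recovery step done \emph{without} a non-degeneracy assumption: when $P$ is degenerate (e.g. close to uniform, or with a near-conditional-independence), many structures fit the data, and we must argue that \emph{any} structure our procedure outputs yields a Bayes net $\widehat P$ close to $P$ — not that we recover the ``true'' structure, which may not be identifiable. The right statement is: with $2^{d/2}\poly(n)/\eps^2$ samples, the empirical KL-projection of $P$ onto \emph{every} degree-$d$ structure is estimated to additive accuracy $O(\eps^2)$ in squared Hellinger (equivalently $O(\eps)$ in $\dtv$) uniformly, so picking the empirically-best structure gives a $\widehat P$ with $\dtv(P,\widehat P)\le \dtv(P,\mathcal{BN}_{n,d}) + O(\eps) = O(\eps)$ since $P\in\mathcal{BN}_{n,d}$ exactly. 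Establishing this uniform estimation bound — controlling the deviation of conditional-probability estimates across exponentially many structures and parental configurations, including low-probability ones — is the technical heart, and is where the $\poly(n)$ slack (rather than optimal $O(n)$) in the sample complexity is spent.
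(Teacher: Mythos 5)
There is a fundamental gap in your first step, and it is exactly the point the theorem is designed to circumvent. You propose to \emph{learn} a hypothesis $\widehat P$ with $\normone{P-\widehat P}\le \eps/10$, arguing that the $\poly(n)$ slack in the budget lets you afford it. But the learning lower bound (Fact~\ref{fact:learning-sample}, proved in \cref{sec:learn-lower}) is $\tildeOmega{2^d n/\eps^2}$ -- the exponential factor is $2^d$, not $2^{d/2}$ -- and this already holds for a \emph{single known} structure, because each node has $2^d$ parental configurations whose conditional probabilities must all be estimated. The $\poly(n,1/\eps)$ slack can absorb the union bound over the $n^{O(dn)}$ candidate structures, but it cannot absorb the $2^d$-versus-$2^{d/2}$ gap once $d=\omega(\log n)$, which is precisely the regime in which the theorem is non-trivial (for $d=O(\log n)$ the statement follows trivially from learning). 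So your "uniform KL-projection estimation" step, which you correctly identify as the technical heart, is information-theoretically impossible within the stated sample budget. Two further steps are also broken: mixing $(1-\beta)P+\beta U$ (or re-randomizing coordinates) does \emph{not} in general produce a Bayes net on the same structure (this trick only preserves product structure), so balancedness for Theorem~\ref{theo:upper:knowndegreed:identity} cannot be enforced that way; and running the fixed-structure tester on the union structure $\structure'\cup\structure_Q$ of in-degree up to $2d$ costs $2^{(2d)/2}=2^{d}$ samples -- the exponent cannot be "rescaled" away, since the claim is $2^{d/2}$ in the degree of $P$ and $Q$ themselves.

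The paper's proof (Theorem~\ref{thm:unknown-structure-identity:informationtheoretic}) never learns $P$ and never recovers a structure. It only tests, for every set $S$ of at most $d+1$ coordinates, whether the marginal $P_S$ equals $Q_S$ versus $\normone{P_S-Q_S}>\poly(\eps/n)$; each such marginal lives on a domain of size $2^{d+1}$, so each test costs $2^{d/2}\poly(n/\eps)$ samples, and a union bound over the $O(n^{d+1})$ sets stays within the budget. The work is in the soundness argument: assuming WLOG $H(P)\le H(Q)$, one decomposes $\dkl{P}{Q}=\sum_i \dkl{P_i}{Q_i\mid P_1,\dots,P_{i-1}}$ and uses the matching $(d+1)$-wise statistics together with an entropy comparison to show each conditional mutual information term $\mutualinfo{Q_i}{Q_1,\dots,Q_{i-1}\mid Q_{S_i}}$ is $\poly(\eps/n)$, whence $P$ and $Q$ are close in KL and hence in $L_1$. (Note also that the paper's formal statement assumes $P$ and $Q$ admit a common topological ordering of the nodes, and the proof explicitly uses this; your proposal does not engage with that hypothesis either.)
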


(See~\cref{thm:unknown-structure-identity:informationtheoretic} for a more detailed statement
handling closeness testing as well.)
The message of this result is that when the degree $d$ increases, specifically for $d = \Omega(\log n)$,
the sample complexity of testing becomes lower than the sample complexity of learning.
We also show an analogue of~\cref{thm:informal-sample-identity-general}
for closeness testing of two unknown Bayes nets, under the additional assumption
that we know the topological ordering of the unknown DAGs.

\subsection{Organization} \label{sec:struc}


This paper is organized as follows:
In \cref{sec:prelim}, we give the necessary definitions and tools we will require. 
\cref{sec:product:identity} contains our matching upper and lower bounds for identity testing of product distributions, and is followed in~\cref{sec:product} by our matching upper and lower bounds for \emph{closeness} testing of product distributions. \cref{sec:product:to:bn:overview} then provides an overview of what is required to generalize these techniques from product distributions to Bayes nets, discussing at a high-level the following sections. 
In~\cref{sec:identity-known} we study the identity testing for Bayes nets with known structure:
We give an identity tester that works under a mild balancedness condition on the explicit Bayes net distribution, 
and also show that the sample complexity of our algorithm is optimal, up to constant factors.
In~\cref{sec:identity-uknown}, we study the identity testing for unknown structure Bayes nets:
We start by proving a sample complexity lower bound showing that, for the unknown structure regime,
uniformity testing is information-theoretically as hard as learning~--~even for the case of trees.
We then show that this lower bound can be circumvented under a natural non-degeneracy condition
on the explicit Bayes net distribution. Specifically, we give an identity tester with 
sub-learning sample complexity for all low-degree non-degenerate Bayes nets.
Our identity tester for unknown structure non-degenerate Bayes nets relies on a novel structure tester
that may be of interest in its own right.  \cref{sec:closeness} studies the corresponding closeness testing problems 
for both known and unknown structure Bayes nets.
Finally, in~\cref{sec:it:ub} we consider the case of high-degree Bayes nets and
obtain testers for identity and closeness of unknown-structure Bayes nets.
Our testers in this section have optimal (and sub-learning) sample complexity as a function of 
the maximum in-degree $d$ 
and polynomial dependence in the dimension $n$.

\subsection{Concurrent and Independent Work}\label{sec:related}
Contemporaneous work by \cite{DaskalakisP16} studies the identity testing problem
for Bayes nets with {\em the same known} graph structure. Using different arguments, 
they obtain a tester with sample complexity $\tilde{O}(2^{(3/4)d} \cdot n/\eps^2)$ and running time $O_{\eps}(n^{d+1})$ for this problem.
This sample bound is comparable to that of our~\cref{thm:unknown-structure-identity:informationtheoretic}
(that works without assumptions on the parameters), having the right dependence on $n, 1/\eps$ (as follows from our 
Theorem~\ref{thm:informal-uniformity-lower-unknown} and Fact~\ref{fact:learning-sample}) and a sub-optimal dependence on the degree $d$.
As previously mentioned, a sample complexity of $\Omega(n/\eps^2)$ is relevant for high-degree Bayes nets.
For the case of low-degree (which is the main focus of our paper), 
one can straightforwardly obtain the same sample bound just by learning the distribution (Fact~\ref{fact:learning-sample}).
\cite{DaskalakisP16} also obtain an $O(n^{1/2}/\eps^2)$ upper bound for testing identity
against a known product, matching our~\cref{theo:identity:product:ub}. 
(This sample bound is optimal by our~\cref{theo:lb:product:uniform}.)

\section{Preliminaries} \label{sec:prelim}
In this section, we provide the basic definitions and technical tools we shall use throughout this paper.

\paragraph{Basic Notation and Definitions}
The $L_1$-distance between two discrete probability distributions $P, Q$ supported on a set $A$ is defined as 
$\normone{P-Q} = \sum_{x \in A} \abs{P(x)-Q(x)}$. Our arguments will make essential use 
of related distance measures, specifically the Kullback--Leibler (KL) divergence, defined as 
$\dkl{P}{Q} = \sum_{x \in A} P(x) \log \frac{P(x)}{Q(x)}$, and the Hellinger distance, defined as
$\hellinger{P}{Q} = (1/\sqrt{2}) \cdot \sqrt{ \sum_{x \in A} (\sqrt{P(x)} - \sqrt{Q(x)})^2}$.

We write $\log$ and $\ln$ for the binary and natural logarithms, respectively, and by $H(X)$ the (Shannon) 
entropy of a discrete random variable $X$ (as well as, by extension, $H(P)$ for the entropy of a discrete distribution $P$). 
We denote by $\mutualinfo{X}{Y}$ the mutual information between two random variables $X$ and $Y$, 
defined as $\mutualinfo{X}{Y} = \sum_{x,y} \probaOf{(X,Y) = (x,y) } \log \frac{\probaOf{(X,Y) = (x,y)}}{\probaOf{X=x}\probaOf{Y=y}}$. 
For a probability distribution $P$, we write $X\sim P$ to indicate that $X$ is distributed according to $P$.
For probability distributions $P, Q$, we will use $P\otimes Q$ to denote the product distribution with marginals $P$ and $Q$.

\paragraph{Identity and Closeness Testing}
We now formally define the testing problems that we study.
 
\begin{definition}[Identity testing]
An \emph{{identity} testing algorithm of distributions belonging to a class $\mathcal{C}$} 
is a randomized algorithm which satisfies the following. Given a parameter $0< \eps <1$ 
and the explicit description of a reference distribution $Q\in \mathcal{C}$, as well as 
access to independent samples from an unknown distribution $P\in \mathcal{C}$, 
the algorithm outputs either $\textsf{accept}$ or $\textsf{reject}$ such that the following holds:
\begin{itemize}
\item(Completeness) if $P=Q$, then the algorithm outputs $\textsf{accept}$ with probability at least $2/3$;
\item(Soundness) if $\normone{P-Q} \geq \eps$, then the algorithm outputs $\textsf{reject}$ with probability at least $2/3$.
\end{itemize}
\end{definition}
Note that by the above definition the algorithm is allowed to answer arbitrarily if neither the completeness nor the soundness cases hold. 
The closeness testing problem is similar, except that now both $P,Q$ are unknown 
and are only available through independent samples.
\begin{definition}[Closeness testing]
A \emph{{closeness} testing algorithm of distributions belonging to a class $\mathcal{C}$} 
is a randomized algorithm which satisfies the following. Given a parameter $0< \eps <1$ 
and access to independent samples from two unknown distributions $P, Q\in \mathcal{C}$, 
the algorithm outputs either $\textsf{accept}$ or $\textsf{reject}$ such that the following holds:
\begin{itemize}
\item(Completeness) if $P=Q$, then the algorithm outputs $\textsf{accept}$ with probability at least $2/3$;
\item(Soundness) if $\normone{P-Q} \geq \eps$, then the algorithm outputs $\textsf{reject}$ with probability at least $2/3$.
\end{itemize}
\end{definition}

Finally, we also consider a third related question, that of \emph{structure testing}:
\begin{definition}[Structure testing]
{Let $\mathcal{C}$ be a family of Bayes nets.}
A \emph{{structure} testing algorithm of Bayes nets belonging to $\mathcal{C}$} 
is a randomized algorithm which satisfies the following. Given a parameter $0< \eps <1$ 
and the explicit description of a DAG $\structure$, as well as access to independent samples 
from an unknown $P\in \mathcal{C}$, 
the algorithm outputs either $\textsf{accept}$ or $\textsf{reject}$ such that the following holds:
\begin{itemize}
\item(Completeness) if $P$ can be expressed as a Bayes net with structure $\structure$, 
then the algorithm outputs $\textsf{accept}$ with probability at least $2/3$;
\item(Soundness) if $\normone{P-Q} > \eps$ for every $Q\in\mathcal{C}$ with structure $\structure$, 
then the algorithm outputs $\textsf{reject}$ with probability at least $2/3$.
\end{itemize}
\end{definition}

In all cases the two relevant complexity measures are the \em{sample complexity}, i.e., the number of samples drawn by the algorithm,
and the \em {time complexity} of the algorithm. The golden standard is to achieve sample complexity 
that is information-theoretically optimal 
and time-complexity linear in the sample complexity. 

{
In this work, the family $\mathcal{C}$ will correspond to the family of Bayes nets over $\{0,1\}^n$,
where we will impose an upper bound $d$ on the maximum in-degree of each node.
For $d=0$, i.e., when the underlying graph has no edges, 
we obtain the family of product distributions over $\{0,1\}^n$.}

\paragraph{Relations between Distances} We will require a number of 
inequalities relating the $L_1$-distance, the KL-divergence, and the Hellinger distance between distributions.
We state a number of inequalities relating these quantities that we will use extensively in our arguments.
The simple proofs are deferred to \cref{sec:prelim:distances:proofs}. 

Recall that a binary product distribution is a distribution over $\{0, 1\}^n$ whose coordinates are independent; and that such a distribution is determined by its mean vector. 
We have the following:

\begin{lemma}\label{lem:prod:kl}
Let $P, Q$ be binary product distributions with mean vectors $p, q \in (0, 1)^n.$
We have that 
\begin{equation}\label{eq:prod:kl}
 2\sum_{i=1}^n (p_i-q_i)^2 \leq \dkl{P}{Q} \leq \sum_{i=1}^n \frac{(p_i-q_i)^2}{q_i(1-q_i)} \;.
\end{equation}
In particular, if there exists $\alpha > 0$ such that $q\in[\alpha,1-\alpha]^n$, we obtain
\begin{equation}\label{eq:prod:kl:balanced}
 2\normtwo{p-q}^2 \leq \dkl{P}{Q} \leq \frac{1}{\alpha(1-\alpha)}\normtwo{p-q}^2 \;.
\end{equation}
\end{lemma}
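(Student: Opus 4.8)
The plan is to reduce to the single-coordinate (Bernoulli) case and sum, exploiting the fact that KL-divergence tensorizes over product distributions. Writing $P=\bigotimes_{i=1}^n \bernoulli{p_i}$ and $Q=\bigotimes_{i=1}^n \bernoulli{q_i}$, we have $\dkl{P}{Q}=\sum_{i=1}^n \dkl{\bernoulli{p_i}}{\bernoulli{q_i}}$, so it suffices to establish, for all $p,q\in(0,1)$,
\[
 2(p-q)^2 \;\le\; \dkl{\bernoulli{p}}{\bernoulli{q}} \;\le\; \frac{(p-q)^2}{q(1-q)}\,.
\]

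For the upper bound I would expand $\dkl{\bernoulli{p}}{\bernoulli{q}}=p\ln\frac{p}{q}+(1-p)\ln\frac{1-p}{1-q}$ and apply the elementary inequality $\ln x\le x-1$ to each logarithm, obtaining $\dkl{\bernoulli{p}}{\bernoulli{q}}\le p\cdot\frac{p-q}{q}+(1-p)\cdot\frac{q-p}{1-q}$; putting the two terms over the common denominator $q(1-q)$ collapses the right-hand side to exactly $\frac{(p-q)^2}{q(1-q)}$. For the lower bound, the cleanest route is to observe that $\dtv(\bernoulli{p},\bernoulli{q})=|p-q|$ and invoke Pinsker's inequality $\dkl{\cdot}{\cdot}\ge 2\dtv(\cdot,\cdot)^2$; alternatively, a self-contained argument fixes $q$ and checks that $g(p):=\dkl{\bernoulli{p}}{\bernoulli{q}}-2(p-q)^2$ satisfies $g(q)=g'(q)=0$ and $g''(p)=\frac{1}{p(1-p)}-4\ge 0$ (since $p(1-p)\le 1/4$), so $g$ is convex with minimum $0$ at $p=q$ and hence $g\ge 0$ on $(0,1)$. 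Summing both inequalities over $i\in[n]$ yields \eqref{eq:prod:kl}.

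For the balanced version \eqref{eq:prod:kl:balanced} the lower bound is immediate from \eqref{eq:prod:kl}, and for the upper bound I would note that $x\mapsto x(1-x)$ is concave, hence attains its minimum over $[\alpha,1-\alpha]$ at an endpoint, giving $q_i(1-q_i)\ge\alpha(1-\alpha)$ for every $i$; plugging this into \eqref{eq:prod:kl} gives $\dkl{P}{Q}\le\frac{1}{\alpha(1-\alpha)}\sum_i (p_i-q_i)^2=\frac{1}{\alpha(1-\alpha)}\normtwo{p-q}^2$. There is no genuine obstacle here; the only mild subtlety is making the per-coordinate upper bound tight — using $\ln x\le x-1$ rather than a cruder estimate so that the two logarithmic terms recombine \emph{exactly} into $(p-q)^2/(q(1-q))$ — and recalling that Pinsker's inequality is being used in the form with constant $2$ (equivalently, for the $[0,1]$-normalized total variation distance).
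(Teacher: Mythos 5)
Your proposal is correct and follows essentially the same route as the paper's proof: tensorize the KL-divergence over coordinates, prove the per-coordinate upper bound via $\ln x \le x-1$ (equivalently $\ln(1+u)\le u$), and prove the lower bound by showing $p\mapsto \dkl{\bernoulli{p}}{\bernoulli{q}}-2(p-q)^2$ is convex with a zero minimum at $p=q$, which is exactly the paper's argument (the paper writes the second derivative as $\frac{(1-2p)^2}{p(1-p)}$, the same quantity as your $\frac{1}{p(1-p)}-4$). Your alternative lower-bound route via Pinsker's inequality is also valid but unnecessary; the balanced corollary follows just as you say.
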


\noindent Recall that for any pair of distributions $P, Q$, 
Pinsker's inequality states that $\normone{P-Q}^2 \leq  2\dkl{P}{Q}$.
This directly implies the following:

\begin{corollary}\label{lem:prod:dtv:kl}
Let $P, Q$ be binary product distributions with mean vectors $p, q \in (0, 1)^n.$
We have that 
\[
\normone{P-Q}^2 \leq 2\sum_{i=1}^n \frac{(p_i-q_i)^2}{q_i(1-q_i)} \;.
\]
\end{corollary}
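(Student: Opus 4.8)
The plan is to chain the two ingredients that have just been recorded. First I would write down Pinsker's inequality for the pair of product distributions $P,Q$, which gives $\normone{P-Q}^2 \le 2\dkl{P}{Q}$ with no hypotheses beyond $P,Q$ being probability distributions (so in particular it applies here). Then I would invoke the upper bound in the right-hand inequality of \eqref{eq:prod:kl} from Lemma~\ref{lem:prod:kl}, namely $\dkl{P}{Q} \le \sum_{i=1}^n (p_i-q_i)^2/(q_i(1-q_i))$, which is exactly where the product structure and the assumption $q \in (0,1)^n$ (so that the denominators are positive) enter. Substituting the second bound into the first yields
\[
\normone{P-Q}^2 \le 2\dkl{P}{Q} \le 2\sum_{i=1}^n \frac{(p_i-q_i)^2}{q_i(1-q_i)},
\]
which is the claimed inequality.

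There is essentially no obstacle here: the statement is an immediate corollary, and all the work is already done in Lemma~\ref{lem:prod:kl} (whose proof, deferred to the appendix, is the one place where one actually computes the KL-divergence of a product distribution coordinatewise as $\sum_i \dkl{\bernoulli{p_i}}{\bernoulli{q_i}}$ and then bounds each single-coordinate term by $(p_i-q_i)^2/(q_i(1-q_i))$ via a convexity/Taylor estimate on $[0,1]$). The only thing worth a sentence of care is making sure Pinsker is applied with the natural logarithm convention matching the definition of $\dkl{\cdot}{\cdot}$ used in the preliminaries, so that the constant $2$ is correct; with that convention the two displayed inequalities compose exactly as written and no further estimation is needed.
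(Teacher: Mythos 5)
Your proposal is correct and is exactly the paper's argument: the corollary is obtained by combining Pinsker's inequality $\normone{P-Q}^2 \leq 2\dkl{P}{Q}$ with the upper bound on $\dkl{P}{Q}$ from Lemma~\ref{lem:prod:kl}. Your remark about keeping the logarithm convention consistent with the definition of the KL-divergence (so that the constant $2$ in Pinsker is valid) is the right point of care, and nothing further is needed.
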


\noindent The following lemma states an incomparable and symmetric upper bound on the $L_1$-distance, 
as well as a lower bound.
\begin{lemma}\label{lem:prod:dtv:hellinger}
Let $P, Q$ be binary product distributions with mean vectors $p, q \in (0, 1)^n.$
Then it holds that
\begin{align*}
\min\Big( c, \normtwo{p-q}^4 \Big) &\leq \normone{P-Q}^2 \leq 8 \sum_{i=1}^n \frac{(p_i-q_i)^2}{(p_i+q_i)(2-p_i-q_i)} \;.
\end{align*}
for some absolute constant $c>0$. (Moreover, one can take $c = 4(1-e^{-3/2}) \simeq 3.11$.)
\end{lemma}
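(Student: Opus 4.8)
The plan is to prove the three inequalities of \cref{lem:prod:dtv:hellinger} by routing everything through the Hellinger distance between the product distributions $P$ and $Q$. The key structural fact is that Hellinger distance tensorizes nicely: for product distributions, $1 - \hellinger{P}{Q}^2 = \prod_{i=1}^n \left(1 - \hellinger{P_i}{Q_i}^2\right)$, where $P_i, Q_i$ are the $i$-th Bernoulli marginals. A direct computation for Bernoulli marginals gives $\hellinger{P_i}{Q_i}^2 = 1 - \sqrt{p_iq_i} - \sqrt{(1-p_i)(1-q_i)}$, and a short Taylor-type estimate shows this is comparable to $(p_i - q_i)^2 / \left((p_i+q_i)(2-p_i-q_i)\right)$ up to absolute constants; more precisely one can show $\hellinger{P_i}{Q_i}^2 \ge \frac{1}{8}\cdot\frac{(p_i-q_i)^2}{(p_i+q_i)(2-p_i-q_i)}$ after clearing denominators and using $\sqrt{x}-\sqrt{y} = (x-y)/(\sqrt{x}+\sqrt{y})$ twice.

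For the upper bound, I would combine two standard facts: first, $\normone{P-Q}^2 \le 4\,\hellinger{P}{Q}^2$ (the clean direction of the Hellinger--$L_1$ inequalities), and second, the subadditivity bound $\hellinger{P}{Q}^2 \le \sum_{i=1}^n \hellinger{P_i}{Q_i}^2$, which follows from the tensorization identity above together with the elementary inequality $1 - \prod_i(1-a_i) \le \sum_i a_i$ for $a_i \in [0,1]$. Chaining these with the per-coordinate estimate $\hellinger{P_i}{Q_i}^2 \le \frac{1}{2}\cdot\frac{(p_i-q_i)^2}{(p_i+q_i)(2-p_i-q_i)}$ (the matching easy direction of the Bernoulli computation) yields $\normone{P-Q}^2 \le 8\sum_i \frac{(p_i-q_i)^2}{(p_i+q_i)(2-p_i-q_i)}$, which is exactly the claimed right-hand inequality. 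The constant bookkeeping here is the only slightly fiddly part, but it is routine.

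For the lower bound, I would use the other Hellinger--$L_1$ inequality, $\normone{P-Q}^2 \ge 2\,\hellinger{P}{Q}^2$ when restricted to $\hellinger{P}{Q}^2 \le 1$, but more usefully I would invoke the cleaner relation $\normone{P-Q} \ge \hellinger{P}{Q}^2$ (equivalently $\normone{P-Q} \ge 1 - \prod_i(1 - \hellinger{P_i}{Q_i}^2)$). Now using $1 - x \le e^{-x}$, we get $\prod_i(1-\hellinger{P_i}{Q_i}^2) \le \exp(-\sum_i \hellinger{P_i}{Q_i}^2)$, so $\normone{P-Q} \ge 1 - \exp(-\sum_i \hellinger{P_i}{Q_i}^2) \ge 1 - \exp(-\frac{1}{8}\sum_i \frac{(p_i-q_i)^2}{(p_i+q_i)(2-p_i-q_i)})$. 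Then I split into two cases according to whether $\sum_i \frac{(p_i-q_i)^2}{(p_i+q_i)(2-p_i-q_i)}$ exceeds some threshold $t_0$: if it does, then $\normone{P-Q} \ge 1 - e^{-t_0/8}$, a constant; if it does not, then I use the crude bound $\frac{(p_i-q_i)^2}{(p_i+q_i)(2-p_i-q_i)} \ge \frac{1}{2}(p_i-q_i)^2$ together with $1 - e^{-x} \ge x/2$ for small $x$ to conclude $\normone{P-Q} \ge \frac{1}{c'}\normtwo{p-q}^2$, and squaring gives the $\normtwo{p-q}^4$ term. Taking $t_0$ to optimize the constant $c = 4(1-e^{-3/2})$ falls out of choosing the threshold so the two cases meet.

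\textbf{Main obstacle.} The genuinely delicate point is the per-coordinate estimate relating $\hellinger{P_i}{Q_i}^2 = 1 - \sqrt{p_iq_i} - \sqrt{(1-p_i)(1-q_i)}$ to $\frac{(p_i-q_i)^2}{(p_i+q_i)(2-p_i-q_i)}$ with explicit constants $\frac{1}{8}$ and $\frac{1}{2}$ holding uniformly over all $(p_i,q_i)\in(0,1)^2$, including near the boundary. This requires being careful with the algebraic manipulation $1 - \sqrt{ab} - \sqrt{(1-a)(1-b)}$, rationalizing appropriately (e.g.\ multiplying by the conjugate $1 + \sqrt{ab} + \sqrt{(1-a)(1-b)}$ and simplifying the numerator to something involving $(a-b)^2$), and then bounding the resulting denominators; everything else in the argument is assembly of standard inequalities.
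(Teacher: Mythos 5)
Your overall route is the same as the paper's: pass to the Hellinger distance, use the tensorization identity $1-\hellinger{P}{Q}^2=\prod_{i=1}^n\bigl(1-\hellinger{P_i}{Q_i}^2\bigr)$, compare each Bernoulli term $1-\sqrt{p_iq_i}-\sqrt{(1-p_i)(1-q_i)}$ to $\frac{(p_i-q_i)^2}{(p_i+q_i)(2-p_i-q_i)}$, get the upper bound by subadditivity and an $L_1$--Hellinger inequality, and get the lower bound from $\normone{P-Q}\ge 1-\prod_i\bigl(1-\hellinger{P_i}{Q_i}^2\bigr)$, an exponential bound, and a case split on whether the sum of per-coordinate terms is large. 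The paper's lower bound is exactly of this shape (it uses $\sqrt{xy}+\sqrt{(1-x)(1-y)}\le 1-\tfrac12(x-y)^2$ per coordinate and the threshold $3/2$), so that half of your plan is sound; note only that your extra factors ($\hellinger{P_i}{Q_i}^2\ge\tfrac18 z_i$, $z_i\ge\tfrac12(p_i-q_i)^2$, $1-e^{-x}\ge x/2$, where $z_i$ denotes the $i$-th summand on the right-hand side) leave you with a small constant in front of $\normtwo{p-q}^4$ after squaring; the paper's own bookkeeping there is similarly loose, so this is cosmetic.

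The upper bound, however, has a genuine constant/normalization error. The normalization forced by your own tensorization identity and by the formula $\hellinger{P_i}{Q_i}^2=1-\sqrt{p_iq_i}-\sqrt{(1-p_i)(1-q_i)}$ is $\hellinger{P}{Q}^2=1-\sum_x\sqrt{P(x)Q(x)}$, and under it the correct $L_1$--Hellinger inequality is $\normone{P-Q}^2\le 8\,\hellinger{P}{Q}^2$, not $4\,\hellinger{P}{Q}^2$: for $P=\bernoulli{1/2+\delta}$, $Q=\bernoulli{1/2-\delta}$ one has $\normone{P-Q}^2=16\delta^2$ while $4\hellinger{P}{Q}^2\approx 8\delta^2$. (Your factor $4$ is valid only for the doubled normalization, under which the tensorization identity you invoke is false.) Likewise the per-coordinate bound $\hellinger{P_i}{Q_i}^2\le\tfrac12 z_i$ is false: as $(p_i,q_i)\to(1,0)$ both $\hellinger{P_i}{Q_i}^2$ and $z_i$ tend to $1$, so the correct uniform constant is $1$, obtained from $2\sqrt{xy}=x+y-(\sqrt{x}-\sqrt{y})^2$ together with $(\sqrt{x}+\sqrt{y})^2\ge x+y$. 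Finally, even taken at face value your two constants chain to $4\cdot\tfrac12=2$, not $8$, and the resulting claim $\normone{P-Q}^2\le 2\sum_i z_i$ is simply false ($n=1$, $p=1-\delta$, $q=\delta$: the left side tends to $4$, the right side to $2$). The repair is exactly the paper's chain: $\normone{P-Q}^2\le 8\,\hellinger{P}{Q}^2\le 8\sum_{i=1}^n\frac{(p_i-q_i)^2}{(p_i+q_i)(2-p_i-q_i)}$, with per-coordinate constant $1$ and all of the factor $8$ coming from the $L_1$--Hellinger step.
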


\noindent While the above is specific to product distributions, 
we will require analogous inequalities for Bayes nets.
We start with the following simple lemma:

\begin{lemma}\label{lemma:kl:bn}
Let $P$ and $Q$ be Bayes nets with the same dependency graph.
In terms of the conditional probability tables $p$ and $q$ of $P$ and $Q$, we have:
\begin{align*}
2\sum_{k=1}^s \probaDistrOf{P}{ \Pi_k }  (p_k-q_k)^2 &\leq \dkl{P}{Q} \leq \sum_{k=1}^s \probaDistrOf{P}{ \Pi_k }  \frac{(p_k-q_k)^2}{q_k(1-q_k)} \;.
\end{align*}
\end{lemma}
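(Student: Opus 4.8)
The plan is to exploit the chain-rule factorization of Bayes nets to reduce the statement to a sum, over all parental configurations $k = (i,a)$, of one-dimensional Bernoulli comparisons, and then invoke (a Bernoulli special case of) the bounds already established in Lemma~\ref{lem:prod:kl}. Concretely, since $P$ and $Q$ share the same dependency graph $\structure$ (topologically sorted), we can write $P(x) = \prod_{i=1}^n P(x_i \mid x_{\parent{i}})$ and likewise for $Q$, where each conditional factor depends only on the coordinates of $x$ at $\parent{i}$. Plugging this into the definition $\dkl{P}{Q} = \sum_x P(x) \log \frac{P(x)}{Q(x)}$ and splitting the logarithm of the product into a sum, we get
\[
\dkl{P}{Q} = \sum_{i=1}^n \sum_x P(x) \log \frac{P(x_i \mid x_{\parent{i}})}{Q(x_i \mid x_{\parent{i}})}.
\]
For each fixed $i$, the inner summand depends on $x$ only through $x_i$ and $x_{\parent{i}}$, so marginalizing out the remaining coordinates lets us rewrite the inner sum as $\sum_{a \in \{0,1\}^{|\parent{i}|}} \probaDistrOf{P}{\Pi_{i,a}} \sum_{b \in \{0,1\}} \probaCond{X_i = b}{\Pi_{i,a}} \log \frac{P(X_i = b \mid \Pi_{i,a})}{Q(X_i = b \mid \Pi_{i,a})}$, using that under $P$ the event $\Pi_{i,a}$ has probability $\probaDistrOf{P}{\Pi_{i,a}}$ and that conditioned on it, $X_i$ is Bernoulli with parameter $p_{i,a} = p_k$. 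Thus
\[
\dkl{P}{Q} = \sum_{k=1}^m \probaDistrOf{P}{\Pi_k} \cdot \operatorname{D}\!\left(\bernoulli{p_k} \,\middle\|\, \bernoulli{q_k}\right),
\]
where the inner quantity is the KL-divergence between two single Bernoulli distributions.

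The next step is to bound $\operatorname{D}(\bernoulli{p_k} \| \bernoulli{q_k})$ above and below. This is exactly the $n=1$ case of Lemma~\ref{lem:prod:kl}, which gives $2(p_k - q_k)^2 \leq \operatorname{D}(\bernoulli{p_k}\|\bernoulli{q_k}) \leq \frac{(p_k-q_k)^2}{q_k(1-q_k)}$. (If one prefers a self-contained derivation: the upper bound follows from $\log t \leq t-1$ applied to each of the two terms, and the lower bound from the scalar inequality $x\log\frac{x}{y} + (1-x)\log\frac{1-x}{1-y} \geq 2(x-y)^2$, i.e. Pinsker in dimension one, or a direct convexity/Taylor argument.) Substituting these two inequalities into the displayed identity and pulling the nonnegative weights $\probaDistrOf{P}{\Pi_k}$ through the sum yields both halves of the claimed inequality simultaneously.

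The only point requiring care — and the step I'd flag as the main (mild) obstacle — is the marginalization manipulation: one must be careful that the weight attached to the $i$-th term is $\probaDistrOf{P}{\Pi_k}$ computed under $P$ (not $Q$), which is precisely what makes the bound asymmetric in $P$ and $Q$, matching the asymmetry of the KL-divergence itself. There is also a harmless degeneracy issue if some $\probaDistrOf{P}{\Pi_k} = 0$ (then that configuration contributes nothing and can be dropped) or if $q_k \in \{0,1\}$ while $p_k \neq q_k$ (then $\dkl{P}{Q} = \infty$ and the right-hand side is also $\infty$, so the inequality holds trivially); for the stated regime $p,q$ with entries in $(0,1)$ these do not arise. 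No further estimates are needed.
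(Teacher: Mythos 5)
Your proposal is correct and follows essentially the same route as the paper: decompose $\dkl{P}{Q}$ as $\sum_k \probaDistrOf{P}{\Pi_k}\,\operatorname{D}(\bernoulli{p_k}\|\bernoulli{q_k})$ and then apply the scalar Bernoulli bounds $2(p_k-q_k)^2 \le \operatorname{D}(\bernoulli{p_k}\|\bernoulli{q_k}) \le (p_k-q_k)^2/(q_k(1-q_k))$ from (the $n=1$ case of) Lemma~\ref{lem:prod:kl}. The only difference is cosmetic: the paper derives the conditional decomposition by an induction over coordinates on the prefix distributions, whereas you obtain it directly by swapping sums and marginalizing, which is a perfectly valid shortcut.
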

Finally, we state an alternative bound, expressed with respect to the Hellinger distance between two Bayes nets:
\begin{lemma}[{\cite[Lemma 4]{DiakonikolasKS16b}}]\label{lemma:hellinger:bn} 
Let $P$ and $Q$ be Bayes nets with the same dependency graph.
In terms of the conditional probability tables $p$ and $q$ of $P$ and $Q$, we have:
\begin{align*}
&\hellinger{P}{Q}^2 \leq 2\sum_{k=1}^s \sqrt{\probaDistrOf{P}{ \Pi_k }\probaDistrOf{Q}{ \Pi_k }} \frac{(p_k-q_k)^2}{(p_k+q_k)(2-p_k-q_k)} \;.
\end{align*}
\end{lemma}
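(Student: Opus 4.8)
The plan is to reduce the bound to a single-coordinate estimate and then combine coordinates through a tensorization (chain-rule) argument for the Hellinger distance. It is convenient to pass to the Hellinger affinity $A(P,Q)\eqdef\sum_x\sqrt{P(x)Q(x)}$: with the normalization used here, $\hellinger{P}{Q}^2 = 1-A(P,Q)$, and for a single Bernoulli coordinate $A(\bernoulli{a},\bernoulli{b})=\sqrt{ab}+\sqrt{(1-a)(1-b)}$.

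\emph{Single-coordinate bound.} Writing $\hellinger{\bernoulli{a}}{\bernoulli{b}}^2 = \frac12\big[(\sqrt a-\sqrt b)^2+(\sqrt{1-a}-\sqrt{1-b})^2\big]$, using $(\sqrt a-\sqrt b)^2=(a-b)^2/(\sqrt a+\sqrt b)^2$ together with $(\sqrt a+\sqrt b)^2\ge a+b$ (and the analogous inequality for $1-a,1-b$), one obtains
\[
\hellinger{\bernoulli{a}}{\bernoulli{b}}^2\;\le\;\frac{(a-b)^2}{2}\left(\frac{1}{a+b}+\frac{1}{2-a-b}\right)=\frac{(a-b)^2}{(a+b)(2-a-b)}\,.
\]
Applied with $a=p_k$, $b=q_k$, this is exactly the summand appearing in the lemma (even without the factor~$2$).

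\emph{Tensorization.} The core step is to prove, by induction on the number of nodes $n$, the inequality
\[
\hellinger{P}{Q}^2\;\le\;\sum_{k=1}^m\sqrt{\probaDistrOf{P}{\Pi_k}\probaDistrOf{Q}{\Pi_k}}\;\hellinger{\bernoulli{p_k}}{\bernoulli{q_k}}^2,
\]
peeling off sinks one at a time. Since the nodes are topologically ordered, node $n$ is a sink, so the marginals $P',Q'$ of $P,Q$ on $\{1,\dots,n-1\}$ are Bayes nets with the dependency graph of $\structure$ restricted to $\{1,\dots,n-1\}$, which gives access to the inductive hypothesis. Factoring $P(x)=P'(x_{1:n-1})\,P(x_n\mid x_{\parent{n}})$ (and likewise for $Q$) and carrying out the sum over $x_n$ first gives $A(P,Q)=\sum_{x_{1:n-1}}\sqrt{P'(x_{1:n-1})Q'(x_{1:n-1})}\;A(\bernoulli{p_{n,x_{\parent{n}}}},\bernoulli{q_{n,x_{\parent{n}}}})$. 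Substituting $A(\cdot,\cdot)=1-\hellinger{\cdot}{\cdot}^2$, grouping the terms $x_{1:n-1}$ by the value $a$ of $x_{\parent{n}}$, and bounding $\sum_{x_{1:n-1}:\,x_{\parent{n}}=a}\sqrt{P'(x_{1:n-1})Q'(x_{1:n-1})}\le\sqrt{\probaDistrOf{P}{\Pi_{n,a}}\probaDistrOf{Q}{\Pi_{n,a}}}$ by Cauchy--Schwarz --- using that the event $\Pi_{n,a}$ depends only on coordinates in $\parent{n}\subseteq\{1,\dots,n-1\}$, so its $P',Q'$-probabilities coincide with its $P,Q$-probabilities --- yields
\[
\hellinger{P}{Q}^2\;\le\;\hellinger{P'}{Q'}^2+\sum_{a}\sqrt{\probaDistrOf{P}{\Pi_{n,a}}\probaDistrOf{Q}{\Pi_{n,a}}}\;\hellinger{\bernoulli{p_{n,a}}}{\bernoulli{q_{n,a}}}^2\,.
\]
Invoking the inductive hypothesis on $P',Q'$ completes the induction (the base case $n=0$ being trivial).

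Combining the two parts gives the lemma --- in fact with constant $1$ rather than $2$, so the stated bound holds with room to spare. The one delicate point is the marginalization/Cauchy--Schwarz step, together with the (easy but necessary) observation that deleting the sink $n$ genuinely leaves a Bayes net on the restricted DAG, so that the induction is well-founded; everything else is routine bookkeeping.
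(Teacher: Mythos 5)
Your proof is correct, and it is self-contained, which is more than the paper provides: the paper does not prove this lemma but simply cites it as Lemma~4 of~\cite{DiakonikolasKS16b}, whose argument follows the same standard route you take (factorize the Hellinger affinity along the topological order, bound each Bernoulli term by the chi-square-like quantity, and control the parental-configuration weights via Cauchy--Schwarz). Your single-coordinate bound and the sink-peeling induction are both sound — in particular, the key observations that the last node in the topological order is a sink, that marginalizing it out leaves a Bayes net on the restricted DAG with the same conditional probability table, and that $\Pi_{n,a}$ is measurable with respect to the first $n-1$ coordinates are exactly the points that need checking — and your argument in fact yields the inequality with constant $1$ in place of the stated $2$, so the lemma follows with room to spare.
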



\section{Testing Identity of Product Distributions} \label{sec:product:identity}

Before considering the general case of testing properties of Bayesian nets, we first provide in this section a sample-optimal efficient algorithm and a matching information-theoretic lower bound for the related, but simpler question of testing identity of \emph{product distributions} over $\{0, 1\}^n$ (the next section will focus on the harder problem of closeness testing, again for product distributions). Our results for this setting can be viewed as discrete analogues
of testing identity and closeness of high-dimensional spherical Gaussians,
that have been studied in the statistics literature~\cite{Hotelling1931, BaiS96, Sriv08, chen2010}.
We note that the Gaussian setting is simpler since the total variation distance can be bounded
by the Euclidean distance between the mean vectors, instead of the chi-squared distance.

The structure of this section is as follows: In~\cref{ssec:product-upper}, we give an identity testing
algorithm for $n$-dimensional binary product distributions with sample complexity $O(\sqrt{n}/\eps^2)$. 
In~\cref{ssec:product-lower}, we show that this sample bound is information-theoretically optimal.

\subsection{Identity Testing Algorithm}  \label{ssec:product-upper}

As mentioned above, here we are concerned with the problem of testing the identity of an unknown product $P$
with mean vector $p$ against an explicit product distribution $Q$ with mean vector $q$.
Our tester relies on a statistic providing an unbiased estimator of $\sum_i
(p_i-q_i)^2/(q_i(1-q_i))$. Essentially, every draw from $P$ gives us an
independent sample from each of the coordinate random variables.
In order to relate our tester more easily to the analogous testers for
unstructured distributions over finite domains, we consider $\Poi(m)$
samples from each of these coordinate distributions. From there, we
construct a random variable $Z$ that provides an unbiased estimator of
our chi-squared statistic, and a careful analysis of the variance of $Z$
shows that with $O(\sqrt{n}/\eps^2)$ samples we can
distinguish between $P=Q$ and $P$ being $\eps$-far from $Q$; leading to the following theorem:

\begin{restatable}{theorem}{identityproductub}\label{theo:identity:product:ub}
There exists a  computationally efficient algorithm\footnote{Throughout this paper, we say an algorithm is computationally efficient if its running time is polynomial in the number of samples and the relevant parameters of the problem (i.e., $n$ and $\eps$).}{} which, given an {explicit} product distribution $Q$ {(via its mean vector)}, 
and sample access to an unknown product distribution $P$ over $\{0,1\}^n$, has the following guarantees: 
For any $\eps >0$, the algorithm takes $\bigO{\sqrt{n}/\eps^2}$ samples from $P$, 
and distinguishes with probability $2/3$ 
between the cases that $P=Q$ versus $\normone{P-Q} > \eps$.
\end{restatable}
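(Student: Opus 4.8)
The plan is to construct an unbiased estimator for the chi-squared-like quantity $\Delta \eqdef \sum_{i=1}^n (p_i-q_i)^2/(q_i(1-q_i))$, which by \cref{lem:prod:dtv:kl} serves as a proxy for $\normone{P-Q}^2$: if $P=Q$ then $\Delta=0$, while if $\normone{P-Q}>\eps$ then $\Delta > \eps^2/2$. To make the coordinates ``independent in the right way'' and to mimic the classical analysis of identity testing over $[n]$, I would first Poissonize: draw $\Poi(m)$ samples from $P$, so that the number of samples landing with $X_i=1$ is an independent $\Poi(m p_i)$ across coordinates $i$ (and similarly the number with $X_i=0$ is independent $\Poi(m(1-p_i))$, independent of the former). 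For each coordinate $i$, let $a_i$ be the count of samples with $X_i=1$; then $\expect{(a_i - m q_i)^2 - a_i} = (m p_i - m q_i)^2$, giving an unbiased estimator of $m^2(p_i-q_i)^2$. Dividing by the (known) normalization $m^2 q_i(1-q_i)$ and summing over $i$, I obtain a statistic $Z$ with $\expect{Z} = \Delta$.

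The next step is the variance computation. Using independence across coordinates, $\Var(Z) = \sum_i \Var(Z_i)$ where $Z_i = \big((a_i-mq_i)^2 - a_i\big)/(m^2 q_i(1-q_i))$. Each $\Var(Z_i)$ can be bounded, using standard Poisson moment formulas, by a combination of a term of order $1/(m^2 q_i(1-q_i))$ (present even when $p_i=q_i$) and a ``signal-dependent'' term of order $(p_i-q_i)^2/(m q_i(1-q_i)) \cdot (\text{bounded factor})$. Summing, in the completeness case ($P=Q$) one gets $\Var(Z) = O\big(\sum_i 1/(m^2 q_i(1-q_i))\big)$; the obstacle here is that $q_i(1-q_i)$ can be arbitrarily small, so this sum is not obviously $O(n/m^2)$. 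The resolution — this is the key technical maneuver — is to threshold: coordinates with $q_i$ very close to $0$ or $1$ contribute negligibly to $\Delta$ unless $p_i$ is far from $q_i$ on a multiplicative scale, and such a discrepancy can be detected separately and cheaply (e.g., if $q_i \le \eps^2/n$ but $|p_i - q_i|$ is comparable to or larger than $q_i$, a direct count-based test on that coordinate suffices, since we have $\Omega(\sqrt n/\eps^2) \gg$ enough samples). Alternatively, one flattens/splits the comparison so that the effective $q_i$ on each piece is $\Omega(\eps^2/n)$, at which point $\sum_i 1/(m^2 q_i(1-q_i)) = O(n^2/(m^2\eps^2))$, which is $O(\eps^4)$ for $m = \Theta(\sqrt n/\eps^2)$ — sufficiently small compared to the threshold $\eps^2/2$.

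With the variance under control, the final step is Chebyshev's inequality: choose the acceptance threshold at $\eps^2/4$. In the completeness case $\expect{Z}=0$ and $\Var(Z) = O(\eps^4)$ with $m=\Theta(\sqrt n/\eps^2)$, so $\probaOf{Z > \eps^2/4}$ is small; in the soundness case $\expect{Z} > \eps^2/2$ and one checks that the variance (now including the signal term, bounded via Cauchy--Schwarz against $\Delta$ itself, i.e. $\Var(Z) = O(\Delta/m) + O(\eps^4) = O(\expect{Z}^2)$ in the relevant regime, possibly after the same thresholding) is small enough relative to $\expect{Z}^2$ that $\probaOf{Z < \eps^2/4}$ is small. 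A standard median-of-independent-repetitions argument boosts the success probability to $2/3$ (indeed to $1-\delta$). The main obstacle throughout is handling unbalanced coordinates $q_i$ near $0$ or $1$ in the variance bound; everything else is a routine Poissonized second-moment computation analogous to the known-distribution identity tester over $[n]$, with the crucial structural difference that here there is no constraint $\sum_i q_i = O(1)$, which is exactly why the analysis must localize the small-$q_i$ coordinates rather than absorb them globally.
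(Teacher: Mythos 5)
Your overall skeleton is the paper's: Poissonize so that the per-coordinate counts are independent $\poisson{mp_i}$, form the unbiased statistic $\sum_i ((a_i-mq_i)^2-a_i)/(q_i(1-q_i))$ for the chi-square proxy of Corollary~\ref{lem:prod:dtv:kl}, and conclude by a mean/variance computation plus Chebyshev. The divergence — and the gap — is in the step you yourself flag as the key maneuver, the treatment of coordinates with $q_i$ near $0$ or $1$. First, the difficulty is not where you place it: in the completeness case the Poisson identity $\Var[(a_i-mq_i)^2-a_i]=2m^2q_i^2$ (when $p_i=q_i$) makes the per-coordinate variance $2/(m^2(1-q_i)^2)$, i.e.\ the $q_i^2$ cancels, so after flipping to $q_i\le 1/2$ the completeness variance is $O(n/m^2)=O(\eps^4)$ with no balancedness needed; the genuine danger is the soundness variance, whose terms scale like $p_i^2/q_i^2$ and $m\,p_i(p_i-q_i)^2/q_i^2$ and can blow up when $q_i\ll p_i$. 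Second, the arithmetic of your proposed "flattening" does not check out: with $q_i\ge \eps^2/n$ one gets $\sum_i 1/(m^2 q_i(1-q_i)) \le n^2/(m^2\eps^2) = n\eps^2$ at $m=\Theta(\sqrt{n}/\eps^2)$, which is nowhere near $O(\eps^4)$. Third, the per-coordinate "direct count-based test" does not cover the problematic regime: a coordinate with, say, $q_i=\eps^2/n^2$ and $p_i=\eps^2/n$ has $p_i$ hugely larger than $q_i$ multiplicatively, yet $mp_i\ll 1$, so no individual test on $O(\sqrt{n}/\eps^2)$ samples can see it — while such coordinates can dominate the variance of the aggregate statistic; conversely, coordinates with genuinely large discrepancies cannot simply be excluded from the statistic, since they may carry most of the $L_1$ distance.

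The missing idea, which is how the paper resolves exactly this issue, is a global preprocessing step rather than per-coordinate case analysis: re-randomize each coordinate independently with probability $2\gamma_0=\eps/(8n)$ (replacing it by a uniform bit), which makes \emph{both} $P$ and $Q$ $\gamma_0$-balanced with $\gamma_0=\eps/(16n)$ at an $L_1$ cost of at most $\eps/4$, and then test the balanced pair with parameter $\eps/2$. With this balancedness in hand the soundness variance can be bounded (via the AM--GM and Cauchy--Schwarz manipulations you gesture at) by $16nm^2+\left(32/\gamma+16\sqrt{2n}\,m\right)\expect{W}+(32/\sqrt{\gamma})\expect{W}^{3/2}$ in the paper's normalization, and Chebyshev closes the argument at $m=\Theta(\sqrt{n}/\eps^2)$; note also that your stated soundness bound "$O(\Delta/m)+O(\eps^4)$" drops the $\sqrt{n}\,\expect{W}/m$-type term and the explicit $1/\gamma,1/\sqrt{\gamma}$ dependence, which is precisely what the balancedness is needed to control. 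A heavy/light bucketing in the spirit of what you propose can be made to work (the paper does something of this flavor for \emph{closeness} testing), but it requires an aggregate test on the light bucket rather than per-coordinate detection, and your sketch as written would fail on the examples above.
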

\begin{proof} 
Let $Q=Q_1\otimes\cdots\otimes Q_n$ be a known product distribution over $\{0,1\}^n$ with {mean vector $q$}, 
and $P=P_1\otimes\cdots\otimes P_n$ be an unknown product distribution on $\{0,1\}^n$ {with unknown mean vector $p$}. 
The goal is to distinguish, given independent samples from $P$, between $P=Q$, and $\normone{P-Q} > \eps$.

{Let $0 < \gamma < 1/2$. We say that a product distribution $P$ over $\{0,1\}^n$ is \emph{$\gamma$-balanced} if its mean vector 
$p$ satisfies  $p_i \in [\gamma, 1-\gamma]$ for all $i \in [n]$.} To prove~\cref{theo:identity:product:ub}, 
we can assume without loss of generality that $P, Q$ are $\gamma_0$-balanced for $\gamma_0 \eqdef \frac{\eps}{16n}$. 
Indeed, given sample access to a product distribution $P$, we can simulate access to the $\gamma_0$-balanced product distribution $P'$ 
by re-randomizing independently each coordinate with probability $2\gamma_0$, 
choosing it then to be uniform in $\{0,1\}$. That is, from each draw from $P$ (and the algorithm's own randomness) we can simulate a draw from $P' = P'_1\otimes\cdots\otimes P'_n$, with
\[
    P'_i(0) = (1-2\gamma_0)P_i(0) + \gamma_0\,,\quad P'_i(1) = (1-2\gamma_0)P_i(1) + \gamma_0
\]
for each $i\in[n]$. 
The resulting product distribution $P'$ is $\gamma_0$-balanced, 
and satisfies $\normone{P-P'} \leq n\cdot 2\gamma_0 = \frac{\eps}{8}$. 
Therefore, to test the identity of a product distribution $P$ against a product distribution $Q$ with parameter $\eps$, 
it is sufficient to test the identity of the $\gamma_0$-balanced product distributions $P', Q'$ (with parameter $\frac{\eps}{2}$).

\paragraph{Preprocessing} {We also note that by flipping the coordinates $i$ such that $q_i > 1/2$, 
we can assume that  $q_i \in [\gamma_0, 1/2]$ for all $i \in [n]$.}
This can be done without loss of generality, as $q$ is explicitly given. 
For any $i$ such that $q_i>\frac{1}{2}$, we replace $q_i$ by $1-q_i$ 
and work with the corresponding distribution $Q^\prime$ instead. 
By flipping the $i$-th bit of all samples we receive from $P$, 
it only remains to test identity of the resulting distribution 
$P^\prime$ to $Q^\prime$, as all distances are preserved.

\paragraph{Proof of Correctness} 
Let $m\geq {2716}\frac{\sqrt{n}}{\eps^2}$, and let $M_1,\dots,M_n$ be i.i.d. $\poisson{m}$ random variables. 
We set $M=\max_{i\in[n]} M_i$ and note that $M \leq 2m$ with probability $1-e^{-\Omega(m)}$ (by a union bound). 
We condition hereafter on $M \leq 2m$ (our tester will reject otherwise) 
and take $M$ samples $X^{(1)},\dots, X^{(M)}$ drawn from $P$. We define the following statistic:
\[
    W = \sum_{i=1}^n \frac{(W_i - mq_i)^2- W_i}{q_i(1-q_i)} \;,
\]
where we write $W_i \eqdef \sum_{j=1}^{M_i} X^{(j)}_i$ for all $i\in[n]$. 
We note that the $W_i$'s are independent, as $P$ is a product distribution and the $M_i$'s are independent.
{The pseudocode for our algorithm is given in Figure~\ref{algo:identity:product}.}
Our identity tester is reminiscent of the ``chi-squared type'' testers that have been designed for the unstructured 
univariate discrete setting~\cite{CDVV14, DKN:15, ADK15}.

\begin{figure}
  \begin{framed}
    \begin{description}
      \item[Input] Error tolerance $\eps$, dimension $n$, balancedness parameter $\gamma \geq {\frac{\eps}{16n}}$, {mean vector} 
      $q = (q_1,\dots,q_n)\in [\gamma, 1/2]^n$ of an explicit product distribution $Q$ over $\{0,1\}^n$, 
      and sampling access to a product distribution $P$ over $\{0,1\}^n$.
      
      \item[-] Set $\tau \gets \frac{1}{4} \eps^2$, $m \gets \left\lceil \frac{{2716}\sqrt{n}}{\eps^2}\right\rceil$.
      \item[-] Draw $M_1,\dots,M_n\sim\poisson{m}$ independently, and let $M\gets \max_{i\in[n]} M_i$.
      \item[If] $M > 2m$ set
        $
              W = \tau m^2
          $
      \item[Else] Take $M$ samples $X^{(1)},\dots,X^{(M)}$ from $P$. For $i\in[n]$, let $W_i \gets \sum_{j=1}^{M_i} X^{(j)}_i$, and define
          \[
              W = \sum_{i=1}^n \frac{(W_i - mq_i)^2- W_i}{q_i(1-q_i)}\,.
          \]
      \item[If] $W \geq \tau m^2$ return $\textsf{reject}$.
      \item[Otherwise] return $\textsf{accept}$.
    \end{description}
  \end{framed}
  \caption{Identity testing: unknown product distribution $P$ against given product distribution~$Q$.}
  \label{algo:identity:product}
\end{figure}
We start with a simple formula for the expected value of our statistic:
\begin{lemma}
  $\expect{W} = m^2 \sum_{i=1}^n\frac{(p_i-q_i)^2}{q_i(1-q_i)}$.
\end{lemma}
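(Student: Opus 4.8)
The plan is a direct moment computation that exploits the Poissonization of the sample sizes. First I would observe that, since $M_i\sim\poisson{m}$ is independent of the i.i.d.\ $\bernoulli{p_i}$ draws $X^{(1)}_i, X^{(2)}_i,\dots$, the thinning property of the Poisson distribution gives $W_i = \sum_{j=1}^{M_i} X^{(j)}_i \sim \poisson{m p_i}$; moreover the $W_i$ are mutually independent across $i\in[n]$ because $P$ is a product distribution and the $M_i$ are independent (this independence is not needed for the expectation, but is worth recording for the subsequent variance analysis). In particular $\expect{W_i} = m p_i$ and $\Var{W_i} = m p_i$, and hence $\expect{W_i^2} = m p_i + m^2 p_i^2$.

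Next I would expand each summand of $W$ and take expectations term by term. For a fixed $i$,
\begin{align*}
\expect{(W_i - m q_i)^2 - W_i}
&= \expect{W_i^2} - 2 m q_i\,\expect{W_i} + m^2 q_i^2 - \expect{W_i} \\
&= \bigl(m p_i + m^2 p_i^2\bigr) - 2 m^2 p_i q_i + m^2 q_i^2 - m p_i \\
&= m^2 (p_i - q_i)^2 ,
\end{align*}
where the linear-in-$m$ terms $m p_i$ cancel — this cancellation is precisely the reason for subtracting $W_i$ in the numerator, which turns the statistic into an unbiased estimator of the (scaled) chi-squared quantity. Dividing by $q_i(1-q_i)$, which is nonzero since $q_i\in[\gamma_0,1/2]$ after the preprocessing step, and summing over $i$ by linearity of expectation yields $\expect{W} = m^2\sum_{i=1}^n (p_i-q_i)^2/(q_i(1-q_i))$, as claimed.

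I do not anticipate a genuine obstacle: the computation is routine. The only point deserving a word of care is the bookkeeping around the conditioning on the event $\{M\le 2m\}$ in the algorithm — here $W$ is to be understood as the idealized statistic built from the unconditioned $M_i\sim\poisson{m}$, since the rare event $\{M>2m\}$ (on which the tester automatically outputs \textsf{reject}) is accounted for separately in the tail analysis. The substantive work — bounding $\Var{W}$ and calibrating the threshold $\tau$ — comes afterward; this lemma is merely the easy first step confirming that $W$ tracks the chi-squared distance between the mean vectors of $P$ and $Q$.
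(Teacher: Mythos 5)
Your proof is correct and follows essentially the same route as the paper's: both observe that $W_i \sim \poisson{mp_i}$ (the paper takes this as given; you justify it via Poisson thinning), expand $\expect{(W_i - mq_i)^2 - W_i}$ to get $m^2(p_i-q_i)^2$, and conclude by linearity of expectation. The extra remarks on independence and on the event $\{M \le 2m\}$ are fine but not needed for this lemma.
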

\begin{proof}
Since $W_i \sim \poisson{mp_i}$ for all $i$, we can write
\begin{align*}
  \expect{(W_i - mq_i)^2} &= \expect{W_i^2}-2mq_i\expect{W_i} + m^2q_i^2
      = mp_i + m^2(p_i - q_i)^2 \;,
\end{align*}
and therefore
\[
  \expect{W} = \sum_{i=1}^n \frac{\expect{(W_i-mq_i)^2} - \expect{W_i}}{q_i(1-q_i)}
      = m^2\sum_{i=1}^n\frac{(p_i-q_i)^2}{q_i(1-q_i)}\,.
\]
\end{proof}
As a corollary we obtain:
\begin{claim}\label{claim:toy:2:distance:means}
If $P=Q$ then $\expect{W}=0$. Moreover, whenever $\normone{P-Q} > \eps$ 
we have $\expect{W} > \frac{1}{2}m^2\eps^2$.
\end{claim}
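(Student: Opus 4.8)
The claim has two parts. The first part is immediate from the preceding lemma: if $P=Q$ then $p_i=q_i$ for all $i$, so every summand in $\expect{W} = m^2 \sum_{i=1}^n (p_i-q_i)^2/(q_i(1-q_i))$ vanishes, giving $\expect{W}=0$. The substance is in the second part, relating the chi-squared-type quantity $\sum_i (p_i-q_i)^2/(q_i(1-q_i))$ to the $L_1$-distance $\normone{P-Q}$.

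My plan is to invoke \cref{lem:prod:dtv:kl} (the corollary of the KL bound), which states exactly that $\normone{P-Q}^2 \leq 2 \sum_{i=1}^n (p_i-q_i)^2/(q_i(1-q_i))$ for binary product distributions with mean vectors in $(0,1)^n$. Rearranging, $\sum_{i=1}^n (p_i-q_i)^2/(q_i(1-q_i)) \geq \frac{1}{2}\normone{P-Q}^2$. Then, using the preceding lemma's formula, $\expect{W} = m^2 \sum_{i=1}^n (p_i-q_i)^2/(q_i(1-q_i)) \geq \frac{1}{2} m^2 \normone{P-Q}^2$. Finally, if $\normone{P-Q} > \eps$, then $\expect{W} > \frac{1}{2} m^2 \eps^2$, which is what is claimed. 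One small bookkeeping point: this is being applied after the preprocessing/re-randomization step, so strictly one should note that the inequality is being used for the balanced distributions $P', Q'$ (both with means in $[\gamma_0, 1-\gamma_0] \subseteq (0,1)$), with parameter $\eps/2$; but at the level of the self-contained claim as stated, the mean vectors are simply assumed in $(0,1)^n$ and $\eps$ is the relevant parameter, so the direct application suffices.

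There is essentially no obstacle here — the claim is a one-line consequence of two results already in hand (the expectation lemma and Pinsker-via-\cref{lem:prod:dtv:kl}). The only thing to be careful about is which distance inequality to cite: \cref{lem:prod:dtv:hellinger} gives an incomparable bound that is weaker in the small-distance regime (it only lower-bounds $\normone{P-Q}^2$ by $\min(c, \normtwo{p-q}^4)$, which is not directly the chi-squared quantity), so the right tool is \cref{lem:prod:dtv:kl}, whose right-hand side matches $\expect{W}/m^2$ exactly up to the factor of $2$. I would write the proof as: apply the expectation lemma for the equality, then \cref{lem:prod:dtv:kl} for the inequality $\sum_i (p_i-q_i)^2/(q_i(1-q_i)) \geq \frac12 \normone{P-Q}^2 > \frac12 \eps^2$, and combine.
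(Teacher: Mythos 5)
Your proof is correct and matches the paper's argument exactly: the first part follows from the expectation formula, and the second from \cref{lem:prod:dtv:kl}, which gives $m^2\normone{P-Q}^2 \leq 2m^2\sum_{i=1}^n \frac{(p_i-q_i)^2}{q_i(1-q_i)} = 2\expect{W}$. The extra remarks about preprocessing and why \cref{lem:prod:dtv:hellinger} is the wrong tool are sensible but not needed.
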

\begin{proof}
The first part is immediate from the expression of $\expect{W}$. 
The second follows from~\cref{lem:prod:dtv:kl}, as $m^2\normone{P-Q}^2 \leq 2m^2\sum_{i=1}^n \frac{(p_i-q_i)^2}{q_i(1-q_i)} = 2\expect{W}$.
\end{proof}
We now proceed to bound from above the variance of our statistic. The completeness case is quite simple:
\begin{claim}\label{claim:toy:2:variance:completeness}
  If $P=Q$, then $\Var[W] \leq 8m^2n$.
\end{claim}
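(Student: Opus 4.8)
The plan is to compute the variance of $W$ in the completeness case $P = Q$ directly from its definition, exploiting independence of the coordinate contributions. First I would write $W = \sum_{i=1}^n Y_i$ where $Y_i \eqdef \frac{(W_i - mq_i)^2 - W_i}{q_i(1-q_i)}$, and observe that conditioned on $M \le 2m$ (the only case in which this formula is used) the $W_i$ are still close to being independent $\poisson{m q_i}$ variables; more carefully, one should note that $W = \sum_i Y_i$ where the $W_i$ are genuinely independent when we do \emph{not} condition on $M$, so the cleanest route is to bound $\Var[W]$ in the unconditioned process and absorb the negligible conditioning event separately (or simply note the tester sets $W = \tau m^2$ deterministically when $M > 2m$, which only helps). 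Using independence of the $Y_i$'s gives $\Var[W] = \sum_{i=1}^n \Var[Y_i]$, so the task reduces to a per-coordinate second-moment computation.

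For each coordinate, since $P = Q$ we have $W_i \sim \poisson{m q_i}$, and $(W_i - mq_i)^2 - W_i$ is a centered quadratic-minus-linear polynomial in a Poisson variable. I would compute $\E[((W_i - mq_i)^2 - W_i)^2]$ by expanding and using the known Poisson moments (up to the fourth), or more slickly by recognizing $(W_i - \lambda)^2 - W_i$ as (a multiple of) the degree-$2$ Charlier polynomial evaluated at $W_i$ with $\lambda = mq_i$, whose variance is $2\lambda^2$. Either way this yields $\Var[(W_i - mq_i)^2 - W_i] = 2 m^2 q_i^2$, hence $\Var[Y_i] = \frac{2 m^2 q_i^2}{q_i^2(1-q_i)^2} = \frac{2m^2}{(1-q_i)^2}$. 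Summing, $\Var[W] = \sum_{i=1}^n \frac{2m^2}{(1-q_i)^2} \le 8 m^2 n$, where the last inequality uses $q_i \le 1/2$ (guaranteed by the preprocessing step), so $(1-q_i)^{-2} \le 4$.

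The one point requiring a little care — and the main (mild) obstacle — is the Poissonization bookkeeping: the $M_i$ are random, and the statistic is only evaluated on the event $M \le 2m$. I would handle this by noting that in the honest Poissonized process (drawing $M_i \sim \poisson{m}$ for each $i$ independently and setting $W_i = \sum_{j \le M_i} X^{(j)}_i$), the $W_i$ are exactly independent $\poisson{mp_i}$ random variables regardless of $M$, so the variance computation above is valid verbatim for that idealized $W$; the actual algorithm's output differs only on the event $\{M > 2m\}$, which has probability $e^{-\Omega(m)}$ and on which it outputs a fixed value, so the true $\Var[W]$ is at most the idealized one (a deterministic value has zero variance, and one can formalize this via the law of total variance, the small probability making the correction terms lower-order). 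Thus the bound $\Var[W] \le 8 m^2 n$ goes through, completing the proof of \cref{claim:toy:2:variance:completeness}.
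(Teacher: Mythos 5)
Your proposal is correct and follows essentially the same route as the paper: decompose $W$ into its independent per-coordinate terms, use the Poisson moment computation $\Var[(W_i-mq_i)^2-W_i]=2m^2q_i^2$, and sum using $q_i\le 1/2$ to get $\Var[W]\le 8m^2n$. The extra bookkeeping about the event $M\le 2m$ is fine but not needed for the claim as stated, which the paper proves directly in the idealized Poissonized setting.
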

\begin{proof}
Suppose that $P=Q$, i.e., $p=q$. 
From independence, we have that $\Var[W] = \sum_{i=1}^n \frac{\Var[(W_i-mq_i)^2 - W_i]}{q_i^2(1-q_i)^2}$. 
Using the fact that $\expect{(W_i-mq_i)^2 - W_i} = 0$, we get
$\Var[(W_i-mq_i)^2 - W_i] = \expect{((W_i-mq_i)^2 - W_i)^2} = 2 m^2 q_i^2$,  
where the last equality follows from standard computations involving the moments of a Poisson random variable. 
From there, recalling that $q_i \in (0,1/2]$ for all $i\in[n]$, we obtain
$\Var[W] = 2m^2\sum_{i=1}^n \frac{1}{(1-q_i)^2} \leq 8m^2n.$
\end{proof}
For the soundness case, the following lemma bounds the variance of our statistic from above. We note that the upper bound depends
on the balancedness parameter $\gamma$.
\begin{lemma}\label{claim:toy:2:variance:soundness}
We have that  $\Var[W] \leq 16nm^2  + \left(\frac{32}{\gamma} + 16\sqrt{2n}m\right) \expect{W} + \frac{32}{\sqrt{\gamma}}\expect{W}^{3/2}$.
\end{lemma}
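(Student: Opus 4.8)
The plan is to decompose $W=\sum_{i=1}^n W^{(i)}$, where $W^{(i)} \eqdef \frac{(W_i-mq_i)^2 - W_i}{q_i(1-q_i)}$, and to exploit the independence of the $W_i$'s so that $\Var[W] = \sum_{i=1}^n \Var[W^{(i)}]$. Thus it suffices to bound each $\Var[W^{(i)}]$ individually and sum the resulting contributions; the three terms in the claimed bound will arise from grouping the per-coordinate estimates according to the powers of $\delta_i \eqdef p_i - q_i$ that show up. The key computation is to expand $\Var[W^{(i)}] = \expect{(W^{(i)})^2} - \expect{W^{(i)}}^2$ using the moment formulas for a Poisson random variable $W_i \sim \poisson{mp_i}$: writing $W_i = mp_i + Z_i$ with $Z_i$ centered, one has $\expect{Z_i^2}=mp_i$, $\expect{Z_i^3}=mp_i$, $\expect{Z_i^4}=mp_i + 3m^2p_i^2$, and the quantity $(W_i-mq_i)^2 - W_i$ becomes a quadratic-plus-linear expression in $Z_i$ with coefficients depending on $m\delta_i$. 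Plugging these moments in and simplifying, $\Var[W^{(i)}]$ comes out as a polynomial in $m$, $p_i$, $q_i$, with a leading constant term (independent of $\delta_i$), a term linear in $m\delta_i^2$ (i.e.\ carrying one factor of $\expect{W^{(i)}}$), and a genuinely quartic term $\propto m^2 \delta_i^2 = m\cdot \frac{q_i(1-q_i)}{\;} \cdot (\text{something}) \cdot \expect{W^{(i)}}$.

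The main obstacle, and the step requiring care, will be to re-express the raw per-coordinate bound — which naturally involves $\sum_i \delta_i^2 / (q_i(1-q_i))$ and $\sum_i \delta_i^2/(q_i(1-q_i))^{?}$ with various exponents, as well as cross sums like $\sum_i \delta_i^2$ — purely in terms of the three global quantities $n$, $\expect{W} = m^2\sum_i \delta_i^2/(q_i(1-q_i))$, and $\expect{W}^{3/2}$. For this, I would use the balancedness hypothesis $q_i \in [\gamma, 1-\gamma]$ (after the preprocessing, $q_i \in [\gamma, 1/2]$, so $q_i(1-q_i) \in [\gamma/2, 1/4]$) to pass between the weighted and unweighted sums: e.g.\ $\sum_i \delta_i^2 \le \tfrac14 \sum_i \frac{\delta_i^2}{q_i(1-q_i)} = \frac{\expect{W}}{4m^2}$, and any extra factor of $1/\sqrt{q_i(1-q_i)} \le \sqrt{2/\gamma}$ contributes the $1/\sqrt\gamma$ (or $1/\gamma$) prefactors. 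The cross-term $\sqrt{n}\,m\,\expect{W}$ would be produced by a Cauchy–Schwarz step of the form $m^2\sum_i \delta_i^2 \cdot (\text{bounded weight}) \le \sqrt{n}\cdot \big(m^2 \sum_i \delta_i^4 (\text{weight}^2)\big)^{1/2}$, using $\sum_i a_i \le \sqrt n \,\sqrt{\sum_i a_i^2}$ together with $\sum_i \big(\delta_i^2/(q_i(1-q_i))\big)^2 \le \big(\sum_i \delta_i^2/(q_i(1-q_i))\big)^2 = \expect{W}^2/m^4$; this last $\ell_2 \le \ell_1$ bound, after taking a square root, is also exactly what yields the $\expect{W}^{3/2}$ term from the quartic contribution.

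Concretely, the steps in order: (1) invoke independence to reduce to $\sum_i \Var[W^{(i)}]$; (2) compute $\Var[W^{(i)}]$ exactly via Poisson moments, obtaining an expression of the schematic form $A m^2 q_i^2/(q_i(1-q_i))^2 + B\,m^3 \delta_i^2 q_i /(q_i(1-q_i))^2 + \text{lower order} + C\, m^2 \delta_i^2 /(q_i(1-q_i))$ — actually it is cleanest to just bound $\Var[W^{(i)}] \le \expect{(W^{(i)})^2}$ and expand that, since $\expect{W^{(i)}}^2 \ge 0$; (3) sum over $i$, getting a main term bounded by $\le 16 n m^2$ (the $\delta$-free part, matching the completeness-case constant up to the factor $2$ from conditioning on $M\le 2m$ / using $q_i(1-q_i)\ge \gamma/2$ is NOT needed here since the leading term came from $q_i^2/(q_i(1-q_i))^2 \le 1/(1-q_i)^2 \le 4$), and grouping the remaining $\delta$-dependent terms; (4) apply the balancedness bounds $q_i(1-q_i)\in[\gamma/2,1/4]$ and the Cauchy–Schwarz / $\ell_2\le\ell_1$ inequalities described above to convert each group into a multiple of $\expect{W}$, $\sqrt n\, m\,\expect{W}$, or $\expect{W}^{3/2}$, and collect constants to land on $16nm^2 + \big(\tfrac{32}{\gamma} + 16\sqrt{2n}\,m\big)\expect{W} + \tfrac{32}{\sqrt\gamma}\expect{W}^{3/2}$. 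The one genuinely delicate bookkeeping point is making sure the quartic term $m^2\delta_i^4$-type contributions, when summed, are dominated by $\expect{W}^{3/2}$ rather than $\expect{W}^2$: this works precisely because one factor of $m\delta_i^2$ is "spent" buying an $\ell_1$-to-$\ell_\infty$-type bound $\le \sqrt{\expect{W}/\gamma}$ on the individual term while the rest reassembles into $\expect{W}$, so tracking which factors are used where is where I expect to spend the most effort.
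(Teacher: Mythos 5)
Your overall route is the same as the paper's: decompose $W$ by independence, compute the variance of each summand exactly from Poisson moments (the paper gets $\Var[(W_i-mq_i)^2-W_i]=2m^2p_i^2+4m^3p_i(p_i-q_i)^2$), and then convert the resulting sums into $n$, $\expect{W}$ and $\expect{W}^{3/2}$ using the balancedness $q_i\geq\gamma$, Cauchy--Schwarz, and the $\ell_2\leq\ell_1$ (monotonicity of $\ell_p$ norms) bound $\sum_i (p_i-q_i)^4/q_i^2 \leq \big(\sum_i (p_i-q_i)^2/q_i\big)^2$. That toolbox is exactly right and the bookkeeping you sketch can be made to close.

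However, the one concrete shortcut you single out as ``cleanest'' is a genuine error: you may \emph{not} replace $\Var[W^{(i)}]$ by $\expect{(W^{(i)})^2}$. The whole point of the exact computation is that the quartic term cancels: $\expect{(W^{(i)})^2}$ contains $m^4(p_i-q_i)^4/(q_i(1-q_i))^2$, which is removed precisely by subtracting $\expect{W^{(i)}}^2$, leaving only $2m^2p_i^2+4m^3p_i(p_i-q_i)^2$ (up to the weight). If you drop the subtraction, the sum retains $\sum_i m^4(p_i-q_i)^4/(q_i(1-q_i))^2$, which can be of order $\expect{W}^2$ (e.g.\ when a single coordinate carries essentially all of the discrepancy), and $\expect{W}^2$ is neither dominated by the claimed $16nm^2+\big(\tfrac{32}{\gamma}+16\sqrt{2n}\,m\big)\expect{W}+\tfrac{32}{\sqrt{\gamma}}\expect{W}^{3/2}$ nor usable in the Chebyshev step of Lemma~\ref{lemma:product:correctness}, which needs $\Var[W]=o(\expect{W}^2)$. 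Relatedly, your final remark attributes the $\expect{W}^{3/2}$ term to ``$m^2\delta_i^4$-type'' contributions; in fact those contributions cancel identically, and in the paper the $\expect{W}^{3/2}$ arises from the $4m^3p_i(p_i-q_i)^2$ part: Cauchy--Schwarz gives $\sum_i p_i(p_i-q_i)^2/q_i^2 \leq \sqrt{\sum_i p_i^2/q_i^2}\cdot\sqrt{\sum_i (p_i-q_i)^4/q_i^2}$, and the factor $\sqrt{\sum_i p_i^2/q_i^2}\leq \sqrt{2n}+\tfrac{2}{m\sqrt{\gamma}}\sqrt{\expect{W}}$ supplies both the $\sqrt{2n}\,m\,\expect{W}$ cross term and (through its second summand, where $1/\sqrt{\gamma}$ enters) the $\expect{W}^{3/2}/\sqrt{\gamma}$ term. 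If you keep the exact variance (as in your alternative description via the centered moments $\expect{Z_i^2}=\expect{Z_i^3}=mp_i$, $\expect{Z_i^4}=mp_i+3m^2p_i^2$) the rest of your plan goes through; with the shortcut it does not.
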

\begin{proof}
For general $p,q$, we have that
\begin{align*}
  \Var[(W_i-mq_i)^2 - W_i] 
    &= \expect{((W_i-mq_i)^2 - W_i)^2} - m^4(p_i-q_i)^4 \\
    &= 2 m^2 p_i^2 + 4 m^3 p_i (p_i-q_i)^2 \;,
\end{align*}
where as before the last equality follows from standard computations involving the moments of a Poisson random variable. 
This leads to
\begin{align*}
  \Var[W] 
  &= 2m^2\sum_{i=1}^n \frac{p_i^2}{q_i^2(1-q_i)^2} + 4m^3\sum_{i=1}^n \frac{p_i (p_i-q_i)^2}{q_i^2(1-q_i)^2} \\
  &\leq 8m^2\sum_{i=1}^n \frac{p_i^2}{q_i^2} + 16m^3\sum_{i=1}^n \frac{p_i (p_i-q_i)^2}{q_i^2}.
\end{align*}
We handle the two terms separately, in a fashion similar to~\cite[Lemma 2]{ADK15}. For the first term, we can write:
\begin{align*}
  \sum_{i=1}^n \frac{p_i^2}{q_i^2} &= \sum_{i=1}^n \frac{(p_i-q_i)^2}{q_i^2} + \sum_{i=1}^n \frac{2p_iq_i-q_i^2}{q_i^2}  
  = \sum_{i=1}^n \frac{(p_i-q_i)^2}{q_i^2} + \sum_{i=1}^n \frac{2q_i(p_i - q_i)+q_i^2}{q_i^2} \\
  &= n+\sum_{i=1}^n \frac{(p_i-q_i)^2}{q_i^2} + \sum_{i=1}^n \frac{2(p_i - q_i)}{q_i} 
  = n+\sum_{i=1}^n \frac{(p_i-q_i)^2}{q_i^2} + \sum_{i=1}^n \frac{2(p_i - q_i)}{q_i} \\
  &\operatorname*{\leq}_{\text{(AM-GM)}} n+\sum_{i=1}^n \frac{(p_i-q_i)^2}{q_i^2} + \sum_{i=1}^n \left( 1+\frac{(p_i - q_i)^2}{q_i^2} \right) 
  = 2n+2\sum_{i=1}^n \frac{(p_i-q_i)^2}{q_i^2} \\
  &\leq 2n+\frac{2}{\gamma}\sum_{i=1}^n \frac{(p_i-q_i)^2}{q_i}
  \leq 2n+\frac{4}{m^2\gamma}\expect{W}.
\end{align*}
{We bound the second term from above as follows:}
\begin{align*}
  \sum_{i=1}^n \frac{p_i (p_i-q_i)^2}{q_i^2}  
  &\leq \sum_{i=1}^n \frac{p_i}{q_i}\cdot \frac{p_i (p_i-q_i)^2}{q_i}   \\
  &\leq \sqrt{ \sum_{i=1}^n \frac{p_i^2}{q_i^2} }\sqrt{ \sum_{i=1}^n \frac{(p_i-q_i)^4}{q_i^2} } \tag{Cauchy--Schwarz} \\
  &\leq \left( \sqrt{2n}+\frac{2}{m\sqrt{\gamma}}\sqrt{\expect{W}} \right){ \sum_{i=1}^n \frac{(p_i-q_i)^2}{q_i} } \tag{monotonicity of $\ell_p$-norms} \\
  &= \frac{1}{m^2}\left( \sqrt{2n}+\frac{2}{m\sqrt{\gamma}}\sqrt{\expect{W}} \right)\cdot\expect{W}.
\end{align*}
Overall, we obtain
\begin{align*}
  \Var[W] &\leq 16nm^2  + \frac{32}{\gamma} \expect{W}  + 16m \left( \sqrt{2n}+\frac{2}{m\sqrt{\gamma}}\sqrt{\expect{W}} \right)\cdot\expect{W} \\
  &= 16nm^2  + \left(\frac{32}{\gamma} + 16\sqrt{2n}m\right) \expect{W} + \frac{32}{\sqrt{\gamma}}\expect{W}^{3/2}.
\end{align*}
\end{proof}
\noindent We are now ready to prove correctness.
\begin{lemma}\label{lemma:product:correctness}
Set $\tau \eqdef \frac{\eps^2}{4}$. 
Then we have the following:
  \begin{itemize}
    \item If $\normone{P-Q} = 0$, then $\probaOf{ W \geq \tau m^2 } \leq \frac{1}{3}$.
    \item If $\normone{P-Q} > \eps$, then $\probaOf{ W < \tau m^2 } \leq \frac{1}{3}$.
  \end{itemize}
\end{lemma}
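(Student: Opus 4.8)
The plan is to prove both items by Chebyshev's inequality applied to the Poissonized statistic $W$, after first disposing of the low-probability event $\{M>2m\}$ on which the algorithm sets $W=\tau m^2$ by fiat. Recall from \cref{claim:toy:2:distance:means} that $\expect{W}=m^2\sum_i (p_i-q_i)^2/(q_i(1-q_i))$, which is $0$ in the completeness case and $>\frac12 m^2\eps^2=2\tau m^2$ in the soundness case.

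For completeness ($\normone{P-Q}=0$), I would first observe that on $\{M>2m\}$ the algorithm sets $W=\tau m^2$ and rejects, but $\probaOf{M>2m}\le e^{-\Omega(m)}$ by a union bound over $n$ Poisson tails, which is negligible given $m\ge 2716\sqrt n/\eps^2$. On the complementary event $W$ equals the stated statistic, for which $\expect{W}=0$ and, by \cref{claim:toy:2:variance:completeness}, $\Var[W]\le 8m^2 n$. Chebyshev then gives $\probaOf{W\ge \tau m^2}\le \Var[W]/(\tau m^2)^2 \le 8n/(\tau^2 m^2)=128 n/(\eps^4 m^2)$, and the choice of $m$ makes this at most $128/2716^2$; adding the $e^{-\Omega(m)}$ term keeps the total far below $1/3$.

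For soundness ($\normone{P-Q}>\eps$), note that on $\{M>2m\}$ we have $W=\tau m^2$, so the bad event $\{W<\tau m^2\}$ is contained in $\{M\le 2m\}$, where $W$ is exactly the Poissonized statistic; it therefore suffices to bound $\probaOf{W<\tau m^2}$ for that statistic. Since $\expect{W}>2\tau m^2$ we have $\expect{W}-\tau m^2\ge \frac12\expect{W}$, so Chebyshev yields $\probaOf{W<\tau m^2}\le 4\Var[W]/\expect{W}^2$. Plugging in the variance bound of \cref{claim:toy:2:variance:soundness} splits this into three pieces: one of order $nm^2/\expect{W}^2$, one of order $(1/\gamma+\sqrt n\,m)/\expect{W}$, and one of order $1/(\sqrt\gamma\sqrt{\expect{W}})$. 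Using $\expect{W}\ge \frac12 m^2\eps^2$ together with the hypotheses $\gamma\ge \eps/(16n)$ and $m\ge 2716\sqrt n/\eps^2$ (and $\eps\le 1$), each piece reduces to an explicit absolute constant, and the constant $2716$ is chosen so that their sum is at most $1/3$.

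The main obstacle will be controlling the $\expect{W}^{3/2}/\sqrt\gamma$ piece of the soundness variance: after substituting the worst-case values it contributes a term of order $\sqrt n/(\eps^{3/2}m)$, which is exactly what forces $m=\Omega(\sqrt n/\eps^2)$ (rather than the $\Omega(\sqrt n/\eps)$ one might naively hope for), and it, together with the $\sqrt{2n}\,m/\expect{W}$ term (of order $\sqrt n/(\eps^2 m)=O(1)$), is what pins down the numerical constant. Everything else is routine substitution of the already-established moment formulas.
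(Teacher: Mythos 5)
Your proposal is correct and follows essentially the same route as the paper: Chebyshev's inequality applied to $W$, using the expectation formula of \cref{claim:toy:2:distance:means} together with the variance bounds of \cref{claim:toy:2:variance:completeness} and \cref{claim:toy:2:variance:soundness}, with $\gamma\ge\eps/(16n)$ and the choice $m\ge 2716\sqrt{n}/\eps^2$ absorbing the three variance terms into small constants. Your explicit treatment of the event $\{M>2m\}$ is a minor bookkeeping addition that the paper handles by conditioning on $M\le 2m$ before stating the lemma, and it does not change the argument.
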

\begin{proof}
We start with the soundness case, i.e., assuming $\normone{P-Q} > \eps$.
In this case, \cref{claim:toy:2:distance:means} implies $\expect{W} > 2\tau m^2$. 
Since $\gamma \geq \frac{\eps}{16n}$ and for $m \geq \frac{16}{\eps}\sqrt{2n}$, Lemma~\ref{claim:toy:2:variance:soundness} implies that
\[
\Var[W] \leq 16nm^2  + 32\sqrt{2n}m\expect{W} + 32\cdot 4\sqrt{\frac{n}{\eps}}\expect{W}^{3/2}.
\]
By Chebyshev's inequality, we have that
\begin{align*}
  \probaOf{ W < \tau m^2 } &\leq \probaOf{ \expect{W} - W > \frac{1}{2}\expect{W} } 
  \leq \frac{4\Var[W]}{\expect{W}^2} \\
  &\leq \frac{64nm^2}{\expect{W}^2}  + \frac{128\sqrt{2n}m}{\expect{W}^{\vphantom{1}}} 
  + \frac{{4\cdot 128\sqrt{n/\eps}}}{\expect{W}^{1/2}}  \\
  &\leq 
  \frac{4\cdot 64n}{m^2\eps^4} 
  + \frac{2\cdot 128\sqrt{2n}}{m\eps^2} 
  + \frac{{4}\sqrt{2}\cdot 128\sqrt{n}}{m\eps^{{3/2}}} \\
  &\leq 128\left(\frac{2}{C^2} + \frac{5\sqrt{2}}{C} \right) \;,
\end{align*}
which is at most $1/3$ as long as $C\geq 2716$, that is $m \geq 2716\frac{\sqrt{n}}{\eps^2}$.

Turning to the completeness, we suppose $\normone{P-Q} = 0$. Then, again by Chebyshev's inequality 
and~\cref{claim:toy:2:variance:completeness} we have that
\begin{align*}
  \probaOf{ W \geq \tau m^2 } &= \probaOf{ W \geq \expect{W} + \tau m^2 }
  \leq \frac{\Var[W]}{\tau^2 m^4} 
  \leq \frac{128n}{\eps^4 m^2} \;,
\end{align*}
which is no more than $1/3$ as long as $m \geq 8\sqrt{6}\frac{\sqrt{n}}{\eps^2}$.
\end{proof}
\end{proof} 
\begin{remark}\label{remark:tolerance:tradeoff}
\emph{
We observe that the aforementioned analysis~--~specifically~\cref{claim:toy:2:distance:means} and~\cref{lemma:product:correctness}~--~can be adapted to provide some tolerance guarantees in the completeness case, that is it implies a tester that distinguishes 
between $\normone{P-Q} \leq \eps'$ and $\normone{P-Q} > \eps$, where $\eps' = O(\eps^2)$. 
This extension, however, requires the assumption that $Q$ be balanced: indeed, 
the exact dependence between $\eps'$ and $\eps^2$ will depend on this balancedness parameter, 
leading to a tradeoff between tolerance and balancedness. 
Further, as shown in~\cref{ssec:product-lower:tolerance}, this tradeoff is in fact necessary, 
as tolerant testing of arbitrary product distributions requires $\Omega(n/\log n)$ samples.}
\end{remark}


\subsection{Sample Complexity Lower Bound for Identity Testing} \label{ssec:product-lower}

In this section, we prove our matching information-theoretic lower bound for identity testing.
In~\cref{theo:lb:product:uniform}, we give a lower bound for uniformity testing
of a product distribution, while~\cref{theo:lb:product:identity:unbalanced} shows a quantitatively similar lower bound for identity testing
against the product distribution with mean vector $q = (1/n, \ldots, 1/n)$. 
To establish these lower bounds, we use the information-theoretic technique from~\cite{DK:16}:
Given a  candidate hard instance, we proceed by bounding from above
the mutual information between appropriate random variables.
More specifically, we construct an appropriate family of
hard instances (distributions)
and show that  a set of $k$ samples taken from a distribution
belonging to the chosen family has small shared
information with whether or not the distributions are the same.

\begin{restatable}{theorem}{uniformityproductlb} \label{theo:lb:product:uniform}
There exists an absolute constant $\eps_0 > 0$ such that, for any $0 < \eps \leq \eps_0$, the following holds:
Any algorithm that has sample access to an unknown product distribution $P$ over $\{0,1\}^n$
and distinguishes between the cases that $P=U$ and $\normone{P-U} > \eps$ {with probability $2/3$} 
requires $\Omega(\sqrt{n}/\eps^2)$ samples.
\end{restatable}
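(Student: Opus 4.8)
The plan is to establish the $\Omega(\sqrt{n}/\eps^2)$ lower bound for uniformity testing of product distributions via the mutual-information method of~\cite{DK:16}, mirroring the structure of the classical $\Omega(\sqrt{N}/\eps^2)$ lower bound for uniformity testing over $[N]$ but adapted to the product setting. First I would set up a hard family of instances: with probability $1/2$ the unknown distribution is exactly the uniform distribution $U$ on $\{0,1\}^n$, and with probability $1/2$ it is a random product distribution $P_{\bsigma}$ whose mean vector has coordinates $p_i = (1 + \sigma_i \eps')/2$, where $\bsigma = (\sigma_1,\dots,\sigma_n) \in \{-1,1\}^n$ is chosen uniformly at random and $\eps' \asymp \eps/\sqrt{n}$ is tuned so that, by~\cref{lem:prod:dtv:hellinger} (or a direct computation of the $L_1$-distance between a product with such a mean vector and $U$), we have $\normone{P_{\bsigma}-U} > \eps$ with certainty (or with overwhelming probability). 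The point is that each coordinate is perturbed by an amount $\Theta(\eps/\sqrt{n})$, so that the $n$ coordinates together contribute the required $\eps$ in total variation, while each individual coordinate is nearly indistinguishable from fair.

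The core of the argument is then a bound on the information that $k$ samples reveal about whether we are in the uniform case or the perturbed case. Let $\Theta \in \{\text{uniform}, \text{perturbed}\}$ be the hidden bit and let $X = (X^{(1)},\dots,X^{(k)})$ be the samples. By Fano-type / Le Cam arguments, any successful tester requires $\mutualinfo{\Theta}{X} = \Omega(1)$. Because the distribution is a product across the $n$ coordinates and (in the perturbed case) the $\sigma_i$ are independent, the samples decompose coordinate-wise: $X$ consists of $n$ independent ``columns'', the $i$-th column being $k$ i.i.d.\ draws from $\bernoulli{1/2}$ (uniform case) or $\bernoulli{(1+\sigma_i\eps')/2}$ (perturbed case). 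I would bound $\mutualinfo{\Theta}{X}$ by relating it to $\sum_i \mutualinfo{\Theta}{X_i}$ where $X_i$ is the $i$-th column, or more directly bound $\chi^2$ or KL between the two mixture distributions on $X$: since $\dkl{}{}$ and $\chi^2$ tensorize over independent coordinates, one gets a bound of roughly $n \cdot (k\eps'^2)^2 \asymp n \cdot (k\eps^2/n)^2 = k^2\eps^4/n$ for the relevant divergence (using that a single $\bernoulli{1/2}$ versus $\bernoulli{(1\pm\eps')/2}$ comparison over $k$ samples contributes $O(k^2\eps'^4)$ to the squared Hellinger-type quantity once we average over the sign $\sigma_i$, the $O(k\eps'^2)$ first-order term cancelling by symmetry). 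For this to be $\Omega(1)$ we need $k = \Omega(\sqrt{n}/\eps^2)$, which is exactly the claimed bound.

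The main obstacle — and the place where the proof needs care — is the cancellation of the first-order term. A naive bound on the KL divergence or mutual information per coordinate would give $O(k\eps'^2)$, summing to $O(k\eps^2)$, which only yields the weaker $k = \Omega(1/\eps^2)$ bound. The extra $\sqrt{n}$ factor comes precisely from the fact that, after averaging over the random sign $\sigma_i$ (equivalently, over the random choice of perturbed instance), the linear-in-$\eps'$ contributions vanish and only the quadratic term $O(k^2\eps'^4)$ survives per coordinate — this is the whole reason the construction uses a \emph{randomized} hard instance rather than a single fixed one. Concretely I would expand $\E_{\bsigma}[\prod_{j} P_{\bsigma}(x^{(j)})]$ (or the appropriate likelihood ratio) and observe that the cross terms between distinct coordinates factor and that within a coordinate the odd moments of $\sigma_i$ die; then carefully track that the dominant surviving term is $O(k^2 \eps'^4)$ per coordinate. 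This is the computation done in~\cite{DK:16} for the discrete case and for products it is if anything cleaner because the coordinates are genuinely independent with no Poissonization needed.

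Finally, I would verify the two soundness/completeness bookkeeping points: (i) that the chosen $\eps' = c\eps/\sqrt{n}$ indeed forces $\normone{P_{\bsigma} - U} > \eps$, which follows since for a product distribution $\normone{P_{\bsigma}-U}$ is, by~\cref{lem:prod:dtv:hellinger} and the fact that all $p_i,q_i$ are bounded away from $0,1$, comparable to $\min(\sqrt{c_0}, \normtwo{p - 1/2}^2) = \min(\sqrt{c_0}, n\eps'^2/4) = \min(\sqrt{c_0}, c^2\eps^2/4)$, which exceeds $\eps$ for suitable absolute constants and small $\eps \le \eps_0$; and (ii) that the standard reduction from mutual information to testing success probability applies, so that a tester succeeding with probability $2/3$ forces $\mutualinfo{\Theta}{X} = \Omega(1)$. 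Combining these with the divergence bound $O(k^2\eps^4/n) = \Omega(1)$ yields $k = \Omega(\sqrt{n}/\eps^2)$, completing the proof of~\cref{theo:lb:product:uniform}.
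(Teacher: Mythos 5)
Your information-theoretic core is sound and is essentially the paper's own argument: the same hard family (random $\pm\eps/\sqrt{n}$ perturbations of each coordinate mean), the same Fano-type reduction, the same coordinate-wise decomposition of the mutual information, and the same key cancellation of the first-order $O(k\eps'^2)$ term after averaging over the signs, leaving $O(k^2\eps'^4) = O(k^2\eps^4/n^2)$ per coordinate and hence $k=\Omega(\sqrt{n}/\eps^2)$. (The paper Poissonizes the per-coordinate sample counts to make the per-coordinate computation an exact Poisson-MGF calculation, whereas you work directly with Binomial counts; that difference is cosmetic.)

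There is, however, a genuine gap in your step (i), the verification that the no-instances are $\eps$-far from uniform. \cref{lem:prod:dtv:hellinger} gives the one-sided bound $\normone{P_{\bsigma}-U}^2 \geq \min\bigl(c_0, \normtwo{p-\tfrac{1}{2}\mathbf{1}}^4\bigr)$, i.e.\ $\normone{P_{\bsigma}-U} \gtrsim \normtwo{p-\tfrac{1}{2}\mathbf{1}}^2 = n\eps'^2 \asymp \eps^2$, and your claim that this quantity ``exceeds $\eps$'' is false for small $\eps$: it is quadratically smaller. This is exactly the trap the paper flags (in the proof of its analogous unbalanced lemma it notes that the Hellinger proxy ``will only result in an $\Omega(\eps^2)$ lower bound on the distance''). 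You cannot repair this by enlarging the perturbation either: making $\eps'$ big enough for the Hellinger-type bound to certify distance $\eps$ forces $\eps' \gtrsim \sqrt{\eps/n}$, and then your per-coordinate information bound degrades to $O(k^2\eps^2/n^2)$, yielding only the weaker $k=\Omega(\sqrt{n}/\eps)$. The correct statement — that a product with means $\tfrac{1}{2}\pm\tfrac{\eps}{\sqrt{n}}$ is at $L_1$-distance $\Omega(\eps)$ (not just $\Omega(\eps^2)$) from uniform — is true but needs a genuinely different argument, which is what the paper's \cref{lemma:noinstances:far} supplies via an explicit computation: it reduces the $L_1$ distance to a sum over Hamming weights and lower bounds the contribution of the $\Theta(\sqrt{n})$ weights in the window $[\tfrac{n}{2}+\sqrt{n}, \tfrac{n}{2}+2\sqrt{n}]$, each contributing $\Omega(\eps)\cdot\Theta(1/\sqrt{n})$ (equivalently, one can argue via anti-concentration of the likelihood-ratio statistic). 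You mention a ``direct computation'' only parenthetically and never carry it out, while the justification you actually give is quantitatively wrong; supplying this computation is the missing piece needed to complete your proof.
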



\begin{proof}
As previously mentioned, we first define two distributions over product distributions $\dyes,\dno$:
\begin{itemize}
  \item $\dyes$ is the distribution that puts probability mass $1$ on the uniform distribution, $U=\bernoulli{1/2}^{\otimes n}$;
  \item $\dno$ is the uniform distribution over the set
    \[
        \setOfSuchThat{ \bigotimes_{j=1}^n \bernoulli{\frac{1}{2} + \frac{(-1)^{b_j}\eps}{\sqrt{n}}}  }{ b\in\{0,1\}^n } \;.
    \]
\end{itemize}

\begin{lemma}\label{lemma:noinstances:far}
  $\dno$ is supported on distributions that are $\Omega(\eps)$-far from $U$.
\end{lemma}
\begin{proof}
The proof, deferred to~\cref{ssec:product-lower:proofs}, proceeds by considering directly the quantity $\normone{P-U}$ in order to obtain a lower bound, where $P\eqdef \bigotimes_{j=1}^n \bernoulli{\frac{1}{2} + \frac{\eps}{\sqrt{n}}}$; specifically, by focusing on the contribution to the distance from the points in the ``middle layers'' of the Boolean hypercube. (We note that an argument relying on the more convenient properties of the Hellinger distance with regard to product distributions, while much simpler, would only give a lower bound of $\Omega(\eps^2)$~--~losing a quadratic factor.)
\end{proof}

We will make a further simplification, namely that instead of drawing $k$ samples from $P=P_1\otimes\dots\otimes P_n$, the algorithm is given $k_i$ samples from each $P_i$, where $k_1,\dots,k_n$ are independent $\poisson{k}$ random variables. This does not affect the lower bound, as this implies a lower bound on algorithms taking $k^\ast\eqdef\max(k_1,\dots, k_n)$ samples from $P$ (where the $k_i$'s are as above), and $k^\ast \geq \frac{k}{2}$ with probability $1-2^{-\Omega(n)}$.
We now consider the following process: letting $X\sim\bernoulli{1/2}$ be a uniformly random bit, we choose a distribution $P$ over $\{0,1\}^n$ by
\begin{itemize}
  \item Drawing $P\sim\dyes$ if $X=0$, and;
  \item Drawing $P\sim\dno$ if $X=1$;
  \item Drawing $k_1,\dots,k_n\sim \poisson{k}$, and returning $k_1$ samples from $P_1$, \dots, $k_n$ samples from $P_n$.
\end{itemize}
For $i\in[n]$, we write $N_i$ for the number of $1$'s among the $k_i$ samples drawn from $P_i$, and let $N=(N_1,\dots, N_n)\in\mathbb{N}^n$. We will invoke this standard fact, as stated in~\cite{DK:16}:
\begin{fact}\label{fact:main:fano}
Let $X$ be a uniform random bit and $Y$ a random variable taking value in some set $\mathcal{S}$. 
If there exists a function $f\colon \mathcal{S}\to \{0,1\}$ such that $\probaOf{ f(Y)=X }\geq 0.51$, then $\mutualinfo{X}{Y} = \Omega(1)$. 
\end{fact}
\begin{proof}
By Fano's inequality, letting $q=\probaOf{ f(Y) \neq X }$, we have $h(q) = h(q)+ q \log (\abs{\{0,1\}}-1) \geq \condentropy{X}{Y}$. This implies
$\mutualinfo{X}{Y} = \entropy{X} - \condentropy{X}{Y} = 1 - \condentropy{X}{Y} \geq 1-h(q) \geq 1 - h(0.49) \geq 2\cdot 10^{-4}$.
\end{proof}
The next step is then to bound from above $\mutualinfo{X}{N}$, in order to conclude that it will be $o(1)$ unless $k$ is taken big enough and invoke~\cref{fact:main:fano}. By the foregoing discussion and the relaxation on the $k_i$'s, we have that the conditioned on $X$ the $N_i$ are independent (with $N_i \sim \poisson{kp_i}$). Recall now that if $X,Y_1,Y_2$ are random variables such that $Y_1$ and $Y_2$ are independent conditioned on $X$, by the chain rule we have that
\begin{align*}
  \condentropy{(Y_1,Y_2)}{X} 
  &= \condentropy{Y_1}{X} + \condentropy{Y_2}{X,Y_1} 
  = \condentropy{Y_1}{X} + \condentropy{Y_2}{X} \;,
\end{align*}
where the second equality follows from conditional independence, and therefore
\begin{align*}
\mutualinfo{X}{(Y_1,Y_2)} 
&= \entropy{(Y_1,Y_2)} - \condentropy{(Y_1,Y_2)}{X} \\
&= \entropy{Y_1} + \condentropy{Y_1}{Y_2} - (\condentropy{Y_1}{X} + \condentropy{Y_2}{X}) \\
&\leq \entropy{Y_1} + \entropy{Y_1} - (\condentropy{Y_1}{X} + \condentropy{Y_2}{X}) \\
&= (\entropy{Y_1}-\condentropy{Y_1}{X}) + (\entropy{Y_2}-\condentropy{Y_2}{X})\\
&= \mutualinfo{X}{Y_1}+\mutualinfo{X}{Y_2}.
\end{align*}
This implies that
\begin{equation}\label{eq:bound:mutual:info:sum}
  \mutualinfo{X}{N} \leq \sum_{i=1}^n \mutualinfo{X}{N_i} \;,
\end{equation}
so that it suffices to bound each $\mutualinfo{X}{N_i}$ separately.

\begin{lemma}\label{lemma:lb:key}
Fix any $i\in [n]$, and let $X, N_i$ be as above. Then $\mutualinfo{X}{N_i} = O( k^2\eps^4/n^2 )$.
\end{lemma}
\begin{proof}
The proof of this rather technical result can be found in~\cref{ssec:product-lower:proofs}, and broadly proceeds as follows. The first step is to upper bound $\mutualinfo{X}{N_i}$ by a more manageable quantity, $\sum_{a=0}^\infty \probaOf{ N_i=a }\left(1-\frac{ \probaCond{N_i=a }{ X=1}  }{ \probaCond{N_i=a }{ X=0} }\right)^2$. After this, giving an upper bound on each summand can be done by performing a Taylor series expansion (in $\eps/\sqrt{n}$), relying on the expression of the moment-generating function of the Poisson distribution to obtain cancellations of many low-order terms.
\end{proof}
This lemma, along with~\cref{eq:bound:mutual:info:sum}, gives the desired result, that is
\begin{equation}
\mutualinfo{X}{N} \leq \sum_{i=1}^n O\!\left( \frac{\eps^4 k^2}{n^2} \right) = O\!\left(\frac{\eps^4 k^2}{n} \right) \;,
\end{equation}
which is $o(1)$ unless $k = \Omega(\sqrt{n}/{\eps^2})$.
\end{proof}
\begin{theorem}\label{theo:lb:product:identity:unbalanced}
There exists an absolute constant $\eps_0 > 0$ such that, for any $\eps \in(0,\eps_0)$, 
distinguishing $P=P^\ast$ and $\normone{P-P^\ast} > \eps$ {with probability $2/3$}
requires $\Omega(\sqrt{n}/\eps^2)$ samples, where $P^\ast \eqdef \bernoulli{1/n}^{\otimes n}$.
\end{theorem}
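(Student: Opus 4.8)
The plan is to mirror the proof of~\cref{theo:lb:product:uniform}, replacing the base point $U = \bernoulli{1/2}^{\otimes n}$ by $P^\ast = \bernoulli{1/n}^{\otimes n}$ and perturbing the coordinate means \emph{around $1/n$} rather than around $1/2$. Concretely, I would take $\dyes$ to be the point mass on $P^\ast$ and $\dno$ to be the uniform distribution over the family $\setOfSuchThat{ \bigotimes_{j=1}^n \bernoulli{\frac1n + (-1)^{b_j}\delta} }{ b\in\{0,1\}^n }$, where $\delta \eqdef \frac{c\eps}{n}$ for a suitable absolute constant $c$; requiring $\eps \leq \eps_0$ for a small enough $\eps_0$ ensures $\frac1n \pm \delta \in (0,1)$. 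As in the uniformity proof, one then relaxes to giving the algorithm $k_i \sim \poisson{k}$ samples from coordinate $i$ independently, writes $N = (N_1,\dots,N_n)$ for the per-coordinate counts of ones, and aims to show $\mutualinfo{X}{N} = o(1)$ unless $k = \Omega(\sqrt n/\eps^2)$, at which point~\cref{fact:main:fano} gives the bound. Since $\dno$ is defined through independent $b_j$'s, the $N_i$ are conditionally independent given $X$, so the subadditivity bound~\eqref{eq:bound:mutual:info:sum} applies verbatim and it suffices to bound each $\mutualinfo{X}{N_i}$.

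The per-coordinate mutual information bound is a routine adaptation of~\cref{lemma:lb:key}: writing $A = N_1 \sim \poisson{kp_1}$, the base Poisson rate is now $\lambda = k/n$ (instead of $k/2$) and conditioned on $X=1$ the rate is uniform on $\lambda(1\pm u)$ with $u = n\delta = c\eps$ (instead of $u = 2\eps/\sqrt n$). The ratio $\varphi(\ell) = \frac{\probaCond{A=\ell}{X=1}}{\probaCond{A=\ell}{X=0}} = \frac{e^{-\lambda u}(1+u)^\ell + e^{\lambda u}(1-u)^\ell}{2}$ is again an even function of the perturbation, and carrying out the same Taylor expansion of $\shortexpect_A\!\left[(1-\varphi(A))^2\right]$ (which, as in the regime $k = O(\sqrt n/\eps^2)$, has $\lambda u^2 = o(1)$) yields $\mutualinfo{X}{N_i} = O(\lambda^2 u^4) = O(k^2\eps^4/n^2)$. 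Summing over the $n$ coordinates gives $\mutualinfo{X}{N} = O(k^2\eps^4/n)$, exactly as in the uniformity case, which is $o(1)$ unless $k = \Omega(\sqrt n/\eps^2)$.

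The step that requires genuine work is showing that the instances drawn from $\dno$ are $\Omega(\eps)$-far (indeed $> \eps$-far) from $P^\ast$: unlike the balanced regime, the Hellinger-based estimate of~\cref{lem:prod:dtv:hellinger} only gives $\normone{P-P^\ast} = \Omega(\eps^2/n)$, and in fact the instance that shifts \emph{all} means up by $\delta$ really is only $\Theta(\eps^2)$-far, because the per-coordinate increase is nearly cancelled by the drop in $\prod_j(1-p_j)$. The fix is to use the disjoint singleton events $E_i \eqdef \setOfSuchThat{x\in\{0,1\}^n}{x_i = 1,\ x_j = 0\ \forall j \neq i}$ and to restrict to the ``typical'' $b$'s, namely those with $\abs{\sum_j b_j - n/2} = O(\sqrt n)$ (a constant fraction of all $b$, and this constant loss is absorbed into the constant of~\cref{fact:main:fano} in the usual way). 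For such a $b$, letting $A$ be the union of the $E_i$ over the boosted coordinates $\{i : b_i = 0\}$ and $n_+ = \abs{\{i : b_i = 0\}} = n/2 + O(\sqrt n)$, one computes $\probaDistrOf{P^\ast}{A} = \frac{n_+}{n}\left(1-\tfrac1n\right)^{n-1} = \Theta(1)$ while $\probaDistrOf{P}{A} = n_+\left(\tfrac1n+\delta\right)\left(1-\tfrac1n-\delta\right)^{n_+-1}\left(1-\tfrac1n+\delta\right)^{n-n_+}$; since $n\delta = c\eps$ and the imbalance is only $O(\sqrt n)$, the ratio $\probaDistrOf{P}{A}/\probaDistrOf{P^\ast}{A}$ equals $(1 + c\eps)(1 + o(\eps))$, so $\normone{P-P^\ast} \geq 2\bigabs{ \probaDistrOf{P}{A} - \probaDistrOf{P^\ast}{A} } = \Theta(c\eps)$, which exceeds $\eps$ once $c$ is chosen large enough (shrinking $\eps_0$ so that $c\eps_0 < 1$ keeps the means in $(0,1)$). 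With this far-ness lemma, the subadditivity bound, and the per-coordinate mutual information bound in hand, the proof concludes exactly as that of~\cref{theo:lb:product:uniform}.
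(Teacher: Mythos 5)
Your proposal is correct and takes essentially the same route as the paper's proof: the identical $\dyes/\dno$ construction with coordinate means $\frac{1\pm\Theta(\eps)}{n}$, the same Poissonization and subadditivity reduction to a per-coordinate bound $\mutualinfo{X}{N_i}=O(k^2\eps^4/n^2)$ (the paper's \cref{lemma:lb:key:unbalanced}), and a farness argument via weight-one strings restricted to typical $b$ — your single union event $A$ over the boosted coordinates is just a cosmetic variant of the paper's summation of $\abs{P(e^{(j)})-P^\ast(e^{(j)})}$ over $j$ with $\abs{b}\in[n/3,2n/3]$. One non-load-bearing aside is inaccurate: the instance with all means shifted up is actually $\Theta(\eps)$-far (witnessed, e.g., by the probability of the all-zeros string), and it is only the weight-one accounting that degrades to $\Theta(\eps^2)$ there, which is precisely why both you and the paper restrict to typical $b$.
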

\begin{proof}
The proof will follow the same outline as that of~\cref{theo:lb:product:uniform} first defining two distributions over product distributions $\dyes,\dno$:
\begin{itemize}
  \item $\dyes$ is the distribution that puts probability mass $1$ on $P^\ast$;
  \item $\dno$ is the uniform distribution over the set
    \[
        \setOfSuchThat{ \bigotimes_{j=1}^n \bernoulli{\frac{1 + (-1)^{b_j}\eps}{n}  }}{ b\in\{0,1\}^n } \;.
    \]
\end{itemize}

\begin{lemma}\label{lemma:noinstances:far:unbalanced}
  With probability $1-2^{-\Omega(n)}$, $\dno$ is supported on distributions $\Omega(\eps)$-far from $P^\ast$.
\end{lemma}
\begin{proof}[Proof of Lemma~\ref{lemma:noinstances:far:unbalanced}]
As for Lemma~\ref{lemma:noinstances:far}, using Hellinger distance as a proxy would only result in an $\Omega(\eps^2)$ lower bound on the distance, 
so we will compute it explicitly instead. The proof can be found in~\cref{ssec:product-lower:proofs}.
\end{proof}
The only ingredient missing to conclude the proof is the analogue of~\cref{lemma:lb:key}:
\begin{lemma}\label{lemma:lb:key:unbalanced}
Suppose $\frac{k\eps^2}{n}\leq 1$. Fix any $i\in [n]$, and let $X, N_i$ be as above. Then $\mutualinfo{X}{N_i} = O( k^2\eps^4/n^2 )$.
\end{lemma}
\begin{proof}
The proof is similar as that of~\cite[Lemma 3.3]{DK:16}, replacing (their) $mn$ by (our) $n$. 
For completeness, we provide an alternative proof in~\cref{sec:misc:proofs}.
\end{proof}
\end{proof}


\section{Testing Closeness of Product Distributions} \label{sec:product}

Our first set of results involves sample-optimal testers and matching
information-theoretic lower bounds for testing closeness of product distributions
over $\{0, 1\}^n$ (recall that in~\cref{sec:product:identity}, we settled the related question of identity testing of such product distributions). Specifically, in~\cref{sec:closeness:product:ub}, we give a closeness testing
algorithm for $n$-dimensional binary product distributions with sample complexity $O(\sqrt{n}/\eps^2)$; then, we show in~\cref{sec:closeness:product:lb} that this sample bound is information-theoretically optimal.

\subsection{Closeness Testing Algorithm}\label{sec:closeness:product:ub}

Compared to identity, testing closeness between two unknown product distributions is
somewhat more complicated and requires additional ideas. As is the case when comparing unknown
discrete distributions on $[n]$, we have the difficulty that we do not
know how to scale our approximations to the $(p_i-q_i)^2$ terms. We are
forced to end up rescaling using the total number of samples drawn
with $x_i=1$ as a proxy for $1/(q_i)$. This leaves us with a statistic
reminiscent of that used in~\cite{CDVV14}, which can be shown to
work with a related but more subtle analysis. First, in our
setting, it is no longer the case that the sum of the $q_i$'s is
$O(1)$, and this ends up affecting the analysis, making our sample complexity depend on $n^{3/4}$ instead of $n^{2/3}$ as in the unstructured
case. Second, to obtain the optimal sample complexity as a function of both $n$ and $\eps$, we need to 
partition the coordinates into two groups based on the value of their marginals
and apply a different statistic to each group. It turns out that the sample complexity
of our closeness testing algorithm is $O(\max(n^{1/2}/\eps^2, n^{3/4}/\eps))$:

\begin{restatable}{theorem}{closenessproductub}\label{theo:closeness:product:ub}
There exists an efficient algorithm which, given sample access to two unknown 
product distributions $P, Q$ over $\{0,1\}^n$, has the following guarantees. 
For any $\eps\in(0,1)$, the algorithm takes 
$
\bigO{ \max\left(\sqrt{n}/\eps^2, n^{3/4}/{\eps} \right) }
$
 samples from $P$ and $Q$, and distinguishes with probability $2/3$ 
 between (i)~$\normone{P-Q} = 0$ and (ii)~$\normone{P-Q} > \eps$.
\end{restatable}
\noindent The rest of this section is devoted to the proof of the above theorem.
\begin{proof} 
 Let $P,Q$ be two product distributions on $\{0,1\}^n$ with mean vectors $p,q\in[0,1]^n$. 
 For $S\subseteq [n]$, we denote by $P_S$ and $Q_S$ the product distributions on $\{0,1\}^{\abs{S}}$ 
 obtained by restricting $P$ and $Q$ to the coordinates in $S$. 
 Similarly, we write $p_S, q_S\in[0,1]^{\abs{S}}$ for the vectors obtained by restricting $p,q$ to the coordinates in $S$, 
 so that $P_S$ has mean vector $p_S$.

\paragraph{High-level Idea} The basic idea of the algorithm is to divide the coordinates in two bins $U,V$: 
one containing the indices where both distributions have marginals very close to $0$ 
(specifically, at most $1/m$, where $m$ is our eventual sample complexity), 
and one containing the remaining indices, on which at least one of the two distributions is roughly balanced. 
Since $P$ and $Q$ can only be far from each other if at least one of $\normone{P_U-Q_U}$, $\normone{P_V-Q_V}$ is big, 
we will test separately each case. Specifically, we will apply two different testers: 
one ``$\chi^2$-based tester'' (with sample complexity $\bigO{\sqrt{n}/\eps^2}$) to the ``heavy bin'' $U$~--~
which relies on the fact that the marginals of $P,Q$ on $U$ are balanced by construction~--~
and one ``$\lp[2]$-tester'' (with sample complexity $\bigO{{n}^{3/4}/\eps}$) to the ``light bin'' $V$~--~
relying on the fact that $\normtwo{p_V}$, $\normtwo{q_V}$ are small.
{The pseudocode of our algorithm is given in Figure~\ref{algo:product:closeness}.}

\paragraph{Sample Complexity} Hereafter, we let
\[
m \eqdef C\max\left(\frac{\sqrt{n}}{\eps^2}, \frac{n^{3/4}}{\eps} \right) \;,
\]
for some absolute constant $C>0$ to be determined in the course of the analysis. 
We let $M_1,\dots,M_n$ and $M'_1,\dots,M'_n$ be i.i.d. $\poisson{m}$ random variables,
 set $M=\max_{i\in[n]} M_i$, $M'=\max_{i\in[n]} M'_i$; and note that $M,M' \leq 2m$ with probability $1-e^{-\Omega(m)}$ (by a union bound). 
We will condition hereafter on the event that $M,M' \leq 2m$ and our tester will reject otherwise.

\noindent Without loss of generality, as in the previous sections, 
we will assume that $\frac{\eps}{16n}\leq p_i,q_i \leq \frac{3}{4}$ for every $i\in[n]$. 
Indeed, this can be ensured by the simple preprocessing step below.

\paragraph{Preprocessing} 
Using $O(\log n)$ samples from $P$ and $Q$, we can ensure without loss of generality that all $p_i,q_i$ 
are at most $3/4$ (with probability $9/10$). 
Namely, we estimate every $p_i, q_i$ to an additive $1/64$, and proceed as follows:
\begin{itemize}
  \item If the estimate of $q_i$ is not within an additive $\pm \frac{1}{32}$ of that of $p_i$, we output $\textsf{reject}$ and stop;
  \item If the estimate of $p_i$ is more than $43/64$, mark $i$ as ``swapped'' and replace $X_i$ by $1-X_i$ (for $P$) 
  and $Y_i$ by $1-Y_i$ (for $Q$) in all future samples.
\end{itemize}
Assuming correctness of the estimates (which holds with probability at least $9/10$), 
if we pass this step then $\abs{p_i - q_i} < \frac{1}{16}$ for all $i$. 
Moreover, if $i$ was not swapped, then it means that we had $p_i \leq 43/64+1/64 < 3/4$, 
and therefore $q_i < 43/64+1/64+1/16 = 3/4$. 
Now, if we had $q_i > 3/4$, then $p_i > 3/4-1/16$ and the estimate of $p_i$ would be more than $3/4-1/16-1/64 = 43/64$.

\begin{figure}[h!]\small
  \begin{framed}
    \begin{description}
      \item[Input] Error tolerance $\eps \in (0,1)$, dimension $n$, and sampling access to two product distributions $P,Q$ over $\{0,1\}^n$.
     
      \item[-] Preprocess $P,Q$ so that $q_{i} \leq \frac{3}{4}$ for all $i\in[n]$, return $\textsf{reject}$ if a discrepancy appears.
      \item[-] Set $m \eqdef C\max\left(\frac{\sqrt{n}}{\eps^2}, \frac{n^{3/4}}{\eps} \right)$.
      \item[-] Define $M, M'$ as follows:
      Draw $M_1,\dots,M_n$, $M'_1,\dots,M'_n$ i.i.d. $\poisson{m}$ random variables, and set $M=\max_{i\in[n]} M_i$, $M'=\max_{i\in[n]} M'_i$.
      \item[-] Take $m$ samples from both $P$ and $Q$, and let $U',V'\subseteq[n]$ be respectively the set of coordinates $i$ such that $X_i=1$ 
      for at least one sample, and its complement.
      \item[If] $\max(M,M') > 2m$, return $\textsf{reject}$.
      \item[-] Take $M$ (resp. $M'$) samples  $X^{(1)},\dots, X^{(M)}$ from $P_{U'}$ (resp. $Y^{(1)},\dots, Y^{(M')}$ from $Q_{U'}$), and define
      \[
          W_{\rm heavy} = \sum_{i\in U'} \frac{ (W_i - V_i)^2- (W_i+V_i) }{W_i+V_i} \;,
      \]
      for $V_i,W_i$ defined as $W_i = \sum_{j=1}^{M_i} X^{(j)}_i$ and $V_i = \sum_{j=1}^{M'_i} Y^{(j)}_i$ for all $i\in U'$.      
      
      \item[If] $W_{\rm heavy} \geq \frac{m\eps^2}{12000}$ return $\textsf{reject}$.
      \item[-] Take $M$ (resp. $M'$) samples $X^{'(1)},\dots, X^{'(M)}$  from $P_{V'}$ (resp. $Y^{'(1)},\dots, Y^{'(M')}$ from $Q_{V'}$), and define
      \[
          W_{\rm light} = \sum_{i\in V'} \left( (W'_i - V'_i)^2- (W'_i+V'_i) \right) \;,
      \]
      for $V'_i,W'_i$ defined as $W'_i = \sum_{j=1}^{M_i} X^{'(j)}_i$, $V'_i = \sum_{j=1}^{M'_i} Y^{'(j)}_i$ for all $i\in V'$.
      \item[If] $W_{\rm light} \geq \frac{\eps^2}{600n}$ return $\textsf{reject}$.
      \item return $\textsf{accept}$.
    \end{description}
  \end{framed}
  \caption{Closeness testing between two unknown product distributions $P,Q$ over $\{0,1\}^n$.}\label{algo:product:closeness}
\end{figure}

\paragraph{Proof of Correctness}
For $m$ as above, define $U,V\subseteq [n]$ by $V \eqdef \setOfSuchThat{ i \in [n]}{ \max(p_i,q_i) < \frac{1}{m} }$ and $U\eqdef [n]\setminus V$.
We start with the following simple claim:

\begin{claim}\label{claim:closeness:twobuckets:UV}
 Assume $\normone{P-Q} > \eps$. Then, at least one of the following must hold: 
 (i) $\normtwo{p_V-q_V}^2 > \frac{\eps^2}{16n}$, or (ii) $\sum_{i\in U} \frac{(p_i-q_i)^2}{p_i+q_i} > \frac{\eps^2}{64}$.
\end{claim}
\begin{proof}
Since $\eps < \normone{P-Q} \leq \normone{P_U-Q_U}+\normone{P_V-Q_V}$, 
at least one of the two terms in the RHS must exceed $\frac{\eps}{2}$. 
We now recall that, by~\cref{lem:prod:dtv:hellinger}, it holds that
$\normone{P_U-Q_U}^2 \leq 8 \sum_{i\in U} \frac{(p_i-q_i)^2}{(p_i+q_i)(2-p_i-q_i)}$ 
and from the further assumption that $p_i,q_i \leq \frac{3}{4}$ that
$\normone{P_U-Q_U}^2 \leq 16 \sum_{i\in U} \frac{(p_i-q_i)^2}{p_i+q_i}$.

\noindent Using subadditivity and the Cauchy--Schwarz inequality, we also have
\begin{align*}
\normone{P_V-Q_V} 
&\leq \sum_{i\in V} \normone{P_{i}-Q_{i}} 
= 2\sum_{i\in V} \abs{p_i-q_i} \\
&= 2\normone{p_V-q_V} 
\leq 2\sqrt{\abs{V}} \normtwo{p_V-q_V} \\
&\leq 2\sqrt{n} \normtwo{p_V-q_V} \;,
\end{align*}
from where we derive that 
$\normtwo{p_V-q_V}^2 \geq \frac{1}{4n}\normone{P_V-Q_V}^2 .$
This completes the proof.
\end{proof}

We now define $U',V'\subseteq [n]$ (our ``proxies'' for $U,V$) as follows: 
Taking $m$ samples from both $P$ and $Q$, 
we let $V'$ be the set of indices which were never seen set to one in any sample, 
and $U'$ be its complement. We have the following:

\begin{claim}\label{claim:closeness:twobuckets:UVprime}
 Assume $\normone{P-Q} > \eps$. Then, at least one of the following must hold: 
 (i)~$\expect{\normtwo{p_{V'}-q_{V'}}^2} > \frac{\eps^2}{150n}$, 
 or (ii)~$\expect{ \sum_{i\in U'\cap U} \frac{(p_i-q_i)^2}{p_i+q_i} } > \frac{\eps^2}{128}$.
\end{claim}
\begin{proof}
By definition, any fixed $i$ belongs to $V'$ with probability $(1-p_i)^m(1-q_i)^m$, and so
\begin{align*}
\expect{\normtwo{p_{V'}-q_{V'}}^2} 
&= \sum_{i=1}^n (p_i-q_i)^2 \cdot (1-p_i)^m(1-q_i)^m 
\geq \sum_{i\in V} (p_i-q_i)^2 \cdot (1-p_i)^m(1-q_i)^m \\
&\geq \mleft(1-\frac{1}{m}\mright)^{2m} \sum_{i\in V} (p_i-q_i)^2 
= \mleft(1-\frac{1}{m}\mright)^{2m}\normtwo{p_V-q_V}^2 \\
&\geq \frac{1}{9} \normtwo{p_V-q_V}^2 \;,
\end{align*}
for $m\geq 10$. Similarly,
\begin{align*}
\shortexpect\Bigg[ \sum_{i\in U'\cap U} \frac{(p_i-q_i)^2}{p_i+q_i} \Bigg]  
&=  \sum_{i\in U} \frac{(p_i-q_i)^2}{p_i+q_i}  \cdot (1-(1-p_i)^m(1-q_i)^m) \\
&\geq \left(1-\left(1-\frac{1}{m}\right)^{2m}\right) \sum_{i\in U} \frac{(p_i-q_i)^2}{p_i+q_i}  \\
&\geq \frac{1}{2} \sum_{i\in U} \frac{(p_i-q_i)^2}{p_i+q_i} \;,
\end{align*}
and in both cases the proof follows by~\cref{claim:closeness:twobuckets:UV}.
\end{proof}

We will require the following implication:
\begin{claim}\label{claim:closeness:twobuckets:UVprime:whp}
Assume $\normone{P-Q} > \eps$. Then, at least one of the following must hold with probability at least $4/5$ (over the choice of $U',V'$): 
(i) $\normtwo{p_{V'}-q_{V'}}^2 > \frac{\eps^2}{300n}$, or (ii) $\sum_{i\in U'\cap U} \frac{(p_i-q_i)^2}{p_i+q_i} > \frac{\eps^2}{2000}$.
\end{claim}
\begin{proof}
First, assume that $\normtwo{p_V-q_V}^2 > \frac{\eps^2}{16n}$, 
and let $V''$ denote the random variable $V'\cap V$. 
By (the proof of)~\cref{claim:closeness:twobuckets:UVprime}, we have 
$\expect{\normtwo{p_{V''}-q_{V''}}^2} \geq \frac{1}{9} \normtwo{p_V-q_V}^2 > \frac{\eps^2}{150n}$. 
Writing $m^2\normtwo{p_{V''}-q_{V''}}^2 = \sum_{i=1}^n m^2(p_i-q_i)^2\indic{i\in V''}$ (note that each summand is in $[0,1]$), we then get by a Chernoff bound that
\[
\probaOf{ \normtwo{p_{V''}-q_{V''}}^2 < \frac{\eps^2}{300n} } < e^{-\frac{1}{8}\frac{m^2 \eps^2}{150n}} < e^{-\frac{C}{1200\eps^2}}\;,
\]
which is less than $1/5$ using our setting of $m$ (for an appropriate choice of the constant $C>0$).
  
Suppose now that $\sum_{i\in U} \frac{(p_i-q_i)^2}{p_i+q_i} > \frac{\eps^2}{64}$. We divide the proof in two cases.
  \begin{itemize}
    \item Case 1: there exists $i^\ast\in U$ such that $\frac{(p_i-q_i)^2}{p_i+q_i} > \frac{\eps^2}{2000}$.
        Then $\probaOf{ \sum_{i\in U'\cap U} \frac{(p_i-q_i)^2}{p_i+q_i} > \frac{\eps^2}{2000} } \geq \probaOf{ i^\ast \in U' } \geq 1-\left(1-\frac{1}{m}\right)^{2m} > \frac{4}{5}$.
    \item Case 2: $\frac{(p_i-q_i)^2}{p_i+q_i} \leq \frac{\eps^2}{2000}$ for all $i\in U$.
      Then, writing $X_i \eqdef \frac{2000}{\eps^2}\frac{(p_i-q_i)^2}{p_i+q_i} \indic{i\in U'\cap U} \in [0,1]$ for all $i\in[n]$, we have
      $\expect{\sum_{i=1}^n X_i} \geq \frac{2000}{128}$ by~\cref{claim:closeness:twobuckets:UVprime}, and a multiplicative Chernoff bound ensures that 
      \begin{align*}
          \probaOf{ \sum_{i\in U'\cap U} \frac{(p_i-q_i)^2}{p_i+q_i} < \frac{\eps^2}{2000} }
          &\leq \probaOf{ \sum_{i=1}^n X_i < 1 } 
          \leq e^{-\frac{2000}{8\cdot 128}} < \frac{1}{5} \;,
      \end{align*}
  \end{itemize}
concluding the proof.
\end{proof}

\noindent Finally, we will need to bound the expected $\lp[2]$-norm of $p_{V'}$ and $q_{V'}$.
\begin{claim}\label{claim:closeness:bound:2norm:vprime}
For $U',V'$ defined as above, we have $\expect{\normtwo{p_{V'}}^2},\expect{\normtwo{q_{V'}}^2} \leq \frac{n}{m^2}$.
\end{claim}
\begin{proof}
By symmetry, it is sufficient to bound $\expect{\normtwo{p_{V'}}^2}$. We have
\begin{align*}
\expect{\normtwo{p_{V'}^2}} &= \sum_{i=1}^n p_i^2 \cdot (1-p_i)^m(1-q_i)^m
\leq \sum_{i=1}^n p_i^2 \cdot (1-p_i)^m.
\end{align*}
Studying the auxiliary function $f\colon x\in[0,1]\mapsto x^2(1-x)^m$, 
we see that it achieves a maximum at $\frac{2}{m+2}$. 
We can then bound
\[
  \expect{\normtwo{p_{V'}^2}} \leq n\cdot f\left(\frac{2}{m+2}\right) \operatorname*{\sim}_{m\to\infty} \frac{4n}{e^2 m^2} \;,
\]
and so $\expect{\normtwo{p_{V'}^2}} \leq \frac{n}{m^2}$ for $m$ large enough 
(and this actually holds for any $m\geq 1$).
\end{proof}

\noindent{In what follows, we analyze our statistics $W_{\rm heavy}$ and $W_{\rm light}$, conditioning on $U',V'$.}
\paragraph*{Case 1: discrepancy in $U'$}
We assume that Algorithm~\ref{algo:product:closeness} reached the line where $W_{\rm heavy}$ is computed, 
and show the following:
\begin{lemma} \label{lem:case1-close}
If $P=Q$, then with probability at least $9/10$ we have $W_{\rm heavy} \leq \frac{m\eps^2}{12000}$. 
Conversely, if $\sum_{i\in U\cap U'} \frac{(p_i-q_i)^2}{p_i+q_i} > \frac{\eps^2}{2000}$, 
then $W_{\rm heavy} \geq \frac{m\eps^2}{12000}$ with probability at least $9/10$.
\end{lemma}
\begin{proof}
Recall that the $W_i$'s are independent, as $P$ is a product distribution and the $M_i$'s are independent. Similarly for the $V_i$'s.
We have:

\begin{claim}\label{claim:product:closeness:expectation}
If $P=Q$, then $\expect{W_{\rm heavy}} = 0$. 
Moreover, if $\sum_{i\in U\cap U'} \frac{(p_i-q_i)^2}{p_i+q_i} > \frac{\eps^2}{2000}$, 
then $\expect{W_{\rm heavy}} > \frac{m\eps^2}{6000}$.
\end{claim}
\begin{proof}
Note that $W_i \sim \poisson{mp_i}$ and $V_i \sim \poisson{mq_i}$ for all $i\in U'$. From there, we can compute (as in~\cite{CDVV14})
\begin{align*}
  \expect{\frac{ (W_i - V_i)^2- (W_i+V_i) }{W_i+V_i}}  =  m\frac{(p_i-q_i)^2}{p_i+q_i}\mleft( 1-\frac{1-e^{-m(p_i+q_i)}}{m(p_i+q_i)} \mright) \;,
\end{align*}
by first conditioning on $W_i+V_i$. 
This immediately gives the first part of the claim. 
As for the second, observing that $1-\frac{1-e^{-x}}{x} \geq \frac{1}{3}\min(1,x)$ for $x\geq 0$, 
and that $p_i+q_i \geq \frac{1}{m}$ for all $i\in U$, by definition we get
\begin{align*}
  \expect{W_{\rm heavy}} 
  &= m\sum_{i\in U'} \frac{(p_i-q_i)^2}{p_i+q_i}\left( 1-\frac{1-e^{-m(p_i+q_i)}}{m(p_i+q_i)} \right) 
  \geq \frac{1}{3} m\sum_{i\in U\cap U'} \frac{(p_i-q_i)^2}{p_i+q_i}
  \geq \frac{m\eps^2}{6000} \;.
\end{align*}
\end{proof}
\noindent We can now bound the variance of our estimator:
\begin{claim}\label{claim:product:closeness:variance}
$\Var[W_{\rm heavy}] \leq 2n + 5m\sum_{i\in U'} \frac{(p_i-q_i)^2}{p_i+q_i} \leq 7n + \frac{3}{5}\expect{W_{\rm heavy}}$. 
In particular, if $P=Q$ then $\Var[W_{\rm heavy}] \leq 2n$.
\end{claim}
\begin{proof}
The proof of the first inequality is similar to that in \cite[Lemma 5]{CDVV14}, 
with a difference in the final bound due to the fact that the $p_i$'s and $q_i$'s 
no longer sum to one. 
For completeness, we give the proof below.

First, note that by independence of the $V_i$'s and $W_i$'s, we have
$\Var[W_{\rm heavy}] = \sum_{i\in U'} \Var\left[ \frac{ (W_i - V_i)^2- (W_i+V_i) }{W_i+V_i} \right]$, 
so it is sufficient to bound each summand individually. 
In order to do so, we split the variance calculation into two parts: 
the variance conditioned on $W_i+V_i=j$, 
and the component of the variance due to the variation in $j$.  
Writing for convenience
\[
f(W_i,V_i)\eqdef \frac{(W_i-V_i)^2-W_i-V_i}{W_i+V_i} \;,
\]
we have that 
\begin{align*}
\Var[f(X,Y)] 
\leq \max_j\left( \Var[f(X,Y) \mid X+Y=j]\right) +\Var[\expect{f(X,Y) \mid X+Y=j}] \;.
\end{align*}

We now bound the first term. 
Since $(W_i - V_i)^2 = (j - 2V_i)^2$, and $V_i$ is distributed as $\binomial{j}{\alpha}$ 
(where for conciseness we let $\alpha\eqdef\frac{q_i}{p_i+q_i}$), 
we can compute the variance of $(j -2V_i)^2$ from standard expressions for the moments 
of the Binomial distribution as 
$
\Var[(j - 2V_i)^2] =16j(j -1)\alpha(1 -\alpha)\left( (j-\frac{3}{2})(1-2\alpha)^2+\frac{1}{2}\right).
$
Since $\alpha(1-\alpha) \leq \frac{1}{4}$ and $j-\frac{3}{2} < j-1 < j$, 
this in turn is at most $j^2(2+4j(1-2\alpha)^2)$.  
Because the denominator is $W_i + V_i$ which equals $j$, 
we must divide this by $j^2$, make it $0$ when $j = 0$, 
and take its expectation as $j$ is distributed as $\operatorname{Poi}(m(p_i + q_i))$. 
This leads to
\begin{align*}
  \Var[f(W_i,V_i) \mid W_i+V_i=j] 
  &\leq 2(1 - e^{-m(p_i+q_i)})  + 4m\frac{(p_i-q_i)^2}{p_i+q_i} \;.
\end{align*}
We now consider the second component of the variance--the contribution to the variance due to the variation 
in the sum $W_i + V_i$. Since for fixed $j$, as noted above, we have $V_i$ distributed as $\binomial{j}{\alpha}$, 
we have
\begin{align*}
  \shortexpect[(W_i - V_i)^2] &= \shortexpect[j^2-4jV_i+4V_i^2]\\
  &=j^2-4j^2\alpha+4(j\alpha-j\alpha^2+j^2\alpha^2)\\
  &=j^2(1-2\alpha)^2+4j\alpha(1-\alpha) \;.
\end{align*}
We finally subtract $W_i+V_i=j$ and divide by $j$ to yield $(j - 1)(1 - 2 \alpha)^2$, 
except with a value of $0$ when $j = 0$ by definition. 
However, note that replacing the value at $j=0$ with $0$ can only lower the variance. 
Since  the sum $j=W_i+V_i$ is drawn from a Poisson distribution with parameter $m(p_i + q_i)$, 
we thus have:
\begin{align*}
  \Var\mleft[ \shortexpect[f(W_i,V_i)|W_i+V_i=j] \mright] &
  \leq m(p_i + q_i)(1 - 2\alpha)^4 
  \leq  m(p_i + q_i)(1 - 2\alpha)^2 
  = m\frac{(p_i-q_i)^2}{p_i+q_i} \;.
\end{align*}
\noindent Summing the final expressions of the previous two paragraphs 
yields a bound on the variance of $f(W_iV_i)$ of
\[
2(1 - e^{-m(p_i+q_i)}) + 5m\frac{(p_i-q_i)^2}{p_i+q_i} \leq 2 + 5m\frac{(p_i-q_i)^2}{p_i+q_i} \;,
\]
as $1 - e^{-x}\leq 1$ for all $x$. This shows that
\begin{align*}
  \Var[W_{\rm heavy}] 
  &\leq 2n + 5m\sum_{i\in U'} \frac{(p_i-q_i)^2}{p_i+q_i} 
  = 2n + 5m\sum_{i\in U'\cap U} \frac{(p_i-q_i)^2}{p_i+q_i} 
  + 5m\sum_{i\in U'\cap V} \frac{(p_i-q_i)^2}{p_i+q_i}\\
  &\leq 2n + \frac{3}{5}\expect{W_{\rm heavy}} + 5m\sum_{i\in U'\cap V} \frac{(p_i-q_i)^2}{p_i+q_i} \;,
\end{align*}
so it only remains to bound the last term. 
But by definition, $i\in V$ implies $0\leq p_i,q_i < \frac{1}{m}$, 
from which
\[
5m\sum_{i\in U'\cap V} \frac{(p_i-q_i)^2}{p_i+q_i}
\leq 5\sum_{i\in U'\cap V} \frac{\abs{p_i-q_i}}{p_i+q_i}
\leq 5\abs{U'\cap V}
\]
which is itself at most $5n$. This completes the proof.
\end{proof}

\noindent With these two claims in hand, we are ready to conclude the proof of Lemma~\ref{lem:case1-close}. 
We start with the soundness case, i.e. assuming $\sum_{i\in U\cap U'} \frac{(p_i-q_i)^2}{p_i+q_i} > \frac{\eps^2}{2000}$. 
Then, by Chebyshev's inequality and~\cref{claim:product:closeness:expectation} we have that
\begin{align}
  \probaOf{ W_{\rm heavy} < \frac{m\eps^2}{12000} } 
  &\leq \probaOf{ \expect{W_{\rm heavy}} - W_{\rm heavy} > \frac{1}{2}\expect{W_{\rm heavy}} } \notag\\
  &\leq \frac{4\Var[W_{\rm heavy}]}{\expect{W_{\rm heavy}}^2} \notag\\
  &\leq \frac{28n}{\expect{W_{\rm heavy}}^2}  + \frac{12}{5\expect{W_{\rm heavy}}} \tag{by \cref{claim:product:closeness:variance}} \\
  &\leq \frac{9\cdot 2000^2\cdot 28n}{m^2\eps^4} + \frac{36\cdot 2000}{5\eps^2 m} \notag \\
  &= \bigO{ \frac{n}{m^2\eps^4}+\frac{1}{\eps^2 m} } \notag\;.
\end{align}
We want to bound this quantity by $1/10$, 
for which it suffices to have $m > C\frac{\sqrt{n}}{\eps^2}$ 
for an appropriate choice of the absolute constant $C>0$ 
in our setting of $m$.

Turning to the completeness, assume that $\normone{P-Q} = 0$. 
Then, by Chebyshev's inequality, and invoking~\cref{claim:product:closeness:variance} we have:
\begin{align*}
  \probaOf{ W \geq \frac{m\eps^2}{12000} } &= \probaOf{ W \geq \expect{W} + \frac{m\eps^2}{12000} }
  \leq \frac{36\cdot 2000^2\Var[W]}{\eps^4 m^2} = \bigO{\frac{n}{\eps^4 m^2}} \;,
\end{align*}
which is no more than $1/10$ for the same choice of $m$. This establishes Lemma~\ref{lem:case1-close}.
\end{proof}

\paragraph*{Case 2: discrepancy in $V'$}
We now assume that Algorithm~\ref{algo:product:closeness} reached the line where $W_{\rm light}$ is computed, 
and show the following:
\begin{lemma}\label{lemma:case2}
If $P=Q$, then with probability at least $9/10$ 
we have $W_{\rm light} \leq \frac{\eps^2}{600n}$. 
Conversely, if $\normtwo{p_{V'}-q_{V'}}^2 > \frac{\eps^2}{300n}$, 
then $W_{\rm light} \geq \frac{\eps^2}{600n}$ with probability at least $9/10$.
\end{lemma}
\begin{proof}
We condition on $\normtwo{p_V'}^2,\normtwo{q_V'}^2 \leq \frac{20n}{m^2}$, 
which by~\cref{claim:closeness:bound:2norm:vprime}, a union bound, and Markov's inequality 
happens with probability at least $19/20$. 
The analysis is similar to~\cite[Section 3]{CDVV14}, 
observing that the $(V'_i)_{i\in V'},(W'_i)_{i\in V'}$'s are mutually independent Poisson random variables, 
$V'_i$ (resp. $W'_i$) having mean $mp_i$ (resp. $mq_i$). 
Namely, following their analysis, the statistic $W_{\rm light}$ 
is an unbiased estimator for $m^2\normtwo{p_{V'}-q_{V'}}^2$ with variance
\[\Var[W_{\rm light}] \leq 8m^3\sqrt{b}\normtwo{p_{V'}-q_{V'}}^2+8m^2b \;,\]
where $b\eqdef \frac{20n}{m^2}$ is our upper bound on $\normtwo{p_V'}^2,\normtwo{q_V'}^2$. 
From there, setting $\eps'\eqdef\frac{\eps}{\sqrt{n}}$ and applying Chebyshev's inequality, 
we get that there exists an absolute constant $C'>0$ 
such the completeness and soundness guarantees from the lemma holds with probability at least $19/20$, 
provided that $m> C'\frac{\sqrt{b}}{{\eps'}^2}$, i.e.,
\[m > C' \frac{n}{\eps^2}\cdot\frac{\sqrt{20n}}{m^2} = \sqrt{20}C'\frac{n^{3/2}}{m\eps^2} \;.\]
Solving for $m$ shows that choosing $m\geq C\frac{n^{3/4}}{\eps}$ 
for some absolute constant $C>0$ is enough. 
A union bound then allows us to conclude the proof of the lemma, 
guaranteeing correctness with probability at least $1-\frac{1}{20}-\frac{1}{20}=\frac{9}{10}$ and concluding the proof of Lemma~\ref{lemma:case2}.
\end{proof}
\noindent This establishes Theorem~\ref{theo:closeness:product:ub}.
 \end{proof} 

\subsection{Sample Complexity Lower Bound for Closeness Testing}\label{sec:closeness:product:lb}
In this section, we prove a matching information-theoretic lower bound 
for testing closeness of two unknown arbitrary product distributions; showing that, perhaps surprisingly, the sample complexity of~\cref{theo:closeness:product:ub} is in fact optimal, up to constant factors:

\begin{restatable}{theorem}{closenessproductlb} \label{theo:lb:product:closeness}
There exists an absolute constant $\eps_0 > 0$ such that, for any $0 < \eps \leq \eps_0$, the following holds:
Any algorithm that has sample access to two unknown \emph{product} distribution $P,Q$ over $\{0,1\}^n$
and distinguishes between the cases that $P=Q$ and $\normone{P-Q} > \eps$ requires $\Omega(\max(\sqrt{n}/\eps^2,n^{3/4}/\eps))$ samples.
\end{restatable}

Before delving into the proof, we give some intuition for the $n^{3/4}$ term of the lower bound lower bound. Recall that the hard family of instances for distinguishing discrete distributions over $[n]$
had (a) many ``light'' bins (domain elements) of probability mass
approximately $1/n$, where either $p_i=q_i$ on each bin
or $p_i = q_i(1 \pm \eps)$ in each bin, and (b) a number of ``heavy'' bins
where $p_i=q_i \approx 1/k$ (where $k$ was the number of
samples taken). The goal of the heavy bins, when designing these hard instances, was to ``add noise''
and hide the signal from the light bins.
In the case of discrete distributions over $[n]$,
we could only have $k$ such heavy bins.
In the case of product distributions,
there is no such restriction,
and we can have $n/2$ of them in our hard instance.
The added noise leads to an increased sample complexity of testing closeness
in the high-dimensional setting.

\begin{proof}
The first part of the lower bound, $\Omega(\sqrt{n}/\eps^2)$, follows from~\cref{theo:lb:product:uniform}; we focus here on the second term, $\Omega(n^{3/4}/\eps)$, {and consequently assume hereafter that $\sqrt{n}/\eps^2 < n^{3/4}/\eps$. Let $k\geq 1$ be fixed, and suppose we have a tester that takes $k=o(n^{3/4}/\eps)$ samples: we will show that it can only be correct with vanishing probability.}  We will again follow the information-theoretic framework of~\cite{DK:16} for proving distribution testing lower bounds, first defining two distributions over pairs of product distributions $\dyes,\dno$:
\begin{itemize}
  \item $\dyes$: for every $i\in[n]$, independently choose $(p_i,q_i)$ to be either $p_i=q_i = \frac{1}{{k}}$ with probability $1/2$, and $p_i=q_i = \frac{1}{n}$ otherwise; and set $P\eqdef \bigotimes_{j=1}^n \bernoulli{p_i}$, $Q\eqdef \bigotimes_{j=1}^n \bernoulli{q_i}$.
  \item $\dno$: for every $i\in[n]$, independently choose $(p_i,q_i)$ to be either $p_i=q_i = \frac{1}{{k}}$ with probability $1/2$, and $(\frac{1+\eps}{n},\frac{1-\eps}{n})$ or $(\frac{1-\eps}{n},\frac{1+\eps}{n})$ uniformly at random otherwise; and set $P\eqdef \bigotimes_{j=1}^n \bernoulli{p_i}$, $Q\eqdef \bigotimes_{j=1}^n \bernoulli{q_i}$.
\end{itemize}
Note that in both $\dyes$ and $\dno$, with overwhelming probability the pairs $(P,Q)$ have roughly $n/2$ marginals with (equal) parameter $1/{k}$, and roughly $n/2$ marginals with parameter $\bigTheta{1}/n$.

\noindent The following lemma is shown similarly to Lemma~\ref{lemma:noinstances:far:unbalanced}:
\begin{lemma}\label{lemma:noinstances:far:again}
  With probability $1-2^{-\Omega(n)}$, a uniformly chosen pair $(P,Q)\sim \dno$ satisfies $\normone{P-Q} = \Omega(\eps)$.
\end{lemma}
We will as before make a further simplification, namely that instead of drawing $k$ samples from $P=P_1\otimes\dots\otimes P_n$ and $Q=Q_1\otimes\dots\otimes Q_n$, the algorithm is given $k_i$ samples from each $P_i$ (resp. $k'_i$ from $Q_i$), where $k_1,\dots,k_n,k'_1,\dots,k'_n$ are independent $\poisson{k}$ random variables. We now consider the following process: letting $X\sim\bernoulli{1/2}$ be a uniformly random bit, we choose a pair of distributions $(P,Q)$ (both $P$ and $Q$ being probability distributions over $\{0,1\}^n$) by
\begin{itemize}
  \item Drawing $(P,Q)\sim\dyes$ if $X=0$, and;
  \item Drawing $(P,Q)\sim\dno$ if $X=1$;
  \item Drawing $k_1,k'_1,\dots,k_n,k'_n\sim \poisson{k}$, and returning $k_i$ samples from each $P_i$ and $k'_i$ samples from each $Q_i$
\end{itemize}

For $i\in[n]$, we let $N_i$ and $M_i$ denote respectively the number of $1$'s among the $k_i$ samples drawn from $P_i$ and $k'_i$ samples drawn from $Q_i$, and write $N=(N_1,\dots, N_n)\in\mathbb{N}^n$ (and $M\in \mathbb{N}^n$ for $Q$). The next step is then to upperbound $\mutualinfo{X}{(N,M)}$, in order to conclude that it will be $o(1)$ unless $k$ is taken big enough and invoke~\cref{fact:main:fano}. By the foregoing discussion and the relaxation on the $k_i$'s, we have that the conditioned on $X$ the $N_i$'s (and $M_i$'s) are independent (with $N_i \sim \poisson{kp_i}$ and $M_i \sim \poisson{kq_i}$). This implies that
\begin{equation}\label{eq:bound:mutual:info:sum:again}
  \mutualinfo{X}{(N,M)} \leq \sum_{i=1}^n \mutualinfo{X}{(N_i,M_i)}
\end{equation}
so that it suffices to bound each $\mutualinfo{X}{(N_i,M_i)}$ separately.

\begin{lemma}\label{lemma:lb:key:again}
Fix any $i\in [n]$, and let $X, N_i,M_i$ be as above. Then $\mutualinfo{X}{(N_i,M_i)} = O( k^4\eps^4/n^4 )$.
\end{lemma}
\begin{proof}
By symmetry it is enough to consider only the case of $i=1$, so that we let $(A,B)=(N_1,M_1)$.\medskip

Since $A\sim \poisson{kp_1}$ and $B\sim \poisson{kq_1}$ with $(p_1,q_1) = (1/{k}, 1/{k})$ or $(p_1,q_1) = (1/n, 1/n)$ uniformly if $X=0$, and 
\[
(p_1,q_1) = \begin{cases}
(\frac{1}{{k}},\frac{1}{{k}}) & \text{ w.p. } \frac{1}{2}\\
(\frac{1+\eps}{n},\frac{1-\eps}{n}) & \text{ w.p. } \frac{1}{4} \\
(\frac{1-\eps}{n},\frac{1+\eps}{n}) & \text{ w.p. } \frac{1}{4}
\end{cases}
\]
if $X=1$, a computation similar as that of~\cite[Proposition 3.8]{DK:16} yields that, for any $i,j\in\mathbb{N}$
\begin{align*}
  \probaCond{(A,B)=(i,j)}{X=0} 
    &= \frac{1}{2i!j!}\left(  e^{-2k/{k}} \left(\frac{k}{{k}}\right)^{i+j} + e^{-\frac{2k}{n}} \left(\frac{k}{n}\right)^{i+j} \right)\\
    &= \frac{1}{2i!j!}\left(  e^{-2} + e^{-\frac{2k}{n}} \left(\frac{k}{n}\right)^{i+j} \right) \\
  \probaCond{(A,B)=(i,j)}{X=1} 
    &= \frac{1}{2i!j!}\left(  e^{-2k/{k}} \left(\frac{k}{{k}}\right)^{i+j} + e^{-\frac{2k}{n}} \left(\frac{k}{n}\right)^{i+j} \frac{(1+\eps)^i (1-\eps)^j + (1-\eps)^i (1+\eps)^j }{2} \right) \\
    &= \frac{1}{2i!j!}\left(  e^{-2}  + e^{-\frac{2k}{n}} \left(\frac{k}{n}\right)^{i+j} \frac{(1+\eps)^i (1-\eps)^j + (1-\eps)^i (1+\eps)^j }{2} \right).
\end{align*}
Note in particular that for $0\leq i+j \leq 1$, this implies that $\probaCond{(A,B)=(i,j)}{X=0}=\probaCond{(A,B)=(i,j)}{X=1}$. From the above, we obtain
\begin{align*}
  \mutualinfo{X}{(A,B)} 
  &= O(1)\sum_{i,j\geq 0} \frac{\left( \probaCond{(A,B)=(i,j)}{X=0}-\probaCond{(A,B)=(i,j)}{X=1} \right)^2}{\probaCond{(A,B)=(i,j)}{X=0}+\probaCond{(A,B)=(i,j)}{X=1} } \\
  &= O(1)\sum_{i+j\geq 2} \frac{\left( \probaCond{(A,B)=(i,j)}{X=0}-\probaCond{(A,B)=(i,j)}{X=1} \right)^2}{\probaCond{(A,B)=(i,j)}{X=0}+\probaCond{(A,B)=(i,j)}{X=1} } \\
  &= O(1)\sum_{i+j\geq 2} e^{-\frac{4k}{n}} \frac{\mleft(\frac{k}{n}\mright)^{2(i+j)}(1-\frac{( (1+\eps)^i (1-\eps)^j + (1-\eps)^i (1+\eps)^j )}{2})^2}{2i!j!(2e^{-2}+o(1))} \\
  &= \bigO{\mleft({k\eps}/{n}\mright)^4}
\end{align*}
where the second-to-last inequality holds for $k=o(n)$. (Which is the case, as $\sqrt{n}/\eps^2 < n^{3/4}/\eps$ implies that $n^{3/4}/\eps < n$, and we assumed $k=o(n^{3/4}/\eps)$.)
\end{proof}
This lemma, along with~\cref{eq:bound:mutual:info:sum:again}, immediately implies the result:
\begin{equation}
\mutualinfo{X}{(N,M)} \leq \sum_{i=1}^n \bigO{\left(\frac{k\eps}{n}\right)^4} = \bigO{\frac{k^4\eps^4}{n^3}}
\end{equation}
which is $o(1)$ unless $k = \Omega(n^{3/4}/{\eps})$.

\end{proof}

\subsection{Ruling Out Tolerant Testing Without Balancedness} \label{ssec:product-lower:tolerance}

{In this section, we show that any \emph{tolerant} identity testing algorithm for product distributions 
must have sample complexity near-linear in $n$ if the explicitly given distribution
is very biased.}

\begin{theorem}\label{theo:tradeoff:balancedness:tolerance}
There exists an absolute constant $\eps_0 < 1$ such that the following holds. 
Any algorithm that, given a parameter $\eps\in(0,\eps_0]$ and sample access to 
product distributions $P,Q$ over $\{0,1\}^n$, distinguishes between $\normone{P-Q} < \eps/2$ 
and $\normone{P-Q} > \eps$ with probability at least $2/3$ 
requires $\bigOmega{n/\log n}$ samples. 
Moreover, the lower bound still holds in the case where $Q$ is known, and provided as an explicit parameter.
\end{theorem}
\begin{proof}
The basic idea will be to reduce to the case of tolerant testing of two arbitrary distributions $p$ and $q$ over $[n]$. 
In order to do this, we define the following function from distributions of one type to distributions of the other:

If $p$ is a distribution over $[n]$, define $F_{\delta}(p)$ to be the distribution over $\{0,1\}^n$ 
obtained by taking $\Poi(\delta)$ samples from $p$ and returning the vector $x$ 
where $x_i = 1$ if and only if $i$ was one of these samples drawn. 
Note that, because of the Poissonization, $F_\delta(p)$ is a product distribution.
We have the following simple claim:

\begin{claim}
For any $\delta\in(0,1]$ and distributions $p,q$ on $[n]$, $\dtv(F_\delta(p),F_\delta(q)) = (\delta+O(\delta^2))\dtv(p,q)$.
\end{claim}
\begin{proof}
In one direction, we can take correlated samples from $F_\delta(p)$ and $F_\delta(q)$ 
by sampling $a$ from $\Poi(\delta)$ and then taking $a$ samples 
from each of $p$ and $q$, using these to generate our samples from $F_\delta(p),F_\delta(q)$. 
For fixed $a$, the variation distance between $F_\delta(p)$ and $F_\delta(q)$ 
conditioned on that value of $a$ is clearly at most $a\dtv(p,q)$. 
Therefore, $\dtv(F_\delta(p),F_\delta(q))\leq \E[a]\dtv(p,q) = \delta \dtv(p,q).$

In the other direction, note that $F_\delta(p)$ and $F_\delta(q)$ each have probability $\delta+O(\delta^2)$ of returning a vector of weight $1$. 
This is because $\Poi(\delta)=1$ with probability $\delta e^{-\delta} = \delta+O(\delta^2)$ 
and since $\Poi(\delta)>1$ with probability $O(\delta^2)$. 
Let $G(p)$ and $G(q)$ denote the distributions $F_\delta(p)$ and $F_\delta(q)$ 
conditioned on returning a vector of weight $1$. By the above, we have that 
$\dtv(F_\delta(p),F_\delta(q)) \geq (\delta+O(\delta^2))\dtv(G(p),G(q))$. 
Letting $p_i$ (resp. $q_i$) be the probability that $p$ (resp. $q$) assigns to $i\in[n]$, 
we get that for any fixed $i\in[n]$ the probability that $F_\delta(p)$ returns $e_i$ is 
\[
(1-e^{-\delta p_i})\prod_{j\neq i} e^{-\delta p_j} = (e^{\delta p_i}-1) \prod_{j=1}^n e^{-\delta p_j} \;.
\]
Therefore $G(p)$ puts on $e_i$ probability proportional to $(e^{\delta p_i}-1) = (\delta+O(\delta^2))p_i$. 
Similarly, the probability that $G(q)$ puts on $e_i$ is proportional to $(\delta+O(\delta^2))q_i$ 
(where in both cases, the constant of proportionality is $(\delta +O(\delta^2))^{-1}$). Therefore,
\begin{align*}
\dtv(G(p),G(q))
&= \delta^{-1}(1+O(\delta))\sum_{i=1}^n |(\delta+O(\delta^2))p_i - (\delta+O(\delta^2))q_i| \\
& = \delta^{-1}(1+O(\delta))\sum_{i=1}^n ( \delta|p_i-q_i|+O(\delta^2)(p_i+q_i) )\\
&= \delta^{-1}(1+O(\delta))(\delta\dtv(p,q)+O(\delta^2))\\
& = \dtv(p,q)+O(\delta) \;.
\end{align*}
Thus, $\dtv(F_\delta(p),F_\delta(q)) \geq (\delta+O(\delta^2))\dtv(p,q).$ This completes the proof.
\end{proof}
The above claim guarantees the existence of some constant $\delta_0\in(0,1]$ such that 
$\dtv(F_{\delta_0}(p),F_{\delta_0}(q)) \in [0.9{\delta_0} \dtv(p,q),1.1\dtv(p,q)].$ 
However, it is known~\cite{ValiantValiant:11} that for any sufficiently small $\eps>0$ 
there exist distributions $p$ and $q$ over $[n]$ such that one must take at least $c\frac{n}{\log n}$ samples 
(where $c>0$ is an absolute constant) to distinguish between 
$\dtv(p,q) \leq \eps/(2\cdot 0.9{\delta_0})$ and $\dtv(p,q) \geq \eps/(1.1{\delta_0})$. 
Given $q$ samples from $p$ and $q$ we can with high probability simulate $c'q$ samples from 
$P=F_{\delta_0}(p)$ and $Q= F_{{\delta_0}}(q)$ (where $c'=c'(\delta_0)>0$ is another absolute constant). 
Therefore, we cannot distinguish between the cases $\dtv(P,Q)\leq \eps/2$ and $\dtv(P,Q)\geq \eps$ in fewer than $c'\cdot c \frac{n}{\log n}$, 
as doing so would enable us to distinguish between $p$ and $q$ with less than $c\frac{n}{\log n}$ samples~--~yielding a contradiction. 
Moreover, the above still holds when $q$ is explicitly known, specifically even when $q$ is taken to be the uniform distribution on $[n]$.
\end{proof}

\section{From Product Distributions to Bayes nets: an Overview}\label{sec:product:to:bn:overview}

In the following paragraphs, we describe how to generalize our previous
results for product distributions to testing general Bayes nets.
The case of known structure turns out to be manageable, and at a technical
level a generalization of our testers for product distributions. The case
of unknown structure poses various complications and requires a number of
non-trivial new ideas.

\paragraph{Testing Identity and Closeness of Fixed Structure Bayes Nets}
Our testers and matching lower bounds for the fixed structure regime
are given in~\cref{sec:identity-known,sec:closeness-known}.

For concreteness, let us consider the case of testing identity
of a tree-structured ($d=1$) Bayes net $P$ against an explicit tree-structured
Bayes net $Q$ with the same structure.
Recall that we are using as a proxy for {the distance 
$\normone{P-Q}$} an appropriate chi-squared-like quantity.
A major difficulty in generalizing our identity tester for products is that
the chi-squared statistic depends not on the probabilities of the various coordinates,
but on the conditional probabilities of these coordinates based on all possible parental
configurations. This fact produces a major wrinkle in our analysis for the following reason:
while in the product distribution case each sample provides information
about each coordinate probability, in the Bayes net case a sample
only provides information about conditional probabilities for parental
configurations that actually occurred in that sample.

This issue can be especially problematic to handle if there are uncommon
parental configurations about which we will have difficulty gathering much information
(with a small sized sample). Fortunately, the probabilities conditioned on such parental
configurations will have a correspondingly smaller effect on the final
distribution and thus, we will not need to know them to quite the same
accuracy. So while this issue can be essentially avoided, we will require
some technical assumptions about {\em balancedness} to let us know that none
of the parental configurations are too rare. Using these ideas, we
develop an identity tester for tree-structured Bayes nets that uses an optimal
$\Theta(\sqrt{n}/\eps^2)$ samples. For known structure Bayes nets of degree $d>1$, the sample complexity
will also depend exponentially on the degree $d$.
Specifically, each coordinate will have as many as $2^d$ parental configurations.
Thus, instead of having only $n$ coordinate probabilities to worry about,
we will need to keep track of $2^d n$ conditional probabilities. This will require that our sample
complexity also scale like $2^{d/2}$. The final complexity of our identity and closeness testers
 will thus be $O(2^{d/2}\sqrt{n}/\eps^2)$.

We now briefly comment on our matching lower bounds.
Our sample complexity lower bound of $\Omega(\sqrt{n}/\eps^2)$ for the product
case can be generalized in a black-box manner to yield a tight lower bound
$\Omega(2^{d/2}\sqrt{n}/\eps^2)$ for testing uniformity of degree-$d$ Bayes nets.
The basic idea is to consider degree-$d$ Bayes nets with the following structure:
The first $d$ nodes are all independent (with marginal probability $1/2$ each),
and will form in some sense a ``pointer'' to one of $2^d$ arbitrary product distributions.
The remaining $n-d$ nodes will each depend on all
of the first $d$. The resulting distribution is now an (evenly weighted) disjoint mixture of $2^d$
product distributions on the $(n-d)$-dimensional hypercube.
In other words, there are $2^d$ product distributions $p_1,\dots,p_{2^d}$,
and our distribution returns a random $i$ (encoded in binary) followed by a random sample form $p_i$.
By using the fact that the $p_i$'s can be arbitrary product distributions, we obtain our desired
sample complexity lower bound.

\paragraph{Testing Identity and Closeness of Unknown Structure Bayes Nets}
As we show in~\cref{sec:identity-uknown,sec:closeness-unknown}, this situation changes substantially when we do not know the
underlying structure of the nets involved. In particular, we show that
even for Bayes nets of degree-$1$ uniformity testing requires
$\Omega(n/\eps^2)$ samples.

The lower bound construction for this case is actually quite simple:
The adversarial distribution $P$ will be developed by taking
a {\em random matching} of the vertices and making each
matched pair of vertices randomly $1 \pm \eps/\sqrt{n}$ correlated. If the
matching were known by the algorithm, the testing procedure could
proceed by approximating these $n/2$ correlations. However, not knowing
the structure, our algorithm would be forced to consider all $\binom{n}{2}$
pairwise correlations, substantially increasing the amount of noise
involved. To actually prove this lower bound, we consider the
distribution $X$ obtained by taking $k$ samples from a randomly chosen $P$
and $Y$ from taking $k$ samples from the uniform distribution.
Roughly speaking, we wish to show that $\chi^2(X,Y)$ is approximately $1$.
This amounts to showing that for a randomly chosen pair of distributions
$P$ and $P'$ from this family, we have that
$\E[P^k(x)P'^k(x)]$ is approximately $1$.
Intuitively, we show that this expectation is only large
if $P$ and $P'$ share many edges in common. In
fact, this expectation can be computed exactly in terms of the lengths
of the cycles formed by the graph obtained taking the union of the
edges from $P$ and $P'$. Noting that $P$ and $P'$ typically share only about
$1$ edge, this allows us to prove our desired lower bound.

However, the hardness of the situation described above is not generic
and can be avoided if the explicit distribution $Q$
satisfies some non-degeneracy assumptions.
Morally, a Bayes nets $Q$ is non-degenerate if it is not close in
variational distance to any other Bayes net of no greater
complexity and non-equivalent underlying structure. For tree
structures, our condition is that for each node the two conditional probabilities
for that node (depending on the value of its parent) are far from each other.

If this is the case, even knowing approximately what the pairwise distributions of coordinates are 
will suffice to determine the structure. One way to see this is the following: the analysis of the Chow-Liu algorithm~\cite{Chow68} 
shows that the tree-structure for $P$ is the maximum spanning tree of the graph whose edge weights 
are given by the shared information of the nodes involved. This tree will have the property that each edge, $e$, 
has higher weight than any other edge connecting the two halves of the tree. 
We show that our non-degeneracy assumption implies that this edge has higher weight by a noticeable margin, 
and thus that it is possible to verify that we have the correct tree with only rough approximations 
to the pairwise shared information of variables.

For Bayes nets of higher degree, the analysis is somewhat more difficult. We need a slightly more complicated notion of
non-degeneracy, essentially boiling down to a sizeable number of
not-approximately-conditionally-independent assumptions. 
For example, a pair of nodes can be positively identified as having an edge between them in the underlying graph 
if they are not conditionally independent upon any set of $d$ other nodes. By requiring that for each edge 
the relevant coordinate variables are not close to being conditionally independent, 
we can verify the identity of the edges of $\structure$ with relatively few samples. 
Unfortunately, this is not quite enough, as with higher degree Bayesian networks, 
simply knowing the underlying undirected graph 
is not sufficient to determine its structure. 
We must also be able to correctly identify the so-called $\lor$-structures. 
To do this, we will need to impose more not-close-to-conditionally-independent 
assumptions that allow us to robustly determine these as well.

Assuming that $Q$ satisfies such a non-degeneracy condition, testing
identity to it is actually quite easy. First one verifies that the
distribution $P$ has all of its pairwise (or $(d+2)$-wise) probabilities close to the
corresponding probabilities for $Q$. By non-degeneracy, this will imply
that $P$ must have the same (or at least an equivalent) structure as $Q$. Once this has been
established, the testing algorithms for the known structure 
can be employed.

\paragraph{Sample Complexity of Testing High-Degree Bayes Nets}
One further direction of research is that of
understanding the dependence on degree of the sample complexity
of testing identity and closeness for degree-$d$ Bayes nets without additional assumptions.
For $d=1$, we showed that these problems can be as hard as learning the distribution.
For the general case, we give an algorithm with sample complexity
$2^{d/2}\poly(n,1/\eps)$ for identity testing (and $2^{2d/3}\poly(n,1/\eps)$ for closeness testing). 
The conceptual message of this result is that, when the
degree increases, testing becomes easier than learning information-theoretically.
It is a plausible conjecture that the correct
answer for identity testing is $\Theta(2^{d/2}n/\eps^2)$ 
and for closeness testing is  $\Theta(2^{2d/3}n/\eps^2)$.
We suspect that our lower bound techniques can be
generalized to match these quantities, but the constructions 
will likely be substantially more intricate.

The basic idea of our $2^{d/2}\poly(n,1/\eps)$ sample upper bound for identity testing
is this: We enumerate over all possible structures for $P$, running a different tester for
each of them by comparing the relevant conditional probabilities.
Unfortunately, in this domain, our simple formula for the
KL-Divergence between the two distributions will no longer hold.
However, we can show that using the old formula will be sufficient by
showing that if there are large discrepancies when computing the KL-divergence,
then there must be large gap between the entropies $H(P)$ and $H(Q)$
in a particular direction. 
As the gap cannot exist both ways, this
suffices for our purposes.


 \section{Testing Identity of Fixed Structure Bayes Nets} \label{sec:identity-known}
In this section, we prove our matching upper and lower bounds for testing the identity of Bayes nets with known graph structure.
In~\cref{ssec:identity-known-upper}, we describe an identity testing algorithm that uses $\bigO{2^{d/2}\sqrt{n}/\eps^2}$
samples, where $d$ is the maximum in-degree and $n$ the number of nodes (dimension).
In~\cref{ssec:identity-known-lower}, we show that this sample upper bound is tight, up to constant factors,
even for uniformity testing.

\subsection{Identity Testing Algorithm}  \label{ssec:identity-known-upper}

In this section, we establish the upper bound part of~\cref{thm:informal-identity-closeness-known} for identity, namely testing identity to a fixed Bayes net given sample access to an unknown Bayes net with the same underlying structure. In order to state our results, we recall the definition of \emph{balancedness} of a Bayes net:
\begin{definition}
A Bayes net $P$ over $\{0,1\}^n$ with structure $\structure$ is said to be \emph{$(c,C)$-balanced} if, 
for all $k$, it is the case that (i) $p_k\in[c,1-c]$ and (ii) $\probaDistrOf{P}{\Pi_k} \geq C$.
\end{definition}
\noindent Roughly speaking, the above conditions ensure that the conditional probabilities of the Bayes net 
are bounded away from $0$ and $1$, and that each parental configuration 
occurs with some minimum probability. 
With this definition in hand, we are ready to state and prove the main theorem of this section:

\begin{restatable}{theorem}{identityknowndegreedub}\label{theo:upper:knowndegreed:identity}
There exists a computationally efficient algorithm with the following guarantees. 
Given as input (i) a DAG $\structure$ with $n$ nodes and maximum in-degree $d$ 
and a known $(c,C)$-balanced Bayes net $Q$ with structure $\structure$, 
where $c=\tildeOmega{1/\sqrt{n}}$ and $C=\tildeOmega{d\eps^2/\sqrt{n}}$; 
(ii) a parameter $\eps > 0$, and (iii) sample access to an unknown Bayes net $P$ with structure $\structure$, 
the algorithm takes $\bigO{2^{d/2}\sqrt{n}/\eps^2}$ samples from $P$, 
and distinguishes with probability at least $2/3$ between the cases $P=Q$ and $\normone{P-Q} > \eps$.
\end{restatable}

We choose $m\geq \alpha\frac{2^{d/2}\sqrt{n}}{\eps^2}$, 
where $\alpha>0$ is an absolute constant to be determined in the course of the analysis. 
Let $\structure$ and $Q$ be as in the statement of the theorem, for $c\geq \beta\frac{\log n}{ \sqrt{n} } \geq \beta\frac{\log n}{m}$ and $C\geq \beta\frac{d+\log n}{m}$, 
for an appropriate absolute constant $\beta>0$.

{Recall that $S$ denotes the set $\{(i,a): i \in [n], a\in \{0,1\}^{{|}\parent{i}{|}}\}$. By assumption, 
we have that  $|\parent{i}| \leq d$ for all $i \in [n]$.}
For each $(i,a)\in S$, corresponding to the parental configuration $\Pi_{i,a}=\{ X_{\parent{i}} = a\}$, 
we define the value $N_{i,a} \eqdef m \probaDistrOf{Q}{\Pi_{i,a}}/\sqrt{2}$. {Intuitively,
$N_{i,a}$ is equal to a small constant factor times} the number of samples satisfying $\Pi_{i,a}$ one 
would expect to see among $m$ independent samples, 
if the unknown distribution $P$ were equal to $Q$.  
We will also use the notation $p_{i,a} \eqdef \probaCond{X_i=1}{X_{\parent{i}=a}}$, 
where $X\sim P$, and $q_{i,a} \eqdef \probaCond{X_i=1}{X_{\parent{i}=a}}$, where $X\sim Q$.

Given $m$ independent samples $X^{(1)},\dots, X^{(m)}$ from a Bayes net $P$ with structure $\structure$, 
we define the estimators $Z_{i,a},Y_{i,a}$ for every $i\in[n]$, $a\in\{0,1\}^{{|\parent{i}|}}$ as follows. 
For every $(i,a)$ such that the number of samples $X^{(j)}$ satisfying the configuration $\Pi_{i,a}$ is between $N_{i,a}$ and $2N_{i,a}$ 
(that is, neither too few nor too many), we look only at the first $N_{i,a}$ such samples $X^{(j_1)},\dots,X^{(j_{N_{i,a}})}$, and let
\begin{align*}
  Z_{i,a} &\eqdef \sum_{\ell=1}^{N_{i,a}} \indicSet{ X^{(j_\ell)}_i = 1} \\
  Y_{i,a} &\eqdef \sum_{\ell=1}^{N_{i,a}} \indicSet{ X^{(j_\ell)}_i = 0} \;.
\end{align*}
We note that $Z_{i,a}+Y_{i,a}=N_{i,a}$ by construction. 
We then define the quantity 
\[
W_{i,a}\eqdef\frac{((1-q_{i,a})Z_{i,a} - q_{i,a}Y_{i,a})^2 + (2q_{i,a}-1)Z_{i,a} - q_{i,a}^2(Z_{i,a}+Y_{i,a})}{N_{i,a}(N_{i,a}-1)}\indic{N_{i,a}>1} + (p_{i,a}-q_{i,a})^2\indic{N_{i,a}\leq 1}\,.
\]

On the other hand, for every $(i,a)$ such that the number of samples $X^{(j)}$ satisfying the configuration $\Pi_{i,a}$ is less than $N_{i,a}$ or more than $2N_{i,a}$, we continue as a thought experiment and keep on getting samples until we see $N_{i,a}$ samples with the right configuration, and act as above (although the actual algorithm will stop and output $\textsf{reject}$ whenever this happens).
From there, we finally consider the statistic $W$:
\begin{equation}
  W \eqdef \sum_{i=1}^n \sum_{a\in\{0,1\}^{{|\parent{i}|}}} \frac{\probaDistrOf{Q}{\Pi_{i,a}}}{q_{i,a}(1-q_{i,a})} W_{i,a}
\end{equation}
Observe that the algorithm will output $\textsf{reject}$ 
as soon as at least one parental configuration $\Pi_{i,a}$ 
was not seen enough times, or seen too many times, among the $m$ samples.

The pseudocode of our algorithm is given in the following figure.

\begin{figure}[h!]\small
  \begin{framed}
    \begin{description}
      \item[Input] Error tolerance $\eps \in (0,1)$, dimension $n$, description $\structure$ of a DAG with maximum in-degree $d$ 
      and of a $(c,C)$-balanced Bayes net $Q$ with structure $\structure$ (where $c\geq \beta\frac{\log n}{m}$ and $C\geq \beta\frac{d+\log n}{m}$), 
      and sampling access to a distribution $P$ over $\{0,1\}^n$ with structure $\structure$.
      \item[-] Preprocess $Q$ so that $q_{i,a} \leq \frac{1}{2}$ for all $(i,a)\in[n]\times\{0,1\}^d$ (and apply the same transformation to all samples taken from $P$)
      \item[-] Set $m \gets \lceil\alpha\frac{\sqrt{n}}{\eps^2}\rceil$, and take $m$ samples $X^{(1)},\dots,X^{(m)}$ from $P$.
      \item[-] Let $N_{i,a} \gets { m \probaDistrOf{Q}{\Pi_{i,a}} }/\sqrt{2}$ for all $(i,a)\in[n]\times\{0,1\}^d$.
      \item[-] Define $Z_{i,a}, Y_{i,a}, W_{i,a}$
          as above, and $W \eqdef \sum_{i=1}^n \sum_{a\in\{0,1\}^{{|\parent{i}|}}} \probaDistrOf{Q}{\Pi_{i,a}} \frac{W_{i,a}}{q_{i,a}(1-q_{i,a})}$.
      \item \textit{(At this point, if any configuration $\Pi_{i,a}$ was satisfied by less than $N_{i,a}$ or more than $2N_{i,a}$ of the $m$ samples, then the algorithm 
      has rejected already.)}
      \item[If] $W \geq \frac{\eps^2}{32}$ return $\textsf{reject}$.
      \item[Otherwise] return $\textsf{accept}$.
    \end{description}
  \end{framed}
  \caption{Testing identity against a known-structure balanced Bayes net.}\label{algo:bn:identity:known}
\end{figure}

\paragraph{Preprocessing} We will henceforth assume that $q_{i,a} \leq \frac{1}{2}$ for all $(i,a)\in[n]\times\{0,1\}^d$. 
This can be done without loss of generality, as $Q$ is explicitly known. 
For any $i$ such that $q_{i,a}>\frac{1}{2}$, we replace $q_{i,a}$ by $1-q_{i,a}$ and work with the corresponding distribution $Q^\prime$ instead. 
By flipping the corresponding bit of all samples we receive from $P$, it only remains to test identity of the resulting distribution $P^\prime$ to $Q^\prime$, as 
all distances are preserved.

\paragraph{First Observation} If $P=Q$, then we want to argue that with probability at least $9/10$ none of the $W_{i,a}$'s will be such that too few samples satisfied $\Pi_{i,a}$ (as this will immediately cause rejection). To see why this is the case, observe that as long as 
$m \probaDistrOf{Q}{\Pi_{i,a}} \geq \beta(d+\log n)$ (for an appropriate choice of absolute constant $\beta > 0$), 
the number $m_{i,a}$ of samples satisfying $\Pi_{i,a}$ among the $m$ we draw will, by a Chernoff bound, such that $m_{i,a} \geq m \probaDistrOf{Q}{\Pi_{i,a}} \geq N_{i,a}$ with probability at least ${1 -} \frac{1}{2^dn}\cdot\frac{1}{10}$. A union bound over the at most $2^d n$ possible parental configurations 
will yield the desired conclusion. But the fact that $P=Q$ is $(c,C)$-balanced indeed implies that $\probaDistrOf{Q}{\Pi_{i,a}} \geq C \geq \beta \frac{d+\log n}{m}$, the last inequality by our choice of $C$.

Therefore, it will be sufficient to continue our analysis, assuming that none of the $W_{i,a}$'s caused rejection 
because of an insufficient number of samples satisfying $\Pi_{i,a}$. As we argued above, 
this came at the cost of only $1/10$ of probability of success in the completeness case, 
and can only increase the probability of rejection, i.e., success, in the soundness case.

\medskip

Moreover, in the analysis of the expectation and variance of $W$, we assume that for every $(i,a)\in S$, 
we have $\probaDistrOf{P}{\Pi_{i,a}} \leq 4 \probaDistrOf{Q}{\Pi_{i,a}}$. 
This is justified by the following two lemmas, which ensure respectively that if it is not the case, 
then we will have rejected with high probability (this time because \emph{too many} samples satisfied $\Pi_{i,a}$); 
and that we still have not rejected (with high probability) if $P=Q$.

\begin{lemma}\label{lemma:estimating:parental:configuration:soundness}
Let $P$ be as in the statement of~\cref{theo:upper:knowndegreed:identity}, 
and suppose there exists a parental configuration $(i^\ast,a^\ast)\in {S}$ 
such that $\probaDistrOf{P}{\Pi_{i^\ast,a^\ast}} > 4 \probaDistrOf{Q}{\Pi_{i^\ast,a^\ast}}$. 
Then, with probability at least $9/10$, the number of samples $m_{i^\ast,a^\ast}$ satisfying $\Pi_{i^\ast,a^\ast}$ 
among the $m$  samples taken will be more than $2N_{i^\ast,a^\ast}$.
\end{lemma}
\begin{proof}
 This follows easily from a Chernoff bound, as 
 \begin{align*}
  &\probaOf{ m_{i,a} < 2m\probaDistrOf{Q}{\Pi_{i^\ast,a^\ast}} } 
  < \probaOf{ m_{i,a} < \frac{1}{2}m\probaDistrOf{P}{\Pi_{i^\ast,a^\ast}} }  
  = \probaOf{ m_{i,a} < \frac{1}{2}\expect{m_{i,a}} } \;,
 \end{align*}
 and $\expect{m_{i,a}} > \beta (d+\log n).$
\end{proof}

\begin{lemma}\label{lemma:estimating:parental:configuration:completeness}
Suppose $P=Q$. Then, with probability at least $9/10$, for every parental configuration 
$(i,a)\in  {S}$ the number of samples $m_{i,a}$ satisfying $\Pi_{i,a}$ among the $m$ samples taken will be at most $2N_{i,a}$.
\end{lemma}
\begin{proof}
 This again follows from a Chernoff bound and a union bound over all $2^dn$ configurations, as we have $\probaOf{ m_{i,a} > 2m\probaDistrOf{Q}{\Pi_{i,a}} } = \probaOf{ m_{i,a} > 2\expect{m_{i,a}} }$, and $\expect{m_{i,a}} > \beta (d+\log n)$.
\end{proof}

\paragraph{Expectation and Variance Analysis}
We start with a simple closed form formula for the expectation of our statistic:
\begin{lemma}\label{identity:lemma:bn:tree:expectation}
We have that 
$\expect{W} = \sum_{i,a} \probaDistrOf{Q}{\Pi_{i,a}} \frac{ (p_{i,a} - q_{i,a})^2 }{q_{i,a}(1-q_{i,a})}$. 
(In particular, if $P=Q$ then $\expect{W} = 0$.)
\end{lemma}
\begin{proof}
Fix any $(i,a)\in {S}$. Since $Z_{i,a}$ follows a $\binomial{N_{i,a} }{ p_{i,a}}$ distribution, we get
\begin{align*}
    \expect{ W_{i,a} } 
    &= \frac{\expect{ (Z_{i,a}-q_{i,a}N_{i,a})^2+(2q_{i,a}-1)Z_{i,a}-q_{i,a}^2 N_{i,a} }}{N_{i,a}(N_{i,a}-1)}\indic{N_{i,a}>1}
     + \expect{ (p_{i,a}-q_{i,a})^2 }\indic{N_{i,a}\leq 1} \\
    &= (p_{i,a}-q_{i,a})^2 \indic{N_{i,a}>1}+(p_{i,a}-q_{i,a})^2\indic{N_{i,a}\leq 1} \\
    &= (p_{i,a}-q_{i,a})^2 \;,
\end{align*}
giving the result by linearity of expectation. 
The last part follows from the fact that $p_{i,a}=q_{i,a}$ for all $(i,a)$ if $P=Q$.
\end{proof}

As a simple corollary, we obtain:
\begin{claim}\label{identity:lemma:bn:tree:expectation:soundness}
If $\normone{P-Q}\geq \eps$, then $\expect{W} \geq \frac{\eps^2}{16}$.
\end{claim}
\begin{proof}
The claim follows from~Pinsker's inequality and~\cref{lemma:kl:bn}, 
along with our assumption that $\probaDistrOf{P}{\Pi_{i,a}} \leq 4\cdot \probaDistrOf{Q}{\Pi_{i,a}}$ for every $(i,a)$:
\begin{align*}
  \normone{P-Q}^2 
  &\leq 2\dkl{P}{Q} \leq 2\sum_{(i,a)} \probaDistrOf{P}{ \Pi_{i,a} } \frac{(p_{i,a}-q_{i,a})^2}{q_{i,a}(1-q_{i,a})} \\
  &\leq 8\sum_{(i,a)} \probaDistrOf{Q}{ \Pi_{i,a} } \frac{(p_{i,a}-q_{i,a})^2}{q_{i,a}(1-q_{i,a})} \;.
\end{align*}
\end{proof}

We now turn to bounding from above the variance of our statistic. 
This will be done by controlling the covariances and variances of the summands individually, 
and specifically showing that the former are zero. 
We have the following:
\begin{claim}\label{identity:lemma:bn:tree:variances}
If $(i,a)\neq (j,b)$, then $\operatorname{Cov}(W_{i,a}, W_{i,b}) = 0$; 
and the variance satisfies
\begin{align*}
\Var\Big[\frac{\probaDistrOf{Q}{\Pi_{i,a}}}{q_{i,a}(1-q_{i,a})} W_{i,a}\Big] 
&\leq \frac{4}{m} \probaDistrOf{Q}{\Pi_{i,a}}\frac{p_{i,a}(1-p_{i,a})}{q_{i,a}^2(1-q_{i,a})^2 }(p_{i,a}-q_{i,a})^2 
+\frac{4}{m^2} \frac{p_{i,a}^2}{q_{i,a}^2(1-q_{i,a})^2 }\indic{N_{i,a}>1} \;.
\end{align*}
(Moreover, if $P=Q$ then $\Var\left[\frac{\probaDistrOf{Q}{\Pi_{i,a}}}{q_{i,a}(1-q_{i,a})} W_{i,a}\right] \leq \frac{4}{m^2}$.)
\end{claim}
\begin{proof}
The key point is to observe that, because of the way we defined the $Z_{i,a}$'s and $Y_{i,a}$'s 
(only considering the $N_{i,a}$ first samples satisfying the desired parental configuration), 
we have that $W_{i,a}$ and $W_{j,b}$ are independent whenever $(i,a)\neq (j,b)$. This directly implies the first part of the claim, i.e., 
  \begin{align*}
   \operatorname{Cov}(W_{i,a}, W_{i,b}) = \expect{\left( W_{i,a} - \expect{ W_{i,a} } \right)\left( W_{j,b} - \expect{ W_{j,b} } \right)} = 0 \;,
  \end{align*}
  when $(i,a)\neq (j,b)$. 
  
  We then consider $\Var\left[\frac{\probaDistrOf{Q}{\Pi_{i,a}}}{q_{i,a}(1-q_{i,a})} W_{i,a}\right]$. Note that 
  \begin{align*}
    \expect{ W_{i,a}^2 } 
    &= \frac{\expect{ ((Z_{i,a}-q_{i,a}N_{i,a})^2+(2q_{i,a}-1)Z_{i,a}-q_{i,a}^2 N_{i,a})^2 }}{N_{i,a}^2(N_{i,a}-1)^2} \indic{N_{i,a}>1} + (p_{i,a}-q_{i,a})^4\indic{N_{i,a}\leq 1} \;,
  \end{align*}
  so that, writing $p,q,N,Z$ for $p_{i,a},q_{i,a},N_{i,a},Z_{i,a}$ respectively (for readability), as well as $M\eqdef N-1$:
    \begin{align*}
    \Var\Big[\frac{\probaDistrOf{Q}{\Pi_{i,a}}}{q(1-q)} W_{i,a}\Big]
    &=\left(\frac{\probaDistrOf{Q}{\Pi_{i,a}}}{q(1-q)}\right)^2\left( \expect{ W_{i,a}^2 } - \expect{W_{i,a}}^2 \right) 
    = \left(\frac{\probaDistrOf{Q}{\Pi_{i,a}}}{q(1-q)}\right)^2\left( \expect{ W_{i,a}^2 }- (p-q)^4 \right) \\
    &= \frac{\expect{ ((Z-qN)^2+(2q-1)Z-q^2 N)^2 - N^2M^2(p-q)^4 }}{N^2M^2 q^2(1-q)^2 }\cdot\probaDistrOf{Q}{\Pi_{i,a}}^2 \indic{N>1}\\
    &= \frac{\expect{ ((Z-qN)^2+(2q-1)Z-q^2 N)^2 - N^2M^2(p-q)^4 }}{m^2M^2 q^2(1-q)^2 }\indic{N>1}\\
    &= \frac{\indic{N>1}}{m^2}\frac{2Np(1-p)}{Mq^2(1-q)^2 }\left((2N-3)p^2 + 2Mq^2-4Mpq + p\right) \;.
    \end{align*}
    If $p=q$, then this becomes $\Var\left[\frac{\probaDistrOf{Q}{\Pi_{i,a}}}{q(1-q)} W_{i,a}\right] = \frac{1}{m^2}\frac{2N}{N-1}\indic{N_{i,a}>1} \leq \frac{4}{m^2}$, providing the second part of the claim. In the general case, we can bound the variance as follows:
    \begin{align*}
    \Var\Big[\frac{\probaDistrOf{Q}{\Pi_{i,a}}}{q(1-q)} W_{i,a}\Big] 
    &= \frac{1}{m^2}\frac{2N}{N-1}\frac{p(1-p)}{q^2(1-q)^2 }\indic{N>1}\cdot\left(2(N-1)(p^2+q^2-2pq)-p^2+p\right)\\
    &= \frac{1}{m^2}\frac{2N}{N-1}\frac{p(1-p)}{q^2(1-q)^2 }\indic{N>1}\cdot\left(2(N-1)(p-q)^2+p(1-p)\right)\\
    &= \frac{4N}{m^2}\frac{p(1-p)}{q^2(1-q)^2 }(p-q)^2\indic{N>1} + \frac{1}{m^2}\frac{2N}{N-1}\frac{p^2(1-p)^2}{q^2(1-q)^2 }\indic{N>1}\\
    &\leq \frac{4N}{m^2}\frac{p(1-p)}{q^2(1-q)^2 }(p-q)^2 + \frac{4}{m^2}\frac{p^2(1-p)^2}{q^2(1-q)^2 }\indic{N>1}\\
    &=\frac{4}{m} \probaDistrOf{Q}{\Pi_{i,a}}\frac{p_{i,a}(1-p_{i,a})}{q_{i,a}^2(1-q_{i,a})^2 }(p_{i,a}-q_{i,a})^2  +\frac{4}{m^2}\frac{p_{i,a}^2(1-p_{i,a})^2}{q_{i,a}^2(1-q_{i,a})^2 }\indic{N_{i,a}>1} \\
    &\leq \frac{4}{m} \probaDistrOf{Q}{\Pi_{i,a}}\frac{p_{i,a}(1-p_{i,a})}{q_{i,a}^2(1-q_{i,a})^2 }(p_{i,a}-q_{i,a})^2  +\frac{4}{m^2} \frac{p_{i,a}^2}{q_{i,a}^2(1-q_{i,a})^2 }\indic{N_{i,a}>1}.
  \end{align*}
This completes the proof.
\end{proof}

Using this claim, we now state the upper bound it allows us to obtain:
\begin{lemma}\label{identity:coro:bn:tree:variance}
We have that 
$\Var[W] \leq {24} \frac{2^d n}{m^2} + {26} \frac{\expect{W}}{cm}$. 
(Moreover, if $P=Q$ we have $\Var[W] \leq 4\frac{2^dn}{m^2}$.)
\end{lemma}
\begin{proof}
This will follow from~\cref{identity:lemma:bn:tree:variances}, 
which guarantees that if $P=Q$, $\Var[W] \leq 2^dn\cdot \frac{4}{m^2} = 4\frac{2^d n}{m^2}$. 
Moreover, in the general case,
\begin{align*}
  \Var[W] 
      &\leq \frac{4}{m} \sum_{(i,a)}\probaDistrOf{Q}{\Pi_{i,a}}\frac{p_{i,a}(1-p_{i,a})}{q_{i,a}^2(1-q_{i,a})^2 }(p_{i,a}-q_{i,a})^2 + \frac{4}{m^2} \sum_{(i,a)}\frac{p_{i,a}^2}{q_{i,a}^2(1-q_{i,a})^2 }\indic{N_{i,a}>1} \;.
\end{align*}
We deal with the two terms separately, as follows:
\begin{itemize}
  \item For the second term, we will show that 
  \[
    \frac{4}{m^2} \sum_{(i,a)}\frac{p_{i,a}^2}{q_{i,a}^2(1-q_{i,a})^2 }\indic{N_{i,a}>1} \leq 24\frac{2^d n}{m^2} + \frac{24\expect{W}}{cm} \;.
  \]
  This follows from the following sequence of (in)equalities: first,
\begin{align*}
  \sum_{(i,a)}\frac{p_{i,a}^2}{q_{i,a}^2(1-q_{i,a})^2 }\indic{N_{i,a}>1} 
  &= \sum_{(i,a)} \frac{(p_{i,a}-q_{i,a})^2}{q_{i,a}^2(1-q_{i,a})^2}\indic{N_{i,a}>1} + \sum_{(i,a)} \frac{2p_{i,a}q_{i,a}-q_{i,a}^2}{q_{i,a}^2(1-q_{i,a})^2}\indic{N_{i,a}>1} \\
  &= \sum_{(i,a)} \frac{(p_{i,a}-q_{i,a})^2}{q_{i,a}^2(1-q_{i,a})^2}\indic{N_{i,a}>1} + \sum_{(i,a)} \frac{2q_{i,a}(p_{i,a} - q_{i,a})+q_{i,a}^2}{q_{i,a}^2(1-q_{i,a})^2}\indic{N_{i,a}>1} \\
  &\leq 4\cdot 2^d n+\sum_{(i,a)} \frac{(p_{i,a}-q_{i,a})^2}{q_{i,a}^2(1-q_{i,a})^2}\indic{N_{i,a}>1} + \sum_{(i,a)} \frac{2(p_{i,a} - q_{i,a})}{q_{i,a}(1-q_{i,a})^2}\indic{N_{i,a}>1}\indic{N_{i,a}>1} \\
  &\leq 4\cdot 2^d n+\sum_{(i,a)} \frac{(p_{i,a}-q_{i,a})^2}{q_{i,a}^2(1-q_{i,a})^2}\indic{N_{i,a}>1} + 4\sum_{(i,a)} \frac{p_{i,a} - q_{i,a}}{q_{i,a}(1-q_{i,a})}\indic{N_{i,a}>1}
\end{align*}
Then, applying the AM-GM inequality we can continue with
\begin{align*}
\sum_{(i,a)}\frac{p_{i,a}^2}{q_{i,a}^2(1-q_{i,a})^2 }\indic{N_{i,a}>1} 
  &\leq 4\cdot 2^d n +\sum_{(i,a)} \frac{(p_{i,a}-q_{i,a})^2}{q_{i,a}^2(1-q_{i,a})^2}\indic{N_{i,a}>1} + 2\sum_{(i,a)} \left( 1+\frac{(p_{i,a} - q_{i,a})^2}{q_{i,a}^2(1-q_{i,a})^2} \right)\indic{N_{i,a}>1} \\
  &\leq 6\cdot 2^d n +3\sum_{(i,a)} \frac{(p_{i,a}-q_{i,a})^2}{q_{i,a}^2(1-q_{i,a})^2}\indic{N_{i,a}>1} \\
  &\leq 6\cdot 2^d n+\frac{6}{c}\sum_{(i,a)} \frac{(p_{i,a}-q_{i,a})^2}{q_{i,a}(1-q_{i,a})}\indic{N_{i,a}>1} \\
 &\leq 6\cdot 2^d n+\frac{6m}{c}\sum_{(i,a)} \frac{N_{i,a}}{m} \frac{(p_{i,a}-q_{i,a})^2}{q_{i,a}(1-q_{i,a})}\indic{N_{i,a}>1} \\
  &= 6\cdot 2^d n+\frac{6m}{c}\sum_{(i,a)} \probaDistrOf{Q}{\Pi_{i,a}} \frac{(p_{i,a}-q_{i,a})^2}{q_{i,a}(1-q_{i,a})}\indic{N_{i,a}>1} \\
  &\leq 6\cdot 2^d n+\frac{6m}{c}\expect{W} \;,
\end{align*}
  using our assumption that $q_{i,a} \leq \frac{1}{2}$ for all $(i,a)$.
  \item For the first term, we will establish that 
  \[
  \frac{4}{m} \sum_{(i,a)}\probaDistrOf{Q}{\Pi_{i,a}}\frac{p_{i,a}(1-p_{i,a})}{q_{i,a}^2(1-q_{i,a})^2 }(p_{i,a}-q_{i,a})^2
  \]
  is at most $\frac{2}{cm}\expect{W}$. This is shown as follows:
  \begin{align*}
  \sum_{(i,a)} \probaDistrOf{Q}{\Pi_{i,a}}\frac{p_{i,a}(1-p_{i,a}) (p_{i,a}-q_{i,a})^2}{q_{i,a}^2(1-q_{i,a})^2}  
  &\leq \frac{1}{4}\sum_{(i,a)} \frac{1}{q_{i,a}(1-q_{i,a})}\cdot \probaDistrOf{Q}{\Pi_{i,a}}\frac{(p_{i,a}-q_{i,a})^2}{q_{i,a}(1-q_{i,a})} \\
  &\leq \frac{1}{2c}\sum_{(i,a)}\probaDistrOf{Q}{\Pi_{i,a}}\frac{(p_{i,a}-q_{i,a})^2}{q_{i,a}(1-q_{i,a})} \\
  &= \frac{1}{2c}\expect{W} \;.
\end{align*}
\end{itemize}
Combining the above, we conclude that $\Var[W] \leq {24} \frac{2^d n}{m^2} + {26} \frac{\expect{W}}{cm}$.
\end{proof}

We now have all the tools we require to establish the completeness and soundness of the tester.

\begin{lemma}[Completeness]
If $P=Q$, then the algorithm outputs $\textsf{accept}$ with probability at least $2/3$.
\end{lemma}
\begin{proof}
  We first note that, as per the foregoing discussion and~\cref{lemma:estimating:parental:configuration:completeness}, 
  with probability at least $8/10$ we have between $N_{i,a}$ and $2N_{i,a}$ samples for every parental configuration $(i,a)\in S$, 
  and therefore have not outputted $\textsf{reject}$.
  By Chebyshev's inequality and~\cref{identity:coro:bn:tree:variance},
  \[
      \probaOf{ W \geq \frac{\eps^2}{32} } \leq {4096}\frac{2^d n}{m^2\eps^4} \leq \frac{4}{30}
  \]
  for a suitable choice of $\alpha>0$. Therefore, by a union bound the algorithm will output $\textsf{reject}$ with probability at most $\frac{4}{30}+\frac{2}{10} = \frac{1}{3}$.
\end{proof}

\begin{lemma}[Soundness]
If $\normone{P-Q} \geq \eps$, then the algorithm outputs $\textsf{reject}$ with probability at least $2/3$.
\end{lemma}
\begin{proof}
As noted before, it is sufficient to show that, conditioned on having between $N_{i,a}$ and $2N_{i,a}$ samples for every parental configuration and $\probaDistrOf{P}{\Pi_{i^\ast,a^\ast}} \leq 4 \probaDistrOf{Q}{\Pi_{i^\ast,a^\ast}}$ for all $(i,a)$, the algorithm rejects with probability at least $2/3+1/10 = 23/30$. Indeed, whenever too few or too many samples from a given parental configuration are seen the algorithm rejects automatically, and by~\cref{lemma:estimating:parental:configuration:soundness} this happens with probability at least $9/10$ if some parental configuration is such that $\probaDistrOf{P}{\Pi_{i^\ast,a^\ast}} > 4 \probaDistrOf{Q}{\Pi_{i^\ast,a^\ast}}$.
Conditioning on this case, by Chebyshev's inequality,
  \begin{align*}
      \probaOf{ W \leq \frac{\eps^2}{32} } 
      &\leq \probaOf{ \abs{ W - \expect{W} } \geq \frac{1}{2}\expect{W} } 
      \leq \frac{4\Var[W]}{ \expect{W}^2 } \\
      &\leq {96} \frac{2^d n}{m^2\expect{W}^2} + {104}\frac{1}{cm\expect{W}} \;,
  \end{align*}
  from~\cref{identity:coro:bn:tree:variance}. Since $\expect{W} \geq \frac{\eps^2}{16}$ by~\cref{identity:lemma:bn:tree:expectation:soundness}, we then get $\probaOf{ W \leq \frac{\eps^2}{32} }  = \bigO{\frac{2^d n}{m^2\eps^4} + \frac{1}{cm\eps^2} } \leq \frac{17}{30}$, again for a suitable choice of $\alpha>0$ and $\beta>0$ (recalling that $c\geq \beta\frac{\log n}{ \sqrt{n} }$). 
\end{proof}


\begin{remark} \label{rem:alph}
{\em We note that we can reduce the problem of testing degree-$d$ 
Bayes nets over alphabet $\Sigma$, to testing 
degree-$((d+1) \lceil \log_2(|\Sigma|) \rceil -1)$
Bayes nets over alphabet of size $2$. First consider the case where $|\Sigma| = 2^b$. 
Then it suffices to have $nb$ bits in $n$ clusters of size $b$. 
Each cluster of $b$ will represent a
single variable in the initial model with each of the $2^b$ possibilities
denoting a single letter. Then each bit will need to potentially be
dependent on each other bit in its cluster and on each bit in each
cluster that its cluster is dependent on. Therefore, we need degree
$(d+1)b-1$. Note that this operation preserves balancedness.

Now if $|\Sigma|$ is not a power of $2$, we need to pad the alphabet.
The obvious way to do this is to create a set of unused letters
until the alphabet size is a power of $2$. Unfortunately, this creates
an unbalanced model. To create a balanced one, we proceed as follows: 
we split a number of the letters in $\Sigma$ in two. So, instead of having
alphabet $a, b, c, \ldots$, we have $a_1,a_2,b_1,b_2,c,\ldots$. 
We make it so that when a word would have an $a$ in a certain position, 
we map this to a new word that has either $a_1$ or $a_2$ in that position, 
each with equal probability. We note that this operation preserves $L_1$ distance, 
and maintains the balancedness properties.
}
\end{remark}

\subsection{Sample Complexity Lower Bound} \label{ssec:identity-known-lower}

Here we prove a matching information-theoretic lower bound:

\begin{restatable}{theorem}{uniformityknownlb} \label{theo:lb:known:uniform}
There exists an absolute constant $\eps_0 > 0$ such that, for any $0 < \eps \leq \eps_0$, the following holds:
Any algorithm that has sample access to an unknown Bayes net $P$ over $\{0,1\}^n$ with known structure $\structure$ of
maximum in-degree at most $d < n/2$, and distinguishes between the cases that $P=U$ and $\normone{P-U} > \eps$ 
requires $\Omega(2^{d/2} n^{1/2}/\eps^2)$ samples.
\end{restatable}
\begin{proof}

Our lower bound will be derived from families of Bayes nets with the following structure:
The first $d$ nodes are all independent (and will in fact have marginal probability $1/2$ each), 
and will form in some sense a ``pointer'' to one of $2^d$ arbitrary product distributions. 
The remaining $n-d$ nodes will each depend on all
of the first $d$. The resulting distribution is now an (evenly weighted) disjoint mixture of $2^d$ 
product distributions on the $(n-d)$-dimensional hypercube. 
In other words, there are $2^d$ product distributions $p_1,\dots,p_{2^d}$, 
and our distribution returns a random $i$ (encoded in binary) followed by a random sample form $p_i$. 
Note that the $p_i$ can be arbitrary product distributions.

The unknown distribution $P$ to test is obtained as follows: 
let $X$ be a Bernoulli random variable with parameter $1/2$. 
If $X=0$, $P$ is the uniform distribution on $\{0,1\}^n$, i.e., each of the $2^d$ distributions $p_i$ 
is uniform on $\{0,1\}^{n-d}$. Otherwise, if $X=1$, then every $p_i$ is a product distribution 
on $\{0,1\}^{n-d}$ with, for each coordinate, a parameter chosen uniformly and independently 
to be either $\frac{1}{2}+\frac{\eps}{\sqrt{n}}$ or $\frac{1}{2}-\frac{\eps}{\sqrt{n}}$.

We will show that the shared information between a sample of size $o(2^{d/2}n^{1/2}/\eps^2)$ and $X$ is small. 
In view of this, let $\sigma_i$ (for $1\leq i \leq n-d$) be the set of indices of the samples that were drawn from $p_i$. 
Note that since $X$ is uncorrelated with the $\sigma_i$'s, and as the $\sigma_i$ are a function of the samples, 
$\mutualinfo{X}{S} = \mutualinfo{X}{S \mid \sigma_i}$. This is because $\mutualinfo{X}{S}) = H(X) - H(X \mid S) =
 H(X\mid \sigma_i) - H(X \mid S,\sigma_i) = \mutualinfo{X}{S \mid \sigma_i}$. 
 
Now, for fixed $\sigma_i$, the samples  we draw from $p_i$ are mutually independent of $X$.
Let $S_i$ denote the tuple of these $\abs{\sigma_i}$ samples. Thus, we have that
$\mutualinfo{X}{S \mid \sigma_i} \leq \sum_i \mutualinfo{X}{S_i \mid \sigma_i}$. 
By the same analysis as in the proof of~\cref{theo:lb:product:uniform}, 
this latter term is $O(\binom{\abs{\sigma_i}}{2}\frac{\eps^4}{n})$. Therefore,
$$\mutualinfo{X}{S \mid \sigma_i} \leq \expect{ \sum_i \binom{\abs{\sigma_i}}{2} }O\left(\frac{\eps^4}{n}\right) = O\left(\frac{m^2 \eps^4}{n2^d}\right) \;,$$ 
where we used the fact that $\abs{\sigma_i}$ is $\binomial{m}{1/2^d}$ distributed.
Note that the above RHS is $o(1)$ unless $m = \Omega(2^{d/2}n^{1/2}/\eps^2)$, which completes the proof.
\end{proof}

\section{Testing Identity of Unknown Structure Bayes Nets} \label{sec:identity-uknown}

In this section, we give our algorithms and lower bounds for testing the identity of low-degree Bayes nets with unknown structure.
In~\cref{ssec:identity-unknown-lower}, we start by showing that~--~even for the case of trees~--~uniformity testing of $n$-node Bayes nets requires $\Omega(n/\eps^2)$ samples. In Sections~\ref{ssec:identity-unknown-upper}, 
we design efficient identity testers with sample complexity sublinear in the dimension $n$, under some non-degeneracy
assumptions on the explicit Bayes net.

\subsection{Sample Complexity Lower Bound}  \label{ssec:identity-unknown-lower}

In this section, we establish a tight lower bound on identity testing of Bayes nets in the unknown structure case.
Our lower bound holds even for \emph{balanced} Bayes nets with a \emph{tree} structure. 
In order to state our theorem, we first give a specialized definition of balancedness for the case of trees. 
We say that a Bayes net with tree structure is \emph{$c$-balanced} if it satisfies $p_k\in[c,1-c]$ for all $k$ (note that this immediately implies it is $(c,C)$-balanced).

\begin{restatable}{theorem}{uniformitybntreelb}\label{theo:lb:unknown:structure:bayes:tree:uniform}
There exist absolute constants $c > 0$ and $\eps_0>0$ such that, for any $\eps \in(0,\eps_0)$ and given samples from an unknown $c$-balanced Bayes net $P$ over $\{0,1\}^n$ with unknown tree structure, distinguishing between the cases $P=U$ and $\normone{P-U} > \eps$ (where $U$ is the uniform distribution over $\{0,1\}^n$) 
{with probability $2/3$} requires $\Omega(n/\eps^2)$ samples. (Moreover, one can take $c=1/3$.)
\end{restatable}

{Hence, without any assumptions about the explicit distribution, identity testing is information-theoretically as hard as learning.
This section is devoted to the proof of~\cref{theo:lb:unknown:structure:bayes:tree:uniform}.
}

Fix any integer $m\geq 1$. We will define a family of $\textsf{no}$-instances consisting of distributions $\{P_\lambda\}_\lambda$ over $\{0,1\}^{n}$ such that:
\begin{enumerate}
  \item\label{item:lb:farness} every $P_\lambda$ is $\eps$-far from the uniform distribution $U$ on $\{0,1\}^{n}$: $\| P_\lambda - U\|_1 = \bigOmega{\eps}$;
  \item\label{item:lb:treebn} every $P_\lambda$ is a Bayes net with a tree structure;
  \item\label{item:lb:indisting} unless $m=\bigOmega{\frac{n}{\eps^2}}$, no algorithm taking $m$ samples can distinguish with probability $2/3$ between a uniformly chosen distribution from $\{P_\lambda\}_\lambda$ and $u$; or, equivalently, no algorithm taking \emph{one} sample can distinguish with probability $2/3$ between $P_\lambda^{\otimes m}$ and $U^{\otimes m}$, when $P_\lambda$ is chosen uniformly at random from $\{P_\lambda\}_\lambda$.
\end{enumerate}

The family is defined as follows. We let $\delta \eqdef \frac{\eps}{\sqrt{n}}$, and let a \emph{matching-orientation parameter} $\lambda$ consist of (i) a matching $\lambda^{(1)}$ of $[n]$ (partition of $[n]$ in $\frac{n}{2}$ disjoint pairs $(i,j)$ with $i<j$) and (ii) a vector $\lambda^{(2)}$ of $\frac{n}{2}$ bits. The distribution $P_\lambda$ is then defined as the distribution over $\{0,1\}^{n}$ with uniform marginals, and tree structure with edges corresponding to the pairs $\lambda^{(1)}$; and such that for every $\lambda^{(1)}_k = (i,j)\in \lambda^{(1)}$, $\operatorname{cov}(X_i, X_j) = (-1)^{\lambda^{(2)}_k}\delta$.

\paragraph{Notation}
For $\lambda=(\lambda^{(1)},\lambda^{(2)})$ as above and $x\in\{0,1\}^n$, we define the \emph{agreement count of $x$ for $\lambda$}, $c(\lambda,x)$, as the number of pairs $(i,j)$ in $\lambda^{(1)}$ such that $(x_i,x_j)$ ``agrees'' with the correlation suggested by $\lambda^{(2)}$. Specifically:
\begin{align*}
    c(\lambda,x) &\eqdef \Big\lvert \Big\{ (i,j) \in [n]^2 \;\colon\; \exists \ell \in [n/2],\ \lambda^{(1)}_\ell = (i,j) \text{ and } (-1)^{x_i+x_j} = (-1)^{\lambda^{(2)}_\ell} \Big\} \Big\rvert.
\end{align*}
Moreover, for $\lambda,\mu$ two matching-orientation parameters, we define the sets $A=A_{\lambda,\mu},B=B_{\lambda,\mu},C=C_{\lambda,\mu}$ as
\begin{align*}
  A &\eqdef \setOfSuchThat{ (s,t)\in[n/2]^2 }{ \lambda^{(1)}_s=\mu^{(1)}_t, \quad \lambda^{(2)}_s = \mu^{(2)}_t } \tag{common pairs with same orientations}\\
  B &\eqdef \setOfSuchThat{ (s,t)\in[n/2]^2 }{ \lambda^{(1)}_s=\mu^{(1)}_t, \quad \lambda^{(2)}_s\neq \mu^{(2)}_t } \tag{common pairs with different orientations}\\
  C &\eqdef (\lambda^{(1)}\cup\mu^{(1)})\setminus(A\cup B) \tag{pairs unique to $\lambda$ or $\mu$}
\end{align*}
so that $2(\abs{A}+\abs{B})+\abs{C} = n$.

\paragraph{Proof of~\cref{item:lb:farness}} Fix any matching-orientation parameter $\lambda$. We have
\begin{align*}
  \normone{P_\lambda-U} 
  &= \sum_{x\in\{0,1\}^n} \abs{P_\lambda(x) - U(x)} \\
  &= \sum_{x\in\{0,1\}^n} \abs{U(x) (1+2\delta)^{c(\lambda,x)}(1-2\delta)^{\frac{n}{2}-c(\lambda,x)} - U(x)} \\
  &= \frac{1}{2^n}\sum_{x\in\{0,1\}^n} \abs{(1+2\delta)^{c(\lambda,x)}(1-2\delta)^{\frac{n}{2}-c(\lambda,x)} - 1}  \\
  &= \frac{1}{2^n}\sum_{k=0}^{\frac{n}{2}}\sum_{x\colon c(\lambda,x)=k} \abs{(1+2\delta)^{k}(1-2\delta)^{\frac{n}{2}-k} - 1} \\
  &= \frac{1}{2^n}\sum_{k=0}^{\frac{n}{2}} 2^{\frac{n}{2}}\binom{\frac{n}{2}}{k} \abs{(1+2\delta)^{k}(1-2\delta)^{\frac{n}{2}-k} - 1} \\
  &= \sum_{k=0}^{\frac{n}{2}} \binom{\frac{n}{2}}{k} \abs{\left(\frac{1+2\delta}{2}\right)^{k}\left(\frac{1-2\delta}{2}\right)^{\frac{n}{2}-k} - \frac{1}{2^{\frac{n}{2}}}  } \\
  &= 2\totalvardist{ \binomial{\frac{n}{2}}{\frac{1}{2}} }{ \binomial{\frac{n}{2}}{\frac{1}{2}+\delta} } \\
  &= \bigOmega{\eps} \;,
\end{align*}
where the last equality follows from~\cref{lemma:noinstances:far}.

\paragraph{Proof of~\cref{item:lb:indisting}} Let the distribution $Q$ over $(\{0,1\}^n)^m$ be the uniform mixture
\[
    Q \eqdef \shortexpect_\lambda[ P_\lambda^{\otimes m} ] \;,
\]
where $P_\lambda$ is the distribution on $\{0,1\}^n$ corresponding to the matching-orientation parameter $\lambda$. In particular, for any $x\in\{0,1\}^n$ we have
\[
P_\lambda(x) = U(x) (1+2\delta)^{c(\lambda,x)}(1-2\delta)^{\frac{n}{2}-c(\lambda,x)}
\]
with $U$ being the uniform distribution on $\{0,1\}^n$ and $c(\lambda,x)$, the agreement count of $x$ for $\lambda$, defined as before. Now, this leads to
\[
    \frac{dP_\lambda}{du}(x) = 1+G(\lambda,x) \;,
\]
where $G(\lambda,x) \eqdef (1-2\delta)^{\frac{n}{2}}\left(\frac{1+2\delta}{1-2\delta}\right)^{c(\lambda,x)}-1$. For two matching-orientation parameters $\lambda,\mu$, we can define the covariance $\tau(\lambda,\mu)\eqdef \shortexpect_{x\sim U}[ G(\lambda,x)G(\mu,x) ]$. By the minimax approach (as in~\cite[Chapter 3]{Pollard:2003}), it is sufficient to bound the {$L_1$}-distance between $Q$ and $U^{\otimes m}$ by a small constant. Moreover, we have
\begin{equation}\label{eq:minimax:mixture:bound}
  \normone{Q-U^{\otimes m}} \leq \shortexpect_{\lambda,\mu}\left[ (1+\tau(\lambda,\mu))^m \right] - 1
\end{equation}
and to show the lower bound it is sufficient to prove that the RHS is less than $\frac{1}{10}$ unless $m=\bigOmega{\frac{n}{\eps^2}}$.

\noindent Setting $z\eqdef\frac{1+2\delta}{1-2\delta}$, we can derive, by expanding the definition
\begin{align*}
    \tau(\lambda,\mu) 
    &= 1+(1-{2}\delta)^n\shortexpect_{x\sim U}[ z^{c(\lambda,x)+c(\mu,x)} ] -2(1-2\delta)^{\frac{n}{2}}\shortexpect_{x\sim U}[ z^{c(\lambda,x)} ].
\end{align*}
Since, when $x$ is uniformly drawn in $\{0,1\}^n$, $c(\lambda,x)$ follows a $\binomial{\frac{n}{2}}{\frac{1}{2}}$ distribution, we can compute the last term as
\begin{align*}
    2(1-2\delta)^{\frac{n}{2}}\shortexpect_{x\sim U}[ z^{c(\lambda,x)} ] 
    &= 2(1-2\delta)^{\frac{n}{2}}\left(\frac{1+z}{2}\right)^{\frac{n}{2}} 
    = 2(1-2\delta)^{\frac{n}{2}}\frac{1}{(1-2\delta)^{\frac{n}{2}}} = 2 \;,
\end{align*}
where we used the expression of the probability-generating function of a Binomial. This leads to 
\begin{align*}
    1+\tau(\lambda,\mu) 
    &= (1-2\delta)^n \shortexpect_{x\sim U}[ z^{c(\lambda,x)+c(\mu,x)} ] 
    = (1-2\delta)^n z^{\abs{B}} \shortexpect_{\alpha\sim \binomial{\abs{A}}{\frac{1}{2}}}[ z^{2\alpha} ] 
            \prod_{\substack{\sigma\text{ cycle}\\ \abs{\sigma} \geq 4}} \shortexpect_{\alpha\sim \mathcal{B}_{\lambda,\mu}(\sigma)}[ z^{\alpha} ] \;,
\end{align*}
where ``cycle'' and the probability distribution $\mathcal{B}_{\lambda,\mu}(\sigma)$ are defined as follows. Recall that $\lambda$ and $\mu$ define a weighted multigraph 
over $n$ vertices, where each vertex has degree exactly $2$, the edges are from the pairs $\lambda^{(1)}_i$'s and {$\mu^{(1)}_i$}'s, and the weights are in $\{0,1\}$ according to the $\lambda^{(2)}_i$'s and {$\mu^{(2)}_i$}'s. That multigraph $G_{\lambda,\mu}$ is better seen as the disjoint union of cycles ({and indeed, $A\cup B$ corresponds to the cycles of length $2$, while $C$ corresponds to cycles of length at least $4$)}.

For such a cycle $\sigma$ in $G_{\lambda,\mu}$, we let $\mathcal{B}_{\lambda,\mu}(\sigma)$ be the distribution below. If the number of negative covariances~--~the number of edges with label $\lambda^{(2)}_\ell=1$ or $\mu^{(2)}_\ell=1$~--~along $\sigma$ is even (resp. odd), then 
$\mathcal{B}_{\lambda,\mu}(\sigma)$ is a $\binomial{\abs{\sigma}}{\frac{1}{2}}$ conditioned on being even (resp. odd).\medskip

Instead of the above, we first consider the related quantity with the conditioning removed (indeed, as we will see in~\cref{claim:binomial:removing:evenodd:conditioning}, this new quantity is within an $1+O(\eps^2)$ factor of the actual one): in what follows, all the expectations are taken over the random variable $\alpha$, distributed as indicated in the subscript:
\begin{align*}
    1+\tilde{\tau}(\lambda,\mu)  
    &= (1-2\delta)^n z^{\abs{B}} \shortexpect_{\binomial{\abs{A}}{\frac{1}{2}}}[ z^{2\alpha} ] \prod_{\substack{\sigma\text{ cycle}\\ \abs{\sigma} \geq 4}} \shortexpect_{\binomial{\abs{\sigma}}{\frac{1}{2}}}[ z^{\alpha} ]\\
    &= (1-2\delta)^n z^{\abs{B}} \shortexpect_{\binomial{\abs{A}}{\frac{1}{2}}}[ z^{2\alpha} ] \shortexpect_{\binomial{\sum_{\sigma\colon\abs{\sigma} \geq 4}\abs{\sigma}}{\frac{1}{2}}}[ z^{\alpha} ] \\
    &= (1-2\delta)^n z^{\abs{B}} \shortexpect_{\binomial{\abs{A}}{\frac{1}{2}}}[ z^{2\alpha} ] \shortexpect_{\binomial{\abs{C}}{\frac{1}{2}}}[ z^{\alpha} ] \\
    &= (1-2\delta)^n z^{\abs{B}} \big(\frac{1+z^2}{2}\big)^{\abs{A}}\big(\frac{1+z}{2}\big)^{\abs{C}} \\
    &= \left( (1-2\delta)^2z \right)^{\abs{B}} \big((1-2\delta)^2\frac{1+z^2}{2}\big)^{\abs{A}}{\underbrace{\big((1-2\delta)\frac{1+z}{2}\big)}_{=1}}^{\abs{C}} \tag{since $2\abs{A}+2\abs{B}+\abs{C}=n$}\\
    &= \left( 1-4\delta^2\right)^{\abs{B}} \left(1+4\delta^2\right)^{\abs{A}}.
\end{align*}
Thus, we need to compute
\begin{align*}
    \shortexpect_{\lambda,\mu}\left[ (1+\tilde{\tau}(\lambda,\mu))^m \right] 
    &= \shortexpect_{\lambda,\mu}\left[ \left(1+4\delta^2\right)^{m\abs{A}}\left( 1-4\delta^2\right)^{m\abs{B}} \right] 
    = \shortexpect_{\lambda,\mu}\left[ a^{\abs{A}} b^{\abs{B}} \right]
\end{align*}   
where $a\eqdef\left(1+4\delta^2\right)^{m}$, $b\eqdef\left(1-4\delta^2\right)^{m}$. This leads to
\begin{align*}
    \shortexpect_{\lambda,\mu}\left[ (1+\tilde{\tau}(\lambda,\mu))^m \right] 
    &= \shortexpect_{\lambda,\mu}\left[ \expectCond{ a^{\abs{A}} b^{\abs{B}} }{ \abs{A}+\abs{B} } \right]\\
    &= \shortexpect_{\lambda,\mu}\left[ b^{\abs{A}+\abs{B}}\expectCond{ \left(\frac{a}{b}\right)^{\abs{A}} }{ \abs{A}+\abs{B} } \right] \\
    &= \shortexpect_{\lambda,\mu}\left[ b^{\abs{A}+\abs{B}} \left(\frac{1+\frac{a}{b}}{2}\right)^{ \abs{A}+\abs{B} }  \right] \tag{since $\abs{A} \sim \binomial{\abs{A}+\abs{B}}{\frac{1}{2}}$}\\
    &= \shortexpect_{\lambda,\mu}\left[ \left(\frac{a+b}{2}\right)^{ \abs{A}+\abs{B} } \right]\\
    &= \shortexpect_{\lambda,\mu}\left[ \left(\frac{\left(1+4\delta^2\right)^{m}+\left(1-4\delta^2\right)^{m}}{2}\right)^{ \abs{A}+\abs{B} } \right].
\end{align*}

In particular, consider the following upper bound on $f(k)$, the probability that $\abs{A}+\abs{B}\geq k$: setting $s\eqdef \frac{n}{2}$, for $0\leq k \leq s$,

\begin{align*}
    f(k) &=\probaOf{ \abs{A}+\abs{B}\geq k } \\
    &\leq \frac{s! 2^s}{(2s)!} \cdot \binom{s}{k}\frac{(2s-2k)!}{(s-k)! 2^{s-k}}
    = \frac{2^k k!}{(2k)!}\frac{\binom{s}{k}^2}{\binom{2s}{2k}} \\
    &= \frac{2^k}{k!}\frac{\binom{2(s-k)}{s-k}}{\binom{2s}{s}} 
    = \frac{2^k}{k!} \frac{\prod_{j=0}^{k-1} (s-j)^2}{\prod_{j=0}^{2k-1} (2s-j)} \\
    &= \frac{1}{k!} \frac{\prod_{j=0}^{k-1} (s-j)}{\prod_{j=0}^{k-1} (2s-2j-1)}
    \leq \frac{1}{k!}.
\end{align*}
Therefore, for any $z > 1$, we have
\begin{align*}
  \shortexpect_{\lambda,\mu}\left[ z^{ \abs{A}+\abs{B} } \right]
  &= \int_0^\infty \probaOf{ z^{ \abs{A}+\abs{B} } \geq t }dt \\
  &= \int_0^\infty \probaOf{ \abs{A}+\abs{B} \geq \frac{\ln t}{\ln z} }dt \\
  &= 1+\int_1^\infty \probaOf{ \abs{A}+\abs{B} \geq \frac{\ln t}{\ln z} }dt \\
  &\leq 1+\int_1^\infty \probaOf{ \abs{A}+\abs{B} \geq \flr{\frac{\ln t}{\ln z}} }dt \\
  &\leq 1+\int_1^\infty \frac{dt}{\flr{\frac{\ln t}{\ln z}}!}  \tag{from our upper bound on $f(k)$} 
  \leq 1+\int_1^\infty \frac{dt}{\Gamma\left(\frac{\ln t}{\ln z}\right)} \\
  &= 1+\int_1^\infty \frac{e^u du}{\Gamma\left(\frac{u}{\ln z}\right)}\,. 
\end{align*}
Assuming now that $1< z \leq 1+\gamma$ for some $\gamma \in(0,1)$, from $\ln z < \gamma$ and monotonicity of the Gamma function we obtain
\begin{align*}
  \shortexpect_{\lambda,\mu}\left[ z^{ \abs{A}+\abs{B} } \right]
  &= 1+\int_1^\infty \frac{e^u du}{\Gamma\left(\frac{u}{\gamma}\right)}
  = 1+\gamma \int_{1/\gamma}^\infty \frac{e^{\gamma v} dv}{\Gamma(v)} 
  \leq 1+\gamma \int_{0}^\infty \frac{e^{v} dv}{\Gamma(v)} \leq 1+42\gamma\,.
\end{align*}

Suppose now $m \leq c \frac{n}{\eps^2} = \frac{4c}{\delta^2}$, for some constant $c>0$ to be determined later. 
Then, by monotonicity
\begin{align*}
z &\eqdef \frac{\left(1+4\delta^2\right)^{m}+\left(1-4\delta^2\right)^{m}}{2} 
  \leq \frac{\left(1+4\delta^2\right)^{\frac{4c}{\delta^2}}+\left(1-4\delta^2\right)^{\frac{4c}{\delta^2}}}{2} 
  \leq
\frac{e^{16c}+e^{-16c}}{2} < 1+\frac{1}{42\cdot 20} \eqdef 1+\gamma
\end{align*}
for $c < \frac{3}{1000}$. Therefore, 
\[
  \shortexpect_{\lambda,\mu}\left[ (1+\tilde{\tau}(\lambda,\mu))^m \right] - 1 = \shortexpect_{\lambda,\mu}\left[ z^{ \abs{A}+\abs{B} } \right] -1  < \frac{1}{20} \;,
\]
as desired.

\medskip

To conclude, we bound $\shortexpect_{\lambda,\mu}\left[ (1+\tau(\lambda,\mu))^m \right]$ combining the above with the following claim:
\begin{claim}\label{claim:binomial:removing:evenodd:conditioning}
    Let $z\eqdef\frac{1+2\delta}{1-2\delta}$ as above. Then for any two matching-orientation parameters $\lambda,\mu$, we have
    \[
      \prod_{\sigma\colon\abs{\sigma} \geq 4}\shortexpect_{\alpha\sim \mathcal{B}_{\lambda,\mu}(\sigma)}[ z^{\alpha} ]
      \leq e^{\frac{8\eps^4}{n}} \cdot \shortexpect_{\alpha\sim \binomial{\sum_{\sigma\colon \abs{\sigma}\geq 4}\abs{\sigma}}{\frac{1}{2}}}[ z^{\alpha} ] \;.
    \]
    where $\mathcal{B}_{\lambda,\mu}(\sigma)$ is the probability distribution defined earlier.
\end{claim}
\begin{proof}
  Fix $\lambda,\mu$ as in the statement, and any cycle $\sigma$ in the resulting graph. Suppose first this is an ``even'' cycle:
  \begin{align*}
      \shortexpect_{\alpha\sim \mathcal{B}_{\lambda,\mu}(\sigma)}[ z^{\alpha} ]
      &= \shortexpect_{\alpha\sim \binomial{\abs{\sigma}}{\frac{1}{2}}}[ z^{\alpha} \mid \alpha\text{ even} ]
      = \frac{1}{1/2} \sum_{k=0}^{\abs{\sigma}/2} \binom{\abs{\sigma}}{2k} z^{2k} 
      = \frac{(1+z)^{\abs{\sigma}}+(1-z)^{\abs{\sigma}}}{2^{\abs{\sigma}}}.
  \end{align*}
  Similarly, if $\sigma$ is an ``odd'' cycle,
  $
      \shortexpect_{\alpha\sim \mathcal{B}_{\lambda,\mu}(\sigma)}[ z^{\alpha} ]
      = \frac{(1+z)^{\abs{\sigma}}-(1-z)^{\abs{\sigma}}}{2^{\abs{\sigma}}}.
  $
  We then obtain $\shortexpect_{\alpha\sim \mathcal{B}_{\lambda,\mu}(\sigma)}[ z^{\alpha} ] 
  \leq \shortexpect_{\alpha\sim \binomial{\abs{\sigma}}{\frac{1}{2}}}[ z^{\alpha} ]\left( 1+\abs{\frac{1-z}{1+z}}^{\abs{\sigma}} \right)$, from which
  \begin{align*}
      \prod_{\sigma}\shortexpect_{\alpha\sim \mathcal{B}_{\lambda,\mu}(\sigma)}[ z^{\alpha} ] 
      &\leq \prod_{\sigma}\shortexpect_{\alpha\sim \binomial{\abs{\sigma}}{\frac{1}{2}}}[ z^{\alpha} ]\left( 1+\abs{\frac{1-z}{1+z}}^{\abs{\sigma}} \right) \\
      &= \shortexpect_{\alpha\sim \binomial{\sum_{\sigma}\abs{\sigma}}{\frac{1}{2}}}[ z^{\alpha} ] \prod_{\sigma}\left( 1+\abs{\frac{1-z}{1+z}}^{\abs{\sigma}} \right).
  \end{align*}
  We now bound the last factor: since $\abs{\frac{1-z}{1+z}}=2\delta = \frac{2\eps}{\sqrt{n}}$ we have at most $\frac{n}{2}$ cycles, we get  
  \begin{align*}
      \prod_{\sigma\colon \abs{\sigma} \geq 4}\left( 1+\abs{\frac{1-z}{1+z}}^{\abs{\sigma}} \right)
      &= \prod_{\sigma\colon \abs{\sigma} \geq 4}\left( 1+(2\delta)^{\abs{\sigma}} \right)
      \leq \left( 1+16\delta^{4} \right)^{\frac{n}{2}} \leq e^{8\frac{\eps^4}{n}} \;,
  \end{align*}
  as claimed.
\end{proof}

With this result in hand, we can get the conclusion we want: for any $\lambda,\mu$,
\begin{align*}
  1+\tau(\lambda,\mu) 
  &= (1-2\delta)^n z^{\abs{B}} \shortexpect_{\alpha\sim \binomial{\abs{A}}{\frac{1}{2}}}[ z^{2\alpha} ] \prod_{\substack{\sigma\text{ cycle}\\ \abs{\sigma} \geq 4}} \shortexpect_{\alpha\sim \mathcal{B}_{\lambda,\mu}(\sigma)}[ z^{\alpha} ] \\
  &\leq e^{\frac{8\eps^4}{n}}(1-2\delta)^n z^{\abs{B}} \shortexpect_{\alpha\sim \binomial{\abs{A}}{\frac{1}{2}}}[ z^{2\alpha} ] \cdot\shortexpect_{\alpha\sim \binomial{\sum_{\sigma\colon \abs{\sigma}\geq 4}\abs{\sigma}}{\frac{1}{2}}}[ z^{\alpha} ] \tag{by \cref{claim:binomial:removing:evenodd:conditioning}} \\
  &= e^{\frac{8\eps^4}{n}}(1+\tilde{\tau}(\lambda,\mu)) \;,
\end{align*}
from which 
\begin{align*}
  \shortexpect_{\lambda,\mu}\left[ (1+\tau(\lambda,\mu))^m \right]
  &\leq e^{\frac{8\eps^4m}{n}}\shortexpect_{\lambda,\mu}\left[ (1+\tilde{\tau}(\lambda,\mu))^m \right]\\
  &\leq e^{\frac{8\eps^4m}{n}}\left(1+\frac{1}{20}\right) \tag{for $m\leq \frac{cn}{\eps^2}$}\\
  &\leq e^{8c\eps^2}\frac{21}{20} < 1+\frac{1}{10} \tag{as $c<\frac{3}{1000}$ and $\eps \leq 1$} \;,
\end{align*}
concluding the proof: by \eqref{eq:minimax:mixture:bound},
$
  \normone{Q-U^{\otimes m}} \leq \frac{1}{10} \;,
$
for any $m < c\frac{n}{\eps^2}$.


\subsection{Identity Testing Algorithm against Non-Degenerate Bayes Nets}  \label{ssec:identity-unknown-upper}

We start with the case of trees and then generalize to bounded degree.

\subsubsection{The Case of Trees}\label{ssec:identity-unknown-upper:trees}
In this section, we prove our result on testing identity of a tree structured 
Bayes net with unknown topology. {Recall from~\cref{ssec:identity-unknown-lower} that a Bayes net with tree structure is said to be $c$-balanced if it satisfies $p_k\in[c,1-c]$ for all $k$. We will require the following definition of \emph{non-degeneracy} of a tree, which will be a simpler case of the definition we shall have for general Bayes nets (\cref{def:non:degeneracy}):

\begin{definition}\label{def:nondegenerate:tree}
  For any $\gamma \in(0,1]$, we say a tree Bayes net $P$ over $\{0,1\}^n$ is \emph{$\gamma$-non-degenerate} if for all $i\in[n]$,
  \[
      \abs{ \probaCond{ X_i = 1 }{ X_{\parent{i}} = 1 } - \probaCond{ X_i = 1 }{ X_{\parent{i}} = 0 } } \geq \gamma
  \]
  where $X\sim P$.
\end{definition}
Roughly speaking, this definition states that the choice of the value of its parent has a significant influence on the probability of any node. With these definitions, we are ready to state and prove our result:}

\begin{restatable}{theorem}{identityunknowntreeub}\label{theo:upper:unknowntree:identity}
There exists an efficient algorithm with the following guarantees. 
Given as input {(i)} a tree $\structure$ over $n$ nodes and an {explicit $c$-balanced, $\gamma$-non-degenerate Bayes net} $Q$ 
with structure $\structure$,  where $c, \gamma = \bigOmega{1/n^a}$ for some absolute constant $a>0$; 
{(ii)} parameter $\eps > 0$, and {(iii)} sample access to a Bayes net $P$ 
with unknown tree structure, the algorithm takes $\bigO{\sqrt{n}/\eps^2}$ samples from $P$, 
and distinguishes with probability at least $2/3$ between $P=Q$ and $\normone{P-Q} > \eps$.
\end{restatable}

The algorithm follows a natural idea: {(1)} check that the unknown distribution $P$ indeed has, as it should, the same tree structure as the (known) distribution $Q$; {(2)} if so, invoke the algorithm of the previous section, which works under the assumption that $P$ and $Q$ have the same structure.

Therefore, to establish the theorem it is sufficient to show that {(1)} can be performed efficiently. Specifically, we will prove the following:
\begin{theorem}\label{theo:check:tree:structure} 
There exists an algorithm with the following guarantees. 
For $\gamma\in(0,1)$, $c\in(0,1/2)$, it takes as input
an explicit $c$-balanced, $\gamma$-nondegenerate tree Bayes net $Q$ over $\{0,1\}^n$
with structure $\structure(Q)$, and 
  \[     O\left( \frac{\log^2\frac{1}{c}}{c^6\gamma^4} \log n \right) \]
samples from an arbitrary tree Bayes net $P$ over $\{0,1\}^n$ with unknown structure $\structure(P)$.
  \begin{itemize}
    \item If $P=Q$, the algorithm returns $\textsf{accept}$ with probability at least $4/5$;
    \item If $\structure(P)\neq\structure(Q)$, the algorithm returns $\textsf{reject}$ with probability at least $4/5$.
  \end{itemize}
\end{theorem}

\noindent Note that the above theorem implies the desired result 
as long as $\frac{\log^2\frac{1}{c}}{c^6\gamma^4} = \bigO{\frac{\sqrt{n}}{\eps^2 {\log n}}}$. 

\begin{proof}[Proof of~\cref{theo:check:tree:structure}]
We start by stating and proving lemmas that will be crucial in stating and analyzing the algorithm:
\begin{fact}\label{fact:unknown:tree:structure:estimation}
Given $\tau > 0$ and sample access to a tree Bayes net $P$ over $\{0,1\}^n$, one can obtain with $O(\frac{\log n}{\tau^2})$ samples estimates $(\hat{\mu}_i)_{i\in[n]}$, $(\hat{\rho}_{i,j})_{i,j\in[n]}$ such that, with probability at least $9/10$, 
\[
      \max\left( \max_{i\in[n]} \lvert \hat{\mu}_i - \expect{X_i} \ \rvert, \max_{i,\in[n]} \lvert \hat{\rho}_{i,j}-\expect{X_iX_j} \rvert  \right) \leq \tau.
\]
\end{fact}
\begin{proof}
The fact follows immediately by an application of Chernoff bounds.
\end{proof}

\begin{lemma}\label{lemma:unknown:tree:structure:lipschitz:mutualinfo}
Let $c\in(0,1/2]$. There exists a constant $\lambda$ and a function $f$ such that
  \[
      \mutualinfo{X_i}{X_j} = f( \expect{X_i},\expect{X_j},\expect{X_iX_j}  ) \;,
  \]
for any $c$-balanced tree Bayes net $P$ over $\{0,1\}^n$ and $X\sim P$, 
where $f$ is $\lambda$-Lipschitz with respect to the $\norminf{\cdot}$ norm on the domain $\Omega_c\subseteq [0,1]\times[0,1]\times[0,1]\to[0,1]$ in which $(\expect{X_i},\expect{X_j},\expect{X_iX_j})_{i,j}$ then take values. Moreover, one can take $\lambda = 16\log\frac{1}{c}$.
\end{lemma} 

\begin{proof}[Proof Sketch:]
{Expanding the definition of mutual influence $\mutualinfo{X}{Y}$ of two random variables, 
it is not hard to write it as a function of $\expect{X},\expect{Y}$, and $\expect{XY}$ only.
This function would not be Lipschitz on its entire domain, however. The core of the proof 
leverages the balancedness assumption to restrict its domain to a convenient subset 
$\Omega_c\subseteq [0,1]\times[0,1]\times[0,1]$, 
on which it becomes possible to bound the partial derivatives of $f$. 
We defer the details of the proof to~\cref{sec:misc:proofs}.}
\end{proof}

{We now show the following crucial lemma establishing the following result: For any balanced Bayes net, 
the shared information between any pair of non-adjacent vertices $i, j$ 
is noticeably smaller than the minimum shared information 
between any pair of neighbors along the path that connects $i, j$.}

\begin{lemma}\label{lemma:unknown:tree:structure:gap:mutualinfo}
Let $c\in(0,1/2]$, and fix any $c$-balanced tree Bayes net $P$ over $\{0,1\}^n$ 
with structure $\structure(Q)$. Then, for any distinct $i,j\in[n]$ such that $i\neq\parent{j}$ and $j\neq\parent{i}$, we have
\[
      \mutualinfo{X_i}{X_j} \leq (1-2c^2)\min_{\{k,\parent{k}\}\in\operatorname{path}(i,j)} \mutualinfo{X_k}{X_{\parent{k}}} \;,
\]
where $X\sim P$. (and $\operatorname{path}(i,j)$ is a path between $i$ to $j$, of the form $i - \dots - k - \dots - j$, where each edge is of the form $(k,\parent{k}$  or $(\parent{k},k)$).
\end{lemma}
\begin{proof}
By induction and the data processing inequality, it is sufficient to prove the statement for a path of length 3, namely
\[
      X_i - X_k - X_j \;.
\]
The result will follow from a version of the strong data processing inequality (see e.g.,~\cite{PW:15}, 
from which we borrow the notation $\eta_{\rm KL},\eta_{\rm TV}$): since $X_i \to X_k \to X_j$ forms a chain with the Markov property, we get
$\mutualinfo{X_i}{X_j} \leq \eta_{\rm KL}(P_{X_j\mid X_k}) \mutualinfo{X_i}{X_k}$ from~\cite[Equation 17]{PW:15}. Now, by~\cite[Theorem 1]{PW:15}, we have
\[
    \eta_{\rm KL}(P_{X_j\mid X_k}) \leq \eta_{\rm TV}(P_{X_j\mid X_k}) = \dtv( P_{X_j\mid X_k=0}, P_{X_j\mid X_k=1} ).
\]
If $k=\parent{j}$ (in our Bayes net), then $\dtv( P_{X_j\mid X_k=0}, P_{X_j\mid X_k=1} ) = \abs{p_{j,0}-p_{j,1}} \leq 1-2c$ from the $c$-balancedness assumption. On the other hand, if $j=\parent{k}$, then by Bayes' rule it is easy to check that (again, from the $c$-balancedness assumption) $\probaCond{X_{\parent{k}}=1}{X_k=a}\in[c^2,1-c^2]$, and
\[
\dtv( P_{X_j\mid X_k=0}, P_{X_j\mid X_k=1} ) = \abs{ \probaCond{X_j=1}{X_k=0} - \probaCond{X_j=1}{X_k=1} } \leq 1-2c^2\,.
\]

Therefore, we get $\mutualinfo{X_i}{X_j} \leq (1-2c^2) \mutualinfo{X_i}{X_k}$ as wanted; by symmetry, $\mutualinfo{X_i}{X_j} \leq (1-2c^2) \mutualinfo{X_j}{X_k}$ holds as well.
\end{proof}

\begin{lemma}\label{lemma:unknown:tree:structure:min:mutualinfo}
Let $c\in(0,1/2], \gamma\in(0,1)$, and fix any $c$-balanced, $\gamma$-nondegenerate tree Bayes net $P$ over $\{0,1\}^n$, with structure $\structure(P)$. Then, there exists an absolute constant $\kappa$ such that for any $i\in[n]$ one has
\[
      \mutualinfo{X_i}{X_{\parent{i}}} \geq \kappa \;,
\]
where $X\sim P$. (Moreover, one can take $\kappa = \frac{c \gamma^2}{2\ln 2}$.)
\end{lemma}
\begin{proof}
Fix any such $i$, and write $X=X_i$, $Y=X_{\parent{i}}$ for convenience; and set $u \eqdef \probaOf{X=1}$, 
$v\eqdef \probaOf{Y=1}$, $a\eqdef \probaCond{X=1}{Y=1}$, and ${b}\eqdef \probaCond{X=1}{Y=0}$. We then have
\begin{align*}
  \mutualinfo{X}{Y} 
  &= \sum_{x,y\in\{0,1\}}\!\!\! \probaOf{X=x,Y=y} \log \frac{\probaOf{X=x,Y=y}}{\probaOf{X=x}\probaOf{Y=y}} \\
  &= \sum_{x,y\in\{0,1\}}\!\!\! \probaCond{X=x}{Y=y}\probaOf{Y=y}\cdot \log \frac{\probaCond{X=x}{Y=y}}{\probaOf{X=x}} \\
  &= (1-u)(1-b)\log\frac{1-b}{1-v} + (1-u)b\log\frac{b}{v} + u(1-a)\log\frac{1-a}{1-v} + ua\log\frac{a}{v} \\
  &= u\varphi(a,v)+(1-u) \varphi(b,v) \;,
\end{align*}
where $\varphi(x,y) \eqdef x\log\frac{x}{y}+(1-x)\log\frac{1-x}{1-y} \geq 0$ for $x,y\in[0,1]$ is the KL-divergence between 
two Bernoulli distributions with parameters $x,y$. 
From our assumptions of $c$-balanced and $\gamma$-nondegeneracy, we know that $u,v,a,b$ satisfy
\begin{align*}
  c&\leq a,b,u,v\leq 1-c \\
  \gamma &\leq \abs{a-b} \;,
\end{align*}
which leads to, noticing that $\abs{a-b}\geq \gamma$ implies that at least one of 
$\abs{a-v}\geq \frac{\gamma}{2}$, $\abs{b-v}\geq \frac{\gamma}{2}$ holds and that $\varphi(\cdot,v)$ is convex with a minimum at $v$:
\begin{align*}
  \mutualinfo{X}{Y}
  \geq c\left( \varphi(a,v)+\varphi(b,v) \right)
  \geq c\min\left( \varphi\left(v-\frac{\gamma}{2},v\right),\varphi\left(v+\frac{\gamma}{2},v\right) \right)
  \geq \frac{1}{2\ln 2} c\gamma^2 \;,
\end{align*}
using the standard lower bound of $\varphi(x,y)\geq\frac{2}{\ln 2}(x-y)^2$ on the KL-divergence.
\end{proof}

\paragraph{The Algorithm} With these in hand, we are ready to describe and analyze the algorithm underlying~\cref{theo:check:tree:structure}:

Let $\gamma\in(0,1)$, $c\in(0,1/2)$ be fixed constants, and $Q$ be a known $c$-balanced, $\gamma$-nondegenerate tree Bayes net over $\{0,1\}^n$, with structure $\structure(Q)$. Furthermore, let $P$ be an unknown tree Bayes net over $\{0,1\}^n$ with unknown structure $\structure(P)$, to which we have sample access.

Let $\kappa=\kappa(c,\gamma) = \frac{c\gamma^2}{2\ln 2}$ as in~\cref{lemma:unknown:tree:structure:min:mutualinfo}, $c'\eqdef\frac{c}{2}$, and $\lambda=\lambda(c')=16\log\frac{2}{c}$ as in~\cref{lemma:unknown:tree:structure:lipschitz:mutualinfo}. In view of applying~\cref{lemma:unknown:tree:structure:gap:mutualinfo} later to $P$, set
\[
    \tau \eqdef \frac{\kappa - (1-2{c'}^2)\kappa}{4\lambda} = \frac{1}{64\ln 2} \frac{c^3\gamma^2}{\log\frac{2}{c}}.
\]

The algorithm then proceeds as follows. (Below, $X$ denotes a random variable distributed according to $P$.)
\begin{enumerate}
  \item Take $m = O\left( \frac{\log n}{\tau^2} \right)$ samples from $P$, and use them to
    \begin{itemize}
      \item Estimate all $n^2$ marginals $\probaCond{X_i=1}{X_j=a}$, and verify that they are all in $[c', 1-c']$ (ensuring that $P$ is $c'$-balanced), with probability at least $9/10$. Else, return \textsf{reject};
      \item Estimate all $\binom{n}{2}+n$ values of $\expect{X_i}$ and $\expect{X_iX_j}$ to an additive $\tau$, with probability at least $9/10$, as in~\cref{fact:unknown:tree:structure:estimation}. (Call these estimates $\hat{\mu}_i$, $\hat{\rho}_{i,j}$.)
    \end{itemize}
    At the end of this step, we are guaranteed that $P$ is $c'$-balanced (or else we have rejected with probability at least $9/10$).
  \item Check that all $\hat{\mu}_i$, $\hat{\rho}_{i,j}$'s are all within an additive $\tau$ of what they should be under $Q$. If so, return \textsf{accept}, else return \textsf{reject}.
\end{enumerate}

Clearly, the algorithm only uses $O\left( \frac{\log^2\frac{1}{c}}{c^6\gamma^4} \log n \right)$ samples from $P$. We now establish its correctness: first, with probability at least $4/5$ by a union bound, all estimates performed in the first step are correct; we condition on that.
\begin{description}
  \item[Completeness.] If $P=Q$, then $P$ is $c$-balanced, and thus \textit{a fortiori} $c'$-balanced: the algorithm does not reject in the first step. Moreover, clearly all $(\hat{\mu}_i)_i$, $(\hat{\rho}_{i,j})_{i,j}$ are then within an additive $\tau$ of the corresponding values of $P=Q$, so the algorithm returns \textsf{accept}.
  \item[Soundness.] By contrapositive. If the algorithm returns \textsf{accept}, then $P$ is $c'$-balanced by the first step. Given our setting of $\tau$, by~\cref{lemma:unknown:tree:structure:lipschitz:mutualinfo} our estimates $(\hat{\mu}_i)_i$, $(\hat{\rho}_{i,j})_{i,j}$ are such that all corresponding quantities
  \[
      \hat{I}_{i,j}\eqdef f(\hat{\mu}_i,\hat{\mu}_j,\hat{\rho}_{i,j})
  \]
  are within $\tau\lambda = \frac{\kappa - (1-2{c'}^2)\kappa}{4}$ of the mutual informations $\mutualinfo{X_i}{X_j}$ for $P$. But then, by~\cref{lemma:unknown:tree:structure:gap:mutualinfo} this implies that the relative \emph{order} of all $\hat{I}_{i,j},\hat{I}_{i',j'}$ is the same as that of $\mutualinfo{X_i}{X_j},\mutualinfo{X_{i'}}{X_{j'}}$. This itself implies that running the Chow--Liu algorithm on input these $\hat{I}_{i,j}$'s would yield the same, 
uniquely determined tree structure $\structure(P)$ as if running it on the actual $\mutualinfo{X_i}{X_j}$'s. 
{To see this, we note that the Chow-Liu algorithm works 
by computing a maximum-weight spanning tree (MST)  with respect to the weights given by the pairwise mutual information. 
The claim follows from the fact that the MST only depends on the relative ordering of the edge-weights.}
  
  But since the $(\hat{\mu}_i)_i$, $(\hat{\rho}_{i,j})_{i,j}$ are also within an additive $\tau$ of the corresponding quantities for $Q$ (per our check in the second step), the same argument shows that running the Chow--Liu algorithm would result in the same, uniquely determined tree structure $\structure(Q)$ as if running it on the actual mutual informations from $Q$. Therefore, $\structure(P)=\structure(Q)$, concluding the proof.
\end{description}
\end{proof}


\subsubsection{The Case of Bounded Degree}\label{ssec:identity-unknown-upper:degreed}

In this section, we show how to test identity of unknown structure Bayesian networks with maximum in-degree $d$ 
under some non-degeneracy conditions. Intuitively, we want these conditions to ensure \emph{identifiability} of the structure: 
that is, that any (unknown) Bayes net close to a non-degenerate Bayes net $Q$ must also share the same structure. 
To capture this notion, observe that, by definition, non-equivalent Bayes net structures 
satisfy different conditional independence constraints: 
our non-degeneracy condition is then to rule out some of these possible 
new conditional independence constraints, 
as \emph{far} from being satisfied by the non-degenerate Bayes net.
Formally, we have the following definition:

\begin{definition}[Non-degeneracy]\label{def:non:degeneracy}
For nodes $X_i$, $X_j$, set of nodes $S$, and a distribution $P$ over $\{0,1\}^n$, we say that $X_i$ and $X_j$ are \emph{$\gamma$-far from independent conditioned on $X_S$} if for all distributions $Q$ over $\{0,1\}^n$ such that $\dtv(P,Q) < \gamma$, it holds that $X_i$ and  $X_j$ are not independent conditioned on $X_S$.

A Bayes net $P$ is then called \emph{$\gamma$-non-degenerate with respect to structure $\structure$ and degree $d$}
if for any nodes $X_i$, $X_j$ and set of nodes $S$ of size $|S| \leq d$ not containing $i$ or $j$ satisfying one of the following:
\begin{itemize}
\item[(i)] $X_i$ is a parent of $X_j$,
\item[(ii)] $S$ contains a node $X_k$ that is a child of both $X_i$ and $X_j$,
\item[(iii)] $X_i$ is a grandparent of $X_j$ and there is a child of $X_i$ and parent of $X_j$, $X_k$, that is not in $S$,
\item[(iv)] $X_i$ and $X_j$ have a common parent $X_k$ that is not in $S$
\end{itemize}
we have that $X_i$ and $X_j$ are $\gamma$-far from independent conditioned on $X_S$ (where all relations are under structure $\structure$).
\end{definition}

\tikzset{
  treenode/.style = {align=center, inner sep=0pt, text centered,
    font=\sffamily},
  blacknode/.style = {treenode, circle, white, font=\sffamily\bfseries, draw=black,
    fill=black, text width=1.5em},
  whitenode/.style = {treenode, circle, black, draw=black, 
    text width=1.5em, very thick},
  rednode/.style = {treenode, circle, white, font=\sffamily\bfseries, draw=red,
    fill=red, text width=1.5em}
}

\begin{figure}[ht]\centering
  \begin{tikzpicture}[->,>=stealth',level/.style={sibling distance = 1cm/#1,  level distance = 1.5cm}] 
  \node[whitenode]{$X_j$}
      child[<-]{ node[whitenode] {$X_i$} 	}
  ; 
  \end{tikzpicture}
  \hspace{.1\textwidth}
  \begin{tikzpicture}[->,>=stealth',level/.style={sibling distance = 1cm/#1,  level distance = 1.5cm}] 
  \node[blacknode] {$X_k$}
      child[<-]{ node[whitenode] {$X_i$} 	}
      child[<-]{ node[whitenode] {$X_j$} 	}
  ; 
  \end{tikzpicture}
  \hspace{.1\textwidth}
  \begin{tikzpicture}[->,>=stealth',level/.style={sibling distance = 1cm/#1,  level distance = 1.5cm}] 
  \node[whitenode] {$X_j$}
      child[<-]{ node[rednode] {$X_k$}
          child[<-]{ node[whitenode] {$X_i$} }
      }
  ; 
  \end{tikzpicture}
  \hspace{.1\textwidth}
  \begin{tikzpicture}[->,>=stealth',level/.style={sibling distance = 1cm/#1,  level distance = 1.5cm}] 
  \node[rednode] {$X_k$}
      child{ node[whitenode] {$X_i$} 	}
      child{ node[whitenode] {$X_j$} 	}
  ; 
  \end{tikzpicture}
  \caption{The four possible conditions of~\cref{def:non:degeneracy}, from left (i) to right (iv). 
  The black nodes are the ones in $S$, the red ones (besides $X_i,X_j$) are not in $S$.}
\end{figure}

We shall also require some terminology: namely, the definition of the \emph{skeleton} of a Bayesian network as the underlying undirected graph of its structure. We can now state the main result of this section:
\begin{theorem} \label{thm:unknown-structure-identity}
There exists an algorithm with the following guarantees. Given the full description of a Bayes net $Q$ of degree at most $d$ which is $(c,C)$ balanced and $\gamma$-non-degenerate for $c=\tildeOmega{1/\sqrt{n}}$ and $C=\tildeOmega{d\eps^2/\sqrt{n}}$, parameter $\eps\in(0,1]$, and sample access to a distribution  $P$, promised to be a Bayes net of degree at most $d$ whose skeleton has no more edges than $Q$'s, the algorithm takes $\bigO{2^{d/2}\sqrt{n}/\eps^2+(2^d+ d \log n)/\gamma^2}$ samples from $P$, runs in time $\bigO{n}^{d+3}(1/\gamma^2+1/\eps^2)$, and distinguishes with probability at least $2/3$ between (i) $P=Q$ and (ii) $\normone{P-Q}  > \eps$.
\end{theorem}

In~\cref{lem:struct-equiv}, we show that these non-degeneracy conditions are enough to ensure identifiability of the structure, up to equivalence. In~\cref{prop:conditional:independence:tester}, we give a test for conditional independence specialized to Bernoulli random variables. 
In the last part, we provide a test for showing whether a non-degenerate Bayes net has a given structure using this conditional independence test, establishing~\cref{prop:struct-test}. We then can combine this structure test with our test for Bayes nets with known structure to obtain~\cref{thm:unknown-structure-identity}. This structure tester, which may be of independent interest, has the following guarantees,

\begin{theorem}\label{prop:struct-test}
Let $\structure$ be a structure of degree at most $d$ and $P$ be a Bayesian network with structure $\structure'$ that also has degree at most $d$ and whose skeleton has no more edges than $\structure$. Suppose that $P$ either (i) can be expressed as a Bayesian network with structure $\structure$ that is $\gamma$-non-degenerate with degree $d$; or (ii) cannot be expressed as a Bayesian network with structure $\structure$. Then there is an algorithm which can decide which case holds with probability $99/100$, given $\structure$, $\gamma$, and sample access to $P$. The algorithm takes $\bigO{(2^d+ d \log n)/\gamma^2}$ samples and runs in time $\bigO{n^{d+3}/\gamma^2}$.
\end{theorem}

\noindent Using the above theorem, we can prove the main result of this section:
\begin{proof}[Proof of~\cref{thm:unknown-structure-identity}]
We first invoke the structure test given in~\cref{algo:bn:structure-test}. If it accepts, we run the known structure test given in~\cref{theo:upper:knowndegreed:identity}. We accept only if both accept.

The correctness and sample complexity now both follow from~\cref{prop:struct-test} and~\cref{theo:upper:knowndegreed:identity}. 
Specifically, if the structure test accepts, then with high probability, 
we have that $Q$ can be expressed as a Bayes net with the same structure as $P$, 
and thus we have the pre-conditions for the known structure test. If either test rejects, then $P \neq Q$.
\end{proof}


\paragraph{Non-degeneracy and Equivalent Structures} \label{ssec:non-degeneracy}

The motivation behind the $\gamma$-non-degeneracy condition is the following: 
if $Q$ is $\gamma$-non-degenerate, then for any Bayesian network $P$ with degree at most $d$ 
that has $\dtv(P,Q) < \gamma$ we will argue that $P$ can be described using the same structure $\structure$ 
as we are given for $Q$. Indeed, the structure $\structure'$ of $P$ will have the property that $\structure$ and $\structure'$ 
both can be used to describe the same Bayesian networks, a property known as \emph{$I$-equivalence}. 
It will then remain to make this algorithmic, that is to describe how to decide whether $P$ can be described 
with the same structure as $Q$ or whether  $\dtv(P,Q) \geq \gamma$. 
Assuming we have this decision procedure, then if the former case happens to hold we can invoke our existing known-structure tester (or reject if the latter case holds).

We will require for our proofs the following definition:
\begin{definition}[$\lor$-structure] 
For a structure $\structure$, a triple $(i,j,k)$ is a \emph{$\lor$-structure} (also known as an \emph{immorality})
if $i$ and $j$ are parents of $k$ but neither $i$ nor $j$ is a parent of the other.
\end{definition} 
\noindent The following result, due to Verma and Pearl~\cite{VermaPearl:90}, will play a key role:
\begin{lemma}\label{lem:Verma-Pearl} 
Two structures $\structure$ and $\structure'$ are $I$-equivalent if and only if 
they have the same skeleton and the same $\lor$-structures.
\end{lemma}

Note that, for general structures $\structure$, $\structure'$, it may be possible to represent all Bayesian networks with structure $\structure$ as ones with structure $\structure'$, but not vice versa.  Indeed, this can easily be achieved by adding edges to $\structure$ to any node (if any) with less than $d$ parents.  
This is the rationale for the assumption in~\cref{prop:struct-test} that $\structure'$ has no more edges than $\structure$: as this assumption is then required for $\structure$ and $\structure'$ to be $I$-equivalent unless $\dtv(P,Q) \geq \gamma$.

We now prove that any Bayesian network $Q$ satisfiying the conditions of~\cref{prop:struct-test} and being non-degenerate with respect to a structure can in fact be expressed as having that structure.

\begin{lemma} \label{lem:struct-equiv}
Fix $\gamma > 0$. If $Q$ is a Bayesian network with structure $\structure'$ of degree at most $d$ that is $\gamma$-non-degenerate with respect to a structure $\structure$ with degree at most $d$ and $\structure'$ has no more edges than $\structure$, then $\structure$ and $\structure'$ are $I$-equivalent.
\end{lemma}

Note that $Q$ being $\gamma$-non-degenerate for some $\gamma > 0$ is equivalent to a set of conditional independence conditions all being false, since if $X_i$ and $X_j$ are not conditionally independent with respect to $X_S$, then there is a configuration $a$ such that $\Pr_Q[X_S =a] >0$ and $I(X_i;X_j \mid X_S=a) \geq 0$.
\begin{proof}
We first show that $\structure$ and $\structure'$ have the same skeleton and then that they have the same $\lor$-structures. We need the following:
\begin{claim}  \label{clm:super-parents}
Let $S$ be the set of parents of $X_i$ in a Bayesian network  $Q$ with structure $\structure$. Let $X_j$ be a node that is neither in $S$ nor a descendant of $X_i$. Then $X_i$ and $X_j$ are independent conditioned on $X_S$.
\end{claim}
\begin{proof}
Firstly, we note that there is a numbering of the nodes which is consistent with the DAG of $\structure$ such that any $j \in S$ has $j < i$. Explicitly, we can move $X_i$ and all its descendants to the end of the list of nodes to obtain this numbering.

Letting $D \eqdef \{1,\dots, i-1\}$, we have that, from the definition of Bayesian networks, $\Pr_Q[ X_i=1 \mid X_{D}=b ] = \Pr_Q[ X_i=1 \mid X_S=b_S ]$ for all configurations $b$ of $D$. Then for any configuration $a$ of $S'\eqdef S\cup\{j\}$, we have
\begin{align*}
\Pr_P[X_j=1 \mid X_{S'}=a]  
&=  \sum_{b:b_S=a} \Pr_P[X_j=1 \mid X_D = b] \Pr_P[X_D=b \mid X_{S'}=a] \\
&= \Pr_P[X_j=1 \mid X_S=a_S]  \sum_{b:b_S=a} \Pr_P[X_D=b \mid X_{S'}=a]\\
&= \Pr_P[X_j=1 \mid X_S=a_S] 
\end{align*}
concluding the proof.
\end{proof} 

Suppose for a contradiction that $(i,j)$ is an edge in the skeleton of $\structure$ but not in $\structure'$. 
Without loss of generality, we may assume that $X_j$ is not a descendant of $X_i$ in $\structure'$ (since otherwise we can swap the roles of $i$ and $j$ in the argument). Then as $X_i$ is not in $S$, the set of parents of $X_j$ in $\structure'$, either, by~\cref{clm:super-parents} $X_i$ and $X_j$ are independent conditioned on $X_S$. However since one of $X_i$ and $X_j$ is a parent of the other in $\structure$, condition (i) of $\gamma$-non-degeneracy gives that $X_i$ and $X_j$ are $\gamma$-far from independent conditioned on $X_S$. This is a contradiction, so all edges in the skeleton of $\structure$ must be edges of $\structure'$. But by assumption $\structure'$ has no more edges than $\structure$, and so they have the same skeleton.

Next we show that $i$ and $j$ have the same $\lor$-structures. Assume by contradiction that $(i,j,k)$ is a $\lor$-structure in $\structure$ but not $\structure'$. Since $\structure$ and $\structure'$ have the same skeleton, this cannot be because $X_i$ is the parent of $X_j$ or vice versa. Therefore, must be that at least one of $X_i$ or $X_j$ is the child of $X_k$ rather than its parent in $\structure'$.
As before, without loss of generality we may assume that $X_j$ is not a descendant of $X_i$ in $\structure'$. This implies that $X_k$ cannot be a child of $X_i$, since then $X_j$ must be a child of $X_k$  and so a descendant of $X_i$. Thus $S$, the set of parents of $X_i$ in $\structure'$, contains $X_k$ but not $X_j$; and~\cref{clm:super-parents} then implies that $X_i$ and $X_j$ are independent conditioned on $X_S$. However, in $\structure$ $X_k$ is the child of both $X_i$ and $X_j$ and so by condition (ii) of $\gamma$-non-degeneracy, we have that $X_i$ and $X_j$ are $\gamma$-far from independent conditioned on $X_S$. This contradiction shows that all $\lor$-structures in $\structure$ are $\lor$-structures in $\structure'$ as well.

Finally, we assume for the sake of contradiction that $(i,j,k)$ is a $\lor$-structure in $\structure'$ but not $\structure$.
Again without loss of generality, we assume that $X_j$ is not a descendant of $X_i$ in $\structure'$; and let $S$ be the parents of $X_i$ in $\structure'$. Note that neither $X_k$ nor $X_j$ is in $S$ since this is a $\lor$-structure.
Now by~\cref{clm:super-parents}, $X_i$ and $X_j$ are independent conditioned on $X_S$. In $\structure$, however, $(i,j,k)$ is not a $\lor$-structure yet $(i,k)$, $(j,k)$  (but not $(i,j)$) are in the skeleton of $\structure$.
Thus at least one of $X_i$, $X_j$ is a child of $X_k$. If only one is a child, then the other must be $X_k$'s parent. In the case of two children, we apply condition (iv) and in the case of a parent and a child, we apply condition (iii) of $\gamma$-non-degeneracy.
Either way, we obtain that, since $X_k$ is not in $S$, $X_i$ and $X_j$ are $\gamma$-far from independent conditioned on $X_S$. This contradiction shows that all $\lor$-structures in $\structure'$ are also $\lor$-structures in $\structure.$

We thus have all the conditions for~\cref{lem:Verma-Pearl} to apply and conclude that $\structure$ and $\structure'$ are $I$-equivalent.
\end{proof}


\paragraph{Conditional Independence Tester} \label{ssec:cond-ind-test}
We now turn to establishing the following proposition:
\begin{proposition}\label{prop:conditional:independence:tester}
  There exists an algorithm that, given parameters $\gamma, \tau > 0$, set of coordinates $S\subseteq [n]$ and coordinates $i,j\in[n]\setminus S$, as well as sample access to a distribution $P$ over $\{0,1\}^n$, satisfies the following. With probability at least $1-\tau$, the algorithm accepts when $X_i$ and $X_j$ are independent conditioned on $X_S$ and rejects when no distribution $Q$ with $\dtv(P,Q) < \gamma$ has this property (and may do either if neither cases holds). Further, the algorithm takes $O((2^d + \log(1/\tau))/\gamma^2)$ samples from $P$ and runs in time $O((2^d + \log(1/\tau))/\gamma^2)$.
\end{proposition}
\begin{figure}[h!]\small
  \begin{framed}
    \begin{description}
      \item[Input] $\gamma, \tau > 0$, $i,j \in \{0,1\}^n$, $S \subseteq \{0,1\}^n$ with $i,j \notin S$, and sample access to a distribution $P$ on $\{0,1\}^n$. 
      \item[-] Take $O((2^d + \log(1/\tau))/\gamma^2)$ samples from $P$. Let $\tilde P$ be the resulting empirical distribution.
      \item[For each] configuration $a \in \{0,1\}^{|S|}$ of $S$,
      \begin{description}
        \item[-] Compute the empirical conditional means $\mu_{i,a} = \E_{X \sim \tilde P}[X_i \mid X_S=a]$ and $\mu_{j,a} = \E_{X \sim \tilde P}[X_j\mid X_S=a]$.
        \item[-] Compute the conditional covariance $\Cov_{\tilde P}[X_i,X_j \mid X_S=a]=\E_{X \sim \tilde P}[(X_i-\mu_{i,a})(X_j - \mu_{j,a}) \mid X_S=a]$.
      \end{description}
    \item Compute the expected absolute value of the conditional covariance $\beta = \E_{Y \sim \tilde P}[|\Cov_{\tilde P}[X_i,X_j \mid X_S=Y_S]|]$.
    \item[If] $\beta \leq \gamma/3$,  return $\textsf{accept}$
    \item[Else]  return $\textsf{reject}.$
    \end{description}
  \end{framed}
  \caption{Testing whether $X_i$ and $X_j$ are independent conditioned on $S$ or are $\gamma$-far from being so.} \label{algo:cond:independ}
\end{figure}

\begin{proof} The algorithm is given in~\cref{algo:cond:independ}. Its sample complexity is immediate, and that the algorithm takes linear time in the number of samples is not hard to see. It remains to prove correctness.

To do so, define $D \eqdef S \cup \{i,j\}$. Let $P_D,\tilde P_D$ be the distributions of $X_S$ for $X$ distributed as $P,\tilde P$ respectively. Since $P_D$ is a discrete distribution with support size $2^{d+2}$, by standard results the empirical $\tilde P_D$ obtained from our $ O((2^{d+2} + \log 1/\tau)/\gamma^2)$ samples is such that $\dtv(P_D, \tilde P) \leq \gamma/10$ with probability at least $1-\tau$. We hereafter assume that this holds.

Note that the distribution $P_D$ determines whether $P$ is such that $X_i$ and $X_j$ are independent conditioned on $S$ or is $\delta$-far from being so for any $\delta$. Thus if these two nodes are $\gamma$-far from being conditionally independent in $P$,  then they are $\gamma$-far in $P_D$ and therefore are $9\gamma/10$-far in $\tilde P_D$. We now need to show that the expected absolute value of the conditional covariance is a good approximation of the distance from conditional independence, which is our next claim:
\begin{claim} \label{clm:good-conditional-dependence-metric}
For a distribution $Q$ on $\{0,1\}^n$, let $\gamma$ be the minimum $\gamma > 0$ such that $X_i$ and $X_j$ are $\gamma$-far from independent conditioned on $X_S$ in $Q$. Let  $\beta = \E_{Y \sim Q}[|\Cov_Q[X_i,X_j \mid X_S=Y_S]|]$. Then we have $\beta/3 \leq \gamma \leq 2\beta$.
\end{claim}
\begin{proof}
For simplicity, we assume  that $|D|=n$ and that we have only coordinates $i$, $j$ and $S$.

Firstly, we show that $\beta \leq \gamma$. By assumption, there is a distribution $R$ with $\dtv(Q,R)=\gamma$ which has that $X_i$ and $X_j$ are independent conditioned on $X_S$. Thus $R$ has $|\Cov_R[X_i,X_j \mid X_S=a]|=0$ for all configurations $a$.
Since  $0 \leq |\Cov_Q[X_i,X_j \mid X_S=a]| \leq 1$, it follows that $|\beta -  \E_{Y \sim R}[|\Cov_Q[X_i,X_j\mid X_S=Y_S]|]| \leq 3\dtv(Q,R)$ {as $\Cov_Q[X_i,X_j \mid X_S=Y_S] = \E[X_iX_j \mid X_S=Y_S] - \E[X_i \mid X_S=Y_S]\E[X_j \mid X_S=Y_S]$} and so $\beta \leq 3\gamma$.

Next we need to show that $\beta \leq 2\gamma$. To show this, we construct a distribution $S$ on $\{0,1\}^n$ with $\dtv(Q,S) = 2 \beta$ in which $X_i$ and $X_j$ are independent conditioned on $X_S$. 
Explicitly, for a configuration $a$ of $S$ and $b,c \in \{0,1\}$, we set
\begin{align*}
    \Pr_S[X_S=a,X_i=b,X_j=c] 
    &\eqdef \Pr_Q[X_S=a,X_i=b,X_j=c] + (-1)^{b+c} \Cov_Q[X_i,X_j \mid X_S=a] \Pr_Q[X_S=a] \; .
\end{align*}
For each configuration $a$, this increases two probabilities by $|\Cov_Q[X_i,X_j \mid X_S=a]|\Pr_Q[X_S=a]$ and decrease two probabilities by the same amount. Thus, provided that all probabilities are still non-negative (which we show below),  $S$ is a distribution with $\dtv(Q,S) = \sum_a 2|\Cov_Q[X_i,X_j \mid X_S=a]|\Pr_Q[X_S=a]=2\beta$. 

Now consider the conditional joint distribution of $X_i$, $X_j$for a given $a$. Let $p_{b,c} \eqdef \Pr_Q[X_i=b,X_j=c \mid X_S=a]$. Then the conditional covariance $\Cov_Q[X_i,X_j \mid X_S=a]$, which we denote by $\alpha$ for simplicity here, is
\begin{align*}
\alpha & = \E[X_iX_j \mid X_S=a] -\E[X_i\mid X_S=a] \E[X_j\mid X_S=a] \\
        & = p_{1,1} -(p_{1,0} + p_{1,1})(p_{0,1} + p_{1,1}) \\
        & = p_{1,1} (1- p_{1,0} -p_{0,1}  -  p_{1,1}) - p_{1,0} p_{0,1} \\
        & = p_{1,1} p_{0,0} - p_{1,0} p_{0,1} \;.
\end{align*}
In $S$, these probabilities change by $\alpha$. $p_{1,1}$ and $p_{0,0}$ are increased by $\alpha$ and $p_{0,1}$ and $p_{1,0}$ are decreased by it. Note that if $\alpha > 0$, $p_{1,1}$ and $p_{0,0}$ are at least $p_{1,1} p_{0,0} \geq \alpha$ and when $\alpha<0$, $p_{0,1}$ and $p_{1,0}$ are at least $p_{1,0} p_{0,1} \geq - \alpha$. Thus all probabilities in $S$ are in $[0,1]$, as claimed.

 A similar expression for the conditional covariance in $S$ to that for $\alpha$ above yields
\begin{align*} 
\Cov_{S}&[X_i,X_j \mid X_S=a]  \\
&= (p_{1,1} - \alpha)(p_{0,0} - \alpha) - (p_{1,0} +\alpha) (p_{0,1} + \alpha) \\
&= - (p_{0,0}+p_{1,1}+p_{0,1}+p_{1,0})\alpha + p_{1,1} p_{0,0} - p_{1,0} p_{0,1} \\
&= p_{1,1} p_{0,0} - p_{1,0} p_{0,1} - \alpha = 0 \;.
\end{align*}
Since $X_i$ and $X_j$ are Bernoulli random variables, the conditional covariance being zero implies that they are conditionally independent.
\end{proof}

\begin{description}
  \item[Completeness.] Suppose by contrapositive that the algorithm rejects.~\cref{clm:good-conditional-dependence-metric} implies that in $\tilde{P}$,  $X_i$ and $X_j$ are $\gamma/9$-far from independent conditioned on $X_S$. Thus they are $\gamma/9$ far in $\tilde P_D$ and, since $\dtv(P_D, \tilde P_D) \leq \gamma/10$, this implies that they are not conditionally independent in $P_D$. Thus, in $P$,  $X_i$ and $X_j$ are not independent conditioned on $X_S$. 

  \item[Soundness.] Now suppose that $X_i$ and $X_j$ are $\gamma$-far from independent conditioned on $X_S$ in $P$. Per the foregoing discussion, this implies that they are $(9\gamma/10)$-far from being so in $\tilde P_D$. Now~\cref{clm:good-conditional-dependence-metric} guarantees that $\E_{Y \sim \tilde P}[|\Cov_{\tilde P}[X_i,X_j|X_S=Y_S]|] \leq 9\gamma/20 > \gamma/3$, and therefore the algorithm rejects in this case. This completes the proof of correctness.
\end{description}
\end{proof}


\paragraph{Structure Tester} \label{ssec:structure-test}

Finally, we turn to the proof of~\cref{prop:struct-test}, analyzing the structure testing algorithm described in~\cref{algo:bn:structure-test}.

\begin{figure}[h!]\small
  \begin{framed}
    \begin{description}
      \item[Input] $\gamma > 0$, a structure $\structure$ and a Bayesian network $P$ 
      \item[-] Draw $O((2^d + d\log n)/\gamma^2)$ samples from $P$. Call this set of samples $S$.
      \item[For each] nodes $X_i$, $X_j$  and set $S$ of nodes with $|S| \leq d$ and $i,j \neq S$
      \begin{description}
        \item[If] one of the following conditions holds in structure $\structure$
        \begin{itemize}
          \item[(i)] $X_i$ is the parent of $X_j$,
          \item[(ii)] $S$ contains a node $X_k$ that is a child of both $X_j$ and $X_j$,
          \item[(iii)] $X_i$ is a grandparent of $X_j$ and there is a child of $X_i$ and parent of $X_j$, $X_k$ that is not in~$S$,
          \item[(iv)] $X_i$ and $X_j$ have a common parent $X_k$ that is not in $S$
        \end{itemize}
        \item[Then] run the conditional independence tester of~\cref{prop:conditional:independence:tester} (\cref{algo:cond:independ}) using the set of samples $S$ to test whether $X_i$ and $X_j$ are independent conditioned on $X_S$.
        \item[If] the conditional independence tester accepts, return $\textsf{reject}$.
      \end{description}
      \item[Otherwise] return $\textsf{accept}$.
    \end{description}
  \end{framed}
  \caption{Testing whether $P$ has structure as $\structure$ }\label{algo:bn:structure-test}
\end{figure}

\begin{proof}[Proof of~\cref{prop:struct-test}]
We first show correctness. There are at most $n^{d+2}$ possible choices of $X_i$, $X_j$ and $|S|$ 
and thus we run the conditional independence tester at most $n^{d+2}$ times. 
With $O((2^d + d\log n)/\gamma^2)$ samples, each test gives an incorrect answer with probability 
no more than $\tau=n^{-\Omega(d)}$. With appropriate choice of constants 
we therefore have that all conditional independence tests are correct with probability $99/100$. 
We henceforth condition on this, i.e., that all such tests are correct.

\begin{description}
  \item[Completeness.] If $P$ is $\gamma$-non-degenerate with respect to structure $\structure$ and degree $d$, then by the definition of non-degeneracy, for any $X_i$, $X_j$ and $S$ that satisfy one of conditions (i)--(iv) we have that $X_i$ and $X_j$ are $\gamma$-far from  independent conditioned on $X_S$. Thus every conditional independence test rejects and the algorithm accepts.

  \item[Soundness.] Now suppose by contrapositive that the algorithm accepts. For any $X_i$, $X_j$, and $S$ that satisfy one of conditions (i)--(iv), the conditional independence test must have rejected, that is any such $X_i$ and $X_j$ are not independent conditioned on such an $X_S$. Let $\gamma'$ be the mimuimum over all $X_i$, $X_j$, and $S$ that satisfy one of conditions (i)--(iv) and distributions $Q$ over $\{0,1\}$ such that $X_i$ and $X_j$ are independent conditioned on $X_S$ in $Q$, of the total variation distance between $P$ and $Q$. Since there are only finitely many such combinations of $X_i$, $X_j$, and $S$, this $\gamma'$ is positive. Thus $P$ is $\gamma'$-non-degenerate with respect to $\structure$ and $d$. Since we assumed that $P$ has a structure $\structure'$ with degree at most $d$ and whose skeleton has no more edges than that of $\structure$, we can apply~\cref{lem:struct-equiv}, which yields that $\structure$ and $\structure'$ are $I$-equivalent. Thus $P$ can indeed be expressed as a Bayesian network with structure $\structure$. This completes the proof of correctness.
\end{description}

To conclude, observe that we run the loop at most $n^{d+2}$ times, each using time at most $O((2^d + d\log n)/\gamma^2)$. The total running time is thus $O(n^{d+3}/\gamma^2)$.
\end{proof}

\section{Testing Closeness of Bayes Nets} \label{sec:closeness}




\subsection{Fixed Structure Bayes Nets} \label{sec:closeness-known}

We now establish the upper bound part of~\cref{thm:informal-identity-closeness-known} for closeness, namely testing closeness between two unknown Bayes nets 
with the same (known) underlying structure.

\begin{restatable}{theorem}{closenessknowndegreedub}\label{theo:upper:knowndegreed:closeness}
There exists a computationally efficient algorithm with the following guarantees. 
Given as input (i) a DAG $\structure$ with $n$ nodes and maximum in-degree $d$, 
(ii) a parameter $\eps > 0$, and (iii) sample access to two unknown $(c,C)$-balanced Bayes nets $P,Q$ with structure $\structure$, 
where $c=\tildeOmega{1/\sqrt{n}}$ and $C=\tildeOmega{d\eps^2/\sqrt{n}}$;
the algorithm takes $\bigO{2^{d/2}\sqrt{n}/\eps^2}$ samples from $P$ and $Q$, 
and distinguishes with probability at least $2/3$ between the cases $P=Q$ and $\normone{P-Q} > \eps$.
\end{restatable}
\begin{proof}
We choose $m\geq \alpha\frac{2^{d/2}\sqrt{n}}{\eps^2}$, where $\alpha>0$ is an absolute constant to be determined in the course of the analysis. Let $\structure$ and $P,Q$ be as in the statement of the theorem, for {$c\geq \beta\frac{\log n}{ \sqrt{n} } \geq \beta\frac{\log n}{m}$ and} $C\geq \beta\frac{d+\log n}{m}$, for some other absolute constant $\beta>0$.

The algorithm proceeds as follows: first, taking $m$ samples from both $P$ and $Q$, it computes for each parental configuration $(i,a)\in[n]\times\{0,1\}^d$ the number of times $\hat{N}_{i,a}$ and $\hat{M}_{i,a}$ this configuration was observed among the samples, for respectively $P$ and $Q$. If for any $(i,a)$ it is the case that $\hat{N}_{i,a}$ and $\hat{M}_{i,a}$ are not within a factor $4$ of each other, the algorithm returns $\textsf{reject}$. (Using the same number of samples, it also estimates $p_{i,a}$ and $q_{i,a}$ within an additive $1/3$, and applies the same standard transformation as before so that we can hereafter assume $p_{i,a},q_{i,a}\leq 2/3$ for all $(i,a)$.)

Note that $\expect{\hat{N}_{i,a}} = m \probaDistrOf{P}{\Pi_{i,a}}$  and $\expect{\hat{M}_{i,a}} = m \probaDistrOf{Q}{\Pi_{i,a}}$; given the $C$-balancedness assumption and by Chernoff and union bounds, with probability at least $9/10$ we have that $\hat{N}_{i,a}$ and $\hat{M}_{i,a}$ are within a factor $2$ of their expectation simultaneously for all $n2^d$ parental configurations. We hereafter condition on this (and observe that this implies that if $P=Q$, then the algorithm rejects in the step above with probability at most $1/10$).

The algorithm now draws independently $n2^d$ values $(M_{i,a})_{(i,a)}$, where $M_{i,a}\sim\poisson{\hat{N}_{i,a}}$; and takes fresh samples from $P,Q$ until it obtains $M_{i,a}$ samples for each parental configuration $\Pi_{i,a}$ (for each of the two distributions). If at any point the algorithm takes more than $10m$ samples, it stops and returns $\textsf{reject}$.

\noindent (Again, note that by concentration (this time of Poisson random variables)\footnote{Specifically, if $X\sim\poisson{\lambda}$ then we have $\probaOf{\abs{X-\lambda} > \lambda/2} = e^{-\Omega(\lambda)}$.}, our assumption that $\hat{N}_{i,a} \geq m\probaDistrOf{P}{\Pi_{i,a}}/2 \geq mC/2 = \beta(d+\log n)$ and a union bound, the algorithm will reject at this stage with probability at most $1/10$.)

Conditioning on not having rejected, we define for each parental configuration $\Pi_{i,a}$ the quantity $U_{i,a}$ (resp. $V_{i,a}$) as the number of samples from $P$ (resp. $Q$) among the first $M_{i,a}$ satisfying $\Pi_{i,a}$ for which $X_i=1$. In particular, this implies that $U_{i,a} \sim\poisson{p_{i,a}\hat{N}_{i,a}}$, $V_{i,a} \sim\poisson{q_{i,a}\hat{N}_{i,a}}$ (and are independent), and that the random variables $W_{i,a}$ defined below:

\[
W_{i,a} \eqdef \frac{(U_{i,a}-V_{i,a})^2 - (U_{i,a}+V_{i,a})}{U_{i,a}+V_{i,a}}
\]
are independent. We then consider the statistic $W$:
\[
  W \eqdef \sum_{i=1}^n \sum_{a\in\{0,1\}^d} W_{i,a}.
\]

\begin{claim}\label{claim:general:closeness:expectation}
  If $P=Q$, then $\expect{W} = 0$. Moreover, if $\normone{P-Q} > \eps$ then $\expect{W} > \frac{m\eps^2}{144}$.
\end{claim}
\begin{proof}
We start by analyzing the expectation of $W_{i,a}$, for any fixed $(i,a)\in[n]\times\{0,1\}^d$. The same argument as~\cref{claim:general:closeness:expectation} leads to conclude that
$\expect{W_{i,a}}=0$ if $P=Q$ (proving the first part of the claim), and that otherwise we have 
\begin{align}\label{eq:general:closeness:expectation}
  \expect{W_{i,a}}
  &\geq \frac{\min(1,mc)}{3}\hat{N}_{i,a} \frac{(p_{i,a}-q_{i,a})^2}{p_{i,a}+q_{i,a}} \notag\\
  &= \frac{1}{3}\hat{N}_{i,a} \frac{(p_{i,a}-q_{i,a})^2}{p_{i,a}+q_{i,a}} \notag\\
  &\geq \frac{2}{9}\hat{N}_{i,a} \frac{(p_{i,a}-q_{i,a})^2}{(p_{i,a}+q_{i,a})(2-p_{i,a}-q_{i,a})} 
\end{align}
(since $mc \geq \beta\log n \gg 1$ and $0<p_{i,a},q_{i,a} \leq 2/3$). Summing over all $(i,a)$'s and recalling that $\hat{N}_{i,a} \geq m\probaDistrOf{P}{\Pi_{i,a}}/2$, $\hat{N}_{i,a} \geq m\probaDistrOf{Q}{\Pi_{i,a}}/2$ yields the bound:
\begin{align*}
    \expect{W} &\geq \frac{m}{9}\sum_{(i,a)}\frac{\sqrt{\probaDistrOf{P}{\Pi_{i,a}}\probaDistrOf{Q}{\Pi_{i,a}}}  (p_{i,a}-q_{i,a})^2}{(p_{i,a}+q_{i,a})(2-p_{i,a}-q_{i,a})} \\
    &\geq \frac{m}{18} \hellinger{P}{Q}^2 \geq \frac{m}{18}\left( 1-\sqrt{1-\frac{1}{4}\normone{P-Q}^2} \right)
\end{align*}
(where we relied on~\cref{lemma:hellinger:bn} for the second-to-last inequality). This gives the last part of the claim, as the RHS is at least $\frac{m\eps^2}{144}$ whenever $\normone{P-Q}^2 > \eps^2$.
\end{proof}

\noindent We now bound the variance of our estimator:
\begin{claim}\label{claim:general:closeness:variance}
  $\Var[W] \leq n 2^{d+1} + 5\sum_{(i,a)} \hat{N}_{i,a} \frac{(p_{i,a}-q_{i,a})^2}{p_{i,a}+q_{i,a}} = O(n 2^d + \expect{W})$. In particular, if $P=Q$ then $\Var[W] \leq n 2^{d+1}$.
\end{claim}
\begin{proof}
We follow the proof of~\cref{claim:product:closeness:variance} to analyze the variance of $W_{i,a}$, obtaining a bound of $\Var[W_{i,a}] \leq 2+5\hat{N}_{i,a} \frac{(p_{i,a}-q_{i,a})^2}{p_{i,a}+q_{i,a}}$. Invoking \cref{eq:general:closeness:expectation} and summing over all $(i,a)\in[n]\times\{0,1\}^d$ then lead to the desired conclusion.
\end{proof}

\noindent The correctness of our algorithm will then follow for the two claims above:
\begin{lemma}
Set $\tau \eqdef \frac{\eps^2}{288}$. Then we have the following:
  \begin{itemize}
    \item If $\normone{P-Q} = 0$, then $\probaOf{ W \geq \tau am } \leq \frac{1}{10}$.
    \item If $\normone{P-Q} > \eps$, then $\probaOf{ W < \tau m } \leq \frac{1}{10}$.
  \end{itemize}
\end{lemma}
\begin{proof}
We start with the soundness case,i.e. assuming $\normone{P-Q} > \eps$, which by~\cref{claim:general:closeness:expectation} implies $\expect{W} > 2\tau$. Then, by Chebyshev's inequality,
\begin{align}
  \probaOf{ W < \tau m } &\leq \probaOf{ \expect{W} - W > \frac{1}{2}\expect{W} }
  \leq \frac{4\Var[W]}{\expect{W}^2} \notag\\
  &\leq \frac{8n 2^d}{\expect{W}^2}  + \frac{12}{5\expect{W}} \tag{\cref{claim:general:closeness:variance}} \\
  &=O\!\left( \frac{n 2^d}{\eps^4 m^2} + \frac{1}{m\eps^2}\right) \notag \;.
\end{align}
We want to bound this quantity by $1/10$, for which it is enough to have
$\frac{n 2^d}{\eps^4 m^2} \ll 1$ and $\frac{1}{m\eps^2} \ll 1$, which both hold for an appropriate choice of the absolute constant $\alpha>0$ 
in our setting of $m$.

Turning to the completeness, we suppose $\normone{P-Q} = 0$. Then, by Chebyshev's inequality, and invoking~\cref{claim:general:closeness:variance},
\begin{align*}
  \probaOf{ W \geq \tau m } &= \probaOf{ W \geq \expect{W} + \tau m }
  \leq \frac{\Var[W]}{\tau^2 m^2} \\
  &= O\!\left(\frac{n 2^d}{\eps^4 m^2}\right)
\end{align*}
which is no more than $1/10$ for the same choice of $m$.
 \end{proof}
Combining all the elements above concludes the proof, as by a union bound the algorithm is correct with probability at least $1-(\frac{1}{10}+\frac{1}{10}+\frac{1}{10}) > \frac{2}{3}$.
\end{proof}

\subsection{Unknown Structure Bayes Nets} \label{sec:closeness-unknown}

As for the case identity testing, we give a closeness tester for balanced non-degenerate Bayes Nets.
An additional assumption that we require is that the ordering of the nodes in the corresponding DAGs 
is known to the algorithm. Formally, we show:

\begin{theorem}\label{thm:informal-upper-closeness-unknown-nondegen}
There exists a computationally efficient algorithm with the following guarantees. 
Given as input (i) a parameter $\eps > 0$, (ii) an ordering of nodes $\pi$, and 
(ii) sample access to unknown $\gamma$-non-degenerate, $(c,C)$-balanced Bayes nets $P,Q$ 
such the structures of $P$ and $Q$ give the same ordering $\pi$ to nodes, 
where $c=\tildeOmega{1/\sqrt{n}}$ and $C=\tildeOmega{d\eps^2/\sqrt{n}}$;
the algorithm takes  $N=O(2^{d/2}\sqrt{n}/\eps^2 + 2^d /\gamma^2+d\log(n)/\gamma^2)$ samples from $P$ and $Q$, 
runs in time $n^d\poly(N)$, and distinguishes with probability at least $2/3$ 
between the cases $P=Q$ and $\normone{P-Q} > \eps$.
\end{theorem}
\begin{proof}
The argument's idea is the following: we first test that $P$ and $Q$ have the same skeleton. Since they have the same ordering, that suffices to show that they have the same structure. If this is the case, then we use our known-structure tester.

In more detail, given the $\gamma$-non-degeneracy assumption, for each pair of coordinates $i,j$ and set of coordinates $S$ with $\abs{S}\leq d$, we can, using the conditional independence tester from~\cref{prop:conditional:independence:tester}, test whether each of $P$ and $Q$ has $X_i$ and $X_j$ conditionally independent on $X_S$ or $\gamma$-far from it with $n^{-d-2}/100$ probability of error in $O((2^d+d\log(n))/\gamma^2)$ samples. 
Running tests on the same samples for all $n^{d+2}$ combinations of $i,j,S$, we can with probability at least $99/100$ correctly classify which of the two cases holds, for all $i,j,S$ that are either conditionally independent or $\gamma$-far. 
We note that by non-degeneracy, there is an edge between $i$ and $j$ in the structure defining $P$ only if $X_i$ and $X_j$ are $\gamma$-far from independent conditioned on $X_S$ for all $S$ (i.e., if there is no edge then there must exist a $S$ such that $X_i$ and $X_j$ are conditionally independent on $X_S$).
Therefore, assuming our conditional independence testers all answered as they should, we can use this to successfully identify the set of edges in the structure of $P$ (and thus, since we know the ordering, the entire structure).

Having determined the underlying structures of $P$ and $Q$, our tester rejects if these structures differ (as using~\cref{lem:struct-equiv},  $\gamma$-non-degeneracy implies that neither can equal a Bayes net with non-equivalent structure and fewer edges). Otherwise, we run the tester from~\cref{theo:upper:knowndegreed:closeness} (since we satisfy its assumptions) and return the result.
\end{proof}

\section{Identity and Closeness Testing for High-Degree Bayes Nets} \label{sec:it:ub}


Finally, in this section we give testing algorithms for identity and closeness of degree-$d$ Bayes nets with unknown structure, \emph{without} balancedness assumptions. 
Compared to the testing algorithm of~\cref{thm:informal-upper-identity-unknown-nondegen} and~\cref{thm:informal-upper-closeness-unknown-nondegen} (which work under such assumptions) the dependence on the number of nodes $n$ the testers in this section are suboptimal, 
they achieve the ``right'' dependence on the degree $d$ (specifically, $2^{d/2}$ for identity and $2^{2d/3}$ for closeness).
Hence, these testers achieve sub-learning sample complexity for the case that $d = \Omega(\log n)$.

\begin{theorem} \label{thm:unknown-structure-identity:informationtheoretic}
There exists two algorithms with the following guarantees:
\begin{itemize}

\item (Identity) Given the full description of a Bayes net $Q$ of degree at most $d$, 
parameter $\eps\in(0,1]$, and sample access to a distribution  $P$ promised to be a Bayes net (i) of degree at most $d$
and (ii) such that the structures of $P$ and $Q$ give the same ordering to nodes, 
the first takes $N=2^{d/2}\poly(n/\eps)$ samples from $P$, runs in time $n^d\poly(N)$, 
and distinguishes with probability at least $2/3$ between (i) $P=Q$ and (ii) $\normone{P-Q}  > \eps$.
 
\item (Closeness) Given parameter $\eps\in(0,1]$, and sample access to two distributions  $P,Q$ 
promised to be Bayes nets (i) of degree at most $d$
and (ii) such that the structures of $P$ and $Q$ give the same ordering to nodes, 
the second takes $N=2^{2d/3}\poly(n/\eps)$ samples from $P$ and $Q$, 
runs in time $n^d\poly(N)$, and distinguishes with probability at least $2/3$ 
between (i) $P=Q$ and (ii) $\normone{P-Q}  > \eps$.
 \end{itemize}
\end{theorem}
\begin{proof}
We first establish the first part of the theorem, namely the existence of an identity testing algorithm 
with optimal dependence on the degree $d$. The algorithm is quite simple: it goes over each set $S\subseteq [n]$ of at most $d+1$ coordinates, 
and checks that for each of them it holds that the conditional distributions $P,S,Q_S$ 
are equal (versus $\normone{P_S-Q_S} > \poly(\frac{\eps}{n})$).

Since $P_S$ and $Q_S$ are supported on sets of size $O(2^d)$, 
and as there are only $O(n^{d+1})$ such sets to consider, 
the claimed sample complexity suffices to run all tests correctly 
with probability $9/10$ overall (by a union bound).

The more difficult part is to argue correctness, 
that is to show that if the test accepts then one must have $\normone{P-Q} < \eps$. 
To do so, assume (without loss of generality) that
$H(P) \leq H(Q)$: we will show that $\dkl{P}{Q}$ is small, which implies that the $L_1$ distance is as well.

Let the ordering of $P$ be coordinates $1,2,3,\dots$. We note that $\dkl{P}{Q} = \sum_i \dkl{P_i}{Q_i  \mid  P_1,\dots,P_{i-1}}$ (i.e. the expectation over $P_1,\dots,P_{i-1}$ of the KL-divergence of the conditional distributions of $P_i$ and $Q_i$, conditioned on these $(i-1)$ coordinates). It thus suffices to show that each of these terms is small.

Let $S_i$ be the set of parents of node $i$ under $P$. We have that:
\begin{align*}
\dkl{P_i}{Q_i  \mid  P_1,\dots,P_{i-1}} &= \dkl{P_i}{Q_i  \mid  P_{S_i}} + \shortexpect_{P_1,\dots,P_{i-1}}[\dkl{Q_i\mid  P_{S_i}}{Q_i  \mid  P_1,\dots,P_{i-1}}] \;.
\end{align*}
Further, note that the fact that the tester accepted implies that $\dkl{P_i}{Q_i  \mid  P_{S_i}}$ is small. Now, we have that 
\begin{align*}
H(P) &= \sum_i H(P_i \mid P_1,\dots,P_{i-1}) = \sum_i H(P_i\mid P_{S_i})  \;, \\
H(Q) &= \sum_i H(Q_i\mid Q_1,\dots,Q_{i-1}) \\
&= \sum_i H(Q_i\mid Q_{S_i}) - \mutualinfo{Q_i}{Q_1,\dots,Q_{i-1} \mid Q_{S_i}} \;.
\end{align*}
But since the $(d+1)$-wise probabilities are close, we have that $H(P_i \mid P_{S_i})$ is close to $H(Q_i \mid Q_{S_i})$ (up to an additive $\poly(\eps/n)$). Therefore, for each $i$, we have that $\mutualinfo{Q_i}{Q_1,\dots,Q_{i-1} \mid Q_{S_i}} = \poly(\eps/n)$. In order to conclude, let us compare $\mutualinfo{Q_i}{Q_1,\dots,Q_{i-1} \mid Q_{S_i}}$ and $\shortexpect_{P_1,\dots,P_{i-1}}[\dkl{Q_i\mid  P_{S_i}}{Q_i  \mid  P_1,\dots,P_{i-1}}]$. The former is the sum, over assignments $y\in\{0,1\}^{i-1}$ consistent with an assignment $x\in\{0,1\}^{S_i}$, of
\begin{align*}
  \probaOf{ Q_{S_i} = x } H(Q_i \mid Q_{S_i} = x) + \probaOf{ Q_{1,\dots,i-1} = y } H(Q_i \mid Q_{1,\dots,i-1}=y ).
\end{align*}
The latter is the sum over the same $y$'s of
\begin{align*}
  \probaOf{ P_{S_i} = x } H(Q_i \mid Q_{S_i} = x) + \probaOf{ P_{1,\dots,i-1} = y} H(Q_i \mid Q_{1,\dots,i-1}=y ) \;.
\end{align*}
But because of the $d$-way probability similarities, 
the terms $\probaOf{ P_{S_i} = x }$ and $\probaOf{ Q_{S_i} = x }$ terms are very close, 
within an additive $\poly(\eps/n)$.

\emph{(Here we use the extra assumption that $P$ and $Q$ use the same ordering.)}
Denote by $T_i$ the parents of $i$ under the topology of $Q$. 
Then $H(Q_i \mid Q_{1,\dots,i-1}=y)$ depends only on the values of the coordinates in $T_i$. Thus the last part
of the sum is a sum over $z$ of $\probaOf{ Q_{T_i} = z } H(Q_i \mid Q_{T_i} =z)$ and $\probaOf{ P_{T_i} = z } H(Q_i \mid Q_{T_i} =z)$, 
which are  also close by a similar argument.
Thus, 
\begin{align*}
\shortexpect_{P_1,\dots,P_{i-1}}&[\dkl{Q_i\mid  P_{S_i}}{Q_i  \mid  P_1,\dots,P_{i-1}}]\\
 &= \mutualinfo{ Q_i }{ Q_1,\dots,Q_{i-1} \mid Q_{S_i} } + \poly(\eps/n) \\
 &= \poly(\eps/n) \;.
\end{align*}
This implies that $P,Q$ are close in KL-divergence, and therefore in $L_1$.
\medskip

The second part of the theorem, asserting the existence of a closeness testing algorithm 
with optimal dependence on $d$, will be very similar. Indeed, by the proof above it suffices 
to check that the restrictions of $P$ and $Q$ to any set of $(d+3)$-coordinates are $\poly(\eps/n)$-close. 
Using known results~\cite{CDVV14}, this can be done for any specific collection of $d+3$ coordinates 
with $N$ samples in $\poly(N)$ time, and high probability of success, 
implying the second part of the theorem.
\end{proof}

\bibliographystyle{alpha}
\bibliography{references}

\appendix


\section{Sample Complexity of Learning Bayesian Networks} \label{sec:learn}

In this section, we derive tight bounds on the sample complexity of learning Bayes nets.
Recall that $n$ will denote the number of nodes and $d$ the maximum in-degree; 
before stating the results and providing their proofs, we outline the high-level idea of the argument.

If the structure is known, there is a very
simple learning algorithm involving finding the empirical values for
the relevant conditional probabilities and constructing the Bayes net
using those terms. By computing the expected KL-Divergence between
this hypothesis and the truth, we can show that it is possible to
learn an $\eps$-aproximation in $\tildeO{2^d n /\eps^2}$ samples. Learning a
Bayes net with unknown structure seems substantially more challenging,
but at least the sample complexity is no greater. In particular, we
can simply go over all possible topologies and come up with one
hypothesis for each, and use a standard tournament to pick out the
correct one. \cref{sec:learn-upper} contains the details of both.

We also prove in~\cref{sec:learn-lower} a matching lower bound (up to logarithmic factors).
For this we can even consider Bayesian networks of fixed topology. In
particular, we consider the topology where each of the last
$(n-d)$-coordinates depend on all of the first $d$ (which we can assume to
be uniform and independent). The distribution we end up with is what
we call a disjoint mixture of $2^d$ product distributions. In particular
for each of the $2^d$ possible combinations of the first $d$ coordinates,
we have a (potentially different) product distribution over the
remaining coordinates. In order to learn our final distribution we
must learn at least half of these product distributions to $O(\eps)$
error. This requires that we obtain at least $\Omega((n-d)/\eps^2)$
samples from $\Omega(2^d)$ of the parts of our mixture. Thus, learning
will require $\Omega(2^d (n-d)/\eps^2))$ total samples.
\medskip

\subsection{Sample Complexity Upper Bound} \label{sec:learn-upper}

\paragraph{Known Structure} The algorithm will return a Bayes net $Q$ with the same (known) structure as the unknown $P$. Define $p_{i,a}$ as the probability (under $P$) that the $i$-th coordinate is $1$, conditioned on the parental configuration for $i$ being equal to $a$ (denoted $\Pi_{i,a}$); and $q_{i,a}$ the corresponding parameter for our hypothesis $Q$.
 
Given this, the algorithm is simple. 
We set $m=O(\frac{2^d n}{\eps^2}\log(2^d n))$, and consider separately two sets of configurations $(i,a)$:
\begin{itemize}
  \item the \emph{light} ones, for which $p_{i,a}{(1-p_{i,a})}\probaDistrOf{P}{ \Pi_{i,a} } \leq \frac{\eps}{2n 2^d}$;
  \item the \emph{heavy} ones, for which $p_{i,a}{(1-p_{i,a})}\probaDistrOf{P}{ \Pi_{i,a} } > \frac{\eps}{2n 2^d}$.
\end{itemize}

We take $m$ samples from $P$, and let $q_{i,a}$ be the empirical conditional probabilities. For any $(i,a)$ for which we see less than $\tau\eqdef O((\log n)/\eps)$ samples, we set $q_{i,a}=0$. Note that with high probability, for the right choice of constant in the definition of $\tau$, for all heavy configurations simultaneously  the estimate $q_{i,a}$ will (i) not be set to zero; and (ii) be {such that $q_{i,a}(1-q_{i,a})$ is within a factor $2$ of the real value $p_{i,a}(1-p_{i,a})$}. Conversely, each light configuration will either have $q_{i,a}$ be zero, or {$q_{i,a}(1-q_{i,a})$ within a factor $2$ of $p_{i,a}(1-p_{i,a})$}.

Conditioned on this happening, we can analyze the error. Note first that by our definition of light configuration and the triangle inequality, the sum of $L_1$ error over all light configurations will be at most $\eps/2$. We now bound the contribution to the error of the heavy configurations. (In what follows, we implicitly restrict ourselves to these.)

By Pinsker's inequality, $\normone{P-Q}^2 \leq 2 \dkl{P}{Q} \leq \sum_{(i,a)} \probaDistrOf{P}{ \Pi_{i,a} }  \frac{(p_{i,a}-q_{i,a})^2}{q_{i,a}(1-q_{i,a})}$. It only remains to bound the expected KL-divergence \textit{via} the above bound.
For each pair $(i,a)$, on expectation we see $m \probaDistrOf{P}{ \Pi_{i,a} }$ samples satisfying the parental configuration. The expected squared error $(p_{i,a}-q_{i,a})^2$ between $p_{i,a}$ and our estimate $q_{i,a}$ is then 
$O(\frac{p_{i,a}{(1-p_{i,a})}}{m\probaDistrOf{P}{\Pi_{i,a} }} )$ by a standard variance argument. This implies that the expected KL-divergence is bounded as
\[
    \expect{ \dkl{P}{Q} } = \frac{1}{m}\sum_{(i,a)} O\!\left(\expect{ \frac{ p_{i,a}(1-p_{i,a})}{q_{i,a}(1-q_{i,a}) } } \right) 
\]
and it will be sufficient to argue that $\frac{ p_{i,a}(1-p_{i,a})}{q_{i,a}(1-q_{i,a}) } = O\!\left(1\right)$ with high probability (as then the RHS will be bounded as $O\!\left(\frac{2^d n}{m}\right)$, and thus taking $m=O(2^d n /\eps^2)$ will be sufficient). {It is enough to show that, with high probability, $q_{i,a}(1-q_{i,a})$ is within a factor $2$ of $p_{i,a}(1-p_{i,a})$} simultaneously for all $(i,a)$; but this is exactly the guarantee we had due to the ``heaviness'' of the configuration. Therefore, the contribution to the squared $L_1$ error from the heavy configuration is $O\!\left(\frac{2^d n}{m}\right) = O\!\left(\frac{\eps^2}{d + \log n}\right) \ll \eps^2$. Gathering the two sources of error, we get an $L_1$-error at most $\eps$ with high probability, as claimed.


\paragraph{Unknown Structure}
For this upper bound, we reduce to the previous case of known structure. 
Indeed, there are only $N=n^{O(dn)}$ possible max-degree-$d$ candidate structures: 
one can therefore run the above algorithm (with probability of failure $9/10$) 
for each candidate structure on a common sample from $P$ of size $O(2^d n /\eps^2)$, 
before running a tournament (cf.~\cite{DDS12stoc, DDS15}) to select a hypothesis with error $O(\eps)$ 
(which is guaranteed to exist as, with probability at least $9/10$, the ``right'' candidate structure 
will generate a good hypothesis with error at most $\eps$). The total sample complexity will then be
\[
O\!\left( \frac{2^d n}{\eps^2} + \frac{\log N}{\eps^2} \right) = O\!\left( \frac{2^d n}{\eps^2} + \frac{dn \log n}{\eps^2} \right) \;,
\]
which is $O\!\left( {2^d n}/{\eps^2}\right)$ for $d=\Omega(\log n)$.

\subsection{Sample Complexity Lower Bound}  \label{sec:learn-lower}

Our lower bound will be derived from families of Bayes nets with the following structure:
The first $d$ nodes are all independent (and will in fact have marginal probability $1/2$ each), 
and will form in some sense a ``pointer'' to one of $2^d$ arbitrary product distributions. 
The remaining $n-d$ nodes will each depend on all of the first $d$. 
The resulting distribution is now an (evenly weighted) disjoint mixture of $2^d$ product distributions 
on the $(n-d)$-dimensional hypercube. In other words, there are $2^d$ product distributions 
$p_1,\dots,p_{2^d}$, and our distribution returns a random $i$ (encoded in binary) 
followed by a random sample form $p_i$. Note that the $p_i$ can be arbitrary product distributions.

We show a lower bound of $\Omega(2^{d} n/\eps^2)$ lower bound for learning, 
whenever $d < (1-\Omega(1))n$.

Let $\mathcal{C}_\eps$ be a family of product distributions over $\{0,1\}^{n-d}$ which is hard to learn, 
i.e., such that any algorithm learning $\mathcal{C}_\eps$ to accuracy $4\eps$ 
which succeeds with probability greater than $1/2$ must have sample complexity 
$\Omega(n/\eps^2)$. We will choose each $p_i$ independently and uniformly at random from $\mathcal{C}_\eps$.

Assume for the sake of contradiction that there exists an algorithm $\mathcal{A}$ to learn the resulting 
disjoint mixture of $p_i$'s to error $\eps$ with $o(2^d (n-d)/\eps^2)$ samples. 
Without loss of generality, this algorithm can be thought of returning as hypothesis a disjoint 
union of some other distributions $q_1,\dots,q_{2^d}$, in which case the error incurred is 
$\normone{p-q} = \frac{1}{2^d}\sum_{i=1}^{2^d} \normone{p_i-q_i}$.

By assumption on its sample complexity, for at least half of the indices $1\leq i \leq 2^d$ 
the algorithm obtains $o(n/\eps^2)$ samples from $p_i$. This implies that in expectation, 
by the fact that $\mathcal{C}_\eps$ was chosen hard to learn, for at least half of these indices 
it will be the case that $\normone{p_i-q_i} > 16\eps$ (as each of these $i$'s is such 
that $\normone{p_i-q_i} > 16\eps$ with probability at least $1/2$). 
This in turn shows that in expectation, $\normone{p-q} = \frac{1}{2^d}\sum_{i=1}^{2^d} \normone{p_i-q_i} > \frac{4\eps}{4} = \eps$, 
leading to a contradiction.

\section{Omitted Proofs \texorpdfstring{from~\cref{ssec:product-lower}}{from~\cref{ssec:product-lower}}} \label{ssec:product-lower:proofs}
We give in this section the proofs of the corresponding lemmas from~\cref{{ssec:product-lower}}.

\begin{proof}[Proof of Lemma~\ref{lemma:noinstances:far}]
By symmetry, it is sufficient to consider the distribution $P\eqdef \bigotimes_{j=1}^n \bernoulli{\frac{1}{2} + \frac{\eps}{\sqrt{n}}}$. 
We explicitly bound from below the expression of $\normone{P-U}$:
\begin{align*}
  \normone{P-U}  
  &= \sum_{x\in\{0,1\}^n} \bigabs{ \left( \frac{1}{2} + \frac{\eps}{\sqrt{n}} \right)^{\abs{x}}\left( \frac{1}{2} - \frac{\eps}{\sqrt{n}} \right)^{n-\abs{x}} - \frac{1}{2^n} } \\
  &= \frac{1}{2^n}\sum_{k=0}^n \binom{n}{k}\bigabs{ \left( 1 + \frac{2\eps}{\sqrt{n}} \right)^{k}\left( 1 - \frac{2\eps}{\sqrt{n}} \right)^{n-k} - 1 } \\
  &\geq \frac{1}{2^n}\sum_{k=\frac{n}{2}+\sqrt{n}}^{\frac{n}{2}+2\sqrt{n}} \binom{n}{k}\bigabs{ \left( 1 + \frac{2\eps}{\sqrt{n}} \right)^{k}\left( 1 - \frac{2\eps}{\sqrt{n}} \right)^{n-k} - 1 } \\
  &\geq \frac{C}{\sqrt{n}}\sum_{k=\frac{n}{2}+\sqrt{n}}^{\frac{n}{2}+2\sqrt{n}} \bigabs{ \left( 1 + \frac{2\eps}{\sqrt{n}} \right)^{k}\left( 1 - \frac{2\eps}{\sqrt{n}} \right)^{n-k} - 1 } \;,
\end{align*}
where $C>0$ is an absolute constant. We handle each summand separately: fixing $k$, and writing $\ell \eqdef k-\frac{n}{2} \in [\sqrt{n},2\sqrt{n}]$,
\begin{align*}
  \left( 1 + \frac{2\eps}{\sqrt{n}} \right)^{k}\left( 1 - \frac{2\eps}{\sqrt{n}} \right)^{n-k}
  &=  \left( 1 - \frac{4\eps^2}{n} \right)^{n/2}\left( \frac{1 + \frac{2\eps}{\sqrt{n}}}{1 - \frac{2\eps}{\sqrt{n}}}\right)^\ell \\
  &\geq \left( 1 - \frac{4\eps^2}{n} \right)^{n/2}\left( \frac{1 + \frac{2\eps}{\sqrt{n}}}{1 - \frac{2\eps}{\sqrt{n}}}\right)^{\sqrt{n}} \\
  &\xrightarrow[n\to\infty]{} e^{4\eps-2\eps^2} \;,
\end{align*}
so that each summand is bounded by a quantity that converges (when $n\to \infty$) to $e^{4\eps-2\eps^2}-1 > 4\eps-2\eps^2 > 2\eps$, 
implying that each is $\Omega(\eps)$. 
{Combining the above gives}
\begin{align*}
  \normone{P-U} &\geq \frac{C}{\sqrt{n}}\sum_{k=\frac{n}{2}+\sqrt{n}}^{\frac{n}{2}+2\sqrt{n}} \Omega(\eps) = \Omega(\eps)
\end{align*}
as claimed.
\end{proof}

\begin{proof}[Proof of Lemma~\ref{lemma:lb:key}]
By symmetry it suffices to consider only the case of $i=1$, so that we let $A=N_1$. The first step is to bound from above $\mutualinfo{X}{A}$ by a more manageable quantity: 
\begin{fact}\label{fact:mutualinfo:expansion}
We have that
  \begin{equation}
  \mutualinfo{X}{A} \leq \sum_{a=0}^\infty \probaOf{ A=a }\mleft(1-\frac{ \probaCond{A=a }{ X=1}  }{ \probaCond{A=a }{ X=0} }\mright)^2.
  \end{equation}
\end{fact}
\noindent The proof of this fact is given in~\cref{sec:misc:proofs}. It will then be sufficient to bound the RHS, which we do next. Since $A\sim \poisson{kp_1}$ with $p_1 = 1/2$ if $X=0$ and uniformly $\frac{1}{2}\pm\frac{\eps}{\sqrt{n}}$ if $X=1$, a simple computation yields that
\begin{align*}
\probaCond{A=\ell }{ X=0 } &= e^{-k/2}\frac{(k/2)^\ell}{\ell!} \\
\probaCond{A=\ell }{ X=1 } 
  &= e^{-k/2}\frac{(k/2)^\ell}{\ell!}  \!\left(\frac{e^{-k\eps/\sqrt{n}}(1+2\frac{\eps}{\sqrt{n}})^\ell+e^{k\eps/\sqrt{n}}(1-2\frac{\eps}{\sqrt{n}})^\ell}{2} \right).
\end{align*}
Writing out $\varphi(\eps,\ell) = \frac{ \probaCond{ A=\ell }{ X=1 } }{\probaCond{A=\ell }{ X=0 }}$ 
as a function of $\eps/\sqrt{n}$, we see that it is even. Thus, expanding it as 
a Taylor series in $\alpha\eqdef\eps/\sqrt{n}$, 
the odd degree terms will cancel. Moreover, we can write
\begin{align*}
\sum_{\ell=0}^\infty \probaOf{ A=\ell }\left(1-\varphi(\eps,A)\right)^2 
&= \shortexpect_A\left[ \left(1-\varphi(\eps,A)\right)^2 \right] \\
&= \frac{1}{2}\shortexpect_{A\sim \poisson{k/2}}\left[ \left(1-\varphi(\eps,A)\right)^2 \right]  \\
&\quad+ \frac{1}{4}\shortexpect_{A\sim \poisson{k(1/2+\alpha)}}\left[ \left(1-\varphi(\eps,A)\right)^2 \right]  \\
&\quad+ \frac{1}{4}\shortexpect_{A\sim \poisson{k(1/2-\alpha)}}\left[ \left(1-\varphi(\eps,A)\right)^2 \right] \;.
\end{align*}
Now, we can rewrite
\begin{align*}
(1-\varphi(\eps,A))^2 
&= \left( 1 - \frac{e^{-k\alpha}(1+2\alpha)^\ell+e^{k\alpha}(1-2\alpha)^\ell}{2} \right)^2 \\
&= 1 - \left( e^{-k\alpha}(1+2\alpha)^\ell +e^{k\alpha}(1-2\alpha)^\ell \right) + \frac{e^{-2k\alpha}(1+2\alpha)^{2\ell}+ 2(1-4\alpha^2)^\ell +e^{2k\alpha}(1-2\alpha)^{2\ell}}{4} \;.
\end{align*}
For $b \in \{-1,0,1\}$, we have $\shortexpect_{A\sim \poisson{k(1/2+b\alpha)}}\left[ 1 \right] = 1$ (!), and (from the MGF of a Poisson distribution)
\begin{align*}
  e^{-k\alpha}\shortexpect_{A\sim \poisson{k(1/2+b\alpha)}}\left[ (1+2\alpha)^A \right]
  &= e^{b\cdot 2\alpha^2 k} \\
  e^{k\alpha}\shortexpect_{A\sim \poisson{k(1/2+b\alpha)}}\left[ (1-2\alpha)^A \right]
  &= e^{-b\cdot 2\alpha^2 k} \;,
\end{align*}
as well as
\begin{align*}
  e^{-2k\alpha}\shortexpect_{A\sim \poisson{k(1/2+b\alpha)}}\left[ (1+2\alpha)^{2A} \right] 
  &= e^{2k\alpha^2(1+2b+2b\alpha)} \\
  e^{2k\alpha}\shortexpect_{A\sim \poisson{k(1/2+b\alpha)}}\left[ (1-2\alpha)^{2A} \right] 
  &= e^{2k\alpha^2(1-2b+2b\alpha)} \\
  2\shortexpect_{A\sim \poisson{k(1/2+b\alpha)}}\left[ (1-4\alpha^2)^{A} \right] 
  &= 2e^{-2k\alpha^2 - 4kb\alpha^3} \;.
\end{align*}
Gathering the terms, we get
\begin{align*}
\shortexpect_A\left[ \left(1-\varphi(\eps,A)\right)^2 \right]  
&= \frac{1}{16}\Big( -4(e^{2k\alpha^2}+e^{-2k\alpha^2}) + e^{2k\alpha^2(3+2\alpha)}\\
&\quad+e^{2k\alpha^2(-1+2\alpha)} + 2e^{-2k\alpha^2 - 4k\alpha^3}  \\
&\quad+ e^{-2k\alpha^2(1+2\alpha)}+e^{2k\alpha^2(3-2\alpha)} + 2e^{-2k\alpha^2 + 4k\alpha^3}
 \Big) \\
 &= O( k^2 \alpha^4 ) \tag{Taylor series expansion in $\alpha$} \;,
\end{align*}
giving that indeed
$
\sum_{\ell=0}^\infty \probaOf{ A=\ell }\left(1-\varphi(\eps,A)\right)^2
 = O\left( \frac{\eps^4 k^2}{n^2} \right). 
$
This completes the proof.
\end{proof}

\begin{proof}[Proof of Lemma~\ref{lemma:noinstances:far:unbalanced}]
Using Hellinger distance as a proxy will only result in an $\Omega(\eps^2)$ lower bound on the distance, 
so we compute it explicitly instead: in what follows, $e^{(j)}\in\{0,1\}^n$ denotes the basis vector with $e^{(j)}_i = \indic{\{i=j\}}$. 
Fix any vector $b=(b_1,\dots, b_n)\in\{0,1\}^n$ such that $\abs{b}\in[n/3, 2n/3]$, 
and let $P$ be the corresponding distribution from the support of $\dno$.
\begin{align*}
  \normone{P-P^\ast}  
  &\geq \sum_{j=1}^n \abs{ P(e^{(j)}) - P^\ast(e^{(j)}) } \\
  &= \sum_{j=1}^n \Big\lvert \frac{1+(-1)^{b_j}\eps}{n}\prod_{i\neq j}\left(1-\frac{1+(-1)^{b_i}\eps}{n}\right)  - \frac{1}{n}\left(1-\frac{1}{n}\right)^{n-1} \Big\rvert \\
  &= \frac{1}{n}\left(1-\frac{1}{n}\right)^{n-1}\sum_{j=1}^n \Big\lvert ( 1+(-1)^{b_j}\eps )\prod_{i\neq j}\left(1-\frac{(-1)^{b_i}\eps}{n-1}\right)  - 1 \Big\rvert \;.
\end{align*}
Each summand can be bounded from above as follows:
\begin{align*}
  \left( 1 - \frac{\eps}{n-1}\right)^{2n/3} \leq \prod_{i\neq j}\left(1-\frac{(-1)^{b_i}\eps}{n-1}\right) \leq \left( 1+ \frac{\eps}{n-1}\right)^{2n/3} \;,
\end{align*}
where the last inequality follows from our assumption on $\abs{b}$. In turn, this gives that
\begin{itemize}
  \item If $b_j = 0$, 
  \begin{align*}
  (1+(-1)^{b_j}\eps )\prod_{i\neq j}\left(1-\frac{(-1)^{b_i}\eps}{n-1}\right)  - 1 
    &\geq (1+\eps)\left( 1 - \frac{\eps}{n-1}\right)^{2n/3} - 1 =  \Omega\left(\eps\right) \;.
  \end{align*}
  \item If $b_j = 1$, 
  \begin{align*}
  1 - (1+(-1)^{b_j}\eps )\prod_{i\neq j}\left(1-\frac{(-1)^{b_i}\eps}{n-1}\right) 
  &\geq 1- (1-\eps)\left( 1 + \frac{\eps}{n-1}\right)^{2n/3} = \Omega\left(\eps\right) \;.
  \end{align*}
\end{itemize}
Since $\frac{1}{n}\left(1-\frac{1}{n}\right)^{n-1} = \frac{e^{-1}+o(1)}{n}$, we get $\normone{P-P^\ast} = \Omega(\eps)$. 
The lemma now follows from the fact that a uniformly random $b\in\{0,1\}^n$ satisfies $\abs{b}\in[n/3, 2n/3]$ with probability $1-2^{-\Omega(n)}$.
\end{proof}

\section{Omitted Proofs \texorpdfstring{from~\cref{sec:prelim}}{from~\cref{sec:prelim}}} \label{sec:prelim:distances:proofs}
We provide in this section the proofs of the inequalities stated in the preliminaries (\cref{sec:prelim}).

\begin{proof}[Proof of Lemma~\ref{lem:prod:kl}]
Using properties of the KL-divergence:
\begin{align*}
  \dkl{P}{Q} &= \dkl{ P_1\otimes\dots\otimes P_n }{ Q_1\otimes\dots\otimes Q_n } = \sum_{i=1}^n \dkl{P_i}{Q_i}
\end{align*}
so it suffices to show that $2(p_i-q_i)^2 \leq \dkl{P_i}{Q_i} \leq \frac{(p_i-q_i)^2}{q_i(1-q_i)}$ for all $i\in[n]$.
Since $P_i = \bernoulli{p_i}$ and $Q_i = \bernoulli{q_i}$, we can write
\begin{align*}
    \dkl{P_i}{Q_i} &= p_i\ln \frac{p_i}{q_i} + (1-p_i)\ln \frac{1-p_i}{1-q_i}
\end{align*}
Defining $f\colon(0,1)^2\to\R$ as $f(x,y) \eqdef x\ln \frac{x}{y} + (1-x)\ln \frac{1-x}{1-y}$, we thus have to show that
$
      2(x-y)^2 \leq f(x,y) \leq \frac{(x-y)^2}{y(1-y)}
$
for all $x,y\in(0,1)$.

We begin by the upper bound: fixing any $x,y\in (0,1)$ and setting $\delta\eqdef x-y$, we have
\begin{align*}
    f(x,y) &= x\ln \left(1+\frac{\delta}{y}\right) + (1-x)\ln \left(1-\frac{\delta}{1-y}\right) \\
    &\leq \frac{x}{y} \delta - \frac{1-x}{1-y} \delta = \frac{\delta^2}{y(1-y)}
\end{align*}
from $\ln(1+u)\leq u$ for all $u\in(-1,\infty)$.

Turning to the lower bound, we fix any $y\in(0,1)$ and consider the auxiliary function $h_y\colon(0,1)\to \R$ defined by
$
h_y(x) = f(x,y) - 2(x-y)^2.
$
From $h_y''(x) = \frac{(1-2x)^2}{x(1-x)} \geq 0$, we get that $h_y$ is convex, i.e. $h_y'$ is non-decreasing. Since $h_y'(x) = \ln \frac{x(1-y)}{(1-x)y}-4(x-y)$, we have $h'_y(y) = 0$, and in turn we get that $h_y$ is non-increasing on $(0,y]$ and non-decreasing on $[y,1)$. Since $h_y(y) = 0$, this leads to $h_y(x) \geq 0$ for all $x\in(0,1)$, i.e. $f(x,y) \geq 2(x-y)^2$.
\end{proof}

\begin{proof}[Proof of Lemma~\ref{lem:prod:dtv:hellinger}]
Recall that for any pair of distributions, 
we have that $\hellinger{P}{Q}^2 \leq \frac{1}{2}\normone{P-Q} \leq \sqrt{2} \hellinger{P}{Q},$ 
where $\hellinger{P}{Q}$ denotes the Hellinger distance between $P,Q$. Therefore, it is enough to show that
\begin{equation}
  \min\left(c', \frac{1}{4} \sum_{i=1}^n (p_i-q_i)^2 \right) \leq \hellinger{P}{Q}^2 \leq \sum_{i=1}^n \frac{(p_i-q_i)^2}{q_i(1-q_i)}
\end{equation}
for some absolute constant $c' >0$. (We will show $c' = 1-e^{-3/2} \simeq 0.78$.)

Since $P,Q$ are product measures,
\begin{align*}
\hellinger{P}{Q}^2 &= 1-\prod_{i=1}^n (1 - \hellinger{P_i}{Q_i}^2) = 1- \prod_{i=1}^n (\sqrt{p_i q_i} + \sqrt{(1-p_i)(1-q_i)}) \;.
\end{align*}

We start with the lower bound. Noting that for any $x,y\in(0,1)$ it holds that
\[
\sqrt{xy}+\sqrt{(1-x)(1-y)} \leq 1-\frac{1}{2}(x-y)^2
\]
(e.g., by observing that the function $x\in(0,1)\mapsto 1-\frac{1}{2}(x-y)^2 - (\sqrt{xy}+\sqrt{(1-x)(1-y)})$ is minimized at $y$, where it takes value $0$), we get
\begin{align*}
\hellinger{P}{Q}^2 
&\geq 1- \prod_{i=1}^n \left( 1-\frac{1}{2}(p_i-q_i)^2 \right) \\
&\geq \min( 1-e^{-3/2}, \frac{1}{4} \sum_{i=1}^n (p_i-q_i)^2 ) \\
&= \min( 1-e^{-3/2}, \frac{1}{4}\normtwo{p-q}^2
)
\end{align*}
where we relied on the inequality $1 - \prod_{i=1}^n (1-x_i) \geq \frac{1}{2}\sum_{i=1}^n x_i$ for $(x_1,\dots,x_n)\in [0,1]$:
\begin{align*}
1 - \prod_{i=1}^n (1-x_i) 
&= 1 - e^{\sum_{i=1}^n \ln(1-x_i)} \geq 1 - e^{-\sum_{i=1}^n x_i} 
\geq 1 - \left(1 - \frac{1}{2}\sum_{i=1}^n x_i \right).
\end{align*}
(the last inequality being true for $\sum_{i=1}^n x_i \leq \frac{3}{2}$, i.e. $\normtwo{p-q}^2 \leq 3$).

Turning to the upper bound, the elementary inequality $2\sqrt{xy} = x+y - (\sqrt{x}-\sqrt{y})^2,$ $x,y>0,$ gives that
\begin{align*}
\sqrt{p_i q_i} + \sqrt{(1-p_i)(1-q_i)} &\geq 1-  \frac{(p_i-q_i)^2}{(p_i+q_i)(2-p_i-q_i)} = 1-z_i \;.
\end{align*}
Therefore,
\begin{align}\label{eq:hellinger:intermediate:ub}
\hellinger{P}{Q}^2 &\leq (1- \prod_{i=1}^n (1-z_i)) \leq \sum_{i=1}^n z_i \notag\\
&= \sum_{i=1}^n \frac{(p_i-q_i)^2}{(p_i+q_i)(2-p_i-q_i)} \notag\\
&\leq \sum_{i=1}^n \frac{(p_i-q_i)^2}{q_i(1-q_i)} \;,
\end{align}
where the third-to-last inequality follows from the union bound, and the last from the simple fact that $(p_i+q_i)(2-p_i-q_i) \geq q_i(1-q_i)$.
\end{proof}

\begin{proof}[Proof of Lemma~\ref{lemma:kl:bn}]
Let $A$ and $B$ be two distributions on $\{0,1\}^d$. Then we have:
\begin{equation} \label{eq:kl}
\dkl{A}{B} = \sum_{x \in \{0,1\}^d} \proba_A[x] \ln  \frac{ \proba_A[x] }{ \proba_B[x] } ;.
\end{equation}
For a fixed $i \in [d]$, the events $\Pi_{i,a}$ form a partition of $\{0,1\}^d$.
Dividing the sum above into this partition, we obtain
\begin{align*}
\dkl{A}{B} 
&= \sum_a \sum_{x \in \Pi_{i,a}} \proba_A[x] \ln  \frac{ \proba_A[x] }{ \proba_B[x] } \\
&= \sum_a \proba_A[\Pi_{i,a}] \ \cdot\!\!\!\sum_{x \in \{0,1\}^d}\!\!\! \proba_{A \mid \Pi_{i,a}}[x] \left(\ln \frac{ \proba_A[\Pi_{i,a}] }{ \proba_B[\Pi_{i,a}] } + \ln \frac{ \proba_{A \mid \Pi_{i,a}}[x] }{ \proba_{B \mid \Pi_{i,a}}[x] } \right)\\
& = \sum_a \proba_A[\Pi_{i,a}] (\ln \frac{ \proba_A[\Pi_{i,a}] }{ \proba_B[\Pi_{i,a}] } + \dkl{ A \mid \Pi_{i,a}  }{  B \mid \Pi_{i,a} }\;.
\end{align*}

Let $P_{\leq i}$ be the distribution over the first $i$ coordinates of $P$
and define $Q_{\leq i}$ similarly for $Q$. Let $P_i$ and $Q_i$ be the distribution
of the $i$-th coordinate of $P$ and $Q$ respectively.
We will apply the above to $P_{\leq i-1}$ and $P_{\leq i}$.
First note that the $i$-th coordinate of $P_{\leq i} \mid \Pi_{i,a}$
and $Q_{\leq i}\mid\Pi_{i,a}$ is independent of the others,
thus we have (which likely follows from standard results):
\begin{align*}
\dkl{ P_{\leq i}\mid\Pi_{i,a} }{  Q_{\leq i}\mid\Pi_{i,a} } 
&= \sum_{x \in \{0,1\}^i} \proba_{P_{\leq i}\mid\Pi_{i,a}}[x] \ln  \frac{ \proba_{P_{\leq i}\mid\Pi_{i,a}}[x] }{ \proba_{Q_{\leq i}\mid\Pi_{i,a}}[x] }  \\
&= \sum_{x \in \{0,1\}^i} \proba_{P_{\leq i-1}\mid\Pi_{i,a}}[x_{\leq i-1}]\proba_{P_{i}\mid\Pi_{i,a}}[x_i] \cdot\left(\ln  \frac{ \proba_{P_{\leq i-1}\mid\Pi_{i,a}}[x_{\leq i-1}] }{ \proba_{Q_{\leq i-1}\mid\Pi_{i,a}}[x_{\leq i-1}]} + \ln \frac{ \proba_{P_i\mid\Pi_{i,a}}[x_i] }{ \proba_{Q_{\leq i}\mid\Pi_{i,a}}[x_i] }\right) \\
&= \sum_{x \in \{0,1\}^{i-1}} \proba_{P_{\leq i-1}\mid\Pi_{i,a}}[x] \ln \frac{ \proba_{P_{\leq i-1}\mid\Pi_{i,a}}[x] }{ \proba_{Q_{\leq i-1}\mid\Pi_{i,a}}[x] } 
+ \sum_{y \in \{0,1\}} \proba_{P_{i}\mid\Pi_{i,a}}[y] \ln \frac{ \proba_{P_i\mid\Pi_{i,a}}[y] }{ \proba_{Q_{\leq i}\mid\Pi_{i,a}}[y] }\\
&= \dkl{ P_{\leq i-1}\mid\Pi_{i,a} }{  Q_{\leq i-1}\mid\Pi_{i,a} } + \dkl{ P_{i}\mid\Pi_{i,a} }{  Q_{i}\mid\Pi_{i,a} }
\end{align*}

Thus, we have:
\begin{align*}
  \dkl{ P_{\leq i}  }{   Q_{\leq i} } 
  &= \sum_a \proba_{P_{\leq i}} [\Pi_{i,a}] (\ln \frac{ \proba_{P_{\leq i}}[\Pi_{i,a}] }{ \proba_{Q_{\leq i}}[\Pi_{i,a}] }  + \dkl{ P_{\leq i} \mid \Pi_{i,a}  }{  Q_{\leq i} \mid \Pi_{i,a} } \\
  &= \sum_a \proba_{P_{\leq i}} [\Pi_{i,a}] \Big(\ln \frac{ \proba_{P_{\leq i}}[\Pi_{i,a}] }{ \proba_{Q_{\leq i}}[\Pi_{i,a}] }  
  + \dkl{ P_{\leq i-1} \mid \Pi_{i,a}  }{  Q_{\leq i-1} \mid \Pi_{i,a} } 
  + \dkl{ P_{i} \mid \Pi_{i,a}  }{  Q_{i} \mid \Pi_{i,a} } \Big) \\
  &=  \dkl{ P_{\leq i-1}  }{  Q_{\leq i -1} } + \sum_a \proba_P[\Pi_{i,a}] \dkl{ P_{i} \mid \Pi_{i,a}  }{  Q_{i} \mid \Pi_{i,a} }  
 \end{align*}

By induction on $i$, we get
\[
\dkl{P}{Q} = \sum_{(i,a) \in S} \proba_P[\Pi_{i,a}]  \dkl{ P_{i}\mid\Pi_{i,a}  }{   Q_{i}\mid\Pi_{i,a}}.
\]
Now the distributions $P_{i}\mid\Pi_{i,a}$ and $Q_{i}\mid\Pi_{i,a}$ are Bernoulli random variables
with means $p_{i,a}$ and $q_{i,a}$. For $p,q \in [0,1]$, we have:
\begin{align*}
\dkl{ \bernoulli{p}  }{  \bernoulli{q} } 
= p \ln \frac{p}{q} + (1-p) \ln\frac{1-p}{1-q} \leq \frac{(p-q)^2}{q(1-q)}
\end{align*}
as in the proof of~\cref{claim:toy:2:distance:means}. On the other hand, studying for instance the function $f_q\colon (0,1) \to \R$ defined by $f_q(p) = 
\frac{ p \ln \frac{p}{q} + (1-p) \ln\frac{1-p}{1-q}}{(p-q)^2}$ (extended by continuity at $q$), we get
\[
f_q(p) \geq f_q(1-q) \geq 2
\]
for all $p,q\in(0,1)^2$. This shows the lower bound.
\end{proof}

\section{Omitted Proofs \texorpdfstring{from~\cref{ssec:product-lower,ssec:identity-unknown-upper}}{from Sections~\ref{ssec:product-lower} and~\ref{ssec:identity-unknown-upper}}}\label{sec:misc:proofs}

\begin{unnumberedfact}[\cref{fact:mutualinfo:expansion}]
  \begin{equation}
  \mutualinfo{X}{A} \leq \sum_{a=0}^\infty \probaOf{ A=a }\left(1-\frac{ \probaCond{A=a }{ X=1}  }{ \probaCond{A=a }{ X=0} }\right)^2.
  \end{equation}
\end{unnumberedfact}
\begin{proof}
  For $a\in\mathbb{N}$, we write $p_a = \probaCond{X=0}{A=a}$ and $q_a = \probaCond{X=1}{A=a}$, so that $p_a+q_a=1$.  By definition,
    \begin{align*}
      \mutualinfo{X}{A} 
      &= \sum_{a=0}^\infty \probaOf{A=a}  \cdot\sum_{x\in\{0,1\}} \probaCond{X=x}{A=a} \log \frac{\probaCond{X=x}{A=a}}{\probaOf{X=x}} \\
      &= \sum_{a=0}^\infty \probaOf{A=a} \cdot\left( 
              p_a \log \frac{p_a}{\probaOf{X=1}} + q_a \log \frac{q_a}{\probaOf{X=0}} 
        \right) \\
      &= \sum_{a=0}^\infty \probaOf{A=a} \left( 
              p_a \log \left( 2p_a \right) + q_a \log \left( 2q_a \right)
        \right) \\
      &= \sum_{a=0}^\infty \probaOf{A=a} \left( 
              (1-q_a) \log \left(1-q_a\right) + q_a \log \left(q_a \right) + 1
        \right) \\
      &\leq \sum_{a=0}^\infty \probaOf{A=a} \left( 1-\frac{q_a}{1-q_a} \right)^2 \\
      &= \sum_{a=0}^\infty \probaOf{A=a} \left( 1-\frac{q_a}{p_a} \right)^2
    \end{align*}
where for the last inequality we rely on the fact that the binary entropy satisfies $h(x) \geq 1 - \left(1-\frac{x}{1-x}\right)^2$ for all $x\in [0,1)$.
\end{proof}

\begin{proof}[Alternative proof of~\cref{lemma:lb:key:unbalanced}]
We proceed as in the proof of~\cref{lemma:lb:key}, first writing
\begin{align*}
\mutualinfo{X}{A} &= \sum_{\ell=0}^\infty \probaOf{ A=a }\left(1-\frac{ \probaCond{A=a }{ X=1}  }{ \probaCond{A=a }{ X=0} }\right)^2 \\
&= \shortexpect_A\left[ \left(1-\frac{ \probaCond{A=a }{ X=1}  }{ \probaCond{A=a }{ X=0} }\right)^2 \right] 
\end{align*}
and noticing that
$
z_a \eqdef \frac{ \probaCond{A=a }{ X=1}  }{ \probaCond{A=a }{ X=0} } = \frac{e^{-k\eps/n}(1+\eps)^a+e^{k\eps/n}(1-\eps)^a}{2}
$. This leads to 
\begin{align}\label{eq:expansion:mutualinf:bound}
4(1-z_a)^2 
&= 4 - 4\left( e^{-\frac{k}{n}\eps} (1+\eps)^a + e^{\frac{k}{n}\eps} (1-\eps)^a \right)  
+ \left( e^{-2\frac{k}{n}\eps} (1+\eps)^{2a} + e^{2\frac{k}{n}\eps} (1-\eps)^{2a} + 2(1-\eps^2)^a \right)
\end{align}
We also have, by definition of $A$, that for any $z\in\mathbb{R}$
\begin{align}\label{eq:pgf:a}
  \shortexpect_A[z^A] &= \frac{1}{2}\shortexpect_{A\sim\poisson{\frac{k}{n}}}[z^A]
  + \frac{1}{4}\shortexpect_{A\sim\poisson{\frac{k}{n}(1+\eps)}}[z^A] + \frac{1}{4}\shortexpect_{A\sim\poisson{\frac{k}{n}(1-\eps)}}[z^A] \notag\\
  &= 
  \frac{1}{2}e^{\frac{k}{n}(z-1)}
  + \frac{1}{4}e^{\frac{k}{n}(1+\eps)(z-1)}
  + \frac{1}{4}e^{\frac{k}{n}(1-\eps)(z-1)}
\end{align}
from the expression of the probability generating function of a Poisson random variable. For any $\beta\in\{-1,1\}$, we therefore have
\begin{align*}
  4e^{-\frac{k}{n}\beta\eps}\shortexpect_A[(1+\beta\eps)^A]  
  &= 2 + e^{\frac{k}{n}\beta\eps^2} + e^{-\frac{k}{n}\beta\eps^2}
  \\
  4e^{-2\frac{k}{n}\beta\eps}\shortexpect_A[(1+\beta\eps)^{2A}] 
  &=  2e^{\frac{k}{n}\eps^2} + e^{\frac{k}{n}((1+2\beta)\eps^2+\eps^3)} + e^{\frac{k}{n}((1-2\beta)\eps^2-\eps^3)}
\end{align*}
and
\begin{align*}
  4\cdot 2\shortexpect_A[(1-\eps^2)^{A}]
      &= 4e^{\frac{k}{n}\eps^2} + 2e^{\frac{k}{n}(\eps^2+\eps^3)} + 2e^{\frac{k}{n}(\eps^2-\eps^3)}
\end{align*}
Combining~\cref{eq:expansion:mutualinf:bound} and the above, we obtain
\begin{align*}
  16\shortexpect_A\left[ \left(1-z_A\right)^2 \right]  
  &= 16 - 4\left( 4e^{-\frac{k}{n}\eps} \shortexpect_A[(1+\eps)^A] + 4e^{\frac{k}{n}\eps} \shortexpect_A[(1-\eps)^A] \right) \\
      &\qquad + \big( 4e^{-2\frac{k}{n}\eps} \shortexpect_A[(1+\eps)^{2A}] + 4e^{2\frac{k}{n}\eps} \shortexpect_A[(1-\eps)^{2A}] \\
      &\qquad + 8\shortexpect_A[(1-\eps^2)^A] \big) \\
  &= 8 - 8 \left(e^{\frac{k}{n}\eps^2} + e^{-\frac{k}{n}\eps^2} \right) 
        + \left( e^{\frac{k}{n}3\eps^2}+  e^{-\frac{k}{n}\eps^2} + 2e^{\frac{k}{n}\eps^2} \right) \left(e^{\frac{k}{n}\eps^3}+e^{-\frac{k}{n}\eps^3}\right).
\end{align*}
A Taylor expansion of this expression (in $\frac{k\eps^2}{n}$ for the first two parentheses, and $\frac{k\eps^3}{n}$ for the last) shows that
\begin{align*}
  \shortexpect_A\left[ \left(1-z_A\right)^2 \right] = O\left( \frac{k^2\eps^4}{n^2} \right)
\end{align*}
as claimed.
\end{proof}


\begin{proof}[Proof of~\cref{lemma:unknown:tree:structure:lipschitz:mutualinfo}]
From the definition of $\mutualinfo{X}{Y} = \sum_{(x,y)\in\{0,1\}} \probaOf{X=x,Y=y} \log\frac{\probaOf{X=x,Y=y}}{\probaOf{X=x}\probaOf{Y=y}}$, it is straightforward to check that for $X,Y$ taking values in $\{0,1\}$
\begin{align*}
  \mutualinfo{X}{Y} 
  &= \expect{XY}\log\frac{\expect{XY}}{\expect{X}\expect{Y}}  \\
    &\qquad+ (\expect{X}-\expect{XY})\log\frac{\expect{X}-\expect{XY}}{\expect{X}(1-\expect{Y})}\\
    &\qquad+ (\expect{Y}-\expect{XY})\log\frac{\expect{Y}-\expect{XY}}{(1-\expect{X})\expect{Y}}\\
    &\qquad+ (1-\expect{X}-\expect{Y}+\expect{XY})\cdot\log\frac{1-\expect{X}-\expect{Y}+\expect{XY}}{(1-\expect{X})(1-\expect{Y})} \\
    &= f( \expect{X},\expect{Y},\expect{XY}  )
\end{align*}
for $f$ defined by $f(x,y,z) \eqdef z \log\frac{z}{xy}+ (x-z) \log\frac{x-z}{x(1-y)}+ (y-z) \log\frac{y-z}{(1-x)y} + (1-x-y+z)\log\frac{1-x-y+z}{(1-x)(1-y)}$.

The domain of definition of $f$ is the subset $\Omega\subseteq [0,1]^3$ defined 
by (recalling that $x,y,z$ correspond to $\expect{X},\expect{Y},\expect{XY}$ for ${X},Y\in\{0,1\}$)
\begin{align*}
  0&\leq x,y\leq 1\\
  0&\leq z\leq \min(x,y)\\
  0&\leq z \leq \sqrt{xy} \tag{Cauchy--Schwarz}\\
  0&\leq 1+z-x-y
\end{align*}
Given the $c$-balancedness assumption on $P$, $\Omega_c\subseteq \Omega$ satisfies the further following constraints:
\begin{align*}
  c&\leq x,y\leq 1-c\\
  c^2&\leq \frac{z}{x},\frac{z}{y}\leq 1-c^2\\
  c^2&\leq \frac{1+z-x-y}{1-x},\frac{1+z-x-y}{1-y}\leq 1-c^2
\end{align*}
by Baye's rule and recalling that $1+z-x-y$ corresponds to the three (equal) quantities $\probaOf{X=0,Y=0}$, $\probaCond{X=0}{Y=0}\probaOf{Y=0}$, $\probaCond{Y=0}{X=0}\probaOf{X=0}$; while $1-x,1-y$ correspond to
$\probaOf{X=0},\probaOf{Y=0}$ respectively.

\noindent One can then check that
\begin{align*}
    \frac{\partial{f}}{\partial{x}}(x,y,z) &= \log\frac{(1-x)(x-z)}{x(1+z-x-y)},\\
    \frac{\partial{f}}{\partial{y}}(x,y,z) &= \log\frac{(1-y)(y-z)}{y(1+z-x-y)},\\
    \frac{\partial{f}}{\partial{z}}(x,y,z) &= \log\frac{z(1+z-x-y)}{(x-z)(y-z)}
\end{align*}
and therefore, on $\Omega_c$, that
\begin{align*}
    \norminf{\frac{\partial{f}}{\partial{x}}}
    &=\norminf{\frac{\partial{f}}{\partial{y}}}\\
    &\leq \sup_{(x,y,z)\in\Omega_c} \abs{\log\frac{x-z}{x}}+\abs{\log\frac{1+z-x-y}{1-x}}\\
    &\leq 2\log\frac{1}{c^2} 
    = 4\log\frac{1}{c}.
\end{align*}
Similarly,
\begin{align*}
    \norminf{\frac{\partial{f}}{\partial{z}}}
    &\leq \sup_{(x,y,z)\in\Omega_c} \abs{\log\frac{(x-z)(y-z)}{z}}+\abs{\log(1+z-x-y)}\\
    &\leq \sup_{(x,y,z)\in\Omega_c}\abs{ \log\frac{(1-c^2)y}{c^4xy}} +\abs{\log\frac{1}{c^2x}} \\
    &\leq \log\frac{1}{c^5} +\log\frac{1}{c^3} = 8\log\frac{1}{c} \;.
\end{align*}
So overall, $f$ is $\lambda$-Lipschitz (with regard to the $\norminf{\cdot}$ norm) on $\Omega_c$, for $\lambda = \norminf{\frac{\partial{f}}{\partial{x}}}+\norminf{\frac{\partial{f}}{\partial{y}}}+\norminf{\frac{\partial{f}}{\partial{z}}} \leq 16\log\frac{1}{c}$.
\end{proof}
\end{document}